\documentclass{LMCS}

\def\dOi{11(3:10)2015}
\lmcsheading%
{\dOi}
{1--56}
{}
{}
{Oct.~20, 2014}
{Sep.~17, 2015}
{}

\ACMCCS{[{\bf Theory of computation}]: Formal languages and automata theory}

\usepackage[utf8]{inputenc}
\usepackage[colors=false,newdocumentcommand]{generic}



\begin{document}

\let\oldparagraph\paragraph
\renewcommand{\paragraph}[1]{\oldparagraph{\bf #1.~}}

\makeatletter
\newcommand*{\ifitalic}{\ifx\f@shape\my@test@it%
                          \expandafter\@firstoftwo
                        \else
                          \expandafter\@secondoftwo
                        \fi}
\newcommand*{\my@test@it}{it}
\makeatother

\newcommand{\MSO}{\texorpdfstring{\ifitalic{MSO$\,^\sim$}{MSO$^\sim\mspace{-1mu}$}}{MSO}\xspace}
\newcommand{\FO}{\texorpdfstring{\ifitalic{FO$\,^\sim$}{FO$^\sim\mspace{-1mu}$}}{FO}\xspace}

\newcommand{\dom}{{\mathsf{dom}}}
\renewcommand{\min}{{\mathsf{min}}}
\renewcommand{\max}{{\mathsf{max}}}
\newcommand{\mem}{{\mathsf{mem}}}
\newcommand{\meml}{{\mathsf{mem}_{\cL}}}
\newcommand{\memr}{{\mathsf{mem}_{\cR}}}

\newcommand{\len}[1]{{|#1|}}
\newcommand{\dlen}[1]{{||#1||}}
\newcommand{\ang}[1]{{\langle #1 \rangle}}
\newcommand{\intr}[1]{{\llbracket #1 \rrbracket}}

\newcommand{\blank}{\phantom{x}}
\newcommand{\orbit}[1]{{{#1}^{\mathsf{o}}}}
\newcommand{\orbiteq}{\stackrel{\mathsf{o}}{=}} 

\newcommand{\true}{{\mathsf{true}}}
\newcommand{\false}{{\mathsf{false}}}
\newcommand{\rigid}{{\mathsf{rigid}}}
\newcommand{\semirigid}{{\mathsf{semirigid}}}
\providecommand{\fact}{}
\renewcommand{\fact}{{\mathsf{fact}}}
\newcommand{\prej}{{\mathsf{pre}\text{-}J}}
\newcommand{\arity}{{\mathsf{arity}}}

\newcommand{\gR}{{\mathrel{\mathcal{R}}}}
\newcommand{\gL}{{\mathrel{\mathcal{L}}}} 
\newcommand{\gJ}{{\mathrel{\mathcal{J}}}} 
\newcommand{\gH}{{\mathrel{\mathcal{H}}}}
\newcommand{\gRO}{{\mathrel{\orbit{\mathcal{R}}}}}
\newcommand{\gLO}{{\mathrel{\orbit{\mathcal{L}}}}} 
\newcommand{\gJO}{{\mathrel{\orbit{\mathcal{J}}}}}
\newcommand{\gHO}{{\mathrel{\orbit{\mathcal{H}}}}}

\tikzstyle{every picture} += [remember picture]
\tikzstyle{textbox} = [shape=rectangle, inner sep=0.5pt]
\newcommand{\td}[2]{\raisebox{-1.8pt}{%
                    \tikz%
                    \node[textbox](#1){\phantom{$f$}%
                                       \hspace{-5pt}%
                                       \raisebox{-0.6pt}[0pt][0pt]{$#2$}%
                                       \hspace{-4pt}%
                                       \phantom{$f$}};%
                   }}
\newcommand{\tl}[3]{\begin{tikzpicture}[overlay]%
                      \path [gray] (#1) edge [#2] (#3);%
                    \end{tikzpicture}}
\newcommand{\te}[3]{\begin{tikzpicture}[overlay]%
                      \path [gray] (#1) edge [bend left=#2] (#3);%
                    \end{tikzpicture}}
\newcommand{\ta}[3]{\begin{tikzpicture}[overlay]%
                      \path [gray, <-, >=stealth] (#1) edge [bend left=#2] (#3);%
                    \end{tikzpicture}}




\setlength{\marginparwidth}{2.9cm}
\setlength{\marginparsep}{0.01cm}

\renewenvironment{itemize}[1][]{\smallskip%
                                \ifthenelse{\equal{#1}{}}{\begin{compactitem}}{\begin{compactitem}[#1]}%
                                \setlength{\itemsep}{\smallskipamount}}%
                               {\end{compactitem}%
                                \smallskip}
\renewenvironment{enumerate}[1][]{\smallskip%
                                  \ifthenelse{\equal{#1}{}}{\begin{compactenum}}{\begin{compactenum}[#1]}%
                                  \setlength{\itemsep}{\smallskipamount}}%
                                 {\end{compactenum}%
                                  \smallskip}
\renewenvironment{condlist}{\smallskip%
                            \begin{compactenum}[{(C}1{)}]%
                            \setlength{\itemsep}{\smallskipamount}}%
                           {\end{compactenum}%
                            \smallskip}
\renewenvironment{proplist}{\smallskip%
                            \begin{compactenum}[{(P}1{)}]%
                            \setlength{\itemsep}{\smallskipamount}}%
                           {\end{compactenum}%
                            \smallskip}
\renewenvironment{axiomlist}{\smallskip%
                            \begin{compactenum}[{(A}1{)}]%
                            \setlength{\itemsep}{\smallskipamount}}%
                           {\end{compactenum}%
                            \smallskip}

\setdefaultitem{\scalebox{0.75}{\textbullet}}%
               {\normalfont\bfseries \textendash}%
               {\textasteriskcentered}{\textperiodcentered}

\title[Logics with rigidly guarded data tests]{Logics with rigidly guarded data tests}

\author[T.~Colcombet]{Thomas Colcombet\rsuper a}
\address{{\lsuper a}CNRS / LIAFA}
\email{thomas.colcombet@liafa.univ-paris-diderot.fr}

\author[C.~Ley]{Clemens Ley\rsuper b}
\address{{\lsuper b}Independent researcher}
\email{ley.clemens@gmail.com}

\author[G.~Puppis]{Gabriele Puppis\rsuper c}
\address{{\lsuper c}CNRS / LaBRI}
\email{gabriele.puppis@labri.fr}

\keywords{Data Languages, Orbit-finite Data Monoids, Rigidly Guarded MSO, 
          FO-definability, Finite Memory Automata}
\titlecomment{{\lsuper*}An extended abstract with preliminary results similar
              to those presented here has appeared in~\cite{rigid-mso}.}

\sloppy
\maketitle

\begin{abstract}
The notion of orbit finite data monoid was recently introduced by Boja{\'n}czyk 
as an algebraic object for defining recognizable languages of data words. 
Following B\"uchi's approach, we introduce a variant of monadic second-order
logic with data equality tests that captures precisely the data languages 
recognizable by orbit finite data monoids.
We also establish, following this time the approach of Sch\"utzenberger, 
McNaughton and Papert, that the first-order fragment of this logic defines 
exactly the data languages recognizable by aperiodic orbit finite data monoids.
Finally, we consider another variant of the logic that can be interpreted
over generic structures with data. The data languages defined in this
variant are also recognized by unambiguous finite memory automata.	

\end{abstract}

\section{Introduction}\label{sec:introduction}

Data words have been introduced as a generalization of words over finite alphabets, 
where the term ``data'' denotes the presence of symbols from an infinite alphabet. 
Usually, languages of data words, data languages for short, are assumed to be closed 
under permutation of the data values. This invariance under permutation makes any 
property concerning the data values, other than equality, irrelevant. Some examples 
of data languages are:
\begin{itemize}
  \item[$L_1:$] ~\emph{the sets of words containing at least three distinct data values},
  \item[$L_2:$] ~\emph{the sets of words where the first and last positions carry the same data value},
  \item[$L_3:$] ~\emph{the sets of words with no consecutive occurrences of the same data value},
  \item[$L_4:$] ~\emph{the sets of words where each data value occurs at most once}. 
\end{itemize}
The intention behind data values in data words (or data trees, \dots) is 
to model, e.g., the keys in a database, or the process or user identifiers in the
log of a system. Those numbers are used as identifiers, and we are interested 
only in comparing them with equality. The invariance under permutation of data 
languages captures this intention. Data words can also be defined to have both 
a data value and a letter from a finite alphabet at each position. This is 
more natural in practice, and does not make any difference in the results 
to follow. 

The paper aims at understanding better how the classical theory of regular 
languages can be extended to data languages. The classical theory associates 
regular languages to finite state automata or, equivalently, to finite monoids. 
For instance, important properties of regular languages can be detected by 
exploiting equivalences with properties of the monoid -- see, for instance,
Straubing's book \cite{finite_automata_and_circuit_complexity} or Pin's survey 
\cite{mathematical_foundations_of_automata} for an overview of the approach.

In \cite{data_monoids} Boja{\'n}czyk formalized a notion of recognizability for
data languages by introducing generalizations of monoids, called \emph{data monoids}.
In the journal version of this paper \cite{nominal_monoids}, the algebraic framework 
of data monoids has been further generalized and connected to the theory of nominal 
sets, which was originally developed by Fraenkel in 1922. Here we are mainly interested 
in data languages recognized by \emph{orbit-finite data monoids}, which can be seen
as the analogue of finite monoids for languages over infinite alphabets. As a matter 
of fact, all regular languages over a finite alphabet can be seen as data languages 
recognized by orbit-finite data monoids. Other examples of data languages recognized 
by orbit-finite data monoids are the languages $L_1,L_2,L_3$ that we described above. 

Concerning the possibility of defining data languages by logical formulas,
a natural approach consists of extending classical logics by introducing a 
new predicate $x\sim y$, which holds at positions $x$ and $y$ whenever the 
data values under $x$ and $y$ are equal. In particular, one may think that 
the monadic second-order logic with this new predicate is a good candidate 
to equivalently specify recognizable languages, namely, it would play the 
role of monadic logic in the standard theory of regular languages. However, 
this is not the case, as monadic logic happens to be much too expressive. 
One inclusion indeed holds: every language of data words recognized by an 
orbit-finite monoid is definable in monadic logic extended with the data
equality predicate. However, the converse does not hold, as witnessed by 
the formula
\begin{align*}
  \forall x,y \quad x\neq y ~\rightarrow~ x\nsim y,
  \tag{$\dagger$}
\end{align*}
defining the language $L_4$ above, which is known not to be recognizable 
by orbit-finite data monoids. 
More generally, it has been shown that monadic logic (in fact, even 
first-order logic) extended with the data equality predicate has an 
undecidable satisfiability problem and it can express properties not 
implementable by reasonable automaton models \cite{machines_for_infinite_alphabets}. 

The general goal of this paper is to understand better the expressive 
power of the orbit-finite data monoid model by comparing it with 
automaton-based models and logical formalisms for data words.
In particular, we aim at answering the following question:

\begin{quote}
{\em Is there a variant of monadic second-order logic that defines precisely 
the data languages recognizable by orbit-finite data monoids?}
\end{quote}

\noindent
We answer this question positively by introducing a variant of monadic second-order 
logic with rigidly guarded data equality tests, \emph{rigidly guarded \MSO} for short. 
This logic allows testing equality of two data values only when the two positions are 
related in a bijective way (we say rigid). That is, data equality tests are allowed 
only in formulas of the form 
$$
  \varphi(x,y) ~\et~ x\sim y
$$
where $\varphi$ is rigid, namely, it defines a partial bijection. For example, one 
can express the existence of two consecutive positions sharing the same data value:
$\exists x,y.~ (x=y+1) ~\et~ x\sim y$. The guard $(x=y+1)$ is rigid since $x$ uniquely 
determines $y$, and $y$ uniquely determines $x$. 
However, it is impossible to describe the language 
$L_4$ in this logic. In particular, the above formula $(\dagger)$ is logically 
equivalent to $\neg\exists x,y.~ x\neq y ~\et~ x\sim y$, but this time the guard 
$x\neq y$ is not rigid: for a given $x$, there can be several $y$ such that $x\neq y$.
It may seem a priori that the fact that rigidity is a semantic property is a severe
drawback. This is not the case since (i) rigidity can be enforced syntactically 
(see Section \ref{sec:logics}), and (ii) rigidity is decidable for 
formulas in our logic (cf. Corollary \ref{cor:decidability}).

To validate the robustness of our approach, we also answer positively to the following 
question inspired by the seminal works of Schützenberger, McNaughton, and Papert:

\begin{quote}
{\em Does the rigidly guarded \FO logic (i.e., the first-order fragment of 
rigidly guarded \MSO) correspond to aperiodic orbit-finite data monoids?}
\end{quote}

\noindent
The idea underlying the use of guards with data tests can be generalized in different ways.
In the present paper, we also consider a less constrained version of rigidly guarded \MSO, 
which allows one to compare the data values at two positions $y$ and $z$, whenever both 
$y$ and $z$ are determined from a common position $x$ by means of suitable formulas.
The resulting logic, called semi-rigidly guarded \MSO, can be interpreted over more
general structures, such as graphs with data on nodes, and still retains the decidability
properties of rigidly guarded \MSO. Towards the end of the paper, we study the 
expressiveness of semi-rigidly guarded \MSO on data words and we prove that this 
logic is strictly subsumed by unambiguous finite memory automata \cite{finite_memory_automata}.


\paragraph{Related work} 
This work is related to the well known theory of regular languages.
By this we specifically refer to two key results, namely, the 
equivalence between recognizability by finite state automata and
definability in monadic logic \cite{weak_s1s}, and the characterization
of first-order definability for regular languages 
\cite{aperiodic_monoids,counter_free_automata}.

The other branch of related work is concerned with languages of data words.
The first related contribution in this direction is due to Kaminski and 
Francez \cite{finite_memory_automata,nondeterministic_reassignment}, 
who introduced finite memory automata (FMA for short). 
These automata possess a fixed finite set of registers that 
can be used to store data values. At each step an FMA can compare the current 
data value with the values stored in the registers and, on the basis
of these tests and the current control state, it can determine the target 
control state of its transition, and whether or not the current value 
(or a new guessed value) is stored into some register (replacing the previous content).
This model of automaton, in its non-deterministic form, has a decidable emptiness 
problem and an undecidable universality problem; decidability of universality
is however recovered in the deterministic variant of FMA.
Deterministic FMA also have minimal canonical forms, provided that a 
suitable policy in the use of registers is enforced \cite{minimal_finite_memory_automata,automata_with_group_actions}
(such a policy does not affect the expressive power of the model). 
Many other automaton models for data languages have been proposed in the 
literature, such as automata with pebbles \cite{machines_for_infinite_alphabets}, 
automata with hash tables \cite{regularity_for_data_languages}, walking 
automata \cite{walking_data_automata}, data and class automata 
\cite{data_automata_journal,class_automata_journal}. We refer the 
interested reader to~\cite{automata_for_xml} for a survey on these models.

As concerns the logical approach, several logics for reasoning effectively on data 
languages have been proposed, most notably: fragments of first-order logics with data 
equalities/disequalities \cite{data_automata_journal,two_variable_fo_with_two_orders}, 
variants of XPath called Core-Data-XPath \cite{two_variable_fo_on_trees_journal}, 
modal logics with registers \cite{freeze_ltl,freeze_mucalculus}. 
The differences between all such formalisms are reflected in the fact that it is 
difficult to obtain algebraic characterizations for robust classes of data languages. 
In \cite{automata_vs_logics,data_automata_journal,class_automata,fresh_register_automata} 
some preliminary results on relating automata to logics are given. However, the 
algebraic theory for these automaton models is not fully developed yet.
As a matter of fact, the question of characterizing the first-order logic definable 
language among the languages recognized by deterministic FMA remains open.

The idea of guarding tests with rigid formulas was originally presented in \cite{rigid-mso}.
A similar idea was also exploited in \cite{event-clock-automata} in order to design
a class of timed automata that could be determinized. 
More recently, a similar idea has been investigated in \cite{rigid-regular-path-queries}
with the aim of developing a robust formalism for querying graph databases.

\paragraph{Contributions and structure of the paper}
Our main contributions can be summarized as follows:
\begin{enumerate}
  \item We show how orbit-finite data monoids can be finitely represented 
        by systems of equations involving terms with variables for data values.
        We further develop the theory of Green's relations for data monoids,
        proving, for instance, that all $\cH$-classes in an orbit-finite 
        data monoid are finite (or, equally, that all orbit-finite data groups 
        are finite).
  \item We introduce a logic, called rigidly guarded \MSO, which can be seen 
        as a natural weakening of MSO logic with data equality tests. 
        We then show that rigidly guarded \MSO is exactly as expressive as 
        orbit-finite data monoids, and that its first-order fragment 
        corresponds to aperiodic orbit-finite data monoids.
  \item We show that an extension of rigidly guarded \MSO is decidable, even on
        general classes of structures with data (e.g., data trees).
        We show that the same extension of rigidly guarded \MSO defines 
        a proper subclass of data languages recognized by non-deterministic 
        (in fact, unambiguous) finite memory automata. 
\end{enumerate}
Section \ref{sec:background} gives some background knowledge on the theory of nominal sets,
data languages and data monoids. In particular, it explains how orbit-finite data monoids 
can be finitely represented and further develops the theory of Green's relations for these 
monoids.
Section \ref{sec:logics} introduces variants of rigidly guarded logics and shows how to 
decide satisfiability of their formulas over generic classes of data words, data trees, 
and data graphs.
Section \ref{sec:logic2monoids} describes the translation from rigidly guarded \MSO 
(resp., \FO) formulas to orbit-finite data monoids (resp., aperiodic orbit-finite data 
monoids) recognizing the same languages of data words. Section \ref{sec:monoids2logic} 
describes the converse translation, namely, from (aperiodic) orbit-finite data monoids 
to rigidly guarded \MSO (resp., \FO) formulas. 
Section \ref{sec:logics-for-automata} relates data languages defined by variants 
of rigidly guarded \MSO to data languages recognized by finite memory automata. 
Section \ref{sec:conclusion} provides an assessment of the results and related open problems.

\section{Nominal sets and data monoids}\label{sec:background}

In this paper, $D$ will usually denote an infinite set of \emph{data values} (e.g., $d,e,f,\ldots$)
and $A$ will denote a finite set of \emph{symbols} (e.g., $a,b,c,\ldots$). A \emph{data word} over 
the alphabet $D\times A$ is a finite sequence $w=(d_1,a_1) \ldots (d_n,a_n)$ in $(D \times A)^*$. 
The domain of $w$, denoted $\dom(w)$, is $\{1, \ldots, n\}$.

\medskip
We begin by giving a short account of the theory of nominal sets, which can then be used 
to derive natural notions of recognizability of data languages (we freely use some 
terminology and concepts from \cite{data_monoids,nominal_monoids,automata_with_group_actions}).

A \emph{(data) renaming} on $D$ is a permutation on the set $D$ of data values that is 
the identity on {\sl all but finitely many values}. We let $G_D$ the set of all renamings 
on $D$. One obtains a group $\cG_D=(G_D,\circ)$ by equipping $G_D$ with the operation of 
functional composition; we call this group the \emph{group of renamings on $D$}.
The above definitions are naturally generalized to any (possibly finite) subset $C$ of $D$; 
for example, we can talk about the group of renamings on $C$.

Renamings act on sets as follows. Given a set $S$, an \emph{action} of the group $\cG_D$ on $S$ 
is a group morphism $\hat\blank$ from $\cG_D$ to the group of bijections on $S$, namely, a 
function $\hat\blank$ that maps the identity $\iota$ of $\cG_D$ to the identity $\hat\iota$ 
on $S$ and such that $\widehat{\tau\circ\pi}=\hat\tau\circ\hat\pi$ for all renamings 
$\tau,\pi\in \cG_D$. We call \emph{$\cG_D$-set} any set $S$ equipped with an action
$\hat\blank$ of $\cG_D$ on $S$. 

Given an element $s$ of a $\cG_D$-set $(S,\hat\blank)$, we define the \emph{orbit of $s$} 
as the set of all elements of the form $\hat\tau(s)$, for all renamings $\tau\in\cG_D$. 
Note that orbits are either disjoint or equal, so they can be seen as equivalence classes 
induced by the possible renamings. We say that a $\cG_D$-set is \emph{orbit-finite} if it 
has only finitely many orbits. 

A subset $S'$ of a $\cG_D$-set $(S,\hat\blank)$ is said to be \emph{equivariant} 
if it is preserved by the action of renamings, namely, if $\hat\tau(S')=S'$ for all
renamings $\tau\in\cG_D$ (equivalently, one could say that $S'$ is a union of orbits of $S$).
The concept of equivariant subset can be applied specifically to a function $f:S\then T$
between two $\cG_D$-sets $(S,\hat\blank)$ and $(T,\check\blank)$; in this case one easily 
verifies that $f$ commutes with the renamings, namely, $f(\hat\tau(s)) = \check\tau(f(s))$ 
for all $f\in\cG_D$ and all $s\in S$. 
Similarly, by considering the standard action of renamings on sets of data words
(i.e., $\hat\tau((d_1,a_1) \ldots (d_n,a_n)) \eqdef (\tau(d_1),a_1) \ldots (\tau(d_n),a_n)$),
we define a \emph{data language} over $D\times A$ as an equivariant subset of $(D\times A)^*$
(this basically means that membership in the language is invariant under renamings of data values).

\subsection{Data monoids}\label{subsec:monoids}

Recall that a monoid is an algebraic structure $\cM=(M,\cdot)$ where $\cdot$ is an associative 
product on $M$ admitting an identity $1_\cM$ such that $1_\cM \cdot s = s\cdot 1_\cM = s$
for all $s\in M$. 
A monoid $\cM=(M,\cdot)$ is said to be \emph{aperiodic} if for all elements~$s\in M$, 
there is~$n\in\bbN$ such that~$s^n=s^{n+1}$.
A (\emph{monoid}) \emph{morphism} is a function $h$ between two monoids $\cM=(M,\cdot)$ 
and $\cN=(N,\odot)$ such that $h(1_\cM)=1_{\cN}$ and $h(s\cdot t)=h(s)\odot h(t)$ for 
all $s,t\in M$.
The concept of data monoid is nothing but that of a monoid with an equivariant product:

\begin{definition}\label{def:data-monoid}
A \emph{data monoid} (over a set $D$ of data values) is a triple $\cM=(M,\cdot,\hat\blank)$, 
where $(M,\cdot)$ is a monoid, $\hat\blank$ is an action of $\cG_D$ on $M$, and $\cdot$ is 
an equivariant function with respect to $\hat\blank$. In particular, for all renamings 
$\tau,\pi\in\cG_D$ and all elements $s,t\in M$, we have:
\begin{itemize}
  \item $\widehat{\tau\circ\pi}=\hat\tau\circ\hat\pi$, 
  \item $\hat\iota(s)=s$, where $\iota$ is the identity renaming, 
  \item $\hat\tau(1_\cM)=1_\cM$, where $1_\cM$ is the identity of $(M,\cdot)$,
  \item $\hat\tau(s)\cdot\hat\tau(t) = \hat\tau(s\cdot t)$.
\end{itemize}
\end{definition}

Unless otherwise stated, data monoids will be defined over the set $D$ 
of all data values. Moreover, to simplify the notation, we will often 
use an implicit notation for the group action $\hat\blank$; for example,
when $\hat\blank$ is understood from the context, we can write $\tau(s)$ 
in place of $\hat\tau(s)$.

The \emph{free data monoid} over $D\times A$ is an example of a data monoid, 
where the elements are the data words over $D\times A$, the product is the 
juxtaposition of data words, and the action is the standard one, mapping any 
renaming $\tau$ to the automorphism $\hat\tau$ defined by 
$\hat\tau\big((d_1,a_1) \ldots (d_n,a_n)\big) 
 = (\tau(d_1),a_1) \ldots (\tau(d_n),a_n)$.

\medskip
We now show how to extract the ``memory'' of a monoid element $s$, which intuitively is the 
minimum set of data values that are important for distinguishing $s$ from all other elements 
of the data monoid. 
Given a data monoid $\cM$ and an element $s$ in it, 
we say that a renaming $\tau$ is a \emph{stabilizer} of $s$ if $\tau(s)=s$. 
A set $C\subseteq D$ of data values \emph{supports} an element $s$ if all 
renamings that are the identity on $C$ are stabilizers of $s$. 
It is known that the intersection of two sets that support $s$ is 
again a set that supports $s$ 
\cite{data_monoids,nominal_monoids,abstract_syntax}. 
We can thus define \emph{the memory} of $s$, denoted $\mem(s)$, 
as the intersection of all sets that support $s$. 

We remark that there exist finite monoids whose elements have infinite memory 
(see~\cite{data_monoids} for an example). On the other hand, monoids that 
are homomorphic images of the free monoid contains only elements with finite 
memory. 
As we are mainly interested in homomorphic images of the free monoid, 
hereafter we will consider only monoids whose elements have finite memory 
-- this property is called the \emph{finite support axiom}.

\begin{definition}\label{def:memory}
Let $\cM$ be a data monoid. 
We define \emph{the memory} of an element $s$ in $\cM$ as
$$
  \mem(s) ~=~ 
  \bigcap\big\{\, C\subseteq D ~:~ \forall \tau\in G_D.~ (\forall d\in C.\: \tau(d)=d) 
                                                    ~\then~ \tau(s)=s \,\big\} \ .
$$
and we assume that this set is always finite.
A data value is said to be \emph{memorable} in $s$ if it belongs to $\mem(s)$.
\end{definition}

A \emph{morphism} between two data monoids $\cM=(M,\cdot,\hat\blank)$ and 
$\cN=(N,\odot,\check\blank)$ is a monoid morphism that is equivariant, namely, 
a function $h:M\then N$ such that 
\begin{itemize}
  \item $h(1_\cM)=1_{\cN}$, 
  \item $h(s\cdot t)=h(s)\odot h(t)$ for all $s,t\in M$, 
  \item $h(\hat\tau(s))=\check\tau(h(s))$ for all $s\in M$ and all renamings $\tau\in\cG_D$. 
\end{itemize}
A data language $L\subseteq (D\times A)^*$ is \emph{recognized} by a 
morphism $h:(D\times A)^*\then\cM$ if the membership of a word 
$w\in (D\times A)^*$ in $L$ is determined by the element $h(w)$ of $\cM$, 
namely, if $L=h^{-1}(h(L))$. 

We conclude the preliminary discussion on data monoids by recalling the 
definition of \emph{orbit-finite} $\cG_D$-set, that is, a $\cG_D$-set that 
admits only finitely many orbits $\{\tau(s) ~:~ \tau\in\cG_D\}$. 
This property can be naturally applied to the domain of a data monoid 
$\cM$, resulting in the concept of \emph{orbit-finite data monoid}. 
Below, we give an example of a data language that is recognized by an 
orbit-finite data monoid and an example of a data language that is recognized 
only by orbit-infinite data monoids.

\begin{example}\label{ex:orbit-finite-language}
Consider the language $L_2 = \{d_1\ldots d_n\in D^* ~:~ n\ge 1,~d_1=d_n\}$ 
introduced at the beginning of Section \ref{sec:introduction}. 
One can construct the syntactic data monoid recognizing $L_2$ 
by considering the classes of the two-sided Myhill-Nerode equivalence on data words. 
More precisely, the class of a non-empty word $w = d_1\ldots d_n$ can be 
identified with the pair $(d_1,d_n)$ of data values, while the class of the empty word
is a distinguished element behaving as the identity. Accordingly, the product of two 
elements $(d,e)$ and $(f,g)$, distinct from the identity, is the pair $(d,g)$. 
This syntactic data monoid admits only three orbits: the singleton orbit 
containing the identity element, the orbit $\{(d,d) ~:~ d\in D\}$, 
and the orbit $\{(d,e) ~:~ d\neq e \in D\}$.
\end{example}

\begin{example}\label{ex:orbit-infinite-language}
Consider the language $L_4 = \{d_1\ldots d_n\in D^* ~:~ \forall i\neq j\le n.~ d_i\neq d_j\}$.
The element of the syntactic monoid of $L_4$ that corresponds to a word $w \nin L_4$ behaves 
as a null element $0$: the product of $0$ with any other element of the syntactic monoid 
gives again $0$. On the other hand, the element that correspond to a word
$w = d_1\ldots d_n \in L_4$ can be identified with the set $\{d_1,\ldots,d_n\}$ of data 
values. Accordingly, the product of the syntactic monoid maps any two disjoint sets of 
data values to their union, and any two intersecting sets of data values to the null element $0$. 
It is easy to see that this syntactic monoid has infinitely many orbits.
\end{example}

\subsection{Finite presentations of data monoids}\label{subsec:presentations}

Orbit-finite data monoids are infinite objects that need to be represented in a finite
way in order to be used in algorithms. Here we propose to represent these objects by 
means of systems of equations involving terms. The starting point consists of looking 
at restrictions of data monoids to finite sets of data values:

\begin{definition}\label{def:restriction}
Given a data monoid $\cM=(M,\cdot,\hat\blank)$ and a (finite or infinite) set 
$C\subseteq D$, we define the \emph{restriction of $\cM$ to $C$} as the data 
monoid $\cM|_C=(M|_C,\,\cdot\,|_C,\hat\blank|_C)$, where $M|_C$ consists of all 
elements $s\in M$ such that $\mem(s)\subseteq C$, $\cdot\,|_C$ is the restriction 
of $\cdot$ to $M|_C$, and $\hat\blank|_C$ is the restriction of $\hat\blank$ to 
$\cG_C$ and $M|_C$.
\end{definition}

Despite the fact that the restriction of a data monoid to a finite set $C$ is 
still a data monoid, one has to keep in mind that data monoids over finite sets 
do not satisfy the same properties as those over infinite sets. For instance, 
the Memory Theorem from \cite{data_monoids} does not hold for data monoids over 
finite sets. However, most of the properties that we outline hereafter hold 
independently of whether data monoids are defined over finite or infinite sets 
of data values.

We observe that if $s$ and $t$ are elements in the same orbit of a data monoid, 
then their memories have the same cardinality. This allows us to denote 
by $\dlen{\cM}$ the maximum cardinality of the memories of the elements of 
an orbit-finite data monoid $\cM$. 
The following proposition shows that the restriction of an orbit-finite data monoid 
$\cM$ over a {\sl sufficiently large} finite set $C$ uniquely determines $\cM$. 
A more careful analysis shows that a number of natural operations on orbit-finite 
data monoids can be performed at the level of the finite restriction.  
Some noticeable examples of such operations are the disjoint union and the product
of two orbit-finite data monoids and the quotient of an orbit-finite data monoid 
with respect to a congruence. Thus, restrictions of orbit-finite data monoids
provide a convenient way to effectively manipulate orbit-finite data monoids.

\begin{proposition}\label{prop:restriction}
Let $\cM$, $\cN$ be orbit-finite data monoids such that $\dlen{\cM}=\dlen{\cN}$ 
and let $C\subseteq D$ be a set of cardinality at least $2\dlen{\cM}$. 
If $\cM|_C$ and $\cN|_C$ are isomorphic, then so are $\cM$ and $\cN$.
\end{proposition}

\begin{proof}
Let $\cM=(M,\cdot,\hat\blank)$ and $\cN=(N,\odot,\check\blank)$ and let $f_C$ be a data monoid
isomorphism from $\cM|_C$ to $\cN|_C$. We show how extend $f_C$ to an isomorphism from $\cM$ to 
$\cN$. Given $s\in M$, we let $\tau$ be any renaming such that $\tau(\mem(s)) \subseteq C$
(such a renaming exists since $\len{\mem(s)}\le\len{C}$); we then observe that the element
$\hat\tau(s)$ belongs to the data monoid $M|_C$ and we accordingly define
$$
  f(s) ~\eqdef~ \hat\tau^{-1}(f_C(\hat\tau(s))) \ .
$$
We prove that the function $f$ is well defined, namely, that $f(s)$ does not depend on 
the choice of the renaming $\tau$. To do so, we consider two renamings $\tau$ and $\pi$ 
such that $\tau(\mem(s))\subseteq C$ and $\pi(\mem(s))\subseteq C$, we define 
$t=\hat\tau^{-1}(f_C(\hat\tau(s)))$ and $t'=\hat\pi^{-1}(f_C(\hat\pi(s)))$, and we prove 
that $t=t'$. Let $\theta = \pi \circ \tau^{-1}$. Since $\pi = \theta \circ \tau$, we have
$$
  t'~=~\check\pi^{-1}(f_C(\hat\pi(s)))~=~\check\pi^{-1}(f_C(\hat\theta(\hat\tau(s)))) \ .
$$
Since $\theta$ is a renaming over $C$ and $f_C$ is a morphism between data monoids over $C$, 
we have $f_C \circ \hat\theta = \check\theta \circ f_C$ and hence 
$$
  \check\pi^{-1}(f_C(\hat\theta(\hat\tau(s)))) ~=~ \check\pi^{-1}(\check\theta(f_C(\hat\tau(s))) \ .
$$
Moreover, since $\tau^{-1}= \pi^{-1} \circ \theta$, we get
$$
  \check\pi^{-1}(\check\theta(f_C(\hat\tau(s))) ~=~ \check\tau^{-1}(f_C(\hat\tau(s)))
                                                ~=~ t \ .
$$
This proves that the function $f$ is well defined.


Next, we claim that $f$ is a bijection from $M$ to $N$. Surjectivity is straightforward, since 
for every element $t\in N$, there exists a renaming $\tau$ such that $\tau(\mem(t))\subseteq C$,
and hence, if we let $s = \hat\tau^{-1}(f_C^{-1}(\hat\tau(t)))$, we have $f(s)=t$. 
The proof that $f$ is injective is analogous to the proof that $f$ is well defined, 
and thus omitted. It remains to prove that $f$ is a data monoid isomorphism.

\smallskip\noindent
{\em Commutativity with renamings.}
We claim that $f$ commutes with the action of renamings. Given an element $s\in M$ and
a renaming $\pi\in\cG_D$, we choose a renaming $\tau$ such that $\tau(\mem(s))\subseteq C$
and $\tau(\mem(\hat\pi(s))) \subseteq C$ hold (note that such a renaming exists since 
$\len{\mem(s)\cup\mem(\hat\pi(s))} \le \len{C}$). In particular, both elements $\hat\tau(s)$ 
and $\hat\tau(\hat\pi(s))$ belong to the data monoid $\cM|_C$.
We also define the renaming $\theta=\tau\circ\pi\circ\tau^{-1}$. Note that, by construction,
we have $\hat\theta(\hat\tau(s))=\hat\tau(\hat\pi(s))$. Moreover, by exploiting the definition 
of $f$ and the fact that $f_C$ is a data monoid morphism from $\cM|_C$ to $\cN|_C$, we obtain
$$
\begin{array}{rclclcl}
  f(\hat\pi(s)) &=& \check\tau^{-1}(f_C(\hat\tau(\hat\pi(s))))      
                &=& \check\tau^{-1}(f_C(\hat\theta(\hat\tau(s))))   \\[1ex]
                &=& \check\tau^{-1}(\check\theta(f_C(\hat\tau(s)))) 
                &=& \check\pi(\check\tau^{-1}(f_C(\hat\tau(s))))    
                &=& \check\pi(f(s)) \ .
\end{array}
$$

\smallskip\noindent
{\em Commutativity with products.}
We conclude the proof by showing that $f$ preserves identities and commutes with products. 
Recall that $M|_C$ (resp., $N|_C$) contains the identity $1_\cM$ of $\cM$ (resp., the identity 
$1_\cN$ of $\cN$). Since $f_C$ is a monoid morphism from $\cM|_C$ to $\cN|_C$, it follows 
that $f(1_\cM)=f_C(1_\cM)=1_\cN$. Let us now consider two elements $s,t\in M$. Let $\tau$ 
be a renaming such that $\tau(\mem(s))\subseteq C$ and $\tau(\mem(t))\subseteq C$
(again, such a renaming exists since $\len{\mem(s)\cup\mem(t)}\le\len{C}$). In particular,
both elements $\hat\tau(s)$ and $\hat\tau(t)$ belong to $\cM|_C$. Since $f_C$ is a monoid 
morphism, we obtain
$$
\begin{array}{rcl}
  f(s\cdot t) &=& \check\tau^{-1}\big(f_C(\hat\tau(s\cdot t))\big) 
              ~=~ \check\tau^{-1}\big(f_C(\hat\tau(s)\cdot\hat\tau(t))\big)        \\[1ex]
              &=& \check\tau^{-1}\big(f_C(\hat\tau(s))\odot f_C(\hat\tau(t))\big)                 
              ~=~ \check\tau^{-1}\big(f_C(\hat\tau(s))\big) \:\odot\: 
                  \check\tau^{-1}\big(f_C(\hat\tau(t))\big)                        \\[1ex]
              &=& f(s)\odot f(t) \ .
\end{array}
$$
We have just shown that $\cM$ and $\cN$ are isomorphic data monoids.
\end{proof}

\smallskip
Proposition \ref{prop:restriction} shows that, assuming orbit-finiteness, one 
can represent an infinite data monoid by a finite restriction of it. It is also 
possible to give more explicit representations of orbit-finite data monoids using 
what we call \emph{term-based presentation systems}. According to such systems, 
elements are represented by terms of the form $o(d_1, \ldots, d_k)$, where $o$ is 
an \emph{orbit name}, with an associated arity~$k$, and $d_1,\dots,d_k$ are 
distinct data values. Terms are furthermore considered modulo an equivalence 
relation $\dsim$ and equipped with a binary product operation $\odot$. 
Before entering the details of term-based presentation systems, 
we explain the general idea by means of an example.

\begin{example}\label{ex:presentation}
Let $L_1 = \{d_1\ldots d_n\in D^* \::\: 
             \exists i,j,k\le n.~ d_i\neq d_j,\: d_j\neq d_k,\, d_i\neq d_k\}$
be the language of data words with at least three distinct values. 
The elements of the syntactic data monoid of $L_1$ can be conveniently 
represented by terms, as follows: the empty word is represented by the 
term $o(\emptystr)$ of arity $0$; the equivalence class of a constant 
data word $d\ldots d$ is represented by the term $p(d)$ or arity $1$; 
the equivalence class of a data word containing exactly two distinct 
data values $d,e$ is represented by the term $q(d,e)$ or, equivalently, 
by the term $q(e,d)$; the equivalence class for all remaining words is 
represented by another term $r(\emptystr)$ of arity $0$. Accordingly, 
the syntactic data monoid of $L_1$ is represented by the following 
system of equations, where $d,e,f,g$ denote pairwise distinct data values 
and $t$ denotes a generic term built up from the orbit names $o,p,q,r$:
$$
\begin{array}{l}
\begin{array}{r@{~}c@{~}ccc@{~}c@{~}ccl}
  o(\emptystr) &\odot& t &\:\dsim&\: t &\odot& o(\emptystr) &\dsim& t            \\[1ex]
  r(\emptystr) &\odot& t &\:\dsim&\: t &\odot& r(\emptystr) &\dsim& r(\emptystr)
\end{array} 
\\[3.5ex]
\begin{array}{r@{~}c@{~}ccl}
  p(d) &\odot& p(d)      &\dsim& p(d)                    \\[1ex]
  p(d) &\odot& p(e)      &\dsim& q(d,e)
\end{array}
\\[3.5ex]
\begin{array}{r@{~}c@{~}l}
  q(d,e) &~\dsim&~ q(e,d)
\end{array}
\end{array}
\quad
\begin{array}{l}
\begin{array}{r@{~}c@{~}ccl}
  q(d,e) &\odot& p(d)    &\dsim& p(d) \,\odot\, q(d,e) ~\:\dsim\:~ q(d,e) \\[1ex]
  q(d,e) &\odot& q(d,e)  &\dsim& q(d,e)
\end{array}
\\[4ex]
\begin{array}{r@{~}c@{~}ccl}
  q(d,e) &\odot& p(f)    &\dsim& p(f) \,\odot\, q(d,e) ~\:\dsim\:~ r(\emptystr) \\[1ex]
  q(d,e) &\odot& q(d,f)  &\dsim& r(\emptystr) \\[1ex]
  q(d,e) &\odot& q(f,g)  &\dsim& r(\emptystr) \ .
\end{array}
\end{array}
$$
\end{example}

Hereafter, we will focus on those term-based presentation systems that correctly
represent data monoids, namely, whose binary operation $\odot$ is associative 
over the equivalence classes. Clearly, if the orbit names of the term-based 
presentation system range over a finite set, then the represented data monoid 
is orbit-finite. We will see below that a converse result also holds, showing 
that every orbit-finite data monoid can be represented by a term-based 
presentation system that uses only finitely many orbit names. This allows 
us to represent orbit-finite data monoids by finite systems of equations 
involving terms and products between them. 

\medskip
We now give a formal definition of our term-based presentation system. We denote by 
$T_{O,C}$ the set of all \emph{terms} of the form $o(d_1,\ldots,d_k)$, where $o$ is 
an orbit name from a finite set $O$, $k$ is the arity of $o$, and $d_1,\ldots,d_k$ 
are pairwise distinct data values from a (finite or infinite) subset $C$ of $D$.

\begin{definition}\label{def:term-rep-system}
Let $O$ be a finite set of orbit names and let $C$ be a (finite or infinite) 
set of data values. A \emph{term-based presentation system} $\cS$ over $(O,C)$
consists of a set of terms $T=T_{O,C}$, a binary operation $\odot$ on $T$, an 
action $\check\blank$ defined by 
$\check\tau(o(d_1,\ldots,d_n)) = o(\tau(d_1),\ldots,\tau(d_n))$, 
and an equivalence $\dsim$ on $T$ satisfying the following properties 
for all terms $s,t,u,v\in T$ and all renamings $\tau\in\cG_C$:
\begin{itemize}
  \item {\em (identity)} 
        there is a term $1_T$ of arity $0$ such that $1_T \odot s = s \odot 1_T = s$,
  \item {\em (equivariance)} $\check\tau(s) \odot \check\tau(t) \:=\: \check\tau(s \odot t)$,
  \item {\em (associativity up to $\dsim$)} $(s \odot t) \odot u \:\dsim\: s \odot (t \odot u)$,
  \item {\em (congruence for products)} 
        if $s \dsim t$ and $u \dsim v$, then $s\odot u \:\dsim\: t\odot v$,
  \item {\em (congruence for renamings)} if $s \dsim t$ then $\check\tau(s) \dsim \check\tau(t)$.
\end{itemize}
\end{definition}

Let $\cS=(T,\odot,\check\blank,\dsim)$ be a term-based presentation system. We remark 
that $(T,\odot,\check\blank)$ is not necessarily a data monoid because associativity 
only holds up to congruence $\dsim$. We say that $\cS$ \emph{represents} the 
structure $\cM=(M,\cdot,\hat\blank)$ if 
\begin{itemize}
  \item $M$ is the set of $\dsim$-equivalence classes of terms in $T$, 
  \item $\cdot$ is the binary operation on $M$ defined by 
        $[s]_\dsim \cdot [t]_\dsim = [s \odot t]_\dsim$,
  \item $\hat\blank$ maps any renaming $\tau\in\cG_C$ to the function $\hat\tau$
        defined by $\hat\tau([s]_\dsim) = [\check\tau(s)]_\dsim$
\end{itemize}
(it is easy to check that both $\cdot$ and $\hat\blank$ are well defined). 

\begin{proposition}\label{prop:presentation}
Every term-based presentation system represents an orbit-finite data monoid. 
Conversely, every orbit-finite data monoid is represented by a term-based 
presentation system.
\end{proposition}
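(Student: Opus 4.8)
The plan is to prove both directions separately.

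For the forward direction, I need to show any term-based presentation system yields an orbit-finite data monoid. The explicit construction of the represented structure is already handed to me before the statement: $M$ is the set of $\dsim$-classes, $\cdot$ is induced by $\odot$, and $\hat\blank$ by $\check\blank$. So I should verify the data-monoid axioms. Associativity of $\cdot$ follows from associativity up to $\dsim$ together with congruence for products; the identity element is $[1_T]_\dsim$; the action axioms and equivariance of $\cdot$ descend from equivariance and congruence for renamings. Orbit-finiteness is immediate: every term $o(d_1,\dots,d_k)$ lies in the orbit determined by its orbit name $o$, since renamings act transitively on tuples of distinct data values of fixed arity, so there are at most $\len O$ orbits. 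The one genuine point requiring care is checking that the finite support axiom holds, i.e. that each $\dsim$-class has finite memory; but $\check\tau$ fixes a term $o(d_1,\dots,d_k)$ whenever $\tau$ fixes $\{d_1,\dots,d_k\}$, so $\{d_1,\dots,d_k\}$ supports $[o(d_1,\dots,d_k)]_\dsim$ and the memory is finite.

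For the converse, I start from an arbitrary orbit-finite data monoid $\cM=(M,\cdot,\hat\blank)$ and must build a term-based presentation system representing it. I take $O$ to be a set of orbit names, one per orbit of $M$, with the arity of $o$ set to $\len{\mem(s)}$ for any representative $s$ of that orbit (well defined since memory cardinality is orbit-invariant, as noted before Proposition~\ref{prop:restriction}). The encoding map sends an element $s$ of orbit $o$ to a term $o(d_1,\dots,d_k)$ listing the memorable values $\mem(s)=\{d_1,\dots,d_k\}$ in some order; the subtlety is that elements sharing the same memory set can still differ, so a single orbit name per orbit together with the listed data values must determine the element. The key fact making this work is that within one orbit, an element is fixed precisely by its memory: if two elements $s,s'$ of the same orbit have $\mem(s)=\mem(s')$ and there is a renaming fixing this common memory that carries $s$ to $s'$, then in fact $s=s'$, so the element is recoverable from the orbit name plus an \emph{ordered} listing of its memory. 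I would need to fix, for each orbit, a canonical way of ordering memories so that the term representation is a genuine bijection between $M$ and $\dsim$-classes.

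I then define $\dsim$ to identify exactly those terms encoding the same element (possibly differing by the chosen ordering of the memory), define $\odot$ by transporting $\cdot$ through the encoding, and define $\check\blank$ syntactically as in Definition~\ref{def:term-rep-system}. Verifying the five axioms then reduces to the corresponding data-monoid properties of $\cM$: identity, equivariance and associativity transfer directly, and the two congruence conditions hold because $\dsim$ is defined as the kernel of an equivariant encoding. The main obstacle, and the part demanding the most care, is exactly the well-definedness of the encoding: showing that an element is uniquely reconstructible from its orbit name and the ordered list of its memorable values, which relies on the interaction between the Memory Theorem and the transitive action of renamings on distinct-value tuples. Once that bijection is nailed down, checking that $\odot$ is well defined on $\dsim$-classes and agrees with $\cdot$ is routine.
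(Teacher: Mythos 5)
Your forward direction is fine and matches the paper's: verify the monoid and action axioms for the quotient structure, and get orbit-finiteness from the finiteness of $O$.

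The converse direction has a genuine gap in the step you yourself flag as the crux. You propose to encode an element $s$ of orbit $o$ as $o(d_1,\ldots,d_k)$ where $(d_1,\ldots,d_k)$ is a \emph{canonical ordering of the set} $\mem(s)$, and you justify injectivity by the fact that a renaming fixing the common memory of two same-orbit elements and carrying one to the other forces them to be equal. That fact is true but does not give you what you need: two distinct elements of the same orbit can have \emph{exactly the same memory}, with the witnessing renaming permuting the memory rather than fixing it. Concretely, in the syntactic monoid of $L_2$ (Example \ref{ex:orbit-finite-language}) the elements $(d,e)$ and $(e,d)$ lie in the same orbit, both have memory $\{d,e\}$, and are distinct; any ordering of the memory that depends only on the set $\{d,e\}$ assigns them the same term, so your encoding is not injective and the transported product $\odot$ is not well defined. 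The ordering of the data values in the term must carry information about the element itself, not just about its memory set.

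The paper resolves this by building the map in the opposite direction: fix a representative $m_o$ in each orbit with $\mem(m_o)=\{1,\ldots,\arity(o)\}$, and send the \emph{term} $o(d_1,\ldots,d_k)$ to the \emph{element} $\hat\sigma_{\bar d}(m_o)$, where $\sigma_{\bar d}$ maps $i\mapsto d_i$. This map $f$ is surjective but not injective (two terms differing by a permutation of the tuple that stabilizes $m_o$ hit the same element), and $\dsim$ is then \emph{defined} as the kernel of $f$, with $\odot$ defined by transporting $\cdot$ through $f$ and a canonical choice $g(m)\in f^{-1}(m)$. With that reversal, $(d,e)$ and $(e,d)$ receive the distinct terms $o(d,e)$ and $o(e,d)$, and all well-definedness issues are absorbed into $\dsim$. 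If you want to keep your direction of encoding, you would have to choose the ordering of $\mem(s)$ as a function of $s$ consistently across each orbit, which in effect amounts to choosing orbit representatives anyway; as written, your construction breaks at the first orbit whose elements are not determined by their memory sets.
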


\begin{proof}
We prove the first claim. Let $\cS=(T,\odot,\check\blank,\dsim)$ be a term-based presentation 
system over a set $C$ of data values and let $\cM=(M,\cdot,\hat\blank)$ be the structure represented 
by $\cS$. It is easy to see that $\cdot$ is an associative operation and that $[1_T]_\dsim$ behaves
as an identity for $\cdot$. This means that $(M,\cdot)$ is a monoid. 
Below, we verify the other properties of orbit-finite data monoids:
\begin{enumerate}       
  \item Since the identity term $1_T$ has arity $0$, we have that all renamings $\tau\in\cG_C$ 
        are stabilizers of $[1_T]_\dsim$, that is,
        $\hat\tau([1_T]_\dsim) = [\check\tau(1_T)]_\dsim = [1_T]_\dsim$.
  \item We now check that $\widehat{\tau \circ \pi} = \hat\tau \circ \hat\pi$ 
        for all data renamings $\tau, \pi \in\cG_C$. Let $[o(\bar{d})]$ be an element of $M$ 
        (we will drop the subscript $\dsim$ in the rest of the proof). Then we have
        $$
          \qquad\quad
          \widehat{\tau \circ \pi}[o(\bar{d})] ~=~ [o((\tau \circ \pi)(\bar{d}))] 
                                               ~=~ [o(\tau (\pi(\bar{d})))] 
                                               ~=~ \hat\tau([o(\pi(\bar{d}))]) 
                                               ~=~ \hat\tau \circ \hat\pi([o(\bar{d})]) \ .
        $$
  \item If $\iota$ is the identity renaming on $C$, then 
        $\hat\iota([o(\bar{d})]) = [o(\iota(\bar{d}))] = [o(\bar{d})]$. 
  \item Let $[s], [t] \in M$ and let $\tau\in\cG_C$ be a data renaming. 
        We prove that $\hat\tau([s] \cdot [t]) = \hat\tau([s]) \cdot \hat\tau([t])$. 
        Assume that $s=o(\bar{d})$, $t=p(\bar{e})$, and $s \odot t = q(\bar{f})$. 
        Then
        $$
          \qquad\quad
          \hat\tau([s] \cdot [t]) ~=~ \hat\tau([s \odot t]) 
                                  ~=~ \hat\tau([q(\bar{f})]) 
                                  ~=~ [q(\tau(\bar{f}))] 
                                  ~=~ [\check\tau(q(\bar{f}))] 
                                  ~=~ [\check\tau(s \odot t)] \ .
        $$
        Moreover, from the equivariance property of Definition \ref{def:term-rep-system},
        we know that
        $$
          \check\tau(s \odot t) ~\dsim~ \check\tau(s) \odot \check\tau(t).
        $$
        We continue our calculation as follows
        $$
        \qquad\quad
        \begin{array}{rcl}
          \big[\check\tau(s \odot t)\big] &=& \big[\check\tau(s) \odot \check\tau(t)\big] 
                                          ~=~ \big[\check\tau(o(\bar{d})) \odot 
                                                   \check\tau(p(\bar{e}))\big] 
                                          ~=~ \big[o(\tau(\bar{d})) \odot 
                                                   p(\tau(\bar{e}))\big]          \\[1ex]
                                          &=& \big[o(\tau(\bar{d}))\big] \cdot 
                                              \big[p(\tau(\bar{e}))\big] 
                                          ~=~ \hat\tau([s]) \cdot \hat\tau([t]) \ .
        \end{array}
        $$
        Combining the two equations, we get 
        $\hat\tau([s] \cdot [t]) = \hat\tau([s]) \cdot \hat\tau([t])$.
  \item We can finally claim that $\cM=(M,\cdot,\hat\blank)$ 
        is an orbit-finite data monoids: this follows immediately from the
        fact that the set $O$ of orbit names is finite.
\end{enumerate} 

\medskip\noindent
The proof of the second part of the proposition is more tedious, but not really difficult.
Let $\cM=(M,\cdot,\hat\blank)$ be an orbit-finite data monoid over $C$. If $C$ is 
infinite, then we assume without loss of generality that the data values in $C$ are 
the positive natural numbers. If $C$ is finite, then we assume that $C$ is a prefix 
of the natural numbers. We first define a term-based representation system 
$\cS=(T,\odot,\check\blank,\dsim)$ and later we show that $\cS$ represents $\cM$.

\smallskip\noindent
{\em Definition of $\cS$. }
Let $O$ be a finite set of orbit names that contains exactly one orbit name $o$ for each 
orbit of $\cM$. The arity of $o$ is the size of the memories of the elements of $o$ 
(recall that memories of elements from the same orbit have the same cardinality). 
Define $T$ to be the set of all terms that are build up from orbit symbols in $O$ and data 
values in $C$. Recall that, in a term-based representation system, the action $\check\blank$
of the renamings is naturally defined as follows: 
$\check\tau(o(\bar{d})) = o(\tau(\bar{d}))$ for all data renamings $\tau\in\cG_C$.

Below we define the operation $\odot$ on $T$. Since each element of $\cM$ can be represented
by several terms in $T$, we need to commit to a specific mapping of elements in $\cM$ to 
terms in $T$. The rough idea is as follows. We begin by fixing some representatives of the 
orbits of $\cM$ and an isomorphism between these representatives and some canonical terms 
in $T$. To compute the product of two terms $s,t\in T$, we first apply a renaming so as to 
map them to the canonical terms $\tilde{s}$ and $\tilde{t}$; then, we exploit the isomorphism 
between the canonical terms and the representatives of the orbits of $\cM$ to compute the 
product of $\tilde{s}$ and $\tilde{t}$ inside $\cM$; finally, we apply the inverse isomorphism
and renaming to obtain the desired product $s\odot t$.

More precisely, we fix a representative $m_o$ inside each orbit $o$ of $\cM$ 
in such a way that $\mem(m_o)$ is a prefix of the natural numbers, namely, 
$\mem(m_o)=\{1,\ldots,\mathsf{arity}(o)\}$. 
We associate with each sequence of data values $\bar{d}=d_1,\ldots,d_k$ 
a renaming $\sigma_{\bar{d}}$ that maps the numbers $1,\ldots,k$ to the 
values $d_1,\ldots,d_k$, and vice versa, and that is the identity on  
$D\setminus\{1,\ldots,k,d_1,\ldots,d_k\}$. 
We then define the function $f$ from $T$ to $M$ such that, for every term $o(\bar{d})$,
$$
  f(o(\bar{d})) ~\eqdef~ \hat\sigma_{\bar{d}}(m_o).
$$
Note that $f$ is not injective in general. 
This allows us to define the equivalence $\dsim$ over terms by $s \dsim t$ iff $f(s)=f(t)$. 

For each element $m\in M$, we need to choose in a canonical way a term $g(m)$ that belongs 
to the set $f^{-1}(m)$. This can be accomplished by letting $g(m)$ be the term in $f^{-1}(m)$ 
with the minimal tuple of data values according to the lexicographical order. 
In a similar way, we can associate with each pair $(s,t)$ of terms in $T$ a 
\emph{canonical renaming} $\sigma_{s,t}$ as follows. First, we say that a pair 
$(s',t')$ of terms is \emph{minimal} if $s'$ is of the form $o(1,\ldots,k)$, $t'$ 
is of the form $p(d_1,\ldots,d_h)$, and, for all $1\le i<j\le h$, $d_i,d_j\nin \{1,\ldots,k\}$ 
implies $d_i<d_j$. Then, we define the canonical renaming $\sigma_{s,t}$ as the unique renaming 
$\sigma$ such that $(\sigma(s),\sigma(t))$ is a minimal pair $(\sigma(s),\sigma(t))$.

We can now define the product $\odot$ of two terms $s,t\in T$ as follows:
$$
  s\odot t ~\eqdef~ \check\sigma^{-1}_{s,t} \circ g\,
                    \Big(\big(f \circ \hat\sigma_{s,t}\,(s)\big) \:\cdot\:
                         \big(f \circ \hat\sigma_{s,t}\,(t)\big)\Big).
$$
Note that the term $s\odot t$ belongs to the set $f^{-1}(f(s) \cdot f(t))$.
Accordingly, we define the identity term $1_T$ to be $g(1_{\cM})$, where 
$1_{\cM}$ is the identity element of $\cM$. This completes the definition
of $\cS=(T,\odot,\check\blank,\dsim)$.

\smallskip\noindent
{\em $\cS$ is a term-based presentation system.}
Before we prove that $\cS$ satisfies the conditions of Definition \ref{def:term-rep-system}, we establish the following claim.

\begin{claim}
Let $s,t \in T$ and $\tau\in\cG_C$. Then
\begin{condlist}
  \item $f(\check\tau(s)) ~=~ \hat\tau(f(s))$,
        \label{presentation-first-cond}
        \label{presentation-cond1}
  \item $f(s \odot t) ~=~ f(s) \cdot f(t)$
        \label{presentation-cond2}
  \item $\tau \circ \sigma_{s,t}\,(d) ~=~ \sigma_{\check\tau(s),\check\tau(t)}\,(d)$ 
        for all $d \in \mem(s) \cup \mem(t)$,
        \label{presentation-cond3}
  \item $\sigma_{\check\tau(s),\check\tau(t)} \circ \tau \,(d) ~=~ \sigma_{s,t}\,(d)$ 
        for all $d \in \mem(s) \cup \mem(t)$.
        \label{presentation-cond4}
        \label{presentation-last-cond}
\end{condlist}               
\end{claim}     

\begin{proof}[Proof of claim]
We first prove Condition \refcond{presentation-cond1}.
Recall that if $\bar{d}=d_1,\ldots,d_k$ is a tuple of data values, 
then $\sigma_{\bar{d}}$ is the data renaming that maps the numbers 
$1,\ldots,k$ to the values $d_1,\ldots,d_k$, and vice versa, and that 
is the identity on $D\setminus\{1,\ldots,k,d_1,\ldots,d_k\}$. 
For all renamings $\tau$ and numbers $i\le k$, we have that
\begin{align*}
  \tau \circ \sigma_{\bar{d}}\,(i) ~=~ \tau(d_i) ~=~ \sigma_{\tau(\bar{d})}\,(i) \ . 
  \tag{$\star$}
\end{align*}
For every term $s=o(\bar{d})$, with $\bar{d}=d_1,\ldots,d_k$, we verify that
\begin{align*}
  f(\check\tau(o(\bar{d}))) 
  &~=~ f(o(\tau(\bar{d})))                        \tag{by definition of $\check\blank$} \\[1ex]
  &~=~ \hat\sigma_{\tau(\bar{d})}(m_o)            \tag{by definition of $f$} \\[1ex]
  &~=~ \widehat{\tau \circ \sigma}_{\bar{d}}(m_o) \tag{by $\star$ and 
                                                       by $\mem(m_o) \subseteq \{1, \ldots, k\}$} \\[1ex]
  &~=~ \hat\tau \circ \hat\sigma_{\bar{d}}~(m_o)  \tag{since $\cM$ is a data monoid} \\[1ex]
  &~=~ \hat\tau (f(o(\bar{d}))) \ .               \tag{by definition of $f$}
\end{align*}
Next, we verify Condition \refcond{presentation-cond2}:
\begin{align*}
  \mspace{-15mu}
  f(s \odot t) 
  &~=~ \hat\sigma_{s,t}^{-1} \circ \hat\sigma_{s,t} \circ f \,(s \odot t) 
       \tag{since $\hat\sigma_{s,t}^{-1} \circ \hat\sigma_{s,t}$ is the identity} \\[1ex]
  &~=~ \hat\sigma_{s,t}^{-1} \circ f \circ \check\sigma_{s,t} \,(s \odot t) 
       \tag{by Condition \refcond{presentation-cond1}} \\[1ex]
  &~=~ \hat\sigma_{s,t}^{-1} \circ f \circ \check\sigma_{s,t} 
       \,\Big(\check\sigma^{-1}_{s,t} \circ g \big(f \circ \hat\sigma_{s,t}(s) ~\cdot~ 
                                                   f \circ \hat\sigma_{s,t}(t) \big)\Big) 
       \tag{by definition of $\odot$} \\[1ex]
  &~=~ \hat\sigma_{s,t}^{-1} \circ f 
       \,\Big(g \big(f \circ \check\sigma_{s,t}(s) ~\cdot~ 
                     f \circ \check\sigma_{s,t}(t) \big)\Big) 
       \tag{since $\check\sigma_{s,t} \circ \check\sigma^{-1}_{s,t}$ is the identity} \\[1ex]
  &~=~ \hat\sigma_{s,t}^{-1} 
       \,\big(f \circ \check\sigma_{s,t} \,(s) ~\cdot~ 
              f \circ \check\sigma_{s,t} \,(t) \big) 
       \tag{since $f \circ g$ is the identity} \\[1ex]            
  &~=~ \hat\sigma_{s,t}^{-1} 
       \,\big(\hat\sigma_{s,t} \circ f \,(s) ~\cdot~ 
              \hat\sigma_{s,t} \circ f \,(t) \big)
       \tag{by Condition \refcond{presentation-cond1}} \\[1ex]            
  &~=~ \hat\sigma_{s,t}^{-1} \circ \hat\sigma_{s,t} 
       \,\big(f(s) \cdot f(t) \big)
       \tag{since $\cM$ is a data monoid} \\[1ex]        
  &~=~ f(s) \cdot f (t) \ .
       \tag{since $\hat\sigma_{s,t}^{-1} \circ \hat\sigma_{s,t}$ is the identity}
\end{align*}
As for Condition \refcond{presentation-cond3}, 
suppose that $s=o(d_1,\ldots,d_k)$ and $t=p(e_1,\ldots,e_h)$.
We first consider the case of a data value $d \in \mem(s)$, namely, 
$d=d_i$ for some $1\le i\le k$. By definition of $\sigma_{s,t}$, 
we have $\sigma_{s,t}(d) = i$. Hence 
$$
  \tau \circ \sigma_{s,t} \,(d) ~=~ \tau(i) 
                                ~=~ \sigma_{\check\tau(s),\check\tau(t)}\,(d_i) 
                                ~=~ \sigma_{\check\tau(s),\check\tau(t)}\,(d) \ .
$$
Next, we consider the case of a data value $d\in \mem(t)$, namely, 
$d=e_i$ for some $1\le i\le h$. By definition of $\sigma_{s,t}$,
we have $\sigma_{s,t}(d) = \len{\{e_1, \ldots, e_i\} \setminus \{d_1, \ldots, d_k\}}$.
From this we derive
$$
  \tau \circ \sigma_{s,t} \,(d) ~=~ \tau\big(\len{\{e_1, \ldots, e_i\} \setminus 
                                                  \{d_1, \ldots, d_k\}}\big) 
                                ~=~ \sigma_{\check\tau(s),\check\tau(t)}\,(e_i) 
                                ~=~ \sigma_{\check\tau(s),\check\tau(t)}\,(d) \ .
$$
The proof of the last condition 
$\sigma_{\check\tau(s),\check\tau(t)} \circ \tau\,(d) ~=~ \sigma_{s,t}\,(d)$ is similar.
\end{proof}     

Turning to the main proof of the proposition, we show that $\cS$ is
indeed a valid presentation system by verifying that all the conditions 
of Definition \ref{def:term-rep-system} are satisfied:
\begin{enumerate}
  \item {\em Identity.~}
        Recall that we defined the identity term to be $1_T = g(1_{\cM})$. 
        As $1_{\cM}$ has empty memory, we have $g(1_{\cM}) = f^{-1}(1_{\cM})$. 
        For a generic $t\in T$, we get
        \begin{align*}
          1_T \odot t 
          &~=~ \check\sigma^{-1}_{1_{\cS},t} \circ g 
               \,\big(f \circ \hat\sigma_{1_{\cS},t}\,(1_{\cS}) ~\cdot~ 
                      f \circ \hat\sigma_{1_{\cS},t}\,(t) \big) \\[1ex]
          &~=~ \check\sigma^{-1}_{1_{\cS},t} \circ g
               \,\big(f \circ \hat\sigma_{1_{\cS},t}\,(f^{-1}(1_{\cM})) ~\cdot~ 
                      f \circ \hat\sigma_{1_{\cS},t}\,(t) \big) \\[1ex]
          &~=~ \check\sigma^{-1}_{1_{\cS},t} \circ g
               \,\big(1_{\cM} ~\cdot~ f \circ \hat\sigma_{1_{\cS},t}\,(t) \big) \\[1ex]
          &~=~ \check\sigma^{-1}_{1_{\cS},t} \circ g
               \,\big(f \circ \hat\sigma_{1_{\cS},t}\,(t)\big) \\[1ex]
          &~=~ t \ .
        \end{align*}
  \item {\em Equivariance.~} 
        We verify that $\check\tau(s) \odot \check\tau(t) = \check\tau(s \odot t)$ 
        for all $s,t \in T$ and $\tau\in\cG_C$:
        \begin{align*}
          \mspace{-30mu}
          \check\tau(s) \odot \check\tau(t) 
          &~=~ \check\sigma^{-1}_{\check\tau(s),\check\tau(t)} \circ g
               \,\big(f \circ \hat\sigma_{\check\tau(s),\check\tau(t)}\,(\check\tau(s)) ~\cdot~ 
                      f \circ \hat\sigma_{\check\tau(s),\check\tau(t)}\,(\check\tau(t)) \big) \\[1ex]
          &~=~ \check\sigma^{-1}_{\check\tau(s),\check\tau(t)} \circ g
               \,\big(f \circ \hat\sigma_{s,t}\,(s) ~\cdot~ 
                      f \circ \hat\sigma_{s,t}\,(t) \big) 
               \tag{by Condition \refcond{presentation-cond4}} \\[1ex]
          &~=~ \check\tau \circ \check\sigma^{-1}_{s,t} \circ g
               \,\big(f \circ \hat\sigma_{s,t}\,(s) ~\cdot~ 
                      f \circ \hat\sigma_{s,t}\,(t) \big)
               \tag{by Condition \refcond{presentation-cond3}} \\[1ex]
          &~=~ \check\tau(s \odot t).
        \end{align*}    
  \item {\it Associativity up to $\dsim$.~}
        Recall that two terms are $\dsim$-equivalent iff $f$ maps them to the same 
        monoid element. We consider some terms $s,t,u$ and we prove that 
        $(s \odot t) \odot u ~\dsim~ s \odot (t \odot u)$ as follows:
        \begin{align*}
          f((s \odot t) \odot u) 
          &~=~ (f(s) \cdot f(t)) \cdot f(u) 
               \tag{by Condition \refcond{presentation-cond2}} \\[1ex]
          &~=~ f(s) \cdot (f(t) \cdot f(u)) 
               \tag{by associativity of $\cdot$} \\[1ex]
          &~=~ f(s \odot (t \odot u)) \ .
               \tag{by Condition \refcond{presentation-cond2}}
        \end{align*}
  \item {\em Congruence for products.~} 
        Assume that $s \dsim t$ and $u \dsim v$. 
        By exploiting Condition \refcond{presentation-cond2}
        we easily verify that $s \odot u ~\dsim~ t \odot v$: 
        \begin{align*}
          f(s \odot u) ~=~ f(s) \cdot f(u) ~=~ f(t) \cdot f(v) ~=~ f(t \odot v) \ .
        \end{align*}    
  \item {\em Congruence for renamings.~}
        Assume that $s \dsim t$ and let $\sigma$ be a renaming. 
        We need to prove that $\check\sigma(s) \dsim \check\sigma(t)$. 
        We know that $f(s) = f(t)$. Moreover, since $\hat\sigma$ is a function on $M$,
        we know that $\hat\sigma(f(s)) = \hat\sigma(f(t))$. Finally, we know from
        Condition \refcond{presentation-cond1} that 
        $f(\check\sigma(s)) = f(\check\sigma(t))$, whence
        $\check\sigma(s) \dsim \check\sigma(t)$.
\end{enumerate} 
We have just proved that $\cS$ is a term-based presentation system. 

\medskip\noindent
{\it The term-based system represents $\cM$.}
It remain to verify that $\cS=(T,\odot,\check\blank)$ represents 
the data monoid $\cM=(M,\cdot,\hat\blank)$. 
Let $\widetilde{\cM}=(\widetilde{M},\tilde{\cdot},\tilde{\hat\blank})$ be the structure
represented by $\cS$, where $\widetilde{M}$ is the set of equivalence classes of $\dsim$
and the product $\tilde\cdot$, the action $\tilde{\hat\blank}$, and the identity 
$1_{\widetilde{\cM}}$ are defined by
$$
  [s] ~\tilde\cdot~ [t] ~=~ [s \odot t] 
  \qquad\qquad
  \tilde{\hat\tau}([s]) ~=~ [\check\tau(s)]
  \qquad\qquad
  1_{\widetilde{\cM}} ~=~ [1_{\cS}] \ .
$$
We know form the first part of the proposition that $\widetilde{\cM}$ is a 
data monoid. We need to show that $\widetilde{\cM}$ and $\cM$ are isomorphic. 
For this, we consider the function $h:\widetilde{\cM} \then \cM$ defined by
$$
  h([s]) ~\eqdef~ f(s)
$$
and we show that $h$ is a data monoid isomorphism. 
We first check that $h$ is a data monoid morphism. There are three properties to check:
\begin{enumerate}
  \item We need to check that $h$ commutes with products. 
        Using Condition \refcond{presentation-cond2}, we can calculate 
        $$  
          h([s] ~\tilde\cdot~ [t]) ~=~ h([s \odot t]) 
                                   ~=~ f(s \odot t) 
                                   ~=~ f(s) \cdot f(t) 
                                   ~=~ h([s]) \cdot h([t]) \ .
        $$
  \item Next, we verify that $h$ preserves the identity: 
        $$
          h(1_{\widetilde{\cM}}) ~=~ h([1_{\cS}]) ~=~ h(g(1_{\cM})) ~=~ 1_{\cM} \ .
        $$
  \item Finally, we verify that $h$ commutes with the renamings:
        $$
          h(\tilde{\hat\tau}([s])) ~=~ h([\check\tau(s)]) 
                                   ~=~ f(\check\tau(s)) 
                                   ~=~ \hat\tau(f(s)) 
                                   ~=~ \hat\tau(h([s])) \ .
        $$
\end{enumerate} 
Furthermore, $h$ is injective by construction. It remains to show that $h$ is surjective. 
Let some $m \in M$ be given. We will show that there is a term $t \in T$ such that $f(t)=m$.
This will imply that $m$ is the image via $h$ of the element $[t]\in\tilde{M}$: indeed, 
we have $h([t]) = f(t) = m$.

Let $o$ be the orbit of $m$ and assume that it has arity $k$. 
Recall that we fixed a representative for each orbit of $\cM$, in particular, the 
representative of the orbit $o$ is $m_o$. As $m$ and $m_o$ are in the same orbit 
there must exist a data renaming $\tau$ such that $\hat\tau(m_o) = m$. Moreover, 
recall that Condition \ref{presentation-cond1} implies
$f \circ \check\tau = \hat\tau \circ f$. 
By multiplying with $\check\tau^{-1}$ to the right, 
we get $f = \hat\tau \circ f \circ \check\tau^{-1}$. 
Towards a conclusion, define $t = \check\tau(o(1, \ldots,k))$ and observe that
\begin{align*}
  f(\check\tau(o(1, \ldots,k))) 
  &~=~ \hat\tau \circ f \circ \check\tau^{-1}\,(\check\tau(o(1, \ldots,k))) \\[1ex]
  &~=~ \hat\tau \circ f \,(o(1,\ldots,k)) \\[1ex]
  &~=~ \hat\tau (\hat\sigma_{1,\ldots,k}(m_o)) 
       \tag{by the definition of $f$} \\[1ex]
  &~=~ \hat\tau (m_o)                            
       \tag{since $\sigma_{1,\ldots,k}$ is the identity} \\[1ex]
  &~=~ m \ .
\end{align*}
We have just shown that $f(t)=m$ and hence $h$ is surjective. 
This completes the proof of the proposition.
\end{proof}

\subsection{Green's relations and memorable values}\label{subsec:green-theory}

In Section \ref{sec:monoids2logic} we will show how recognizability by an orbit 
finite data monoid corresponds to definability by a formula of rigidly guarded 
MSO logic. Like in the theorem of Schützenberger \cite{aperiodic_monoids}, the 
translation from a monoid to a formula exploits an induction on certain ideals 
of the monoid that are induced by the so-called Green's relations 
\cite{green_relations,mathematical_foundations_of_automata}. 
The goal of this section is to recall the basic ingredients of this theory
and further develop it in order to ease the inductive constructions on 
orbit-finite data monoids.

As already noticed in \cite{data_monoids}, a relevant part of the theory 
of Green's relations, which holds for finite monoids, can be lifted to 
{\sl locally finite} monoids, namely, to monoids such that all finitely
generated sub-monoids are finite. In particular, this applies to orbit-finite 
data monoids. The basic Green's relations 
$\le_\cR$, $\le_\cL$, $\le_\cJ$ associated with an orbit-finite data 
monoid $\cM$ are the {\sl preorders} defined by:
$$
\begin{array}{rclcrcl}
  s &\le_\cR& t &\qquad\text{iff}&\qquad s\cdot M        &\subseteq&t\cdot M \qquad \\[1ex]
  s &\le_\cL& t &\qquad\text{iff}&\qquad M\cdot s        &\subseteq& M\cdot t \qquad \\[1ex]
  s &\le_\cJ& t &\qquad\text{iff}&\qquad M\cdot s\cdot M &\subseteq& M\cdot t\cdot M \ .
\end{array}
$$
We remark the following crucial property: for every orbit-finite data monoid,
the preorder $\le_\cJ$ is well-founded (for a proof of this result, see Lemma 
9.3 in \cite{nominal_monoids}). This provides the inductive principle that will
be used in our proofs.

We also denote by $=_\cR$, $=_\cL$, $=_\cJ$ the corresponding equivalence relations 
(e.g., $s =_\cJ t$ iff $s \le_\cJ t$ and $t \le_\cJ s$) and we introduce an additional 
fourth equivalence $=_\cH$ defined by 
$$
  s ~=_\cH~ t 
  \qquad\text{iff}\qquad 
  s ~=_\cR~ t 
  ~\text{ and }~ 
  s ~=_\cL~ t \ . 
  \mspace{8mu}\ 
$$ 
Given an element $s$ of a data monoid $\cM$, we denote by $\cR(s)$ (resp., $\cL(s)$, $\cJ(s)$, 
$\cH(s)$) the $=_\cR$-class (resp., $=_\cL$-class, $=_\cJ$-class, $=_\cH$-class) of $s$. 
We remark that the equivalence relation $=_\cR$ (resp., $=_\cL$) is a congruence with
respect to products on the left (resp., right). For example, we have that $s =_\cR t$ 
implies $u\cdot s =_\cR u\cdot t$. 

We naturally lift the above relations to orbits. Specifically, for each $\cK$ 
among $\cR$, $\cL$, $\cJ$, we denote by $\le_{\orbit{\cK}}$ the preorder relation 
such that $s \le_{\orbit{\cK} }t$ iff $s \le_{\cK} \tau(t)$ for some renaming 
$\tau\in\cG_D$. 
We do the same for the equivalence relations $=_\cR$, $=_\cL$, $=_\cJ$, 
$=_\cH$, thus obtaining the relations $=_{\orbit{\cR}}$, $=_{\orbit{\cL}}$, 
$=_{\orbit{\cJ}}$, $=_{\orbit{\cH}}$.

\medskip
The last part of this section is devoted to an analysis of the types of data values 
that can occur in the memory of an element of an orbit-finite data monoid.
We begin by distinguishing two types of data values.

\begin{definition}\label{def:l-r-values}
Given an element $s$ of an orbit-finite data monoid $\cM$, we define $\memr(s)$
(resp., $\meml(s)$) to be the intersection of the memories of the elements in the 
$=_\cR$-class (resp., $=_\cL$-class) of $s$:
$$
  \memr(s) ~\eqdef~ \displaystyle\bigcap\limits_{t \:\in\, \cR(s)}\mem(t) 
  \qquad\qquad
  \meml(s) ~\eqdef~ \displaystyle\bigcap\limits_{t \:\in\, \cL(s)}\mem(t) \ . 
$$
We call \emph{$\cR$-memorable} (resp., \emph{$\cL$-memorable}) values of $s$ the 
values in $\memr(s)$ (resp., $\meml(s)$).
\end{definition}

Quite surprisingly, it turns out that the memory of every element of an orbit-finite
data monoids consists only of $\cR$-memorable and $\cL$-memorable values:

\begin{proposition}\label{prop:memorable-values}
For every element $s$ of an orbit-finite data monoid, 
we have $\mem(s) = \memr(s) \cup \meml(s)$.
\end{proposition}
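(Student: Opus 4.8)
The plan is to prove the two inclusions separately. The inclusion $\memr(s)\cup\meml(s)\subseteq\mem(s)$ is immediate: since $s$ itself lies in $\cR(s)$ and in $\cL(s)$, the set $\mem(s)$ occurs as one of the terms in each of the two defining intersections, so $\memr(s)\subseteq\mem(s)$ and $\meml(s)\subseteq\mem(s)$. All the work is in the reverse inclusion $\mem(s)\subseteq\memr(s)\cup\meml(s)$, which I would establish one value at a time, by contradiction: suppose some $d\in\mem(s)$ satisfies both $d\notin\memr(s)$ and $d\notin\meml(s)$, and derive a contradiction. The two hypotheses supply an $\cR$-witness $t=_\cR s$ with $d\notin\mem(t)$ and an $\cL$-witness $u=_\cL s$ with $d\notin\mem(u)$, while $d$ remains memorable in $s$ itself.

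Two elementary tools underlie the argument. First, the monotonicity of memory under products, $\mem(x\cdot y)\subseteq\mem(x)\cup\mem(y)$, which holds because any renaming fixing $\mem(x)\cup\mem(y)$ pointwise stabilizes both $x$ and $y$, hence, by equivariance of the product, stabilizes $x\cdot y$. Second, the transposition characterization of memory: for a fresh value $e$ (i.e. $e\notin\mem(z)$), one has $d\in\mem(z)$ iff the transposition $(d\ e)$ does not stabilize $z$. I would use the first tool to localize $d$. Writing the $\cR$-relation as $s=t\cdot p$ and $t=s\cdot q$, monotonicity applied to $s=t\cdot p$ together with $d\notin\mem(t)$ forces $d\in\mem(p)$; symmetrically, writing the $\cL$-relation as $s=a\cdot u$ and $u=b\cdot s$, we obtain $d\in\mem(a)$. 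Thus $d$ is reintroduced precisely by the right factor recovering $s$ from $t$, and by the left factor recovering $s$ from $u$.

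The core step is to turn this double ``recoverability'' into a contradiction. Here I would exploit local finiteness of the monoid. The element $r=q\cdot p$ satisfies $s\cdot r=s$, so a suitable power $f=r^{n}$ is idempotent and acts as a right unit on the whole $\cL$-class: for every $x=_\cL s$, writing $x=b'\cdot s$ gives $x\cdot f=b'\cdot(s\cdot f)=b'\cdot s=x$. Dually, $\ell=a\cdot b$ satisfies $\ell\cdot s=s$, and a power $g=\ell^{m}$ is an idempotent left unit on the whole $\cR$-class. Combining these with Green's lemma — which provides mutually inverse bijections $x\mapsto x\cdot q$ and $y\mapsto y\cdot p$ between the $\cL$-classes inside the common $\cR$-class of $s$ and $t$ — I would choose a single fresh $e$ and apply the transposition $(d\ e)$ to the decompositions of $s$ through $t$ and through $u$. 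Since $(d\ e)$ fixes $t$ and $u$ but not $s$, this yields two competing expressions for $(d\ e)(s)$, and the goal is to show they force $(d\ e)(s)=s$, contradicting $d\in\mem(s)$.

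The main obstacle is exactly this last cancellation. Applying $(d\ e)$ to $s=t\cdot p$ and to $s=a\cdot u$ produces an equation of the form $t\cdot(d\ e)(p)=(d\ e)(a)\cdot u$, in which the cofactors $a$ and $p$ themselves carry the moving value $d$ and cannot be controlled directly; a general monoid offers no cancellation to conclude. I expect to resolve this by passing to the $\cH$-class structure: since all $\cH$-classes of an orbit-finite data monoid are finite, the relevant $\cH$-class carrying the idempotent is a finite group, on which the left- and right-translations restrict to bijections, supplying the cancellation needed to collapse the two expressions for $(d\ e)(s)$ onto $s$. The delicate part will be the bookkeeping of which data values the translations $p,q,a,b$ may move, together with a choice of $e$ that is simultaneously fresh for all of $s,t,u,f,g$ and these cofactors, so that every stabilization claim used in the computation is actually justified.
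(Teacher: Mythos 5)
Your setup is sound---argue by contradiction from a value $d\in\mem(s)\setminus(\memr(s)\cup\meml(s))$, extract an $\cR$-witness $t$ and an $\cL$-witness $u$ that do not remember $d$, and bring in the finiteness of $=_\cH$-classes---but the step you yourself flag as ``the main obstacle'' is not a bookkeeping issue: it is a dead end. Your target, that the transposition $\tau$ of $d$ with a single fresh value $e$ satisfies $\tau(s)=s$, is unreachable because it is outright incompatible with $d\in\mem(s)$: memories transform equivariantly, so $\mem(\tau(s))=\tau(\mem(s))$ contains $e$ but not $d$, whence $\tau(s)\neq s$ regardless of any hypothesis on $\memr(s)$ or $\meml(s)$. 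Moreover the two expressions you propose to play off against each other, $t\cdot\tau(p)$ and $\tau(a)\cdot u$, are both equal to $\tau(s)$ purely by equivariance of the product, so equating them yields no information; and no cancellation inside a finite group can supply the missing relation between $\tau(p)$ and $p$, since $d\in\mem(p)$ (as you correctly computed) and $\tau$ genuinely moves $p$.

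The repair is to use your transposition idea infinitely often rather than once, and to aim the contradiction at the \emph{size} of the $=_\cH$-class instead of at $d\in\mem(s)$. Pick pairwise distinct values $e_1,e_2,\ldots$ outside $\mem(s)\cup\mem(t)\cup\mem(u)$ and let $\tau_i$ transpose $d$ with $e_i$. Since $\tau_i$ fixes $t$, from $s=t\cdot p$ and $t=s\cdot q$ one gets $\tau_i(s)=t\cdot\tau_i(p)=s\cdot q\cdot\tau_i(p)$ and $s=t\cdot p=\tau_i(s)\cdot\tau_i(q)\cdot p$, hence $\tau_i(s)=_\cR s$; symmetrically $\tau_i(s)=_\cL s$, so every $\tau_i(s)$ lies in the $=_\cH$-class of $s$. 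These elements are pairwise distinct (and distinct from $s$) because $e_i\in\mem(\tau_i(s))$ while $e_i\nin\mem(\tau_j(s))$ for $j\neq i$. This contradicts Corollary~\ref{cor:finite-H-classes}, and is exactly how the paper concludes. With this replacement, your idempotent powers, Green's lemma, and the localization $d\in\mem(p)$, $d\in\mem(a)$ all become unnecessary; the finiteness of $=_\cH$-classes is indeed the key input, but it is used to bound a count, not to cancel.
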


Before turning to the proof Proposition~\ref{prop:memorable-values}, let us show that 
a similar result fails for data monoids with infinitely many data orbits. 

\begin{example}\label{ex:infinite-data-group}
Consider the data language $L_{\mathsf{even}} \subseteq D^*$ of all words where 
every value occurs an even number of times. The syntactic data monoid of the language 
$L_{\mathsf{even}}$ consists of one element $s_C$ for each finite subset $C$ of $D$. 
The product corresponds to the symmetric difference of sets.
It is easy to see that the memorable values of $s_C$ are exactly the values in $C$,
which are neither $\cL$-memorable nor $\cR$-memorable (the syntactic data monoid is 
indeed a group).
\end{example}

In order to prove Proposition \ref{prop:memorable-values}, we need to introduce a couple 
of other concepts. An \emph{inverse} of an element $s$ of a monoid, is an element $t$ 
such that $s\cdot t=t\cdot s=1_\cM$. If the inverse of $s$ exists, then it can be easily 
proven to be unique and hence it can be denoted by $s^{-1}$. A \emph{data group} is 
simply a \emph{data monoid} where all elements have an inverse. The next lemma shows 
that orbit-finiteness is a severe restriction for data groups.

\begin{lemma}\label{lemma:data-groups}
Every orbit-finite data group is finite.
\end{lemma}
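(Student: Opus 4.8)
The plan is to prove the stronger statement that in an orbit-finite data group $\cM$ every element has empty memory; finiteness then follows at once. Indeed, if $\mem(s)=\emptyset$ then every renaming stabilises $s$, so the orbit of $s$ is the singleton $\{s\}$; whereas if $\mem(s)\neq\emptyset$ then the memories $\tau(\mem(s))$ of the elements $\hat\tau(s)$ range over infinitely many distinct finite subsets of $D$, forcing the orbit of $s$ to be infinite. Hence, writing $\dlen{\cM}$ for the maximal size of a memory, $\cM$ is finite if and only if $\dlen{\cM}=0$, and it suffices to rule out $\dlen{\cM}\geq 1$. I deliberately avoid deriving this from Proposition~\ref{prop:memorable-values} (which would instantly give $\memr(s)=\meml(s)=\emptyset$, since a group is a single $\cR$- and $\cL$-class containing the identity), because that proposition is itself established using the present lemma; the argument below is self-contained.

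First I would record the two elementary facts valid in any data monoid: $\mem(s\cdot t)\subseteq\mem(s)\cup\mem(t)$, since a renaming fixing $\mem(s)\cup\mem(t)$ pointwise fixes both $s$ and $t$ and hence their product; and, using that each $\hat\tau$ is a monoid automorphism and therefore sends inverses to inverses, $\mem(s^{-1})=\mem(s)$. The crux is then a converse to the first inclusion that is special to groups: if $\mem(s)\cap\mem(t)=\emptyset$, then $\mem(s\cdot t)=\mem(s)\cup\mem(t)$. To prove this I would write $s=(s\cdot t)\cdot t^{-1}$ and $t=s^{-1}\cdot(s\cdot t)$; the first identity gives $\mem(s)\subseteq\mem(s\cdot t)\cup\mem(t)$, which together with disjointness forces $\mem(s)\subseteq\mem(s\cdot t)$, and symmetrically $\mem(t)\subseteq\mem(s\cdot t)$. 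Combined with the inclusion $\subseteq$ this yields the claimed equality. This step is where invertibility is essential, and I expect it to be the main obstacle: it is precisely the statement that the memories of disjointly supported elements cannot cancel in a group.

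With this in hand the conclusion is a short extremal argument. Suppose $k=\dlen{\cM}\geq 1$ and pick $s$ with $\len{\mem(s)}=k$. Choosing $k$ fresh values outside $\mem(s)$ and a renaming $\tau$ mapping $\mem(s)$ onto them, the element $t=\hat\tau(s)$ lies in the same orbit as $s$, so $\len{\mem(t)}=k$, while $\mem(t)=\tau(\mem(s))$ is disjoint from $\mem(s)$. By the crux equality, $\mem(s\cdot t)=\mem(s)\cup\mem(t)$ has size $2k>k$, contradicting the maximality of $\dlen{\cM}$. Hence $\dlen{\cM}=0$: every element has empty memory, every orbit is a singleton, and $\cM$ has exactly as many elements as it has orbits, which is finite.
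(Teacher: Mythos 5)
Your proof is correct, and it shares its key lemma with the paper's proof but finishes differently. Both arguments hinge on the same group-specific fact: since $\hat\tau$ sends inverses to inverses, $\mem(s^{-1})=\mem(s)$, and then the sandwiching trick ($s=(s\cdot t)\cdot t^{-1}$, $t=s^{-1}\cdot(s\cdot t)$) shows that memorable values cannot be cancelled in a product — the paper states this as $\mem(s\cdot t\cdot u)\supseteq\mem(t)\setminus\mem(s)\setminus\mem(u)$, you as the equality $\mem(s\cdot t)=\mem(s)\cup\mem(t)$ for disjointly supported $s,t$. Where you diverge is the endgame. The paper assumes the group is infinite, extracts an infinite orbit, inductively builds an infinite sequence $g_1,g_2,\ldots$ in it with pairwise ``private'' memorable values $d_i$, and derives $\{d_1,\ldots,d_k\}\subseteq\mem(g_1\cdot\ldots\cdot g_k)$ for all $k$, contradicting boundedness of memories. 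You instead run a two-element extremal argument: take $s$ of maximal memory size $k\ge 1$ and a disjointly renamed copy $t=\hat\tau(s)$ in the same orbit, so that $\mem(s\cdot t)$ has size $2k>\dlen{\cM}$. This is shorter, avoids the inductive construction, and isolates the clean intermediate statement (every element of an orbit-finite data group has empty memory, i.e.\ all orbits are singletons) that the paper only obtains implicitly. Your awareness that Proposition~\ref{prop:memorable-values} cannot be invoked without circularity is also right. The one step you should make explicit is the equivariance of memory, $\mem(\hat\tau(s))=\tau(\mem(s))$: it is standard and used tacitly in the paper, but it is load-bearing twice in your argument — once to conclude that an element with non-empty memory has an infinite orbit, and once to produce the disjoint copy $t$ with $\len{\mem(t)}=k$.
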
     

\begin{proof}
We begin by proving the following claim:

\begin{claim}
If $s,t,u$ are elements of a data group (not necessarily orbit-finite), then
$$
  \mem(s) = \mem(s^{-1})
  \qquad\text{and}\qquad
  \mem(s\cdot t\cdot u) ~\supseteq~ \mem(t) \setminus \mem(s) \setminus \mem(u) \ .
$$
\end{claim}

\begin{proof}[Proof of claim]
We first prove the equality on the left.
More precisely, we prove that $\mem(s)\subseteq\mem(s^{-1})$
(by symmetric arguments, one can prove that $\mem(s^{-1}) \subseteq \mem(s)$
holds as well).
Recall that, by Definition \ref{def:memory}, the memory $\mem(t)$ 
of an element $t$ contains a data value $d$ iff, for all sets 
$C\subseteq D\setminus\{d\}$, there is a renaming $\tau$ 
that is the identity on $C$ and such that $t\neq \tau(t)$.
Let $d$ be a data value in $\mem(s)$.
To prove that $d\in\mem(s^{-1})$, we consider a generic  
set $C\subseteq D\setminus\{d\}$. Since $d\in\mem(s)$,
we know that there is a renaming $\tau$ that is the identity
on $C$ and such that $s\neq \tau(s)$.
Moreover, because the identity $1$ of the data group has empty memory,
we have that $1=\tau(1)$, and hence
$$
  s \cdot s^{-1} ~=~ 1 
                 ~=~ \tau(1) 
                 ~=~ \tau(s\cdot s^{-1}) 
                 ~=~ \tau(s)\cdot \tau(s^{-1}) \ .
$$ 
Finally, because $s\neq\tau(s)$ and because each element of the data group
has exactly one inverse, we derive $s^{-1}\neq\tau(s^{-1})$.
This proves that $d\in\mem(s^{-1})$ and hence $\mem(s)\subseteq\mem(s^{-1})$.

We conclude by proving the containment on the right. For this, it is sufficient to
observe that
$\mem(t) = \mem(s^{-1}\cdot s\cdot t\cdot u\cdot u^{-1})
         \subseteq \mem(s\cdot t\cdot u) \cup \mem(s^{-1}) \cup \mem(u^{-1})
         = \mem(s\cdot t\cdot u) \cup \mem(s) \cup \mem(u)$,
and hence 
$\mem(t) \setminus \mem(s) \setminus \mem(u) \subseteq \mem(s\cdot t\cdot u)$.
\end{proof}

To prove the lemma assume, towards a contradiction, 
that $\cG$ is an infinite data group with finitely many orbits. 
$\cG$ must contain an infinite orbit $o$, and hence
we can inductively construct an infinite subset $G=\{g_1, g_2, \ldots\}$ of $o$ 
such that each element $g_i$ has a distinguished memorable value $d_i$ that is 
not memorable in any other element of $G$, namely, for all $i$, we have
$d_i \in \mem(g_i) \setminus \bigcup_{j\neq i}\mem(g_j)$. 
Observe that, for all $i\le k$,
$\bigcup_{j<i} \mem(g_j) \supseteq \mem(g_1\cdot\ldots\cdot g_{i-1})$ and
$\bigcup_{j>i} \mem(g_j) \supseteq \mem(g_{i+1}\cdot\ldots\cdot g_k)$
(this holds in any data monoid, not necessarily in a data group).
In particular, we have that for all $i\le k$,
$d_i \in \mem(g_i) \setminus \mem(g_1\cdot\ldots\cdot g_{i-1}) 
                   \setminus \mem(g_{i+1}\cdot\ldots\cdot g_k)$.
Finally, using the previous claim, we derive that, for all $k$,
$$
  \{d_1,\dots,d_k\} ~\subseteq~ \mem(g_i\cdot\ldots\cdot g_k) \ .
$$
Becasue the values $d_1,\ldots,d_k$ are pairwise distinct, 
this contradicts the finite memory axiom.
\end{proof}

With each $=_\cH$-class $H$ of a monoid one can associate a group $\Gamma(H)$, called the 
Sch\"utzenberger group \cite{mathematical_foundations_of_automata} (in fact there exist two such 
groups, but we will only consider one of them here). To define $\Gamma(H)$, we first introduce 
the set $T(H)$ of all elements $t \in H$ such that $t\cdot H$ is a subset of $H$. 
For each $t\in T(H)$, we then let $\gamma_t$ be the transformation on $H$ that maps 
$h\in H$ to $t\cdot h$. Finally, we define the Sch\"utzenberger group $\Gamma(H)$ as
the set of all transformations $\gamma_t$, with $t\in T(H)$, equipped with the functional 
composition $\circ$ as binary product. 

There is also a natural way to extend the 
action $\hat\blank$ on the data monoid $\cM$ 
to an action $\tilde\blank$ on $\Gamma(H)$ by simply 
letting $\tilde\tau(\gamma_s) = \gamma_{\hat{\tau}(s)}$ for all renamings 
$\tau\in\Gamma_{D \setminus (\memr(H)\,\cup\,\meml(H))}$, where $\memr(H)=\memr(h)$ 
and $\meml(H)=\meml(h)$ for some arbitrary element $h\in H$ (note that all elements 
of $H$ have the same set of $\cR$-memorable values and the same set of $\cL$-memorable 
values). The following lemma shows that $\tilde\blank$ is indeed a group action on 
the Sch\"utzenberger group $\Gamma(H)$.

\begin{lemma}\label{lemma:schutzenberger-data-group}
If $\cM=(M,\cdot,\hat\blank)$ is a data monoid over $D$ and $H$ is an $=_\cH$-class of $\cM$, 
then $(\Gamma(H),\circ,\tilde\blank)$ is a data group over $D\setminus (\memr(H)\,\cup\,\meml(H))$.
Moreover, if $\cM$ is orbit-finite, then so is $(\Gamma(H),\circ,\tilde\blank)$.
\end{lemma}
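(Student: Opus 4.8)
The plan is to prove the lemma in two parts, following the structure of the statement: first that $(\Gamma(H),\circ,\tilde\blank)$ is a data group over the reduced set of values $D' \eqdef D\setminus(\memr(H)\cup\meml(H))$, and then that orbit-finiteness of $\cM$ transfers to $\Gamma(H)$. For the first part, I would begin by recalling the classical fact (which holds in any monoid, hence in a locally finite one such as ours) that the Sch\"utzenberger group $\Gamma(H)$ is indeed a group: the transformations $\gamma_t$ for $t\in T(H)$ act on $H$ freely and transitively, so that $\gamma_s\circ\gamma_t=\gamma_{s\cdot t}$ is again of the required form and each $\gamma_t$ is invertible. This gives the underlying monoid-group structure $(\Gamma(H),\circ)$ for free, and I would cite the standard treatment in \cite{mathematical_foundations_of_automata}.

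The heart of the first part is checking that $\tilde\blank$ is a well-defined \emph{equivariant} action over $D'$. First I would verify that the definition $\tilde\tau(\gamma_s)=\gamma_{\hat\tau(s)}$ is independent of the representative $s$: if $\gamma_s=\gamma_{s'}$ as transformations of $H$ and $\tau$ fixes $\memr(H)\cup\meml(H)$, I must show $\gamma_{\hat\tau(s)}=\gamma_{\hat\tau(s')}$. Here I expect to use that $\gamma_s=\gamma_{s'}$ means $s\cdot h=s'\cdot h$ for all $h\in H$, and then transport this equality by $\hat\tau$, using that $\hat\tau$ commutes with the product (Definition~\ref{def:data-monoid}) and preserves $H$ \emph{as a set} because $\tau$ fixes the $\cR$- and $\cL$-memorable values of $H$. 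The latter preservation claim is the crucial technical point: I would argue that a renaming fixing $\memr(H)\cup\meml(H)$ maps $H$ into an $=_\cH$-class, and in fact back onto $H$ itself, since $\memr(H)$ and $\meml(H)$ determine the $\cR$- and $\cL$-classes of $H$ up to the orbit. Once well-definedness is in hand, the group-morphism axioms for $\tilde\blank$ (namely $\widetilde{\tau\circ\pi}=\tilde\tau\circ\tilde\pi$, preservation of identity, and equivariance of $\circ$) follow mechanically by pushing $\hat\blank$ through the definition and invoking the corresponding axioms of $\cM$. I would also confirm the finite-support axiom holds for each $\gamma_t$, so that $\Gamma(H)$ really is a data monoid (indeed data group) over $D'$; this reduces to the finite memory of the representatives $t$.

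For the second part, assuming $\cM$ orbit-finite, I would show $\Gamma(H)$ has finitely many orbits under $\tilde\blank$. The cleanest route is to observe that $T(H)\subseteq\cM$ inherits orbit-finiteness as a subset of the orbit-finite set $M$, and the surjection $t\mapsto\gamma_t$ from $T(H)$ onto $\Gamma(H)$ is equivariant with respect to the restricted action, so it maps orbits onto orbits and cannot increase their number. The main obstacle I anticipate is precisely the well-definedness/preservation argument in the first part: controlling how a renaming that fixes only $\memr(H)\cup\meml(H)$ (rather than all of $\mem(h)$) interacts with membership in $H$ and with the equality of transformations $\gamma_s$. This is exactly where Proposition~\ref{prop:memorable-values} should be decisive, since it tells us $\mem(h)=\memr(h)\cup\meml(h)$, so fixing the $\cR$- and $\cL$-memorable values already pins down everything that matters for $H$; I would lean on that identity to close the gap, and possibly on Lemma~\ref{lemma:data-groups} as a sanity check that the resulting orbit-finite data group is finite.
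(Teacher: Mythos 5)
Your overall architecture matches the paper's (quote the classical group structure of $\Gamma(H)$, then establish well-definedness/closure of the action $\tilde\blank$, then transfer orbit-finiteness along the equivariant map $t\mapsto\gamma_t$), and you correctly isolate the crux: showing that $H$ is preserved by every renaming that fixes only $\memr(H)\cup\meml(H)$ rather than the full memory of its elements. But the way you propose to close that gap does not work. You lean on Proposition~\ref{prop:memorable-values} ($\mem(h)=\memr(h)\cup\meml(h)$); in the paper that proposition is proved \emph{from} Corollary~\ref{cor:finite-H-classes}, which is itself deduced from the present lemma together with Lemma~\ref{lemma:data-groups}, so the appeal is circular. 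It is moreover unavailable for the first half of the statement, which concerns arbitrary data monoids with finite supports, whereas Proposition~\ref{prop:memorable-values} is stated only for orbit-finite ones. Your fallback phrase --- that $\memr(H)$ and $\meml(H)$ ``determine the $\cR$- and $\cL$-classes of $H$ up to the orbit'' --- is precisely the assertion to be proved, not a reason for it. The paper supplies the missing argument directly: decompose any renaming over $D\setminus(\memr(H)\cup\meml(H))$ into transpositions $\pi_{de}$ with $d,e\notin\memr(H)\cup\meml(H)$; since $\memr(H)$ is an intersection of memories over the $\cR$-class $R$, there are $s,t\in R$ with $d\notin\mem(s)$ and $e\notin\mem(t)$; picking a fresh value $f$, the transpositions $\pi_{df}$ and $\pi_{ef}$ stabilize $s$ and $t$ respectively, and because the action commutes with products they send every element $\cR$-equivalent to $s$ (resp.\ $t$) to an element $\cR$-equivalent to $s$ (resp.\ $t$), i.e.\ back into $R$; the identity $\pi_{de}=\pi_{df}\circ\pi_{ef}\circ\pi_{df}$ then gives closure of $R$, symmetrically of $L$, hence of $H=R\cap L$. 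Without some such argument your well-definedness step has no foundation.

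A secondary, smaller point: for orbit-finiteness you invoke that $T(H)$ ``inherits orbit-finiteness as a subset of $M$'', but that gives finitely many orbits under the full group $\cG_D$, while the action on $\Gamma(H)$ is only by renamings over $D\setminus(\memr(H)\cup\meml(H))$, under which orbits are a priori finer. Finitely many such orbits do survive because the excluded set is finite and contained in the memory of every element of $H$, but this deserves a sentence; the paper's own justification is comparably terse here, so this is a minor gap rather than a fatal one.
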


\begin{proof}
It is known that $(\Gamma(H),\circ)$ is a group. We only need to verify that $\tilde\blank$ 
is an action on $\Gamma(H)$. We first show that $\Gamma(H)$ is closed under the action 
$\tilde\blank$ induced by the renamings over $D\setminus (\memr(H)\cup\meml(H))$. 
By definition of $\tilde\blank$, this is equivalent to verifying that $H$ is closed under 
the action $\hat\blank$ of renamings over $D\setminus (\memr(H)\cup\meml(H))$. The proof 
is thus similar to proof of the Memory Theorem for $=_\cJ$-classes in \cite{data_monoids}; 
however, we give a complete proof here for the sake of self-containment. 

Suppose that $H$ is the intersection of an $=_\cR$-class $R$ and an $=_\cL$-class $L$. 
Since a renaming is a permutation that is the identity on all but finite many values, 
any renaming over $D\setminus (\memr(H)\cup\meml(H))$ can be decomposed into a sequence 
of transpositions of pairs of values from $D\setminus (\memr(H)\cup\meml(H))$. 
Therefore, in order to prove the closure of $H=R\cap L$ under the action of renamings over 
$D\setminus (\memr(H)\cup\meml(H))$, it is sufficient to prove a similar closure property for
the transpositions $\pi_{de}$ of pair of elements $d,e\nin\memr(H)\cup\meml(H)$. We first show 
that $R$ is closed under such transpositions. Let $d$ and $e$ be two values outside $\memr(H)$ 
and let $\pi_{de}$ be their transposition. Since $d,e\nin\memr(H)$, we know that there exist 
two elements $s,t\in R$ such that $d\nin\mem(s)$ and $e\nin\mem(t)$. 
Let $f$ be a data value outside $\mem(s) \cup \mem(t)$. By definition of memory, we know 
that $\hat\pi_{df}$, where $\pi_{df}$ is the transposition of $d$ and $f$, is a stabilizer 
of $s$ and, similarly, $\hat\pi_{ef}$ is a stabilizer of $t$. Now, consider an element $s'$ 
that is $\cR$-equivalent to $s$. There must exist some elements $u$ and $u'$ in $\cM$ such 
that $s\cdot u=s'$ and $s'\cdot u'=s$. Since $\hat\blank$ commutes with the product of $\cM$, 
we obtain $\hat\pi_{df}(s') = \hat\pi_{df}(s\cdot u)$ and hence 
$\hat\pi_{df}(s') \le_{\cR} \hat\pi_{df}(s)$. By similar arguments, 
we obtain $\hat\pi_{df}(s') \ge_\cR \hat\pi_{df}(s)$. 
We thus have $\hat\pi_{df}(s') \,=_\cR\, \hat\pi_{df}(s) \,=\, s \in R$. 
A symmetric argument shows that $\hat\pi_{ef}(s') \,=_\cR\, \hat\pi_{ef}(s) \,=\, s \in R$. 
Moreover, since $\pi_{de}=\pi_{df}\circ\pi_{ef}\circ\pi_{df}$, we conclude that 
$\hat\pi_{de}(s) \in R$. Finally, a similar proof shows that $\hat\pi_{de}(s) \in L$. 
Putting all together, we have that for every $s\in H=R\cap L$ and every renaming $\tau$ 
over $D\setminus (\memr(H)\cup\meml(H))$, $\hat\tau(s)\in R\cap L=H$. This shows that $H$ 
is closed under the action $\hat\blank$ of renamings over $D\setminus (\memr(H)\cup\meml(H))$.

Below, we verify that $\tilde\blank$ is a group morphism from the group of renamings 
over $D\setminus (\memr(H)\,\cup\,\meml(H))$ to the group of automorphisms on $\Gamma(H)$. 
Clearly, the function $\tilde\blank$ maps the identity $\iota$ on 
$\cG_{D\setminus (\memr(H)\cup\meml(H))}$ to the trivial automorphism $\tilde\iota$ on 
$\Gamma(H)$ (i.e., $\tilde\iota(\gamma_s)=\gamma_{\hat\iota(s)}=\gamma_s$). Moreover, 
$\tilde\blank$ is a morphism because
$$
  \widetilde{\tau \circ \pi}(\gamma_s) ~=~
  \gamma_{\widehat{\tau \circ \pi}(s)} ~=~
  \gamma_{\hat\tau \circ \hat\pi(s)} ~=~
  (\tilde\tau \circ \tilde\pi)(\gamma_s) \ .
$$
Finally, we observe that $\gamma_{s \cdot t} = \gamma_s \circ \gamma_t$ 
(indeed, for every $h\in H$, we have
$\gamma_{s \cdot t}(h) \,=\, (s \cdot t) \cdot h 
                       \,=\, s \cdot (t \cdot h) 
                       \,=\, \gamma_s(t \cdot h) 
                       \,=\, \gamma_s \circ \gamma_t(h)$)
and hence
$$
  \hat{\tau}(\gamma_s) \circ \hat{\tau}(\gamma_t) ~=~
  \gamma_{\hat{\tau}(s)} \circ \gamma_{\hat{\tau}(t)} ~=~
  \gamma_{\hat{\tau}(s) \cdot \hat{\tau}(t)} ~=~ 
  \gamma_{\hat{\tau}(s \cdot t)} \;=\; 
  \hat{\tau}(\gamma_{s \cdot t}) ~=~
  \hat{\tau}(\gamma_s \circ \gamma_t) \ .
$$

\noindent
To complete the proof of the lemma, we need to show that $(\Gamma(H),\circ,\tilde\blank)$ 
is orbit-finite when $\cM$ is orbit-finite. Let us consider two elements $s,t \in H$ and 
suppose that $s$ and $t$ are in the same orbit, namely, that there is 
$\tau\in\cG_{D\setminus (\memr(H)\cup\meml(H))}$ such that $t=\hat\tau(s)$. 
Since $\tilde\blank$ is a group action, we know that $\gamma_t=\gamma_{\hat\tau(s)}=\tilde\tau(\gamma_s)$. 
This shows that the two elements $\gamma_s$ and $\gamma_t$ of $\Gamma(H)$ are on the same orbit.
\end{proof}     

It is known that any $=_\cH$-class $H$ of a monoid has the same cardinality 
of the associated Sch\"utzenberger group $\Gamma(H)$ (see, for instance, 
\cite{mathematical_foundations_of_automata}). This implies the following 
crucial property:

\begin{corollary}\label{cor:finite-H-classes}
All $=_\cH$-classes of an orbit-finite data monoid are finite.
\end{corollary}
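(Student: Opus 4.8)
The plan is to combine two facts that are already available in the excerpt. First, the standard theory of Green's relations—valid in any monoid, and in particular in the locally finite orbit-finite data monoids we consider—gives that every $=_\cH$-class $H$ has exactly the same cardinality as its associated Sch\"utzenberger group $\Gamma(H)$; this is precisely the statement quoted immediately before the corollary. Second, Lemma \ref{lemma:schutzenberger-data-group} establishes that $(\Gamma(H),\circ,\tilde\blank)$ is itself a data group, and moreover that it is orbit-finite whenever $\cM$ is orbit-finite.

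With these in hand, I would argue as follows. Let $\cM$ be an orbit-finite data monoid and let $H$ be any one of its $=_\cH$-classes. By Lemma \ref{lemma:schutzenberger-data-group}, the Sch\"utzenberger group $\Gamma(H)$ is an orbit-finite data group. By Lemma \ref{lemma:data-groups}, every orbit-finite data group is finite, so $\Gamma(H)$ is a finite group. Finally, since $\len{H} = \len{\Gamma(H)}$ by the quoted correspondence between an $=_\cH$-class and its Sch\"utzenberger group, it follows that $H$ is finite. As $H$ was an arbitrary $=_\cH$-class, all $=_\cH$-classes of $\cM$ are finite.

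Strictly speaking there is essentially no obstacle left: the corollary is a direct assembly of Lemma \ref{lemma:schutzenberger-data-group}, Lemma \ref{lemma:data-groups}, and the cardinality equality $\len{H}=\len{\Gamma(H)}$, so the whole proof fits in a few lines. If I wanted to be self-contained I would only need to double-check the one external ingredient I am importing, namely that $\len{H}=\len{\Gamma(H)}$ holds in the locally finite setting (it is the transitivity/transitive-freeness of the $\Gamma(H)$-action on $H$ that yields this, via the classical argument that fixing one $h\in H$ makes $t\mapsto \gamma_t(h)$ a bijection from $\Gamma(H)$ onto $H$). Since the excerpt explicitly cites this as known, I would simply invoke it rather than reprove it, and the corollary follows immediately.
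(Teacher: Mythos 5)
Your proof is correct and follows exactly the same route as the paper: apply Lemma \ref{lemma:schutzenberger-data-group} to get that $\Gamma(H)$ is an orbit-finite data group, invoke Lemma \ref{lemma:data-groups} to conclude it is finite, and use the classical cardinality equality between an $=_\cH$-class and its Sch\"utzenberger group (which the paper also cites rather than reproves). No gaps.
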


\begin{proof}
Let $H$ be an $=_\cH$-class of an orbit-finite data monoid $\cM$. 
By Lemma \ref{lemma:schutzenberger-data-group}, we can associate 
with $H$ an orbit-finite data group $(\Gamma(H),\circ,\tilde\blank)$, 
where $\Gamma(H)$ is the Sch\"utzenberger group of $H$. Lemma 
\ref{lemma:data-groups} implies that $\Gamma(H)$ is finite.
Finally, since the $=_\cH$-class $H$ has the same cardinality as
its Sch\"utzenberger group $\Gamma(H)$, we conclude that $H$ 
is finite.
\end{proof}

We are now ready to prove that every memorable value is either
$\cR$-memorable or $\cL$-memorable:

\begin{proof}[Proof of Proposition \ref{prop:memorable-values}]
Let $s$ be an element of an orbit-finite data monoids $\cM$.
We aim at proving that $\mem(s)\subseteq\memr(s) \cup \meml(s)$. 
Assume towards a contradiction that there is a value 
$d \in \mem(s)\setminus( \memr(s) \cup \meml(s))$. Since $d\nin\memr(s)$, 
there is an $s'$ in the $=_\cR$-class of $s$ such that $d\nin\mem(s')$. 
Then there are elements $u,u' \in \cM$ such that $s\cdot u=s'$ and $s'\cdot u'=s$. 
Symmetrically, as $d\nin\meml(s)$, $s$ has an $\cL$-equivalent element $s''$ 
such that $d\nin\mem(s'')$, and there are $v,v'' \in \cM$ such that 
$v\cdot s=s''$ and $v''\cdot s''=s$. 

Let $d_1,d_2,\ldots$ be an infinite sequence of pairwise distinct values 
that are not in the memory of either $s$, $s'$, or $s''$. We denote by 
$\pi_i$ the transposition of $d$ with $d_i$. As neither $d$ nor $d_1,d_2,\ldots$ 
are in the memory of $s'$, $\tau_i(s')=s'$ and hence 
$\tau_i(s\cdot u)=\tau_i(s')=s'$. Combining this with $s'\cdot u'=s$, 
we obtain $\tau_i(s)\cdot\tau_i(u)\cdot u' = s'\cdot u' = s$ and hence 
$\tau_i(s) \ge_{\cR} s$. Similarly, one proves that $\tau_i(s) \le_{\cR} s$ 
and hence $\tau_i(s) =_\gR s$. By symmetry, one gets $\tau_i(s) =_\gL s$. 
We have just shown that $\tau_i(s)$ belongs to the $=_\cH$-class of $s$. 

Towards a conclusion, we recall that $d\in\mem(s)$ and $d\nin\mem(\tau_i(s))$, and 
hence $s$ is different from $\tau_i(s)$. Similarly, for all $j\neq i$, we have 
that $d_i\in\mem(\tau_i(s))$ and $d_i\nin\mem(\tau_j(s))$, and hence 
$\tau_i(s)$ is different from $\tau_j(s)$. We must conclude that 
the $=_\cH$-class of $s$ is infinite, contradicting Corollary 
\ref{cor:finite-H-classes}.
\end{proof}

\section{Rigidly guarded \MSO and its variants}\label{sec:logics}

From now on, we abbreviate by \emph{\MSO} the monadic second-order logic extended 
with data equality tests. Formally, \MSO formulas are built up from atoms of the form 
$x<y$, $x\in X$, or $x\sim y$, $a(x)$, where $a$ ranges over a fixed finite alphabet, 
using boolean connectives and existential quantifications over first-order variables 
(e.g., $x,y,z,\ldots$) and monadic second-order variables (e.g., $X,Y,Z,\ldots$). 
The meaning of the atom $x\sim y$ is that the data values 
at the two positions that correspond to the interpretation of the variables $x$ and $y$ 
must be the same. The meaning of the other predicates is as usual. 
We write $u\models\varphi$ whenever a formula $\varphi$ holds over the data word $u$.
Moreover, we write $\varphi(x_1,\ldots,y_n,X_1,\ldots,X_m)$ whenever we want to make explicit 
that the free variables of $\varphi$ are among $x_1,\ldots,x_n,X_1,\ldots,X_m$. 

We recall that the satisfiability problem for \MSO interpreted over data words is undecidable.
Intuitively, the source of undecidability lies in the fact that the equality relationships
between the values in a data word may form a grid-like structure onto which one can encode 
computations of Turing machines \cite{undecidability_data_languages}. Here we pursue the idea 
of restricting the use of data equality tests in \MSO formulas with the general goal of excluding 
the possibility of logically defining grids inside the input data words. We will see that this approach
can be used to rule out the source of undecidability of MSO even when the logic is interpreted 
over classes of graphs that contain grid-minors of unbounded size \cite{graph_minors_treewidth}.

The general idea is to guard the data equality predicate $x\sim y$ by a formula $\varphi(x,y)$ 
that is \emph{rigid}, in the sense that it defines a partial bijection on the positions of every
input data word. This gives rise to a fragment of \MSO that we call rigidly guarded \MSO:

\begin{definition}\label{def:rigid-MSO}
The logic \emph{rigidly guarded \MSO} consists of formulas generated according to the following grammar:
\begin{align*}
  \varphi ~~:=~~
  \exists x~ \varphi ~\mid~ \exists Y~ \varphi ~\mid~ a(x) ~\mid~ x<y 
                     ~\mid~ x\in Y ~\mid~ \neg\varphi ~\mid~ \varphi \:\et\: \varphi 
                     ~\mid~ \varphi_\rigid(x,y) \:\et\: x \sim y
\end{align*}
where $a$ ranges over a fixed finite alphabet $A$ and 
$\varphi_\rigid(x,y)$ denotes a formula generated by the 
same grammar that in addition satisfies the rigidity constraint, 
that is, for all data words $u\in(D\times A)^*$ and all positions 
$x$ (resp., $y$) in $u$, there is {\sl at most one} position $y$ 
(resp., $x$) in $u$ such that $u\models\varphi_\rigid(x,y)$. 
\par\noindent
We call \emph{rigidly guarded \FO} the first-order fragment of rigidly guarded \MSO.
\end{definition}

The notion of rigidity is a semantic property, and this may seem problematic.
However, we can enforce rigidity syntactically as follows. Instead of guarding
a data test by a generic rigid formula $\varphi_\rigid(x,y)$, one uses the new guard
$$
  \tilde\varphi_\rigid(x,y) 
  ~\eqdef~ \varphi_\rigid(x,y) ~\et~ \forall x',y'~ \varphi_\rigid(x',y') \then (x=x' \iff y=y') \ .
$$
It is easy to check that $\tilde\varphi_\rigid$ is always rigid and, furthermore,
if the original formula $\varphi_\rigid$ is rigid, then it is also equivalent to 
$\tilde\varphi_\rigid$. This trick allows us to enforce rigidity syntactically. 
We will prove later in Corollary \ref{cor:decidability} that one can decide if 
a given formula respects the semantic assumption of rigidity in all its guards 
(the problem is of course undecidable when data tests are not guarded).

We also remark that in rigidly guarded \MSO, the similar constructions 
$\varphi_\rigid(x,y) \et x \nsim y$, $\varphi_\rigid(x,y) \rightarrow x\sim y$, 
and $\varphi_\rigid(x,y) \rightarrow x\nsim y$ can be derived. 
This is thanks to the Boolean equivalences $\alpha\rightarrow\beta$ iff 
$\alpha\then(\alpha\et\beta)$, $\alpha\et \neg\beta$ iff $\neg(\alpha\then\beta)$, 
and $\alpha\then\neg\beta$ iff $\neg(\alpha\et\beta)$.

\begin{example}\label{ex:rigidguard}
We show how to define in rigidly guarded \FO the language $L_{\ge k}$ of all data 
words that contain at least $k$ different data values. If~$k=1$ we just need to check 
that the input data word is not empty, e.g., by the sentence $\exists x ~ \true$.
If~$k=2$ it is sufficient to check the existence of two distinct {\sl consecutive} 
data values, e.g., by the sentence $\exists x,y ~ (x+1=y) \,\et\, x\nsim y$. 
For~$k>2$, one can proceed by induction as follows. One first observes that if a word 
has at least $k$ distinct data values, then there is a {\sl minimal} factor witnessing 
this property, say $[x,y]$. A closer inspection reveals that, in this case, $[x+1,y-1]$ 
is a maximal factor that uses exactly $k-2$ data values and thus belongs to the language 
$L_{\ge k-2}\setminus L_{\ge k-1}$, which is definable in rigidly guarded \FO thanks
to the inductive hypothesis. Moreover, the formula $\varphi(x',y')$ that defines the 
endpoints $x'=x+1$ and $y'=y-1$ of a {\sl maximal} factor in $L_{\ge k-2}\setminus L_{\ge k-1}$ 
is rigid. We can thus define the language $L_{\ge k}$ by means of the rigidly guarded 
\FO sentence $\exists x,y ~ \varphi(x+1,y-1) \,\et\, x\nsim y$.
\end{example}

Rigidly guarded \MSO will be the main object of our study. As we already mentioned, in Sections
\ref{sec:logic2monoids} and \ref{sec:monoids2logic} we will show that the data languages 
definable in rigidly guarded \MSO are exactly the languages recognizable by orbit-finite 
data monoids. 
This result, which is interesting in its own right, also implies that rigidly guarded \MSO 
has a decidable satisfiability problem over the class of data words: satisfiability indeed 
reduces to the problem of checking emptiness of data languages recognized by orbit-finite 
data monoids. However, one can prove decidability of rigidly guarded \MSO by a more direct
(and general) argument. 
In the following, we outline the argument underlying decidability of a logic that is
even more expressive than rigidly-guarded \MSO and that is interpreted over generic 
classes of \emph{relational structures with data values}.
Formally, for a fixed signature consisting of $m$ relational symbols $R_1,\ldots,R_m$,
we consider structures of the form $\cS=(U,R^\cS_1,\ldots,R^\cS_m,\lambda)$, where $U$ 
is the universe of the structure, $R^\cS_i$ is a relation over $U$ of the same arity 
as $R_i$, say $R^\cS_i \subseteq U^{k_i}$, and $\lambda:U\then D$ is a labelling 
function associating data values with the elements of $\cS$. Similar structures 
have been considered for example in \cite{expressive_queries_on_data_graphs,querying_data_graphs}. 

To define the variant of rigidly-guarded \MSO, we relax the rigidity constraint. 
Given a class $\sC$ of relational structures with data values and given a generic 
formula $\varphi(x,y)$ interpretable over $\sC$, we say that $\varphi(x,y)$ is 
\emph{semi-rigid} (with respect to $\sC$) if for every relational structure 
$\cS=(S,R_1,\ldots,R_m,\lambda)$ in $\sC$ and every element $x\in S$, there 
is {\sl at most one} vertex $y\in S$ such that $\cS \models \varphi(x,y)$.
As we already seen before, semi-rigidity can be enforced syntactically over 
any class of data graphs, so it is not really important here under which class
of structures the guards are assumed to be semi-rigid. We define below the 
variant of \MSO in which data tests are guarded by semi-rigid formulas.

\begin{definition}\label{def:semi-rigid-MSO}
The logic \emph{semi-rigidly guarded \MSO}, interpreted over a class $\sC$ 
of relational structures with data values, consists of formulas generated 
by the following grammar:
\begin{align*}
  \varphi ~~:=~~
  \exists x~ \varphi &~\mid~ \exists Y~ \varphi ~\mid~ R_i(x_1,\ldots,x_{k_i}) ~\mid~ 
                             x\in Y \:\mid\: \neg\varphi ~\mid~ \varphi \:\et\: \varphi \\[0.5ex]
                     &~\mid~ \varphi_\semirigid(x,y) \:\et\: \varphi_\semirigid(x,z) \:\et\: y \sim z
\end{align*}
where the occurrences of $\varphi_\semirigid(x,y)$ and $\varphi_\semirigid(x,z)$
above denote possibly different formulas generated by the same grammar, 
sharing a common variable $x$, and being semi-rigid (w.r.t. $\sC$). 
\end{definition}

Clearly, rigidly guarded \MSO can be seen as a fragment of semi-rigidly guarded \MSO
over the class of data words.
The interesting feature of these logics is that their satisfiability problems can be 
reduced to the satisfiability problem for classical MSO over the same classes of 
structures, where of course data values become immaterial. For example, one can decide 
whether a given semi-rigidly guarded \MSO formula is satisfiable over the class of all 
(finite or infinite) data words/trees. 

\begin{theorem}\label{thm:decidability}
Let $\sC$ be a class of relational structures for which membership of a structure $\cS$
in $\sC$ does not depend on labelling of the elements of $\cS$ by data values.
The satisfiability problem of semi-rigidly guarded \MSO over $\sC$ is reducible 
to the satisfiability problem of classical MSO over the same class $\sC$.
\end{theorem}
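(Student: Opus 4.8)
The plan is to translate every semi-rigidly guarded \MSO formula $\varphi$ into a classical MSO formula $\varphi'$ with no occurrence of $\sim$ such that $\varphi$ is satisfiable over some data-labelled structure in $\sC$ if and only if $\varphi'$ is satisfiable over $\sC$. The pivotal observation is that semi-rigidity makes each data test \emph{functional in its pivot}: in a gadget $\alpha_i(x,y)\et\beta_i(x,z)\et y\sim z$ (where $\alpha_i,\beta_i$ are the two semi-rigid guards of the $i$-th occurrence), the positions $y$ and $z$ are uniquely determined by $x$, so whether $y\sim z$ holds is a property of $x$ alone. I would therefore introduce, for each of the finitely many gadgets $i$ occurring in $\varphi$, a fresh monadic variable $p_i$, and replace the gadget by $\alpha_i(x,y)\et\beta_i(x,z)\et x\in p_i$. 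Since $x\in p_i$ is merely a faithful proxy for the outcome of the test, this local substitution is correct under any Boolean or quantifier context (in particular under negation), so the rewritten formula $\varphi''$ is equivalent to $\varphi$ once each $p_i$ is read as the set of pivots at which the $i$-th test succeeds.

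What remains is to characterise in MSO, without reference to data, exactly those guesses $\bar p=(p_1,\ldots,p_n)$ that are realisable by some labelling. I would define the \emph{equality-edge} relation $E(a,b)$, aggregated over all gadgets since they all refer to the single underlying $\sim$, to hold when some pivot $x\in p_i$ satisfies $\alpha_i(x,a)\et\beta_i(x,b)$; this is MSO-definable as a finite disjunction over $i$. A guess $\bar p$ is realisable precisely when the equivalence (reachability) closure of $E$ is consistent with the negative tests: whenever a pivot $x$ lies outside $p_i$ while both guards are defined at $x$, its two targets $y,z$ must lie in distinct classes of this closure. I would then set $\varphi'\eqdef\exists\bar p~(\mathsf{Consistent}(\bar p)\et\varphi'')$, where $\mathsf{Consistent}(\bar p)$ asserts the above, and interpret it over the data-free reducts of $\sC$, which is legitimate because membership in $\sC$ is independent of the labelling.

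For correctness I would argue both directions. Given a data model $(\cS,\lambda)\models\varphi$, setting $p_i$ to be the set of pivots at which the $i$-th test succeeds under $\lambda$ makes $\varphi''$ hold, and $\mathsf{Consistent}(\bar p)$ holds because every equality edge is witnessed by a genuine data equality while every negative test is witnessed by a genuine data inequality. Conversely, from an MSO model $(\cS,\bar p)$ of $\varphi'$ I would build a labelling $\lambda$ by assigning one fresh data value to each class of the equivalence closure of $E$ and a further fresh value to every position lying in no class; consistency guarantees that the positive tests pass and the negative tests fail, so $(\cS,\lambda)\models\varphi$. Here the infinitude of $D$ is exactly what lets us realise arbitrarily many classes.

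The main obstacle is the MSO-definability of $\mathsf{Consistent}(\bar p)$, which hinges on expressing the reachability closure of the guessed relation $E$. This is where I would lean on the fact that MSO captures reachability: writing $\mathsf{closed}(S)$ for ``$S$ is closed under $E$ in both directions'', the statement that $a$ and $b$ lie in the same class is $\forall S~\big(\,(a\in S\et\mathsf{closed}(S))\then b\in S\,\big)$, a genuine MSO formula, and $\mathsf{Consistent}(\bar p)$ then negates this for the target pair of every pivot outside some $p_i$. Reflexivity is handled automatically, since a vertex always reaches itself, so a pivot whose two targets coincide is forced into $p_i$. It is worth stressing that semi-rigidity is used twice and is essential: once to collapse each per-$x$ test to membership in the monadic $p_i$, and once to let $\mathsf{Consistent}$ speak unambiguously of \emph{the} target pair $y,z$ of a pivot $x$. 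Since the translation $\varphi\mapsto\varphi'$ is purely syntactic, it yields the required effective reduction.
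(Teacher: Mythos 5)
Your reduction is correct, and it coincides with the paper's proof in its first half: both exploit semi-rigidity to collapse each guarded test to a unary proxy $x\in p_i$ at the pivot (the paper's $c^\sim_{\alpha,\beta}$), so that the rewritten formula is data-free and the whole problem reduces to characterising, in classical MSO, which guesses $\bar p$ are realisable by some labelling. Where you genuinely diverge is in that second half. The paper existentially quantifies a partition into at most $n$ disjoint monadic sets $Z_1,\ldots,Z_n$ (with $n$ the number of test occurrences) and asserts the biconditional ``$x$ satisfies the proxy iff its two targets lie in a common $Z_i$''; the crux there is the claim that coarsest consistent partitions never need more than $n$ classes. You instead take the relation $E$ of equalities \emph{forced} by the positive guesses, form its equivalence closure inside MSO via the standard $\forall S$ reachability trick, and require only that no negative pivot have its two targets identified by that closure. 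This buys you two things. First, you need no bound whatsoever on the number of classes, and the backward direction becomes completely explicit: one fresh value per class of the closure, which is where the infinitude of $D$ enters. Second, it sidesteps the one delicate point of the paper's argument: the bound of $n$ classes is not obviously right --- already for the single test $y=x \,\et\, z=x{+}1 \,\et\, y\sim z$ on the word $d\,d\,e\,e$ the proxy holds at positions $1$ and $3$ but not at $2$, which forces $\{1,2\}$ and $\{3,4\}$ into \emph{distinct} classes even though $n=1$; in general the number of classes needed is a chromatic number of a conflict graph depending on the structure, not on $\varphi$. Your transitive-closure formulation is therefore the more robust of the two. The one thing worth spelling out (a point the paper also glosses over) is that guards may contain nested tests, so the equivalence between a gadget and its proxy must be propagated bottom-up before the semi-rigidity of the rewritten guards --- and hence the well-definedness of ``the target pair of $x$'' used in your consistency condition --- can be invoked at outer levels; your construction accommodates this without change.
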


\proof
We begin by making the following crucial remark: every semi-rigidly guarded data test 
$\alpha(x,y) ~\et~ \beta(x,z) ~\et~ y\sim z$ can be normalized into the formula
$$
  \alpha(x,y) ~\et~ \beta(x,z) ~\et~ 
  \underbrace{(\exists y',z'~ \alpha(x,y') \:\et\: \beta(x,z') \:\et\: y'\sim z')}%
  _{\text{call it }\gamma^\sim_{\alpha,\beta}(x)} \ .
$$
This formalizes the idea that semi-rigidly guarded data tests behave almost like unary 
predicates. It is indeed possible to transform a given semi-rigidly guarded \MSO formula 
$\varphi$ into a classical MSO formula $\varphi^-$ inductively as follows:
$$
\begin{array}{rclrcl}
  (\exists x~ \psi)^-         &\eqdef& \exists x~ \psi^-      & \qquad\qquad
  (x\in Y)^*                  &\eqdef& x\in Y                    \\[1ex]
  (\exists Y~ \psi)^-         &\eqdef& \exists Y~ \psi^-      & \qquad\qquad
  (\neg\psi)^-                &\eqdef& \neg\psi^-                \\[1ex]
  (R_i(x_1,\ldots,x_{k_i}))^- &\eqdef& R_i(x_1,\ldots,x_{k_i}) & \qquad\qquad
  (\psi_2\et\psi_2)^-         &\eqdef& \psi_1^- \et \psi_2^-
\end{array}
$$
and, most importantly,
$$
  \big(\alpha(x,y) ~\et~ \beta(x,z) ~\et~ y\sim z\big)^- 
  ~\eqdef~\; 
  \alpha^-(x,y) ~\et~ \beta^-(x,z) ~\et~ x\in c^\sim_{\alpha,\beta} 
$$
where $c^\sim_{\alpha,\beta}$ is a fresh unary predicate.

Let $\varphi$ be a sentence for which we want to decide satisfiability.
For the sake of brevity, let $C=\{c_1,\ldots,c_n\}$ be the set of unary 
predicates $c^\sim_{\alpha,\beta}$ that correspond to the normalized 
semi-rigidly guarded data tests $\gamma^\sim_{\alpha,\beta}(x)$ occurring 
in $\varphi$.
Given a relational structure with data values $\cS$, we denote by $\cS^-$ 
the relational structure without data values that is obtained from $\cS$ 
by removing the labelling function and by expanding the structure with
the predicates $c^\sim_{\alpha,\beta}\in C$ in such a way that 
$$
  \cS^- \models c^\sim_{\alpha,\beta}(x)
  \qquad\text{iff}\qquad 
  \cS \models \gamma^\sim_{\alpha,\beta}(x) \ .
$$
Clearly, for every relational structure $\cS$ with data values, we have that 
$\cS \models \varphi$ iff $\cS^- \models \varphi^-$. 

Now, it is possible to characterize the class of relational structures without
data values of the form $\cS^-$ without taking into account the data values in $\cS$. 
For this it is sufficient to test whether the universe of the structure can be 
partitioned into classes in such a way that, for every element $x$, 
$x$ satisfies $c^\sim_{\alpha,\beta}$ iff there exist (unique) elements
$y$ and $z$ in the same class that satisfy $\alpha^-(x,y)$ and $\beta^-(x,z)$, 
respectively.
If such a partition exists, then one can reconstruct the corresponding relational
structure with data values $\cS$ by assigning different data values to elements 
of different classes. Conversely, if the relational structure is known to be of 
the form $\cS^-$, then a partition can be found by simply grouping the elements
of $\cS$ having the same data value.
Moreover, one can easily see that the coarsest partitions satisfying the above
property contain at most $n$ classes (recall that $n$ is the number of occurrences 
of semi-rigidly guarded data tests in our sentence $\varphi$). This allows us to 
define the class of relational structures of the form $\cS^-$ by means of a simple 
MSO formula:
\begin{align*}
  \varphi_{\mathsf{partition}} 
  ~~\eqdef~~& \exists Z_1,\ldots,Z_n~ \displaystyle\bigwedge_{1\le i<j\le n} 
                                      (Z_i\cap Z_j=\emptyset)  \\
       \et~~& \forall x~ \Big(\: c^\sim_{\alpha,\beta}(x) ~\iff~
                              \exists y,z ~ \alpha^-(x,y) \,\et\, \beta^-(x,z) \,\et 
                              \displaystyle\bigvee_{1\le i\le n}
                              (y\in Z_i \et z\in Z_i) \:\Big) \ .
\end{align*}
Putting everything together, we have that a relational structure with data values
$\cS\in\sC$ satisfies the semi-rigidly guarded \MSO sentence $\varphi$ iff $\cS$ 
(or any other relational structure that differs from $\cS$ only in the data values) 
satisfies the MSO sentence
$$
  \varphi^= ~\eqdef~ \exists c_1,\ldots,c_n~ \varphi^- \,\et\, \varphi_{\mathsf{partition}} \ .\eqno{\qEd}
$$
\medskip

\begin{corollary}\label{cor:decidability}
The satisfiability problem for semi-rigidly guarded \MSO over the class of 
data trees is decidable. Moreover, one can decide whether a given formula
belongs to semi-rigidly guarded \MSO, or even belongs to rigidly guarded \MSO.
\end{corollary}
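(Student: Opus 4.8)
The plan is to derive both claims from Theorem~\ref{thm:decidability} together with the classical decidability of MSO over trees. For the first claim I would simply instantiate the class $\sC$ of that theorem with the class of data trees. Since whether a labelled tree is a tree does not depend on how its nodes are decorated with data values, the class of data trees meets the hypothesis of Theorem~\ref{thm:decidability}. Hence the satisfiability problem of semi-rigidly guarded \MSO over data trees reduces to the satisfiability problem of classical MSO over trees, and the latter is decidable by Rabin's theorem (and by Thatcher--Wright--Doner in the finite case). This yields the decidability of satisfiability over data trees at once.

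For the second claim, the purely syntactic part of ``belonging to the logic'' — being generated by the grammar of Definition~\ref{def:semi-rigid-MSO} — is trivially decidable, so the only real content is certifying the semantic (semi-)rigidity of every guard. I would check this by induction on the nesting depth of data tests, processing the guards from the innermost ones outward. The key observation is that a guard $\varphi_\semirigid(x,y)$ \emph{fails} to be semi-rigid precisely when the sentence
$$
  \exists x,y,y'~ \varphi(x,y) \:\et\: \varphi(x,y') \:\et\: y \neq y'
$$
is satisfiable (here $y\neq y'$ is expressed in the standard MSO way through set membership). Whenever all the guards strictly nested inside $\varphi$ have already been certified semi-rigid, $\varphi$ is a bona fide semi-rigidly guarded \MSO formula, and so is the displayed witness. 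We may therefore decide its satisfiability using the first claim: if it is satisfiable we reject, otherwise we certify the guard and move on to the next enclosing data test. To decide membership in rigidly guarded \MSO one proceeds identically but over the class of data words, additionally testing the symmetric witness $\exists x,x',y~ \varphi(x,y) \et \varphi(x',y) \et x \neq x'$ so as to enforce the full partial-bijection (i.e.\ rigidity) condition rather than mere semi-rigidity.

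The main obstacle — and the reason the induction must run strictly from the inside out — is this bootstrapping dependency: the reduction of Theorem~\ref{thm:decidability}, and hence the decidability we rely on, is only guaranteed for \emph{genuine} semi-rigidly guarded formulas, whose every guard is semi-rigid. One cannot apply it to a candidate formula whose guards have not yet been validated, since the normalization of data tests into unary predicates presupposes exactly that the guards behave functionally. Ordering the checks by increasing nesting depth guarantees that, at the moment we test a given guard, all of its sub-guards have already been shown semi-rigid, so the guard itself is a legitimate formula of the logic to which the decision procedure of the first claim applies; a failure detected at any level lets us reject the whole formula. Finally, Corollary~\ref{cor:decidability} follows by combining the two parts, the first over data trees and the second over data trees and data words respectively.
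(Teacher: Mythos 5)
Your proposal is correct and follows essentially the same route as the paper: the first claim is obtained by instantiating Theorem~\ref{thm:decidability} with the class of data trees and invoking the decidability of classical MSO on trees, and the second by a bottom-up check of each guard, testing the validity of $\forall x,y,y'~\alpha(x,y)\et\alpha(x,y')\then y=y'$ (equivalently, the unsatisfiability of your existential witness) using the already-established decision procedure. Your explicit remarks on the inside-out ordering needed to justify applying the decision procedure to each guard, and on the extra symmetric test for full rigidity, are exactly the points the paper leaves implicit.
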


\begin{proof}
By Theorem \ref{thm:decidability}, the satisfiability problem for semi-rigidly 
guarded \MSO over the class of all data trees is reduced to the satisfiability 
problem for classical MSO over trees, which is known to be decidable \cite{s2s}.

We explain how to decide whether a given formula $\varphi$ belongs to
semi-rigidly guarded \MSO (a similar argument can be used to test membership 
in rigidly guarded \MSO). For this it is sufficient to check, in bottom-up
manner, that every sub-formula of $\varphi$ satisfies the syntactic and semantic
restrictions enforced by the grammar of semi-rigidly guarded \MSO. In particular, 
if $y\sim z$ is a data test that occurs as a sub-formula of $\varphi$, one needs
to check that (i) this test is guarded by a conjunction of two formulas $\alpha(x,y)$ 
and $\beta(x,z)$ and (ii) both sub-formulas $\alpha(x,y)$ and $\beta(x,z)$ are
semi-rigid. Assuming that the sub-formula $\alpha(x,y)$ is already known to 
belong to semi-rigidly guarded \MSO, one can decide semi-rigidity of $\alpha(x,y)$
by testing the validity of the sentence 
$$
  \alpha_{\mathsf{semirigid?}} 
  ~~\eqdef~~ \forall x,y,y' ~ \alpha(x,y) \,\et\, \alpha(x,y') \:\then\: y=y' \ .
$$
A similar test can be performed on the sub-formula $\beta(x,z)$.
\end{proof}

\section{From rigidly guarded \MSO to orbit-finite monoids}\label{sec:logic2monoids}
 
In this section, we show that every data language defined by a rigidly guarded \MSO sentence 
is recognized by an orbit-finite data monoid. Our proof follows the classical technique 
for showing that MSO definable languages over standard words can be recognized by monoids. 
Namely, we show that each construction in the logic corresponds to a closure under some 
operation on recognizable languages: disjunction corresponds to union, negation corresponds 
to complement, existential quantification corresponds to projection, etc.

To simplify the notation, it is sometimes convenient to think of a first-order variable $x$ 
as a second-order variable $X$ interpreted as a singleton set. Therefore, by a slight abuse of 
notation, we shall often write variables in upper-case letters, without explicitly saying whether 
these are first-order or second-order variables (their correct types can be inferred from the 
atoms they appear in). 
%
As usual, given a formula~$\varphi(\bar{X})$ with some free (first-order or monadic second-order) 
variables $X_1,\ldots, X_m$, one can see it as defining the language 
$$
  \intr{\varphi}
  ~=~ \big\{\ang{w,U_1,\dots,U_m} ~:~ w\in (D\times A)^*,~ 
                                      U_1,\ldots,U_m\subseteq\dom(w),~ 
                                      w\models\varphi(U_1,\ldots,U_m)\big\}
$$
where $\ang{w,U_1,\dots,U_m}$ is the data word over $D\times A\times B^m$,
with $B=\{0,1\}$, that associates the letter $(d,a,b_1,\dots,b_m)$ with 
each position $i$ iff $(d,a)$ is the $i$-th letter of $w$, and 
for all~$j=1\dots m$, $b_j$ is~$1$ if~$i\in U_j$, and~$0$ otherwise.

The principle of the proof is to establish that, given a rigidly guarded 
\MSO formula~$\varphi(\bar{X})$, the language~$\intr{\varphi}$ is recognized 
by an orbit-finite data monoid. 
Though this statement is true, it is convenient to strengthen it in order 
to be able to use it as the invariant of an inductive proof based on $\varphi$. 
The problem is that the operation that corresponds to existential quantification (i.e., projection)
transforms an orbit-finite data monoid into a data monoid which is not orbit-finite, in general. 
This is why our induction hypothesis needs to be stronger, namely, need to state that $\intr{\varphi}$ 
is recognized by an orbit-finite data monoid via a {\sl projectable} morphism, as defined just below.

\begin{definition}\label{def:projectable}
Let $h$ be a morphism from the free data monoid $(D\times A\times B^m)^*$ to 
a data monoid $\cM$. We say that $h$ is \emph{projectable} (\emph{over $B^m$}) 
if for all data words $w\in(D\times A)^*$ and all tuples of predicates 
$\bar{U}=(U_1,\ldots,U_m)$ and $\bar{V}=(V_1,\ldots,V_m)$, 
$$
  h(\ang{w,\bar{U}}) ~\orbiteq~ h(\ang{w,\bar{V}}) 
  \qquad\text{implies}\qquad 
  h(\ang{w,\bar{U}}) ~=~ h(\ang{w,\bar{V}})
$$
where $s\orbiteq t$ means that the elements $s$ and $t$ are in the same orbit of $\cM$.
\end{definition}

\noindent
We now state the theorem, which is at the same time our induction hypothesis:

\begin{theorem}\label{thm:formula-to-monoid}
For all rigidly guarded \MSO formulas $\varphi(\bar{X})$, the language $\intr{\varphi}$ 
is effectively recognized by an orbit-finite data monoid via a projectable morphism.
\end{theorem}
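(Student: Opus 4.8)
The plan is to argue by structural induction on the rigidly guarded \MSO formula $\varphi(\bar X)$, mimicking B\"uchi's classical translation while maintaining throughout the stronger invariant that the recognizing morphism is \emph{projectable}. Since first-order variables are treated as singleton second-order variables, every free variable contributes one $B$-component to the alphabet, and I would first pad the alphabets of all subformulas so that they are interpreted over a common $D\times A\times B^m$.

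The atomic and Boolean cases are routine. The atoms $a(x)$, $x<y$ and $x\in Y$ define languages whose membership is independent of the data values, so each is recognized by a fixed finite monoid with the trivial action; such a morphism is projectable for free, as every orbit is then a singleton. For a conjunction $\varphi_1\et\varphi_2$ I would take the product of the two orbit-finite data monoids given by the induction hypothesis: the product of orbit-finite data monoids is again orbit-finite, and the componentwise morphism is projectable because two pairs in the same orbit must agree orbit-by-orbit in each coordinate, so coordinatewise projectability forces equality. Negation $\neg\varphi$ reuses the same morphism and merely complements the accepting set, inheriting both orbit-finiteness and projectability verbatim.

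Next comes the guarded data test $\varphi_\rigid(x,y)\et x\sim y$. Here I would apply the induction hypothesis to the guard $\varphi_\rigid$, obtaining an orbit-finite data monoid $\cM$ together with a projectable morphism recognizing $\intr{\varphi_\rigid}$. The decisive point is rigidity: on any input word there is at most one $x$-marked position and at most one $y$-marked position for which the guard holds, so to evaluate $x\sim y$ I only ever need to remember two data values, namely those under $x$ and under $y$. I would therefore refine $\cM$ into a data monoid that additionally records these at-most-one-each marked values; this adds only a bounded amount of memory, so orbit-finiteness survives, and the accepting set is then cut down to those elements in which the two recorded values coincide.

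The hard step, and the one the projectability invariant exists to handle, is existential quantification $\exists X_{m+1}\,\varphi$, that is, projection. Given a morphism $h$ projectable over $B^{m+1}$ and recognizing $\intr{\varphi}$, I would pass to the induced powerset morphism $\hat h$ that sends a word $w$ over the reduced alphabet to the profile $\{\,h(\ang{w,U_{m+1}}) : U_{m+1}\subseteq\dom(w)\,\}\subseteq M$, equipped with the set product $S\cdot T=\{\,s\cdot t : s\in S,\ t\in T\,\}$. Projectability of $h$ is exactly what keeps these profiles small: no two distinct elements of such a set lie in the same orbit, so each profile has at most as many elements as $\cM$ has orbits. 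Consequently the sub-structure $\cN$ generated by the profiles is orbit-finite, its memories being contained in bounded unions of memories, and $\intr{\exists X_{m+1}\,\varphi}$ is recognized by accepting a profile precisely when it meets the accepting set of $\cM$. I expect the genuine obstacle to be re-establishing the invariant, namely that $\hat h$ is once more projectable over $B^m$, which is what lets the induction survive nested quantifiers: two profiles lying in the same orbit of $\cN$ are a priori only renaming-related, and upgrading this to honest equality should require a delicate argument combining projectability of the original $h$ with the one-element-per-orbit normal form of profiles, perhaps after selecting a canonical representative. The remaining verifications for $\cN$ — associativity, equivariance and the finite-support axiom — are routine by comparison.
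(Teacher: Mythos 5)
Your overall architecture (structural induction maintaining projectability, products for the Boolean connectives, an orbit-distinct powerset for $\exists$) is the same as the paper's, and your treatment of the atomic, Boolean and existential cases is essentially the intended one. In particular you correctly identify that projectability of $h$ is what bounds each powerset profile by the number of orbits of $\cM$; the re-verification that the powerset morphism is itself projectable is in fact a short direct argument (any element of one profile is orbit-equivalent, hence by projectability of $h$ equal, to an element of the other profile, so orbit-equivalent profiles coincide) rather than the delicate canonical-representative argument you anticipate.

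The genuine gap is in the rigidly guarded data test $\varphi_\rigid(x,y)\et x\sim y$. You invoke rigidity to argue that only boundedly many data values need to be remembered, but that is not where the difficulty lies: the syntactic data monoid of $x\sim y$ alone is already orbit-finite (its orbits are $o(\emptystr)$, $p(d)$, $q(d)$, $r(\emptystr)$, $s(\emptystr)$), so orbit-finiteness of your refined monoid holds with or without rigidity. What rigidity is actually needed for --- and what your construction as described fails to deliver --- is \emph{projectability} of the new morphism, which cannot be dropped because the guarded test sits under $\exists x\,\exists y$ and the induction rests on that invariant. The morphism recognizing $x\sim y$ is not projectable: for positions $x\neq x'$ carrying distinct values $d\neq d'$ one gets $p(d)\orbiteq p(d')$ but $p(d)\neq p(d')$; and if moreover $h(\ang{w,\{x\},\emptyset})=h(\ang{w,\{x'\},\emptyset})$ (which happens, e.g., when neither marking extends to a word satisfying the guard), then your refined elements are orbit-equivalent yet unequal. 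The paper repairs this by first normalizing $h$ to be \emph{$0$-reduced}: all elements that cannot be completed to the accepting set are collapsed to a null element, and rigidity is used exactly here, to show that a non-null image $h(\ang{w,\{x\},\emptyset})$ determines the position $x$ (otherwise the guard would relate two distinct $x$'s to the same context, contradicting that it defines a partial bijection). Only after this normalization does a $0$-collapsing product with the $x\sim y$ monoid yield a projectable morphism. This normalization step, and the precise role rigidity plays in it, is the missing idea in your proposal.
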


From the above theorem we obtain, in particular, the following corollary.

\begin{corollary}\label{cor:sentence-to-monoid}
Every data language definable in rigidly guarded \MSO (resp., rigidly guarded \FO) is 
effectively recognized by an orbit-finite data monoid (resp., aperiodic orbit-finite data monoid).
\end{corollary}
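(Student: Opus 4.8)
Both halves of the corollary will be read off from Theorem~\ref{thm:formula-to-monoid}, but only the \MSO half is immediate. First I would note that any data language $L$ definable in rigidly guarded \MSO equals $\intr{\varphi}$ for some \emph{sentence} $\varphi$, i.e.\ a formula $\varphi(\bar X)$ whose tuple $\bar X$ of free variables is empty. For such a $\varphi$ the encoding alphabet $D\times A\times B^m$ degenerates to $D\times A$ (here $m=0$), so $\intr{\varphi}$ is simply a data language over $D\times A$ and projectability over $B^0$ is vacuous. Applying Theorem~\ref{thm:formula-to-monoid} then yields an orbit-finite data monoid and a morphism recognising $\intr{\varphi}=L$; discarding the now-trivial projectability clause gives exactly the claim, with effectiveness inherited from that of the construction in the theorem.

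\textbf{The \FO case: strengthening the invariant.} For the first-order fragment the recognising monoid is additionally required to be \emph{aperiodic}, a property Theorem~\ref{thm:formula-to-monoid} does not record; so rather than use the theorem as a black box I would rerun its structural induction on $\varphi$ while carrying the extra invariant that, whenever $\varphi$ is a rigidly guarded \FO formula, the constructed monoid $\cM$ is aperiodic (i.e.\ $s^n=s^{n+1}$ for every $s\in\cM$ and some $n$). The base cases are unproblematic: the purely first-order atoms $a(x)$, $x<y$, $x\in Y$ define, modulo the encoding of free variables, the classical aperiodic languages equipped with the trivial action on data, and the guarded atom $\varphi_\rigid(x,y)\et x\sim y$ is assembled from the inductively aperiodic monoid of the rigid guard together with a bounded gadget that stores the unique tested pair and compares its two data values, an operation contributing only idempotent behaviour. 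The Boolean steps preserve aperiodicity as well: conjunction is handled by the product of the two recognising monoids and a product of aperiodic monoids is aperiodic, while negation is recognised by the same monoid and so preserves the invariant trivially.

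\textbf{The main obstacle.} The crux is first-order existential quantification $\exists x\,\psi$, which corresponds to a projection. Abstractly the power-monoid construction realising a projection does \emph{not} preserve aperiodicity, as it can manufacture nontrivial subgroups, so no abstract closure argument will do. The plan instead is to exploit projectability: since $\psi$ is handled by a projectable morphism, the projection can be realised by the orbit-finite construction already used inside Theorem~\ref{thm:formula-to-monoid} rather than by the full power monoid, and I would then check directly that \emph{this} construction, applied to an aperiodic monoid, again produces an aperiodic one. This is exactly where the first-order restriction matters: in \FO one only ever projects away a singleton variable, so the projection is essentially deterministic and cannot create the branching responsible for periodicity; making this precise should reduce to a Schützenberger-McNaughton-Papert-style computation showing that sufficiently high powers of each element stabilise. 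With this step secured, the strengthened induction goes through and, specialising to sentences as above, delivers the \FO half of the corollary.
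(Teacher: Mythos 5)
Your \MSO half is correct and is exactly the paper's argument: for a sentence the tuple of free variables is empty and Theorem~\ref{thm:formula-to-monoid} directly gives an orbit-finite data monoid recognizing the language.

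For the \FO half there is a genuine gap, and it sits precisely at the step you yourself flag as the crux. The paper explicitly considers the route you propose (``could be proved by establishing the aperiodicity at the same time'') and then \emph{declines} to take it: instead it invokes the result of Boja\'nczyk~\cite{data_monoids} that every data language definable in \FO --- rigidly guarded or not --- is recognized by an aperiodic data monoid, so in particular the \emph{syntactic} data monoid of the language is aperiodic; orbit-finiteness of that syntactic monoid then follows because it is a quotient of the orbit-finite monoid produced by Theorem~\ref{thm:formula-to-monoid}. Your plan to instead carry aperiodicity through the induction breaks at existential quantification. The construction of Lemma~\ref{lemma:projectable-to-powerset} builds sets of pairwise orbit-distinct elements; when the group action is trivial (e.g.\ over a finite alphabet) every orbit is a singleton, so this is just the ordinary power monoid, and the power monoid of an aperiodic monoid need not be aperiodic --- for the six-element Brandt monoid the set $S=\{e_{12},e_{21}\}$ satisfies $S^2=S^4\neq S^3$, yielding a two-element group. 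Your claim that projecting a single first-order variable is ``essentially deterministic'' does not hold up: the image $g(\ang{w,\bar U})$ genuinely collects the values of $h$ over \emph{all} placements of the quantified position, and nothing in projectability alone prevents the resulting subsets from generating a group. This is the same obstruction that, in the classical Sch\"utzenberger--McNaughton--Papert theory, forces one to replace the naive powerset by the Sch\"utzenberger product (or to argue via Ehrenfeucht--Fra\"iss\'e games or the syntactic congruence) when proving that \FO{} yields aperiodicity; you would need to supply such a replacement, or else fall back on the external result as the paper does.
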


\begin{proof}
The case of rigidly guarded \MSO corresponds just to 
Theorem \ref{thm:formula-to-monoid} in the case of a sentence~$\varphi$. 
The case of rigidly guarded \FO could be proved by establishing the aperiodicity at the 
same time. However, in our case, it is sufficient to recall a result from \cite{data_monoids}
which proves that every data language definable in \FO (non-necessarily rigidly guarded \FO) 
is recognized by an aperiodic data monoid. In particular, if we consider the syntactic data 
monoid of a language definable in rigidly guarded \FO, we easily see that it is aperiodic 
thanks to the result in \cite{data_monoids} and, moreover, has finitely many orbits since
it is the quotient of an orbit-finite data monoid obtained from 
Theorem~\ref{thm:formula-to-monoid}. 
\end{proof}

\medskip
Before entering the details, we give an overview of the proof of Theorem~\ref{thm:formula-to-monoid}, 
which is by structural induction on the rigidly guarded \MSO formulas. The translation of the atomic 
formulas $x<y$, $a(x)$, $x\in Y$ are easy and the translations of the Boolean connectives are as in the classical case.

The translation of the existential closures uses a powerset construction on orbit-finite 
data monoids. We recall that the standard powerset construction returns new elements that
are sets of elements from the original monoid. In general, due to the presence of infinitely 
many elements in a data monoid, the standard powerset construction may remember sets of
unbounded size, possibly resulting in a data monoid that has infinitely many orbits, 
even if the original data monoid has finitely many of them. In our case, however, because 
the morphism is projectable, it is sufficient to apply a variant of the powerset construction 
that remembers at most one element for each orbit of the original monoid, thus producing an 
orbit-finite data monoid.

The most technical part of the proof concerns the translation of the rigidly guarded 
data tests $\varphi(x,y) \:\et\: x\sim y$. The rigidity assumption on the guard $\varphi(x,y)$ 
is crucial: if $\varphi(x,y)$ were not rigid, then the data monoid recognizing 
$\intr{\varphi(x,y) \:\et\: x\sim y}$ would still be orbit-finite, but the morphism 
would in general not be projectable. The proof that $\intr{\varphi(x,y) \:\et\: x\sim y}$ 
is recognized via a projectable morphism requires a bit of analysis, since rigidity is a 
semantic assumption and hence one cannot directly deduce from it a property for the data 
monoid. In this case, we use the rigidity property for ``normalizing'' the data monoid, 
allowing the construction to go through.

In addition, for the translation to be effective, we need to compute the results of 
algebraic operations on orbit-finite data monoids, such as product, projection, and subset
construction. It turns out that all the operations that are needed in the proof are compatible with 
the operation of {\sl finite restriction} that we introduced in Definition \ref{def:restriction}. 
As an example, for every two orbit-finite data monoids $\cM$ and $\cN$ and for every finite subset
$C$ of $D$, we have that the product $\cM\times\cN$ is an orbit-finite data monoid and, moreover, 
$(\cM\times\cN)|_C=\cM|_C \, \times \, \cN|_C$. 
It follows from Proposition \ref{prop:restriction} that we can compute a (finite) representation 
of the result of an algebraic construction starting from some given (finite) representations of 
the input orbit-finite data monoids. 
In view of the above arguments, in the rest of the proof, we shall often skip the details 
about how the representations of the various orbit-finite data monoids are computed and we 
shall focus instead on purely algebraic constructions. In particular, by a slight abuse of 
terminology, we will say that an orbit-finite data monoid $\cN$ is \emph{computed} 
from other orbit-finite data monoids $\cM_1,\ldots,\cM_n$ when a representation of $\cN$ 
can be obtained effectively from some representations of $\cM_1,\ldots,\cM_n$. In a similar 
way, since morphisms from free data monoids to orbit-finite data monoids are uniquely determined 
by the images of the singleton data words, we say that a morphism $g$ can be computed from 
other morphisms $h_1,\ldots,h_n$ when the images via $g$ of all singleton words can be obtained
effectively from the images via $h_1,\ldots,h_n$ of all singleton words (note that there exist
only finitely many images of singleton words up to renamings).

\medskip
We begin by describing the translation of the existential closures of formulas (hereafter,
all formulas are meant to be rigidly guarded \MSO formulas).

\begin{lemma}\label{lemma:projectable-to-powerset}
Let $\psi(\bar{X},X_{m+1})$ be a formula and let 
$\varphi(\bar{X}) = \exists X_{m+1} ~ \psi(\bar{X},X_{m+1})$. 
If $\intr{\psi}$ is recognized by an orbit-finite data monoid $\cM$ 
via a projectable morphism $h:(D\times A\times B^{m+1})^*\then\cM$, 
then one can compute an orbit-finite data monoid $\cN$ and a projectable 
morphism $g:(D\times A\times B^m)^*\then\cN$ recognizing $\intr{\varphi}$.
\end{lemma}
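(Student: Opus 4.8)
The plan is to handle the existential closure $\varphi(\bar X) = \exists X_{m+1}\, \psi(\bar X, X_{m+1})$ by a variant of the classical projection-plus-powerset construction, carefully tailored so that orbit-finiteness and projectability are both preserved. Recall that $\intr{\varphi}$ is obtained from $\intr{\psi}$ by projecting away the last Boolean component $B$, and the standard way to recognize a projection is the subset (powerset) construction: the new monoid element associated with a word $\ang{w,\bar U}$ records the \emph{set} of all $h$-images $h(\ang{w,\bar U, V})$ ranging over every possible interpretation $V \subseteq \dom(w)$ of $X_{m+1}$. Membership in $\intr{\varphi}$ can then be read off by checking whether this set contains an accepting element of $\cM$.

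\medskip
\noindent
First I would make precise the variant of the powerset construction. The naive powerset of $\cM$ is not orbit-finite, since a set of elements of $\cM$ may mention unboundedly many distinct data values even when $\cM$ has finitely many orbits. The key idea is that projectability of $h$ lets us collapse each reachable set to \emph{at most one representative per orbit of $\cM$}: by Definition \ref{def:projectable}, whenever two elements $h(\ang{w,\bar U, V})$ and $h(\ang{w,\bar U, V'})$ lie in the same orbit of $\cM$, they are in fact \emph{equal}. Hence, for a fixed word $w$ and fixed $\bar U$, the set $\{\,h(\ang{w,\bar U, V}) : V \subseteq \dom(w)\,\}$ contains at most one element from each orbit of $\cM$. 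I would therefore take as the universe $N$ of $\cN$ the collection of subsets of $M$ that meet each orbit of $\cM$ in at most one point (``orbit-injective'' subsets), equip it with the induced product $S \odot T = \{\,s \cdot t : s\in S,\, t\in T\,\}$ followed by the same orbit-collapsing (which is well defined precisely because the relevant sets arising from words are orbit-injective), and with the pointwise renaming action. Since $\cM$ is orbit-finite, an orbit-injective subset has bounded size and mentions boundedly many data values, so $\cN$ is orbit-finite; a representation of $\cN$ can be computed from the finite restriction of $\cM$ via Proposition \ref{prop:restriction} and the remarks preceding it on computing powerset constructions at the level of finite restrictions.

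\medskip
\noindent
Next I would define the projected morphism $g:(D\times A\times B^m)^* \then \cN$ on singleton words by $g(d,a,\bar b) = \{\,h(d,a,\bar b, 0),\, h(d,a,\bar b, 1)\,\}$ (collapsed to be orbit-injective), extend it multiplicatively, and verify that for every word $\ang{w,\bar U}$ we indeed have $g(\ang{w,\bar U}) = \{\,h(\ang{w,\bar U, V}) : V\subseteq\dom(w)\,\}$ after orbit-collapsing. This gives recognition of $\intr{\varphi}$: a word lies in $\intr{\varphi}$ iff some $V$ witnesses $\psi$, iff the corresponding accepting orbit of $\cM$ is represented in $g(\ang{w,\bar U})$. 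The genuinely delicate part, which I expect to be \textbf{the main obstacle}, is proving that $g$ is itself \emph{projectable}. Here one must show that if $g(\ang{w,\bar U})$ and $g(\ang{w,\bar U'})$ lie in the same orbit of $\cN$, then they are equal. An orbit of $\cN$ is an orbit of orbit-injective subsets under the global renaming action, so two such subsets being in the same orbit means a \emph{single} renaming $\tau$ simultaneously carries one set onto the other; I would argue that projectability of $h$, transferred through this $\tau$, forces the two collapsed subsets to coincide as sets, not merely orbitwise. Making this argument airtight requires tracking how the memory/support of a set relates to the memories of its elements and exploiting Proposition \ref{prop:memorable-values} together with the finiteness of $\cH$-classes (Corollary \ref{cor:finite-H-classes}); the subtlety is that orbit-equivalence of \emph{sets} is a stronger coupling than orbit-equivalence of their individual members, and one has to verify that the collapsing operation does not create spurious orbit collisions that would break projectability.
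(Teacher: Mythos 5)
Your construction is essentially the paper's: $\cN$ consists of the subsets of $M$ meeting each orbit in at most one point, $g$ sends $\ang{w,\bar{U}}$ to $\{h(\ang{w,\bar{U},V}) : V\subseteq\dom(w)\}$ (an element of $\cN$ precisely because $h$ is projectable), orbit-finiteness follows from the bounded cardinality of these sets, and recognizability is read off as you describe. But there is one genuine gap, in the product. You define $S\odot T$ as $S\cdot T$ ``followed by orbit-collapsing,'' justified by the remark that the sets arising from words are orbit-injective. The product, however, must be a total, associative, equivariant operation on all of $N$, and when $S\cdot T$ contains two \emph{distinct} orbit-equivalent elements there is no canonical equivariant choice of which representative to keep, so ``orbit-collapsing'' is simply not defined there. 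The paper resolves this by decreeing $S\odot T=\emptyset$ whenever $S\cdot T$ fails to be orbit-injective; $\emptyset$ is then a null element of $\cN$, and on the reachable sets (where projectability of $h$, applied to the concatenated word, guarantees orbit-injectivity of the product) $\odot$ agrees with $S\cdot T$, so $g$ remains a morphism.

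On the other hand, you substantially overestimate the difficulty of projectability of $g$, and the tools you reach for --- Proposition \ref{prop:memorable-values} and Corollary \ref{cor:finite-H-classes} --- play no role. The argument is a direct element-wise transfer. If $\check\tau(g(\ang{w,\bar{U}}))=g(\ang{w,\bar{V}})$, then every element of $g(\ang{w,\bar{V}})$ is simultaneously of the form $h(\ang{w,\bar{V},V_{m+1}})$ and of the form $\hat\tau(h(\ang{w,\bar{U},U_{m+1}}))$; the latter is orbit-equivalent to $h(\ang{w,\bar{U},U_{m+1}})$, and since both expansions sit over the same underlying word $w$, projectability of $h$ forces $h(\ang{w,\bar{V},V_{m+1}})=h(\ang{w,\bar{U},U_{m+1}})$. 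Hence $g(\ang{w,\bar{V}})\subseteq g(\ang{w,\bar{U}})$, and symmetrically, so the two sets are equal. Your instinct that projectability of $h$, transported through the single renaming $\tau$, forces the sets to coincide is exactly right; no analysis of memories or $\cH$-classes is needed.
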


\begin{proof}
For the sake of brevity, we denote by $L$ the language over $D\times A\times B^{m+1}$ 
that is defined by the formula $\psi(\bar{X},X_{m+1})$, and by $\exists L$ the language 
over $D\times A\times B^m$ that is defined by 
$\varphi(\bar{X}) = \exists X_{m+1} ~ \psi(\bar{X},X_{m+1})$. 
We assume that $L$ is recognized by an orbit-finite data monoid $\cM$ via a morphism $h$.
We will apply a variant of the powerset construction to the orbit-finite data monoid 
$\cM$ to obtain an orbit-finite data monoid $\cN$ that recognizes $\exists L$. 
The same construction can be applied to any finite restriction $\cM|_C$ that represents 
$\cM$, so as to compute a restriction $\cN|_C$ that represents $\cN$. We observe, however, 
that the cardinality of the set $C$ must be at least twice the maximal size of the memories 
of the elements of $\cN$.

\smallskip\noindent
{\em The powerset construction.}
Let $\cM=(M,\cdot,\hat\blank)$. Define $\cN=(N,\odot,\check\blank)$ as follows:
\begin{itemize}
  \item the elements of $N$ are the subsets of $M$ that contain only pairwise 
        orbit-distinct elements, namely, those sets $S \subseteq M$ such that 
        for all $s,s'\in S$, $s\orbiteq s'$ implies $s=s'$;
  \item the product $\odot$ is defined on pairs of sets $S,T\in N$ by 
        $$
          \qquad
          S\odot T ~=~ \begin{cases}
                         S\cdot T  & \text{if for all $s,s'\in S$ and $t,t'\in T$,
                                           $s\cdot t \orbiteq s'\cdot t'$ implies 
                                           $s\cdot t=s'\cdot t'$} \\[1ex]
                         \emptyset & \text{otherwise}
                       \end{cases}
        $$
        where $S\cdot T$ denotes the set $\{s\cdot t ~:~ s\in S,\, t\in T\}$;
  \item the function $\check\blank$ maps any renaming $\tau$ to the automorphism 
        $\check\tau$ such that, for all $S\in M'$, 
        $$
          \check\tau(S) ~=~ \{\hat\tau(s) ~:~ s\in S\} \ .
        $$
\end{itemize}
It is routine to check that the product $\odot$ is associative, the function
$\check\blank$ is a group action, the empty set $\emptyset$ is a null element 
in $\cN$, and the singleton $\{1_\cM\}$ is the identity element in $\cN$. 

Below, we verify that the data monoid $\cN$ is orbit-finite.
Let $n$ be the number of orbits of $\cM$. We observe 
that every set $S\in N$ has cardinality at most $n$
(indeed, if this were not the case, then $S$ would contain two distinct 
elements $s$ and $t$ such that $s\orbiteq t$, which would contradict 
the definition of $N$). 
From this property it follows that $\cN$ is the projection of 
$\cM^{\le n}$ under some equivariant mapping, where 
$\cM^{\le n} = \biguplus_{i\le n} \cM^i$ and each 
$\cM^i$ is the $i$-fold product of $\cM$ with itself.
Because orbit-finite sets are closed under products, finite disjoint unions, 
and images under equivariant mappings, we have that $\cN$ is orbit-finite.

\medskip\noindent
{\it The morphism.}
We now define a morphism $g$ from the the free data monoid $(D\times A\times B^m)^*$ 
to the orbit-finite data monoid $\cN$. For every expanded data word 
$\ang{w,U_1,\ldots,U_m}$, we let 
$$
  g(\ang{w,U_1,\ldots,U_m}) ~\eqdef~ 
  \big\{h(\ang{w,U_1,\ldots,U_m,U_{m+1}}) ~:~ U_{m+1}\subseteq\dom(w)\big\}
$$
(note that, since $h$ is projectable, then $g(\ang{w,U_1,\ldots,U_m})$ contains only 
pairwise orbit-distinct elements and hence it is an element of the data monoid $\cN$).

We verify that the morphism $g$ is projectable. Consider a data word $w$ and some 
tuples of predicates $\bar{U}=U_1,\ldots,U_m$ and $\bar{V}=V_1,\ldots,V_m$, and 
suppose that $g(\ang{w,\bar{U}}) \:\orbiteq\: g(\ang{w,\bar{V}})$. 
This means that there is a renaming $\tau$ such that 
$$
  g(\ang{w,\bar{V}}) ~=~ \check\tau\big(g(\ang{w,\bar{U}})\big).
$$
Moreover, by definition of $g$, we have
$$
  \big\{h(\ang{w,\bar{V},V_{m+1}}) ~:~ V_{m+1}\subseteq\dom(w)\big\} 
  ~=~ 
  \big\{\hat\tau(h(\ang{w,\bar{U},U_{m+1}})) ~:~ U_{m+1}\subseteq\dom(w)\big\}.
$$
Since $h$ is projectable, we have that the two sets $g(\ang{w,\bar{U}})$ and 
$g(\ang{w,\bar{V}})$ coincide, which proves that $g$ is projectable as well.

\medskip\noindent
{\it Recognizability.}
It remains to prove that the language $\exists L$ is recognized by $\cN$ via 
the morphism $g$. For the sake of brevity, let $F=h(L)$ and $G=g(\exists L)$. 
We consider an expanded data word $\ang{w,\bar{U}}\in (D\times A\times B^m)^*$
and we prove that $\ang{w,\bar{U}}\in \exists L$ iff $g(\ang{w,\bar{U}})\in G$. 
The left-to-right implication is trivial, so we prove the converse implication. 
Suppose that $g(\ang{w,\bar{U}})\in G$. Since $G=g(\exists L)$, we know that 
there is an expanded data word $\ang{w',\bar{V}}\in \exists L$ such that 
$g(\ang{w,\bar{U}}) \:=\: g(\ang{w',\bar{V}})$. From the definition of $g$ 
we also know that 
$$
  \big\{h(\ang{w,\bar{U},U_{m+1}}) ~:~ U_{m+1}\subseteq\dom(w)\big\} ~=~ \big\{h(\ang{w',\bar{V},V_{m+1}}) ~:~ V_{m+1}\subseteq\dom(w')\big\} \ .
$$
Moreover, from the definition of $\exists L$ we know that 
$(\ang{w',\bar{V},V_{m+1}})\in L$ for some unary predicate $V_{m+1}\subseteq\dom(w')$.
Finally, since $L$ is recognized by $\cM$ via the morphism $h$ and since
$\ang{w',\bar{V},V_{m+1}}$ belongs to $L$, we have $h(\ang{w',\bar{V},V_{m+1}})\in F$ 
and hence $h(\ang{w,\bar{U},U_{m+1}})\in F$ for some unary predicate $U_{m+1}\subseteq\dom(w)$. 
This shows that $\ang{w,\bar{U},U_{m+1}}\in L$ and hence $\ang{w,\bar{U}}\in \exists L$. 
\end{proof}

We now turn to the translation of rigidly guarded data tests.

\begin{lemma}\label{lemma:reduced-to-projectable}
Given a rigid formula $\varphi(x,y)$, an orbit-finite data monoid $\cM$ and a 
projectable morphism $h$ that recognizes $\intr{\varphi}$, one can compute an 
orbit-finite data monoid $\cM'$ and a projectable morphism $h'$ that recognizes 
$\intr{\varphi(x,y)\et x\sim y}$. 
\end{lemma}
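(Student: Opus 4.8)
The plan is to reduce the statement to the single task of detecting, \emph{inside the monoid}, whether the two rigidly linked marked positions carry the same data value. Writing the two free first-order variables of the guard as the singleton-encoding predicates $U_1$ (for $x$) and $U_2$ (for $y$), the language $\intr{\varphi(x,y)\et x\sim y}$ differs from $\intr{\varphi}$ only by the extra requirement that the data value at the unique position of $U_1$ coincides with the one at the unique position of $U_2$. Since $h$ already recognizes $\intr{\varphi}$ via a projectable morphism, it suffices to enrich $\cM$ with just enough information to carry out this one comparison, while keeping the result orbit-finite and the morphism projectable.

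Concretely, I would refine $\cM$ by a small \emph{comparison gadget}. Its elements record the status of a factor with respect to the two marks: a neutral state for a factor with no mark, a state $X_d$ recording that only the $x$-mark has been seen, at value $d$, a symmetric state $Y_e$ for the $y$-mark, two constant states $\mathsf{eq}$ and $\mathsf{neq}$ recording that both marks have been seen with equal (resp.\ unequal) values, and an absorbing state for ill-formed markings. The product merges the marks of the two factors and fires the test $d\stackrel{?}{=}e$ exactly when an $X_d$ meets a $Y_e$; renamings act by renaming the stored value in $X_d$ and $Y_e$ and fixing the constant states. Taking $\cM'$ to be the product of $\cM$ with this gadget and letting $h'$ send each letter to the pair formed by its $h$-image and its gadget value, orbit-finiteness is immediate, since the gadget has finitely many orbits and orbit-finite data monoids are closed under products, exactly as used in Lemma~\ref{lemma:projectable-to-powerset}; computability follows by performing the construction on finite restrictions and invoking Proposition~\ref{prop:restriction}. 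Recognizability is then a routine case analysis over $x<y$, $y<x$, $x=y$ and ill-formed markings, the accepting set being the product of $h(\intr{\varphi})$ with the singleton $\{\mathsf{eq}\}$.

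The hard part will be \emph{projectability} of $h'$, and this is precisely where rigidity must be used. The naive product above is in general \emph{not} projectable, because Definition~\ref{def:projectable} quantifies over \emph{all} pairs of predicate tuples, including partial markings such as $\bar U=(\{i\},\emptyset)$ and $\bar V=(\{i'\},\emptyset)$ carrying only the $x$-mark. For these the gadget stores the raw values $d_i$ and $d_{i'}$ in $X_{d_i}$ and $X_{d_{i'}}$, states that are always orbit-equal; so if the two $\cM$-components happen to be orbit-equal, hence equal since $h$ is projectable, while $d_i\neq d_{i'}$, the two $h'$-images become orbit-equal but distinct, contradicting projectability. Had $\varphi$ not been rigid this defect would be genuine, which is exactly the phenomenon anticipated in the overview.

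To repair this I would first \emph{normalize} $h$ using rigidity. Because $\varphi$ is rigid, the partner position is a partial function of the marked one, so the value to be compared is canonically determined rather than floating; I would exploit this to replace $h$ by an equivalent projectable morphism for which the data value at a marked position (whenever its rigid partner is determined) is forced into the memory of the image and, crucially, is \emph{recoverable} from it — concretely by arranging it to be an $\cR$- or $\cL$-memorable value and singling it out through the decomposition $\mem=\memr\cup\meml$ of Proposition~\ref{prop:memorable-values}. After this normalization the value stored by the gadget is already recorded, in an orbit-canonical way, inside the $\cM$-component, so that orbit-equality of two $h'$-images descends to orbit-equality of their $\cM$-components and the projectability of $h$ lifts to $h'$. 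Establishing this normalization — showing that rigidity indeed makes the marked value canonically memorable without destroying orbit-finiteness or the recognizability already obtained — is the technical core of the argument, after which the remaining checks that $\cM'$ is a data monoid, that $h'$ is a morphism, and that it recognizes $\intr{\varphi(x,y)\et x\sim y}$ are routine.
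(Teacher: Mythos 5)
You have the right skeleton: your ``comparison gadget'' is exactly the paper's auxiliary monoid $\cM_2$ for $x\sim y$ (with $X_d,Y_e,\mathsf{eq},\mathsf{neq}$ playing the roles of $p(d),q(d),r(\emptystr),s(\emptystr)$), the product with $\cM$ is the paper's construction, and you correctly locate the obstacle in the partial markings $(\{i\},\emptyset)$ versus $(\{i'\},\emptyset)$. But the repair you propose is a genuine gap. Your ``normalization'' asks for a projectable morphism $g$ in which the data value at a marked position is memorable in, and equivariantly recoverable from, $g(\ang{w,\{x\},\emptyset})$. Note that this property is essentially \emph{equivalent} to what must be proved: if two positions $x\neq x'$ carrying distinct values had orbit-equal images, projectability of $g$ would force the images to be equal while recoverability would force the values to be equal --- so asserting the normalization is asserting the conclusion. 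You give no argument for it, and the mechanism you invoke (Proposition~\ref{prop:memorable-values}, $\mem=\memr\cup\meml$) plays no role here: a set of memorable values does not single out one value, and the natural way to force the marked value into the memory is precisely to multiply by the gadget, which returns you to the unproved projectability claim.

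What is actually needed, and what the paper does, is a semantic argument that never mentions memorability. One first replaces $\cM$ by the image $h((D\times A\times B^2)^*)$ so that $h$ is surjective, then takes the Rees quotient by the ideal $G$ of elements $s$ with $M\cdot s\cdot M\cap h(\intr{\varphi})=\emptyset$, producing a null element $0_\cN$. The resulting morphism is \emph{$0$-reduced}: if $x\neq x'$ and $g(\ang{w,\{x\},\emptyset})=g(\ang{w,\{x'\},\emptyset})\neq 0_\cN$, then by surjectivity the abstract contexts witnessing extendability are realized by concrete words $\ang{w',U_1,U_2}$ and $\ang{w'',V_1,V_2}$, and the two concatenations obtained by inserting $\ang{w,\{x\},\emptyset}$ and $\ang{w,\{x'\},\emptyset}$ both satisfy $\varphi$ with the same $y$ but different $x$, contradicting rigidity. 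Only then does the (null-collapsing) product with the gadget become projectable, the singleton-marking case being discharged by $0$-reducedness. Your proposal is missing all three ingredients of this step --- surjectivity, the quotient by the ideal of non-extendable elements, and the two-witness rigidity argument --- and without them the ``technical core'' you defer has no visible route to completion.
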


\begin{proof}
Let $\cM$ be an orbit-finite data monoid and let 
$h:(D\times A\times B^2)^*\then\cM$ be a projectable morphism that recognizes 
$L=\intr{\varphi(x,y)}$. We first show that the image via $h$ of the free data 
monoid $(D\times A\times B^2)^*$, which is a data sub-monoid of $\cM$, can be 
computed from $\cM$ and $h$:
       
\begin{claim}
From the orbit-finite data monoid $\cM$ and the morphism 
$h:(D\times A\times B^2)^*\then\cM$, one can compute the 
data sub-monoid $h((D\times A\times B^2)^*)$. 
\end{claim}
        
\begin{proof}[Proof of claim]
Suppose that the orbit-finite data monoid $\cM$ is represented by 
its restriction $\cM|_C$, for some finite subset $C$ of $D$ such 
that $\len{C}\ge 2\dlen{\cM}$.
Let $\cM'=h\big((D\times A\times B^2)^*\big)$ be the data 
sub-monoid induced by $h$. 
Clearly, we have $\dlen{\cM'}\le\dlen{\cM}$ and hence, 
by Proposition \ref{prop:restriction}, the data sub-monoid 
$\cM'$ is uniquely determined by its restriction $\cM'|_C$. 
Moreover, the domain of $\cM'|_C$ is the finite set 
$h\big((C\times A\times B^2)^*\big)$, which is computable from 
$\cM|_C$ and $h|_{C\times A\times B^2}$. Finally, the product 
and the group action of the data sub-monoid $\cM'|_C$ are the 
restrictions of the product and the group action of $\cM$ 
to the finite set $h\big((C\times A\times B^2)^*\big)$. 
This shows that $\cM'|_C$ can be computed from $\cM|_C$ 
and $h|_{C\times A\times B^2}$.
\end{proof}             

Thanks to above claim, we can assume, without loss of generality, that $h$ is
a surjective morphism. Unfortunately, even under this assumption, the property 
of projectability is not straightforwardly preserved when we translate the morphism 
$h$ for the rigid guard $\varphi(x,y)$ to a morphism for the rigidly guarded comparison 
$\varphi(x,y) \:\et\: x \sim y$. For this, we must derive from the rigidity assumption 
on $\varphi(x,y)$ a stronger notion of projectability, which is defined below and which 
is called $0$-reduced projectability.

An element $s$ of a data monoid $\cN$ is a \emph{null} if 
$s\cdot t = t\cdot s = s$ for all elements $t$ of $\cN$. 
If a data monoid has a null element, then this element 
is unique, and in this case it is denoted by $0_\cN$.
Moreover, it is easy to see that if a language $L$ is 
recognized by an orbit-finite data monoid, then $L$ is 
also recognized by an orbit-finite data monoid with a 
null element. 
        
\begin{definition}\label{def:0-reduced}
Let $h$ be a morphism from the free data monoid $(D\times A\times B^2)^*$ 
to a data monoid $\cN$ with null element $0_{\cN}$. 
We say that $h$ is \emph{$0$-reduced} if for all data words $w\in(D\times A)^*$ 
and positions $x,x',y,y'\in\dom(w)$, the following implications hold:
\begin{itemize}
  \item if $h(\ang{w,\!\{x\},\emptyset}) = h(\ang{w,\!\{x'\},\emptyset})$,
        then $x \:=\: x'$ or $0_\cN = h(\ang{w,\{x\},\emptyset}) = h(\ang{w,\!\{x'\},\emptyset})$
  \item if $h(\ang{w,\emptyset,\!\{y\}}) = h(\ang{w,\emptyset,\!\{y'\}})$, 
        then $y \:=\: y'$ or $0_\cN = h(\ang{w,\emptyset,\!\{y\}}) = h(\ang{w,\emptyset,\!\{y'\}})$.
\end{itemize}
\end{definition}

Below, we show that the data language $L$ is equally recognized by an orbit-finite
data monoid with null element and a morphism that is surjective, projectable, and 
$0$-reduced (for simplicity, we call it a $0$-reduced projectable morphism).

\begin{claim}
From the orbit-finite data monoid $\cM$ and the projectable surjective morphism 
$h:(D\times A\times B^2)^* \then \cM$ recognizing the language $L=\intr{\varphi(x,y)}$
of the rigid formula $\varphi(x,y)$, one can compute an orbit-finite data monoid 
$\cN$ with null element $0_{\cN}$ and a $0$-reduced projectable morphism 
$g:(D\times A\times B^2)^* \then \cN$ recognizing the same language $L$.
\end{claim}

\begin{proof}[Proof of claim]
The desired orbit-finite data monoid $\cN$ is obtained from a suitable quotient 
of $\cM$, precisely, by collapsing those elements of $\cM$ that do not represent 
factors of data words in $L$. As usual, the construction can be applied effectively
to a restriction $\cM|_C$ that represents $\cM$, thus obtaining a representation 
$\cN|_C$ of $\cN$.

\smallskip\noindent
{\it Collapsing bad elements.}
Let $F=h(L)$ and let $G$ be the maximal set of all elements such that 
$M \cdot G \cdot M \:\cap\: F \,=\, \emptyset$. Intuitively, $G$ contains 
those elements of $\cM$ that cannot be extended to elements in $F$ by 
concatenating elements to the left, to the right, or both. Note that 
$G$ is an ideal of $\cM$, namely, $M\,\cdot\, G \,\cdot\, M \,\subseteq\, G$, 
and, furthermore, it is closed under the action of renamings, namely, 
$\tau(G) \subseteq G$ for all renamings $\tau$. 
We now introduce the equivalence $\dsim_G$ that groups 
any two elements $s,t\in M$ whenever we have either $s=t$ or $s,t\in G$. 
Note that $\dsim_G$ is a congruence with respect to the product of $\cM$, 
namely, if $s \:\dsim_G\: t$ and $u \:\dsim_G\: v$, then 
$s\cdot u \:\dsim_G\: t\cdot v$. 
The equivalence $\dsim_G$ is also compatible with the action of renamings, 
namely, if $s\dsim_G t$, then $\tau(s) \:\dsim_G\: \tau(t)$ for all 
renamings $\tau$. 
This allows us to define $\cN$ as the quotient of $\cM$ with respect to 
$\dsim_G$, where the elements are the $\dsim_G$-equivalence classes, the 
product is defined by
$$
  [s]_{\dsim_G} \,\odot\, [t]_{\dsim_G} ~\eqdef~ [s\cdot t]_{\dsim_G}
$$
and the action of renamings is defined by
$$
  \tau([s]_{\dsim_G}) ~\eqdef~ [\tau(s)]_{\dsim_G}
$$
(note that the above functions are well defined).

Clearly $\cN$ is an orbit-finite data monoid. Moreover, for all $s\in M\setminus G$, 
the $\dsim_G$-equivalence class of $s$ is the singleton $\{s\}$. 
The only other element of $\cN$ is the entire set $G$, which is also 
the null element, and is thus denoted by $0_{\cN}$.

\smallskip\noindent
{\it The morphism.}
We now define the morphism $g:(D\times A\times B^2)^* \then \cN$ that recognizes $L$. 
This is nothing but the functional composition $h_G\circ h$ of the morphism $h$ from 
$(D\times A\times B^2)^*$ to $\cM$ and the morphism $h_G$ from $\cM$ to $\cN$ defined by 
$$
  h_G(s) ~\eqdef~ [s]_{\dsim_G} \ .
$$
We recall that $h$ and $h_G$ are surjective morphisms, so $h$ is also surjective. 
Moreover, since $h_G^{-1}\circ h_G$ is the identity on $F=h(L)$, we have 
$$
  L ~=~ h^{-1}(h(L)) ~=~ h^{-1}(F) ~=~ h^{-1}(h_G^{-1}(h_G(F))) ~=~ g^{-1}(g(L)) \ .
$$
This shows that $g$ is a surjective morphism recognizing the data language $L$.

\smallskip\noindent
{\it Projectability.}
Next, we verify that the morphism $g$ is projectable. 
Consider a data word $w\in (D\times A)^*$ and some 
predicates $U_1,U_2,V_1,V_2\subseteq\dom(w)$ and suppose 
that $g(\ang{w,U_1,U_2})$ and $g(\ang{w,V_1,V_2})$ are 
in the same orbit, namely, that there is a renaming $\tau$ 
such that $g(\ang{w,V_1,V_2}) \:=\: \tau(g(\ang{w,U_1,U_2}))$. 
We distinguish two cases depending on whether one among 
the two elements $g(\ang{w,U_1,U_2})$ 
and $g(\ang{w,V_1,V_2})$ coincides with $0_{\cN}$ or not. 
If $g(\ang{w,U_1,U_2}) \:=\: 0_{\cN}$, 
then we recall that $0_{\cN}$ has empty memory and hence we obtain
$$
  g(\ang{w,V_1,V_2}) ~=~ \tau(g(\ang{w,U_1,U_2})) 
                     ~=~ \tau(0_{\cN}) 
                     ~=~ 0_{\cN} 
                     ~=~ g(\ang{w,U_1,U_2}) \ .
$$
A similar conclusion can be obtained when $g(\ang{w,V_1,V_2}) \:=\: 0_{\cN}$. 

In the remaining case, we assume that neither $g(\ang{w,U_1,U_2})$ 
nor $g(\ang{w,V_1,V_2})$ are the null element. 
We know from the definition of $\cN$ that neither 
$h(\ang{w,U_1,U_2})$ nor $h(\ang{w,V_1,V_2})$ belong 
to the ideal $G$ and hence 
$g(\ang{w,U_1,U_2}) \:=\: \big\{h(\ang{w,U_1,U_2})\big\}$ and  
$g(\ang{w,V_1,V_2}) \:=\: \big\{h(\ang{w,V_1,V_2})\big\}$. 
Moreover, we have
$$
  g(\ang{w,V_1,V_2}) ~=~ 
  \tau(g(\ang{w,U_1,U_2})) ~=~ 
  \big\{\tau(h(\ang{w,U_1,U_2}))\big\}
$$
and hence $h(\ang{w,V_1,V_2}) \:=\: \hat\tau(\ang{w,U_1,U_2})$. 
Finally, since $h$ is projectable, we obtain 
$h(\ang{w,U_1,U_2}) \:=\: h(\ang{w,V_1,V_2})$ 
and therefore $g(\ang{w,U_1,U_2}) \:=\: g(\ang{w,V_1,V_2})$. 
This shows that $g$ is projectable.

\smallskip\noindent
{\it $0$-Reduced.}
It remains to prove that $g$ is $0$-reduced. 
Here, we exploit the fact that the language $L$ is defined 
by a {\sl rigid} formula $\varphi(x,y)$. Let $w\in (D\times A)^*$ 
be a data word and let $x,x'\in\dom(w)$ be two positions in it. 
By way of contradiction, assume that $x\neq x'$ and 
$g(\ang{w,\{x\},\emptyset}) \:=\: g(\ang{w,\{x'\},\emptyset}) \:\neq\: 0_{\cN}$.
We need to derive that $\varphi(x,y)$ is not rigid (the same conclusion 
can be obtained from the assumption that there exist two positions 
$y,y\in\dom(w)$ such that $y\neq y'$ and 
$g(\ang{w,\emptyset,\{y\}}) \:=\: g(\ang{w,\emptyset,\{y'\}}) \:\neq\: 0_{\cN}$). 
Since $g(\ang{w,\{x\},\emptyset}) \:=\: g(\ang{w,\{x'\},\emptyset}) \:\neq\: 0_{\cN}$, 
we know that $h(\ang{w,\{x\},\emptyset}) = h(\ang{w,\{x'\},\emptyset}) \nin G$ and hence 
there exist $s,t\in M$ such that 
$s\cdot h(\ang{w,\{x\},\emptyset})\cdot t = s\cdot h(\ang{w,\{x'\},\emptyset})\cdot t\in F$.
Moreover, since $h$ is surjective, we know that there exist two expanded data words
$\ang{w',U_1,U_2}$ and $\ang{w'',V_1,V_2}$ such that $h(\ang{w',U_1,U_2})=s$ and 
$h(\ang{w'',V_1,V_2})=t$.
Since $\cM$ and $h$ recognize the language $L=\intr{\varphi(x,y)}$ and $F=h(L)$, 
we have that $\varphi(x,y)$ is satisfied both by the data word 
$\ang{w',U_1,U_2}~\ang{w,\{x\},\emptyset}~\ang{w'',V_1,V_2}$ and by
the data word $\ang{w',U_1,U_2}~\ang{w,\{x'\},\emptyset}~\ang{w'',V_1,V_2}$.
Finally, since $x\neq x'$, we conclude that $\varphi(x,y)$ is not rigid. 
This completes the proof of our claim.
\end{proof}

\smallskip
We can now turn back to the proof of Lemma \ref{lemma:reduced-to-projectable}. 
By the above claim, we can assume that the language $L=\intr{\varphi(x,y)}$ 
is recognized by an orbit-finite data monoid $\cM_1$ with null element $0_{\cM_1}$ 
via a $0$-reduced projectable morphism $h_1:(D\times A\times B^2)^*\then\cM_1$.
Moreover, we can construct the syntactic data monoid $\cM_2$ of the language 
defined by $x\sim y$ and the corresponding morphism 
$h_2:(D\times A\times B^2)^*\then\cM_2$ recognizing $\intr{x\sim y}$.
We observe that the data monoid $\cM_2$ has finitely many orbits and 
its elements can be identified with terms of one the following forms:
\begin{enumerate}
  \item $o(\emptystr)$, 
        which plays the role of the identity in $\cM_2$ and corresponds 
        to the image under $h_2$ of the empty data word;
  \item $p(d)$, for any $d\in D$, which corresponds to the image under $h_2$ 
        of data words $w$ expanded with a singleton predicate $U=\{x\}$, 
        where $w(x)=d$, and with the empty predicate $V=\emptyset$;
  \item $q(d)$, for any $d\in D$, which corresponds to the image under 
        $h_2$ of data words expanded with the empty predicate $U=\emptyset$ 
        and with a singleton predicate $V=\{y\}$, where $w(y)=d$; 
  \item $r(\emptystr)$, with corresponds to the image under $h_2$ of 
        the expanded data words that satisfy $x\sim y$;
  \item $s(\emptystr)$, which plays the role of the null element in $\cM_2$ 
        and corresponds to the image under $h_2$ of data words expanded with 
        two non-empty predicates $U,V$ that do not satisfy $x\sim y$.
\end{enumerate}
For example, we have $p(d)\odot q(d)=r(\emptystr)$ and 
$p(d)\odot q(e)=s(\emptystr)$, for all pairs of distinct 
values $d,e\in D$. We also observe that the morphism $h_2$ 
is not projectable, which explains why, in order to recognize 
the intersection of the data languages $L=\intr{\varphi(x,y)}$ 
and $\intr{x\sim y}$, we introduce below a variant of the product 
of data monoids.

\smallskip\noindent
{\it The $0$-collapse product.}
The orbit-finite data monoid $\cM'$ for the formula 
$\varphi(x,y) \:\et\: x\sim y$ is defined using a 
suitable variant of the product of data monoids with null elements, 
which we call \emph{$0$-collapse product} 
(strictly speaking, the $0$-collapse product 
is a special form of semi-direct product). 
Formally, let $\cM_1=(M_1,\cdot,\hat\blank)$ and $\cM_2=(M_2,\odot,\check\blank)$.
We define $\cM'=(M',\circledcirc,\tilde\blank)$, where
\begin{itemize}
  \item $M'$ consists of all pairs $(s_1,s_2)\in M_1\times M_2$ 
        such that $s_1=0_{\cM_1}$ implies $s_2=0_{\cM_2}$;
  \item for every $(s_1,s_2),(t_1,t_2)\in M'$, 
        the product $(s_1,s_2)\circledcirc(t_1,t_2)$ is either the
        pair $(s_1\cdot t_1,s_2\odot t_2)$ or the pair 
        $(0_{\cM_1},0_{\cM_2})$, depending on whether 
        $s_1\cdot t_1 \neq 0_{\cM_1}$ or $s_1\cdot t_1 = 0_{\cM_1}$;
  \item $\tilde\tau(s_1,s_2)=(\hat\tau(s_1),\check\tau(s_2))$ 
        for all all renamings $\tau$ and all $(s_1,s_2)\in M'$.
\end{itemize}
Clearly, $\cM'$ is an orbit-finite data monoid.

\smallskip\noindent
{\it The morphism.}
Accordingly, we define the morphism $h'$ that maps any expanded data word 
$w\in (D\times A\times B^2)^*$ either to the pair $(h_1(w),h_2(w))$ or 
to the pair $(0_{\cM_1},0_{\cM_2})$, depending on whether 
$h_1(w) \neq 0_{\cM_1}$ or $h_1(w) = 0_{\cM_1}$. 
Clearly, $h'$ recognizes the language $\intr{\varphi(x,y) \:\et\: x\sim y}$.

\smallskip\noindent
{\it Projectability.}
It remains to prove that the morphism $h'$ is projectable. Consider a data word 
$w\in(D\times A)^*$ and some predicates $U_1,U_2,V_1,V_2\subseteq\dom(w)$, and 
suppose that the elements $h'(\ang{w,U_1,U_2})$ and $h'(\ang{w,V_1,V_2})$ are in 
the same orbit. We distinguish between the case where 
$h_1(\ang{w,U_1,U_2}) = 0_{\cM_1}$ (and hence $h_1(\ang{w,V_1,V_2}) = 0_{\cM_1}$ 
as well) and the case where $h_1(\ang{w,U_1,U_2})\neq 0_{\cM_1}$ 
(and hence $h_1(\ang{w,V_1,V_2})\neq 0_{\cM_1}$ as well). 
In the former case, we immediately get 
$$
  h'(\ang{w,U_1,U_2}) ~=~ (0_{\cM_1},0_{\cM_2}) ~=~ h'(\ang{w,V_1,V_2}) \ .
$$
In the latter case, we have 
$h'(\ang{w,U_1,U_2}) \,=\, \big(h_1(\ang{w,U_1,U_2}),h_2(\ang{w,U_1,U_2})\big)$ and 
$h'(\ang{w,V_1,V_2}) \,=\, \big(h_1(\ang{w,V_1,V_2}),h_2(\ang{w,V_1,V_2})\big)$.
From $h'(\ang{w,U_1,U_2}) \,\orbiteq\, h'(\ang{w,V_1,V_2})$, 
we obtain 
$h_1(\ang{w,U_1,U_2}) \,\orbiteq\, h_1(\ang{w,V_1,V_2})$ and 
$h_2(\ang{w,U_1,U_2}) \,\orbiteq\, h_2(\ang{w,V_1,V_2})$.
Moreover, since $h_1$ is projectable, we know that 
$h_1(\ang{w,U_1,U_2}) \,=\, h_1(\ang{w,V_1,V_2})$. 
It remains to prove that $h_2(\ang{w,U_1,U_2}) \,=\, h_2(\ang{w,V_1,V_2})$. 
To do so, we distinguish between the following subcases:
\begin{enumerate}
  \item $U_1=U_2=\emptyset$. 
        We have $h_2(\ang{w,U_1,U_2}) = 1_{\cM_2}$ and hence, since 
        $1_{\cM_2}$ has empty memory and 
        $h_2(\ang{w,U_1,U_2}) \,\orbiteq\, h_2(\ang{w,V_1,V_2})$, 
        we get $h_2(\ang{w,V_1,V_2}) = 1_{\cM_2}$.
  \item Both $U_1$ and $U_2$ are non-empty. 
        In this case $h_2(\ang{w,U_1,U_2})$ must be either 
        the null element $0_{\cM_2}$ or the term $r(\emptystr)$ 
        (recall that this term represents all expanded data 
        words that satisfy $x\sim y$). 
        Both elements have empty memory and hence from 
        $h_2(\ang{w,U_1,U_2}) \,\orbiteq\, h_2(\ang{w,V_1,V_2})$
        we get
        $h_2(\ang{w,U_1,U_2}) \,=\, h_2(\ang{w,V_1,V_2})$.
  \item $U_1\neq\emptyset$ and $U_2=\emptyset$. 
        Clearly, $U_1$ is a singleton of the form $\{x\}$. Similarly, since 
        $h_2(\ang{w,U_1,U_2}) \,\orbiteq\, h_2(\ang{w,V_1,V_2})$, we have 
        that $V_1$ is a singleton of the form $\{x'\}$ and $V_2=\emptyset$. 
        We also recall that 
        $h_1(\ang{w,U_1,U_2}) \,=\, h_1(\ang{w,V_1,V_2}) \,\neq\, 0_{\cM_1}$ 
        and that the morphism $h_1$ is $0$-reduced, which implies $x=x'$. 
        This shows that $h_2(\ang{w,U_1,U_2}) \,=\, h_2(\ang{w,V_1,V_2})$.
  \item $U_1=\emptyset$ and $U_2\neq\emptyset$. 
        This case is symmetric to the previous one.
\end{enumerate}
We have just shown that $h'$ is a projectable morphism 
recognizing $\intr{\varphi(x,y) \et x\sim y}$.
\end{proof}\smallskip\enlargethispage{\baselineskip}

We are now ready to prove the main theorem of this section, that is, that every
language $\intr{\varphi}$ defined by a rigidly guarded \MSO formula $\varphi(\bar{X})$
is effectively recognized by an orbit-finite data monoid via a projectable morphism.

\begin{proof}[Proof of Theorem \ref{thm:formula-to-monoid}]
As already mentioned, the proof is by induction on the structure of the rigidly guarded \MSO 
formula $\varphi(\bar{X})$. As for the base cases, the languages defined by the atomic formulas 
$x<y$, $a(x)$, and $x\in Y$ are clearly recognized by orbit-finite data monoids via projectable 
morphisms. 

As for the inductive step, suppose that a formula $\varphi$ with $m$ free variables 
$X_1,\ldots,X_m$ is given and that one can compute an orbit-finite data monoid $\cM$ 
and a projectable morphism $h:(D\times A\times B^m)^*\then\cM$ recognizing the 
language defined by $\varphi$. It follows that the complement language defined by 
$\neg\varphi$ is recognized by the same orbit-finite data monoid $\cM$ via the 
same projectable morphism $h$. 

Similarly, for the disjunction of two formulas, suppose that $\varphi_1$ and $\varphi_2$
are given. Without loss of generality (namely, by introducing dummy free variables
via cylindrification), we can assume that the two formulas $\varphi_1$ and $\varphi_2$ 
have the same free variables $X_1,\ldots,X_m$. Furthermore, suppose that one can compute 
two orbit-finite data monoids $\cM_1$ and $\cM_2$ and two projectable morphisms 
$h_1:(D\times A\times B^m)^*\then\cM_1$ and $h_2:(D\times A\times B^m)^*\then\cM_2$
recognizing the languages defined by $\varphi_1$ and $\varphi_2$. As these languages
are over the same alphabet, we can construct an orbit-finite data monoid 
$\cM_1\times\cM_2$ and a projectable morphism $h_1\times h_2$ that recognize
the language defined by $\varphi_1 \vel \varphi_2$.

As for the existential closure, Lemma \ref{lemma:projectable-to-powerset} implies that 
the language defined by the formula $\exists X_m ~ \varphi$ is recognized by a suitable
orbit-finite data monoid $\cN$ via a projectable morphism $g$, both computable from $\cM$ 
and $h$. 

Finally, if $m=2$ and $\varphi(x_1,x_2)$ is a rigid formula, then we know from 
Lemma \ref{lemma:reduced-to-projectable} how to compute an orbit-finite data monoid $\cN$ 
and a projectable morphism $g$ that recognizes the language defined by 
$\varphi(x_1,x_2) \:\et\: x_1\sim x_2$. This concludes the proof of the theorem.
\end{proof}

\section{From orbit-finite monoids to rigidly guarded \MSO}\label{sec:monoids2logic}

Having shown that every language defined by a rigidly guarded \MSO (resp., \FO) 
sentence is recognized by an orbit-finite data monoid (resp., by an aperiodic 
orbit-finite data monoid), we now prove the converse. This is the most technical 
result of the paper. 

We remark that in the classical theory of regular languages, the analogous of 
this result (at least the part involving only MSO) is straightforward: indeed, 
a monoid can be used as an automaton, and in this case it is sufficient to write 
an MSO formula that guesses a run of such an automaton and checks that it is valid 
and accepting. We cannot use such an approach with data monoids: not only there is 
no equivalent automaton model, but furthermore, the above approach is intrinsically 
not compatible with the notion of rigidity. Another consequence is that, as 
opposed to the classical case, the proof is significantly more involved for 
rigidly guarded \MSO than for rigidly guarded \FO.

We also recall that data languages are invariant under renamings. This means
that every data language that is recognized by an orbit-finite data monoid $\cM$ 
via a morphism $h$ can be described as the union over some orbits $o$ of $\cM$ 
of the inverse images $h^{-1}(o)$. The result we aim to prove is thus the following:

\begin{theorem}\label{thm:monoids-to-logic}
Given an orbit-finite data monoid $\cM$, a morphism $h$ from a free data monoid 
to $\cM$, and an orbit $o$ of $\cM$, one can compute a rigidly guarded \MSO sentence 
$\varphi$ that defines the data language $h^{-1}(o)$. Moreover, if $\cM$ is aperiodic, 
then $\varphi$ is a rigidly guarded \FO sentence.
\end{theorem}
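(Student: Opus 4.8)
The plan is to argue by a well-founded induction on the preorder $\le_\cJ$, whose well-foundedness on orbit-finite data monoids is recalled in the excerpt. Since $h^{-1}(o)$ is determined by the orbit of $h(w)$, and an orbit is fixed by the abstract $=_\cH$-class together with the pattern of equalities and disequalities among its memorable values, it suffices to construct, for every $=_\cJ$-class $J$ of $\cM$, rigidly guarded formulas that (i) recognise that the image of a suitable \emph{factor} of $w$ lies in $J$ and pin down its $=_\cH$-class, and (ii) give \emph{rigid access} to the memorable values of that image, so that the required data (dis)equalities can be expressed as rigidly guarded tests $\psi_\rigid(x,y)\et x\sim y$. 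The target sentence for $h^{-1}(o)$ is then a Boolean combination of such formulas instantiated at $J=\cJ(o)$.

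I would next set up the inductive engine by first recording the \emph{stability} property — if $s =_\cJ s\cdot a$ then $s =_\cR s\cdot a$, and dually on the left — which follows from the well-foundedness of $\le_\cJ$. Reading $w$ left to right, the $=_\cJ$-classes of the prefixes are non-increasing, and by stability the $=_\cR$-class of a prefix drops strictly \emph{exactly} when its $=_\cJ$-class does; each such drop lands in a strictly lower $=_\cJ$-class, hence is definable by the induction hypothesis. This factorises $w$ at the positions of its $\cJ$-descents; after the last descent the image remains inside the target class $J$, so that from this canonical position onward the element moves only within $J$. A symmetric analysis of the suffixes using $=_\cL$-classes isolates a canonical position from the right, and between these two positions lies the region where $h(w)$ sits in $J$.

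Data values are handled through Proposition \ref{prop:memorable-values}, which tells us that $\mem(h(w)) = \memr(h(w)) \cup \meml(h(w))$. The $\cR$-memorable values belong to the memory of every prefix that is $=_\cR$-equivalent to $h(w)$, hence they are already ``committed'' at the last $\cJ$-descent, and therefore occur as data values among the positions up to that descent; dually the $\cL$-memorable values are committed at the symmetric position on the right. The point is that each such value can be attached to a canonical position of $w$ (for instance, the descent at which it enters the memory), chosen so that the resulting locating formula is a partial bijection and hence rigid. The abstract equality pattern defining the orbit $o$ is then tested by a conjunction of rigidly guarded formulas $\psi_\rigid(x,y)\et x\sim y$, together with their guarded negations, ranging over these canonical positions.

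It remains to distinguish the two regimes, and here lies the main difficulty. In the general \MSO case the behaviour inside $J$ is governed by the Sch\"utzenberger group $\Gamma(J)$, which is finite by Lemma \ref{lemma:schutzenberger-data-group} and Corollary \ref{cor:finite-H-classes}; tracking the net group element across the $J$-region requires guessing and locally verifying a labelling of positions by group elements, which is precisely where second-order quantification is needed. When $\cM$ is aperiodic the groups $\Gamma(J)$ are trivial, so $=_\cH$-classes are singletons and the $=_\cR$- and $=_\cL$-classes already determine the element; no set quantification is then used and the construction stays within rigidly guarded \FO. The hard part throughout is not the combinatorics of Green's classes but the preservation of \emph{rigidity}: every data comparison must be guarded by a formula defining a partial bijection, so I would have to prove that the formulas locating the descents and the canonical memorable positions are genuinely rigid, and that the leftward ($\cR$-)view and the rightward ($\cL$-)view of the memorable values can be reconciled into a single rigid identification while passing from the lower classes to $J$. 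This interleaving of the induction hypothesis with the rigidity constraint is the crux of the argument.
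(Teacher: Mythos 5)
Your skeleton matches the paper's: well-founded induction on $\cJ$-classes, the stability property (your ``descents'' argument is exactly Lemma \ref{lemma:decomposition-stairs}), location of memorable values via Proposition \ref{prop:memorable-values} with $\cR$-memorable values pinned to the leftmost block and $\cL$-memorable values to the rightmost, and the aperiodic case closed out by the triviality of $\cH$-classes. But two ideas that the construction cannot do without are missing. The first is rigidification: the natural locating formulas --- ``$z$ carries the $j$-th memorable value of $h(w[x,y])$'' --- are only \emph{inward-rigid} (the endpoints determine $z$, but $z$ does not determine $y$, since many intervals may share a witness position), and such formulas cannot legally guard $\sim$. The paper bridges this with the Sub-definability Lemma (Lemma \ref{lemma:subdefinability}), a Feferman--Vaught composition argument showing that every such formula splits into a finite disjunction of genuinely rigid ones (Corollary \ref{cor:rigidification}); every simulation of a product in the proof leans on it. You correctly name rigidity as the crux but propose no mechanism for achieving it.

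The second gap is the non-aperiodic case, where your plan --- guess a labelling by elements of the finite Sch\"utzenberger group and ``locally verify'' it --- breaks down because products in a data monoid depend on memorable values, not just on orbits or group coordinates. The running partial product $h(w_1\cdots w_i)$ has its $\cR$-memorable values sitting in $w_1$ and its $\cL$-memorable values in $w_i$, so verifying the step to $h(w_1\cdots w_{i+1})$ requires comparing data values in $w_1$ with data values in $w_{i+1}$; the guard ``the $j$-th memorable position of the first factor versus a position of the $(i{+}1)$-th factor'' is not rigid, since $w_1$ is shared by every $i$. The paper's resolution is the locally-consistent-sequence machinery (Definition \ref{def:local-consistency}, Lemmas \ref{lemma:local-consistency} and \ref{lemma:small-data-locally-consistent}): the product of the sequence is invariant under recolouring the non-border classes of shared values, hence is determined by the adjacent equality patterns together with the two border factors, so one may guess a surrogate sequence of terms over only $4\dlen{\cM}$ data values and check it with rigid, adjacent comparisons. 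Without that invariance result, ``locally verifying'' the guessed labelling is not justified, and this is precisely the step where a naive transfer of the classical finite-monoid argument fails.
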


The proof of the theorem follows a structure similar to Sch\"utzenberger's proof 
that languages recognized by aperiodic monoids are definable by star-free expressions 
(i.e., in FO logic). The objective of our proof is to find suitable formulas that, 
given some positions $x\le y$ in a data word~$w$, determine the orbit of the
image via $h$ of the infix of $w$ between $x$ and $y$, that is, determine 
the monoid element $h(w[x,y])$. 
We will reach this objective by exploiting an induction on a well-founded partial order 
that is defined on the $=_{\orbit\cJ}$-classes of $\cM$ and that is induced by the preorder 
$\le_\cJ$ (refer to Section \ref{subsec:green-theory} for an account of these orders).

Roughly speaking, we first construct the desired formulas for shorter infixes of the 
data word and then we move up towards longer infixes, until we determine the orbit of 
the entire word. 
To do so, we need to be able to compute the orbit of an infix $w[x,y]$ on the basis 
of some {\sl bounded} amount of information related to some factors of it, e.g.,
$w[x,z]$ and $w[z+1,y]$, for some $z$ between $x$ and $y$. 
Moreover, since the product of two elements depends not only on the orbits, but 
also on the memorable values, we need to be able to compute the latter as well.
Here, by ``computing the memorable values'' of $w[x,y]$ we mean being able to 
locate some positions in $w[x,y]$ that carry the memorable values of the element $h(w[x,y])$.
For this, we use formulas of the form $\varphi(x,y,z_1,\ldots,z_n)$ which determine,
not only the orbit of $h(w[x,y])$, but also some positions $z_1,\ldots,z_n$ witnessing 
the memorable values. This must be done with care, however, as in our logic positions 
with memorable values can be compared only if they are guarded by rigid formulas.

We tacitly assume that all formulas defined hereafter are either rigidly guarded \FO 
formulas or rigidly guarded \MSO formulas, depending on whether $\cM$ is aperiodic or not.

\medskip
We begin by generalizing the notion of rigidity to formulas with more than two variables. 

\begin{definition}\label{def:inward-rigid}
We say that \emph{$x_i$ determines $x_j$} in a formula $\varphi(x_1,\ldots,x_n)$ 
if for all data words $w$ and all positions~$x_1,\ldots,x_n$, $x'_1,\ldots,x'_n$ $\in \dom(w)$,
$$
\begin{cases}
  w \models \varphi(x_1,\ldots,x_n) \\[0.5ex] 
  w \models \varphi(x'_1,\ldots,x'_n) \\[0.5ex] 
  x_i = x'_i
\end{cases}
\quad\text{implies}\qquad
x_j = x'_j
$$
The formula $\varphi(x_1,\ldots,x_n)$ is \emph{rigid} 
if $x_i$ determines $x_j$ for all~$i,j\in\{1,\ldots,n\}$. 
Similarly, a formula~$\varphi(x,z_1,\ldots,z_k,y)$ is \emph{inward-rigid} if 
$w\models\varphi(x,z_1,\ldots,z_k,y)$ implies $x\le z_1,\ldots,z_k\leq y$,
and, in addition, $x$ determines $z_1,\ldots,z_k,y$ and $y$ determines 
$x,z_1,\ldots,z_k$ in it.
\end{definition}

We will mainly work with formulas $\varphi(x,z_1,\ldots,z_k,y)$ that are inward-rigid. 
Under certain conditions, these formulas can be used to determine the orbit and the positions 
of some memorable values of factors of a data word (we will describe formally what 
this means in Definition \ref{def:computing-types}). However, in order 
to compare memorable values and simulate products in $\cM$, we need to be 
able to turn an inward-rigid formula into a fully rigid formula, or a
finite disjunction of such formulas. The following crucial lemma shows how to do so.
We remark that the lemma can be read with formulas meaning either rigidly 
guarded \MSO formulas or rigidly guarded \FO formulas: both results hold, 
and the proof is in fact the same.

\begin{lemma}[Sub-definability]\label{lemma:subdefinability}
For all formulas~$\varphi(x,y)$ where $x$ determines $y$,
there exist finitely many formulas~$\beta_i(z,y)$ where 
$z$ determines $y$ such that for all~$x\leq z\leq y$, 
$$
  w \models \varphi(x,y) 
  \qquad\text{implies}\qquad 
  w \models \beta_i(z,y) \text{ for some $i$}.
$$
\end{lemma}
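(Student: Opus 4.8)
The plan is to reduce the lemma to a finite-automaton (equivalently, syntactic-monoid) analysis of the regular relation $\intr{\varphi(x,y)}$, using the hypothesis that $x$ determines $y$ (Definition~\ref{def:inward-rigid}) to bound, for each intermediate position $z$, the number of candidate right endpoints $y$. First I would recall that every formula of our logic is in particular an MSO formula over the underlying word structure, so $\intr{\varphi(x,y)}$ over the two-marked alphabet $A\times B^2$ is regular; fix a deterministic automaton $\mathcal{A}$ for it with state set $Q$. The naive candidate $\exists x\,(x\le z\le y\ \et\ \varphi(x,y))$ does \emph{not} satisfy ``$z$ determines $y$'', since a single $z$ may be straddled by several pairs with distinct right endpoints, so the real content is to split this relation into finitely many rigid pieces.

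I would obtain the pieces by refining according to the state of $\mathcal{A}$ reached just after position $z$: for each $q\in Q$ set
$$
  \beta_q(z,y)\ :=\ z\le y\ \et\ \exists x\,\big(x\le z\ \et\ \varphi(x,y)\ \et\ (\text{the accepting run for the marks }x,y\text{ is in state }q\text{ just after }z)\big).
$$
Since $Q$ is finite these are finitely many formulas, and every witnessing triple $(x,z,y)$ satisfies $\beta_q(z,y)$ for the state $q$ its unique accepting run reaches after $z$; hence the family covers $\intr{\varphi}$ as demanded. The key step is to check that each $\beta_q$ has ``$z$ determines $y$'', and this is exactly where the hypothesis enters, through a cut-and-paste argument. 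Suppose $\beta_q(z,y_1)$ and $\beta_q(z,y_2)$ hold in $w$, witnessed by left endpoints $x_1,x_2\le z$. Fix the witness for $y_1$: its prefix $w[1,z]$ carries the $x$-mark at $x_1$ and drives $\mathcal{A}$ into $q$. Because the witness for $y_2$ also reaches $q$ after $z$ and $\mathcal{A}$ is deterministic on the common suffix, I may graft onto the first prefix the suffix carrying the $y$-mark at $y_2$; the resulting run is still accepting. This exhibits a single left endpoint $x_1$ with both $\varphi(x_1,y_1)$ and $\varphi(x_1,y_2)$, so ``$x$ determines $y$'' forces $y_1=y_2$. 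As a byproduct one gets the bound $\len{\{y : \exists x\le z\le y,\ \varphi(x,y)\}}\le\len{Q}$, which is precisely why finitely many $\beta_q$ can suffice.

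The main obstacle I anticipate is the uniform treatment of the \FO case required by the remark accompanying the lemma. When $\cM$ is aperiodic, $\varphi$ is a rigidly guarded \FO formula, so $\intr{\varphi}$ is star-free and the minimal automaton is counter-free; the clause ``the run is in state $q$ just after $z$'' must then be rephrased to stay within \FO. I would handle this by replacing the automaton state with the image of the relevant prefix under the (finite, aperiodic) syntactic morphism of $\intr{\varphi}$ and invoking the standard fact that, for aperiodic monoids, the property ``the factor up to $z$, with the $x$-mark existentially placed at some $x\le z$, has syntactic image $m$'' is itself \FO-definable. The cut-and-paste argument survives verbatim with $m$ in place of $q$, since it used only multiplicativity of the morphism together with the determining hypothesis, and this is exactly how ``the proof is the same'' for both fragments. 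Expressing this profile-at-$z$ predicate without leaving the ambient fragment, while keeping both the covering clause and the rigidity clause intact, is the delicate point; everything else is the routine bookkeeping of marked alphabets and prefix/suffix factorizations.
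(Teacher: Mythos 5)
There is a genuine gap at the very first step. The formula $\varphi(x,y)$ in this lemma is a rigidly guarded \MSO (or \FO) formula and may therefore contain data equality tests $x'\sim y'$. Consequently $\intr{\varphi(x,y)}$ is a data language over the \emph{infinite} alphabet $D\times A\times B^2$, not a regular language over $A\times B^2$: forgetting the data values, as your reduction does, changes the language whenever a data test occurs. So the deterministic automaton $\mathcal{A}$ with finite state set $Q$ that your whole construction rests on does not exist in general, and neither does a finite syntactic monoid for the \FO variant. One cannot repair this by switching to a finite memory automaton or an orbit-finite data monoid either: the ``state after $z$'' would then carry data values (register contents, memorable values), there would be infinitely many such states, and the cut-and-paste step would break because grafting a suffix onto a different prefix changes the data equalities across the cut. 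Moreover the resulting $\beta_q$ must themselves be formulas of rigidly guarded \MSO/\FO (they are reused as rigid guards in Corollary~\ref{cor:rigidification} and Lemma~\ref{lemma:products}), and ``the run is in configuration $q$ just after $z$'' is not expressible there.

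The paper's proof supplies exactly the missing ingredient before running an argument that is otherwise equivalent to yours. Each rigidly guarded test $\alpha(x',y')\et x'\sim y'$ is first replaced by a fresh \emph{unary} predicate $c^\sim_\alpha(x')$ --- this is sound precisely because the guard is rigid, so the outcome of the test is a property of the single position $x'$ --- yielding a classical formula $\varphi^-$ over words annotated by these predicates. A further ``rigidification'' step $\varphi^= \eqdef \varphi^- \et \forall y'\,(\varphi^-(x,y')\then y'=y)$ is needed, because on annotations that are inconsistent with every choice of data values $x$ may no longer determine $y$ in $\varphi^-$; your cut-and-paste relies on this determinacy, so you would hit the same problem. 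Only then does the finite-alphabet composition argument (Feferman--Vaught/Shelah, which is the logical counterpart of your prefix-state refinement and works uniformly for \FO and \MSO, sidestepping your worry about counter-freeness) apply, and finally the predicates $c^\sim_\alpha$ are substituted back by $\exists y'\,\alpha(x',y')\et x'\sim y'$ to land the $\beta_i$ back inside the guarded logic. If $\varphi$ were data-free your automaton-state argument would be a correct and essentially equivalent rendering of that classical step; as written, the proposal proves the lemma only for formulas without data tests.
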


\begin{proof}
We begin by recalling a result that originates from the composition methods 
developed by Feferman-Vaught and Shelah \cite{feferman_vaught_theorem,composition_method_shelah}:

\begin{claim}\label{claim:composition}
Given a classical MSO / FO formula $\varphi(x,y)$ that only uses the order $<$ and 
some unary predicates, but no data tests, and that entails $x\le y$, there exist 
finitely many pairs of formulas 
$(\varphi^L_i(x),\varphi^R_i(y))_{i=1\ldots n}$ such that, for all words $w=u\,v$ and
all positions $x$ in $u$ and $y$ in $v$,
$$
  u\,v\sat\varphi(x,|u|+y)
  \qquad\text{iff}\qquad 
  u\sat\varphi^L_i(x)
  \text{ and } 
  v\sat\varphi^R_i(y)
  \text{ for some $i\in\{1,\ldots,n\}$}.
$$
\end{claim}

\noindent
By relativising quantifications, we can then obtain formulas in two variables
$\alpha_i(x,z)$, $\beta_i(z,y)$ such that
$$
\begin{array}{rcll}
  w\sat\alpha_i(x,z) &\text{iff}&  w[1,\ldots,z-1]\sat\varphi^L_i(x) & \qquad\text{and} \\[1ex]
  w\sat\beta_i(z,y)  &\text{iff}&  w[z,\ldots,y]\sat\varphi^R_i(y).
\end{array}
$$
Our sub-definability lemma for the classical MSO / FO formula $\varphi(x,y)$
follows easily from the above result, since $\varphi(x,y)$ is equivalent to 
$$
  \bigvee_{i=1\ldots n}
  \exists z ~ \alpha_i(x,z) ~\et~ \beta_i(z,y) ~\et~ x\le z\le y.
$$
Below, we generalize this argument to formulas that use rigidly guarded data tests.

Let $\varphi(x,y)$ be a formula of rigidly guarded \MSO / \FO.
We use a technique similar to that of the proof of Theorem \ref{thm:decidability} to 
syntactically replace in $\varphi$ every occurrence of a data test $x'\sim y'$ with 
a fresh unary predicate $c^\sim_\alpha(x')$, where $\alpha(x',y')$ is the rigid formula 
that guards the occurrence of $x'\sim y'$ in $\varphi$ and $c^\sim_\alpha(x')$ encodes 
the existence of a (unique) position $y'$ satisfying $\alpha(x',y') \:\et\: x'\sim y'$.
We denote by $\varphi^-(x,y)$ the resulting formula of classical MSO / FO and, for every
data word $w$, we denote by $w^-$ the word obtained from~$w$ by removing all data values 
and by adding the predicates~$c^\sim_\alpha$ at positions $x'$ in such a way that 
$$
  w^- \sat c^\sim_\alpha(x')
  \qquad\text{iff}\qquad
  w \sat \exists y'~ \alpha(x',y') ~\et~ x'\sim y' \ .
$$
Clearly, for all data words $w$ and all positions $x,y$ in it, we have 
$$
  w \sat \varphi(x,y)
  \qquad\text{iff}\qquad
  w^-\sat\varphi^-(x,y) \ .
$$
Now, suppose that $x$ determines $y$ in $\varphi(x,y)$.
It can happen that $x$ does not determine~$y$ in~$\varphi^-(x,y)$, since the unary predicates 
$c^\sim_\alpha$ could be chosen in a way that is inconsistent with any choice of data values.
This can be easily corrected by `rigidifying' $\varphi^-$, namely, by letting
\begin{align*}
  \varphi^=(x,y) ~~\eqdef~~ \varphi^-(x,y) ~~\et~~ \forall y' ~ \varphi^-(x,y') ~\then~ y'=y \ .
\end{align*}
Indeed, when interpreted on a generic data word $w$ and a position $x$ in it,
the formula~$\varphi^=(x,y)$ is equivalent to~$\varphi^-(x,y)$ as long as 
there is at most one position $y$ in $w$ that satisfies $\varphi^-$. Otherwise, 
$\varphi^=(x,y)$ simply does not hold. 

Knowing that $\varphi^=(x,y)$ is a classical MSO / FO formula and that, by construction,
$x$ determines $y$ in it, we can apply the sub-definability lemma to $\varphi^=(x,y)$,
thus obtaining finitely many formulas $\beta^-_i(z,y)$ where $z$ determines $y$ and such 
that, for all~$x\leq z\leq y$, 
$$
  w \models \varphi^=(x,y) 
  \qquad\text{implies}\qquad 
  w \models \beta^-_i(z,y) \text{ for some $i$}.
$$
From each formula~$\beta^-_i(z,y)$, we reconstruct a formula of rigidly guarded \MSO / \FO
formula $\beta_i(z,y)$ by syntactically replacing every occurrence of unary predicate 
$c^\sim_\alpha(x')$ with $\exists y'~ \alpha(x',y') \:\et\: x'\sim y'$. 
It is clear that, since $\beta^-_i(z,y)$ defines a unique $y$ from $z$, so does~$\beta_i(z,y)$.
Furthermore, for all data words~$w$ and all positions~$x\leq z\leq y$, we have
$$
  w \models \varphi(x,y)
  \qquad\text{iff}\qquad 
  w^- \models \varphi^-(x,y)
  \qquad\text{iff}\qquad 
  w^- \models \varphi^=(x,y)
$$
and hence, there is~$i$ such that~$w^- \models \beta^-_i(z,y)$ and~$w \models \beta_i(z,y)$.
\end{proof}

An immediate consequence of the above lemma is the following:

\begin{corollary}\label{cor:rigidification}
Every inward-rigid formula is equivalent to a finite disjunction of rigid formulas.
\end{corollary}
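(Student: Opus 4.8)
The plan is to derive the corollary from the Sub-definability Lemma (Lemma~\ref{lemma:subdefinability}) by using it to transfer the ``determining power'' of the right endpoint onto each intermediate variable, and then to split into a finite disjunction according to the witnesses produced by that transfer. Let $\varphi(x,z_1,\ldots,z_k,y)$ be inward-rigid. By definition $w\models\varphi$ forces $x\le z_j\le y$ for each $j$, and moreover $x$ determines each of $z_1,\ldots,z_k,y$ while $y$ determines each of $x,z_1,\ldots,z_k$. Thus the only obstruction to full rigidity is that an interior variable $z_j$ need not, on its own, determine the remaining variables.

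First I would project away the interior variables, setting $\psi(x,y)~\eqdef~\exists z_1\cdots\exists z_k~\varphi(x,z_1,\ldots,z_k,y)$. This is again a rigidly guarded \MSO / \FO formula, and since $x$ determines the whole tuple $(z_1,\ldots,z_k,y)$ in $\varphi$, it follows that $x$ determines $y$ in $\psi$. Applying Lemma~\ref{lemma:subdefinability} to $\psi(x,y)$ yields finitely many formulas $\beta_1(z,y),\ldots,\beta_n(z,y)$, in each of which $z$ determines $y$, such that for all $x\le z\le y$ we have that $w\models\psi(x,y)$ implies $w\models\beta_i(z,y)$ for some $i$.

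Next comes the combinatorial step. Whenever $w\models\varphi(x,z_1,\ldots,z_k,y)$ we have $w\models\psi(x,y)$ together with $x\le z_j\le y$ for every $j$, so substituting $z\eqdef z_j$ gives an index $i_j$ with $w\models\beta_{i_j}(z_j,y)$. For each tuple $\bar{\imath}=(i_1,\ldots,i_k)\in\{1,\ldots,n\}^k$ I define
$$
  \varphi_{\bar{\imath}}(x,z_1,\ldots,z_k,y)
  ~\eqdef~ \varphi(x,z_1,\ldots,z_k,y) ~\et~ \bigwedge_{j=1}^{k}\beta_{i_j}(z_j,y),
$$
and I claim that $\varphi$ is equivalent to the finite disjunction $\bigvee_{\bar{\imath}}\varphi_{\bar{\imath}}$. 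Each disjunct trivially implies $\varphi$, and conversely any model of $\varphi$ satisfies $\varphi_{\bar{\imath}}$ for the tuple $\bar{\imath}=(i_1,\ldots,i_k)$ selected above.

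It then remains to check that each $\varphi_{\bar{\imath}}$ is rigid. Since adding conjuncts only strengthens a formula, every ``determines'' relation of $\varphi$ is inherited, so $x$ and $y$ still determine all other variables in $\varphi_{\bar{\imath}}$. Each new conjunct $\beta_{i_j}(z_j,y)$ makes $z_j$ determine $y$; and because ``determines'' is transitive and $y$ already determines $x$ and every $z_l$, it follows that each $z_j$ determines every variable. Hence every variable determines every other variable, i.e.\ $\varphi_{\bar{\imath}}$ is rigid, completing the proof. The genuine difficulty is entirely shouldered by Lemma~\ref{lemma:subdefinability}, which converts a left-to-right determination into one anchored at an interior point; once that is available, the only points to get right are the bookkeeping of the finite disjunction and the transitivity argument that propagates determination from each $z_j$ through $y$ to all remaining variables.
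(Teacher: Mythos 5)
Your proposal is correct and follows essentially the same route as the paper's own proof: project out the interior variables, apply the Sub-definability Lemma to the resulting two-variable formula, conjoin the resulting witnesses $\beta_{i_j}(z_j,y)$ to $\varphi$ for each index tuple, and observe that each conjunct is rigid by transitivity of the ``determines'' relation. Your write-up is slightly more explicit about why every variable determines every other, but the construction and the key lemma invoked are identical.
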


\begin{proof}
Consider an inward-rigid formula $\varphi(x,z_1,\ldots,z_k,y)$ and define
$\phi(x,y) = \exists z_1,\ldots,z_k ~ \varphi(x,z_1,\ldots,z_k,y)$.
Since $x$ determines $y$ in $\phi(x,y)$, we can apply Lemma~\ref{lemma:subdefinability},
thus obtaining the formulas $\alpha_1(z,y)$, $\ldots$, $\alpha_n(z,y)$.
Accordingly, the desired rigid formulas are defined by
$$
  \phi_{i_1,\ldots,i_k}(x,z_1,\ldots,z_k,y) 
  ~\eqdef~ \varphi(x,z_1,\ldots,z_k,y) ~\et~ \alpha_{i_1}(z_1,y) ~\et~ \ldots ~\et~ \alpha_{i_k}(z_k,y)
$$
where $i_1,\ldots,i_k$ are indices ranging over $\{1,\ldots,n\}$.

One easily checks that the formulas~$\phi_{i_1,\ldots,i_k}(x,z_1,\ldots,z_k,y)$ are rigid.
Indeed, $x$ determines $z_i$, which in its turn determines~$y$, and $y$ determines $x$.
Of course, $\phi_{i_1,\ldots,i_k}$ entails $\varphi$, by construction. Conversely, 
given some positions $x,z_1,\ldots,z_k,y$ such that~$w\models\varphi(x,z_1,\dots,z_k,y)$,
we know that $x\le z_1,\ldots,z_k\le y$ and, by Lemma~\ref{lemma:subdefinability}, there 
exist~$i_1,\ldots,i_k$ such that $w\models\alpha_{i_1}(z_1,y) ~\et~ \ldots ~\et~ \alpha_{i_k}(z_k,y)$,
and hence $w\models\phi_{i_1,\dots,i_k}(x,z_1,\dots,z_k,y)$.
We have just proved that $\varphi(x,z_1,\ldots,z_k,y)$ is equivalent to the finite
disjunction $\bigvee_{1\le i_1,\dots,i_k\le n}\phi_{i_1,\dots,i_k}(x,z_1,\dots,z_k,y)$ 
of rigid formulas.
\end{proof}

\medskip
We now formalize the meaning of ``computing the type under a guard''.
For this it is convenient to fix an orbit-finite data monoid $\cM$ that is 
given by a term-based presentation system $\cS=(T,\odot,\check\blank,\dsim)$.
This means that the elements of $\cM$ are the $\dsim$-equivalence classes of the 
terms in $T$. However, by a slight abuse of notation, we shall often identify 
the terms $o(d_1,\ldots,d_k)$ in $T$ with the corresponding elements 
$[o(d_1,\ldots,d_k)]_\dsim$ of $\cM$. For example, we can write
$h(w[x,y]) \,=\, o(d_1,\ldots,d_k)$.

\begin{definition}\label{def:computing-types}
Let $o$ be an orbit of the data monoid $\cM$ having memory size $k$.
A formula $\varphi(x,z_1,\ldots,z_k,y)$ \emph{witnesses the orbit $o$} 
if it is inward-rigid and
$$
  w \models \varphi(x,z_1,\ldots,z_n,y) 
  \qquad\text{implies}\qquad 
  h(w[x,y]) \,=\, o(w[z_1],\ldots,w[z_n]) \ .
$$
A family of formulas $F=(\varphi_o)_{o\in O}$ 
\emph{computes the types under the guard~$\alpha(x,y)$} if each 
formula~$\varphi_o(x,\bar{z},y)$ witnesses the orbit~$o$ and, 
moreover, the guard $\alpha(x,y)$ is logically equivalent to 
$\bigvee_{\!o\,\in O} \exists \bar{z} ~ \varphi_o(x,\bar{z},y)$. 
We say that \emph{one can compute the types under the guard~$\alpha(x,y)$} 
if there exists such a family of formulas.
\end{definition}

We aim at proving that for every rigid formula~$\alpha(x,y)$ (and, in particular, 
for the rigid formula $\alpha(x,y) \:=\: (\neg\exists z ~ z<x) \et (\neg\exists z ~ z>y)$), 
one can compute the types under $\alpha(x,y)$. As we mentioned, the proof of Theorem 
\ref{thm:monoids-to-logic} exploits an induction on the partial order $\le_{\orbit\cJ}$ 
of the $=_{\orbit\cJ}$-classes of $\cM$. The invariant of the induction is given in 
the following lemma. 

\begin{lemma}[Inductive statement]\label{lemma:inductive-statement}
For every $=_{\orbit\cJ}$-class $O$ of $\cM$:
\begin{condlist}
  \item \label{claim1}
        there exists a formula $\varphi_O(x,y)$ such that 
        $w \sat \varphi_O(x,y)$ iff $h(w[x,y])\in O$;
  \item \label{claim2}
        for all rigid guards $\alpha(x,y)$ such that~$w\sat\alpha(x,y)$ 
        implies $h(w[x,y]) \ge_{\orbit\cJ} O$, one can compute the types 
        under~$\alpha$.
\end{condlist}
\end{lemma}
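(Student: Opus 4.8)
The plan is to establish conditions~\ref{claim1} and~\ref{claim2} \emph{simultaneously} by well-founded induction on the $=_{\orbit\cJ}$-classes of $\cM$ ordered by $\le_{\orbit\cJ}$; this induction is legitimate because $\cM$ is orbit-finite and every $=_{\orbit\cJ}$-class is a union of orbits, so there are only finitely many such classes (equivalently, one appeals to well-foundedness of $\le_\cJ$). To treat a class $O$, I assume~\ref{claim1} and~\ref{claim2} for every class strictly above it, i.e.\ every $O'$ with $O' >_{\orbit\cJ} O$. Within the step I would first prove condition~\ref{claim2} and then read off condition~\ref{claim1}: once the types under the guard ``$h(w[x,y]) \ge_{\orbit\cJ} O$'' are computable, the formula $\varphi_O(x,y) \eqdef \bigvee_{o\,\in\,O}\exists\bar z~\varphi_o(x,\bar z,y)$ detects $h(w[x,y])\in O$.

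To prove condition~\ref{claim2} for $O$, given a rigid guard $\alpha$ forcing $h(w[x,y]) \ge_{\orbit\cJ} O$, I would refine $\alpha$ according to the exact class of $h(w[x,y])$. For the disjuncts in which this class lies strictly above $O$, the types come directly from the induction hypothesis (condition~\ref{claim1} for the higher classes expresses the refined guard, and condition~\ref{claim2} for them computes the types). The genuinely new case, and the heart of the proof, is a guard forcing $h(w[x,y])\in O$ \emph{exactly}. Here I would locate the first position $z_0$ at which the prefix enters $O$, namely the unique $z_0$ with $h(w[x,z_0-1]) >_{\orbit\cJ} O$ and $h(w[x,z_0])\in O$; since products only descend in $\le_{\orbit\cJ}$ and the overall value stays in $O$, this position is well defined and definable using only the formulas $\varphi_{O'}$ for $O'>_{\orbit\cJ}O$ from the induction hypothesis. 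The type of the ``entry'' element $m_0 = h(w[x,z_0])$ is then computed by combining, through a single product in $\cM$, the type of the strictly-higher prefix $w[x,z_0-1]$ with the letter at $z_0$. Because $m_0$ and $h(w[x,y])$ lie in the same $=_\cJ$-class they are $=_\cR$-equivalent, so $m_0$ already fixes the $\cR$-class of $h(w[x,y])$ and hence, by Proposition~\ref{prop:memorable-values}, its $\cR$-memorable values, whose witnessing positions sit inside $w[x,z_0]$ and are determined by $x$. A symmetric analysis from the right fixes the $\cL$-class and the $\cL$-memorable values, witnessed by positions determined by $y$.

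It remains to pin down $h(w[x,y])$ inside the $\cR$-class of $m_0$, i.e.\ to track how the suffix from $z_0$ to $y$ moves through the $=_\cH$-classes. By Corollary~\ref{cor:finite-H-classes} together with Lemmas~\ref{lemma:data-groups} and~\ref{lemma:schutzenberger-data-group}, the relevant Sch\"utzenberger group is a \emph{finite} data group, so this movement is the product of finitely many group elements, one per letter of the suffix. In the \FO (aperiodic) case this group is trivial, so the $\cR$- and $\cL$-classes already determine $h(w[x,y])$ and no further tracking is needed. In the \MSO case I would introduce an existentially quantified monadic variable colouring the suffix positions by group elements and check locally that consecutive colours differ by the element contributed by the intervening letter; finiteness of the group makes this expressible. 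Throughout, to compare the memorable values of the factors (which is what makes the product in $\cM$ computable and what yields the correct orbit name applied to the witness data values), every data test must be guarded rigidly: I would present the witness formulas as inward-rigid formulas ($x$ determining $z_0$ and the $\cR$-witnesses, $y$ determining the $\cL$-witnesses) and then invoke Corollary~\ref{cor:rigidification} to rewrite each as a finite disjunction of fully rigid formulas, which may legitimately guard the tests $z_i \sim z_j$.

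The main obstacle is precisely this in-class part: one must definably isolate the entry and exit factorization points, simulate the product inside the $\cR$-class while accounting for the finite Sch\"utzenberger group, and do all of this without ever comparing two data values except under a provably rigid guard. Keeping the witness positions for the $\cR$- and $\cL$-memorable values correctly propagated — so that the resulting formulas are inward-rigid and genuinely witness the intended orbit in the sense of Definition~\ref{def:computing-types} — is the delicate bookkeeping that the rigidification of Corollary~\ref{cor:rigidification} and the memorable-value dichotomy of Proposition~\ref{prop:memorable-values} are designed to support.
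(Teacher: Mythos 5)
Your global skeleton --- induction on $=_{\orbit\cJ}$-classes, decomposition at the entry/exit points into $O$, the use of Lemma~\ref{lemma:decomposition-stairs} to obtain $=_\cR$- and $=_\cL$-equivalence with the prefix and suffix elements, Proposition~\ref{prop:memorable-values} to account for all memorable values, and Corollary~\ref{cor:rigidification} to turn inward-rigid witnesses into rigid guards --- matches the paper's proof. But there are two genuine gaps. First, condition (C1) does not follow from condition (C2) the way you claim. (C2) applies only to \emph{rigid} guards $\alpha(x,y)$, and by Definition~\ref{def:computing-types} a family computing the types under $\alpha$ consists of inward-rigid formulas whose disjunction is \emph{equivalent} to $\alpha$; since each disjunct determines $y$ from $x$, any guard expressible this way is a finite union of partial bijections. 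The relation ``$h(w[x,y])\in O$'' admits unboundedly many $y$ for a fixed $x$, so it cannot be such a guard, and $\bigvee_{o}\exists\bar z\;\varphi_o(x,\bar z,y)$ cannot define it. The paper proves (C1) by a separate, direct argument (no sub-interval of $[x,y]$ is a minimal interval with image $\not\ge_{\orbit\cJ}O$, and $h(w[x,y])$ lies in no class strictly above $O$; Lemmas~\ref{lemma:witness-not-J-above} and~\ref{lemma:not-J-above}, then Lemma~\ref{lemma:cond1-fo}), and in fact \emph{needs} (C1) for $O$ inside the proof of (C2) for $O$, in order to define the blocks and factorizations rigidly. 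Your proposed order of deduction is therefore reversed, and the reduction of (C1) to (C2) fails.

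Second, and more seriously, the non-aperiodic case is not solved by colouring the suffix positions with elements of the finite Sch\"utzenberger group and checking locally that consecutive colours differ by ``the element contributed by the intervening letter.'' That contribution is not an abstract group element: the product $h(w[x,z])\cdot h(w(z{+}1))$ depends on whether the data value at position $z{+}1$ coincides with each memorable value of $h(w[x,z])$, and those values are witnessed at unboundedly many, unboundedly distant positions whose sharing pattern a finite colouring cannot record; there is also no canonical, logic-definable enumeration of the (data-valued) elements of each intermediate $=_\cH$-class against which a colour could be interpreted. This is exactly the difficulty that the paper's machinery of $O$-(pre)factorizations and rigid traversability (Lemmas~\ref{lemma:J-prefactorization} and~\ref{lemma:J-factorization}) together with (almost) locally consistent term sequences over only $4\dlen{\cM}$ data values (Lemmas~\ref{lemma:local-consistency} and~\ref{lemma:small-data-locally-consistent}) is built to overcome: one guesses bounded-alphabet terms $t_1,\dots,t_n$ and partial products $u_i$ via monadic predicates, verifies local consistency with the true factor images using rigid tests between \emph{adjacent} factors only, and invokes the invariance of products under locally consistent replacement. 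Your sketch has no counterpart to this step, and without it the local check you describe is not well defined, let alone expressible with rigidly guarded data tests.
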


We will prove the above lemma first under the assumption that $\cM$ is {\sl aperiodic}, 
constructing formulas of the rigidly guarded \FO logic. In the aperiodic case, we use the 
fact that the orbit of an infix is determined by its $\gLO$-class and its $\gRO$-class and 
by the equality relationships between the memorable values in these two classes (this 
follows basically from the fact that the $\cH$-classes of an aperiodic monoid are 
singletons). In the second part, we will reprove the same lemma without the assumption 
of aperiodicity, constructing formulas of the rigidly guarded \MSO logic. In this case
different objects have to be guessed by quantifying over monadic second-order predicates. 

\subsection{The translation in the aperiodic case}\label{subsec:monoids2logic-fo}

In this section we assume that $\cM$ is an {\sl aperiodic} orbit-finite data monoid 
and we prove the inductive statement given in Lemma \ref{lemma:inductive-statement},
where formulas are meant to be rigidly-guarded \FO formulas.

For the sake of brevity, we can fill the parameters of a formula~$\varphi(x_1,\dots,x_n)$ 
with $\star$ to denote the fact that the corresponding variables are existentially quantified. 
With this notation, if~$\varphi(x,y,z)$ is rigid according to Definition \ref{def:inward-rigid}, 
then so is~$\varphi(\star,y,z)$, as well as~$\varphi(x,\star,z)$ and~$\varphi(x,y,\star)$.

\medskip
We begin by presenting a special, but important, case of Lemma \ref{lemma:inductive-statement}, 
which shows that the types of infixes of length~$1$ can be computed (this will serve as our 
base case for the inductive construction).

\begin{lemma}\label{lemma:base-case}
Let~$\alpha_1(x,y) \:\eqdef\: (x=y)$. One can compute the types under the guard~$\alpha_1$.
\end{lemma}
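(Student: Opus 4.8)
The plan is to observe that, under the guard $\alpha_1(x,y)=(x=y)$, the only infixes considered are the single-letter factors $w[x,x]$, so this base case reduces to a direct inspection of the images under $h$ of the one-letter data words. For each symbol $a\in A$ I would analyse the image $h((d,a))$ of the data word consisting of the single letter $(d,a)$. Since $h$ commutes with renamings and the only data value occurring in $(d,a)$ is $d$, every renaming that fixes $d$ stabilises $h((d,a))$; hence $\mem(h((d,a)))\subseteq\{d\}$, so the orbit $o_a$ of $h((d,a))$ has arity $0$ or $1$. Moreover $o_a$ does not depend on the chosen $d$, because any two choices differ by a renaming. Both $o_a$ and its arity can be read off effectively from the given representation of $\cM$ together with $h((d,a))$ for any fixed $d$, which is what ``one can compute'' requires.

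Next I would write the witnessing formulas. If $\arity(o_a)=0$, set $\varphi_{o_a}(x,y)\eqdef (x=y)\et a(x)$; whenever it holds we have $h(w[x,y])=o_a(\emptystr)$. If $\arity(o_a)=1$, the unique memorable value of $h(w[x,x])$ is the datum carried by position $x$ itself, so the witness position must be $x$; I set $\varphi_{o_a}(x,z_1,y)\eqdef (x=y)\et(z_1=x)\et a(x)$, and then $h(w[x,y])=o_a(w[z_1])$ whenever it holds. If two distinct symbols happen to yield the same orbit $o$, I simply take the disjunction of their formulas, obtaining one formula $\varphi_o$ per orbit as demanded by Definition \ref{def:computing-types}. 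Each such formula is inward-rigid in the sense of Definition \ref{def:inward-rigid}: every variable appearing in it equals $x$, so the ordering constraint $x\le z_1\le y$ and all the mutual determinations between $x$, $z_1$ and $y$ hold trivially. Thus each $\varphi_{o_a}$ witnesses the orbit $o_a$.

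Finally I would verify the guard equivalence. Since every position of a data word carries exactly one symbol of $A$, the disjunction $\bigvee_{o}\exists\bar{z}~\varphi_o(x,\bar{z},y)$ is logically equivalent to $\bigvee_{a\in A}\big((x=y)\et a(x)\big)$, which in turn is equivalent to $(x=y)=\alpha_1(x,y)$. This is precisely the condition that the family $(\varphi_o)$ computes the types under $\alpha_1$. There is no genuine obstacle in this lemma, as it is the honest base case of the induction of Lemma \ref{lemma:inductive-statement}; the only point needing a little care is the memory computation establishing $\arity(o_a)\le 1$, since this is what guarantees that a single witness position $z_1=x$ always suffices.
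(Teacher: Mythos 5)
Your proposal is correct and follows essentially the same route as the paper's proof: one formula per orbit of a singleton image, with $z_1=x$ as the witness position in the arity-one case and a disjunction over the letters $a$ mapping to that orbit. The only difference is that you explicitly justify the key fact that $\mem(h((d,a)))\subseteq\{d\}$ (so the arity is at most $1$), which the paper merely asserts.
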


\begin{proof}
Note that the morphism~$h$ maps singleton words to orbits that have memory size at most~$1$.
A family~$F_1$ that computes the types under~$\alpha_1$ consists of formulas 
$\phi_o$, for all orbits $o\in O$, defined by
\begin{align*}
  \phi_o(x,y)     &~~\eqdef~ \bigvee_{\!\!h((d,a)) \,=\, o(\emptystr)\!\!} 
                             a(x) ~\et~ x=y                  
  \tag{if~$o$ has memory size~$0$} \\[1ex]
  \phi_o(x,z_1,y) &~~\eqdef~ \bigvee_{\!\!h((d,a)) \,=\, o(d)\!\!} 
                             a(x) ~\et~ x=y ~\et~ x=z_1
  \tag{if~$o$ has memory size~$1$} \\[1ex]
  \phi_o(x,y)     &~~\eqdef~ \false
  \tag{otherwise}
\end{align*}
\end{proof}

The next lemma shows how to compose families of formulas that compute the types under 
some given guards. This is one of the places where the products of the data monoid $\cM$ are 
simulated by comparing memorable values. In particular, the lemma depends on the 
fact that any inward-rigid formula used to witness an orbit can be written as a 
finite disjunction of rigid formulas (Corollary \ref{cor:rigidification}).

\begin{lemma}\label{lemma:products}
Given two rigid guards $\alpha(x,y)$ and $\alpha'(x',y')$, let $\alpha\cdot\alpha'$
be the rigid guard defined by 
$(\alpha\cdot\alpha')(x,y) ~\eqdef~ \exists z ~ \alpha(x,z) ~\et~ \alpha'(z+1,y)$.
Given two families of formulas $F$ and~$F'$ that compute the types under the 
guards~$\alpha$ and~$\alpha'$, respectively, one can construct a family~$F\cdot F'$ 
that computes the types under the guard $\alpha\cdot\alpha'$.
\end{lemma}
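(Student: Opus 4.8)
The plan is to read off the product orbit $o''=o\cdot o'$ from the factorisation witnessed by $F$ and $F'$, while making sure that every data comparison between the two factors is performed under a rigid guard. Since $\alpha$ and $\alpha'$ are rigid, for every $x$ there is at most one $z$ with $w\models\alpha(x,z)$, and then at most one $y$ with $\alpha'(z+1,y)$; in particular $\alpha\cdot\alpha'$ is rigid, with $x$ determining $z$ (via $\alpha$), then $z+1$ determining $y$ (via $\alpha'$), and symmetrically $y$ determining everything. For such a split $h(w[x,y])=h(w[x,z])\cdot h(w[z+1,y])$. Because $F$ computes the types under $\alpha$, there are a unique orbit $o$ and unique witness positions $\bar z=z_1,\dots,z_k$ with $w\models\varphi_o(x,\bar z,z)$, so $h(w[x,z])=o(w[z_1],\dots,w[z_k])$; symmetrically $F'$ yields $o'$ and $\bar z'=z'_1,\dots,z'_{k'}$ with $h(w[z+1,y])=o'(w[z'_1],\dots,w[z'_{k'}])$. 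By equivariance of the product, the orbit of $o(\bar d)\cdot o'(\bar e)$ and the way the memorable values of the product are selected from $\bar d,\bar e$ depend only on $o$, $o'$ and the equality pattern $P$ of the combined tuple $(\bar d,\bar e)$; moreover $\mem(o(\bar d)\cdot o'(\bar e))\subseteq\mem(o(\bar d))\cup\mem(o'(\bar e))$, so every memorable value of the product is some $d_i$ or $e_j$. Hence, once $o$, $o'$ and $P$ are known, both the output orbit $o''$ and a choice of output witness positions $\bar w$ among $\bar z,\bar z'$ can be computed at the level of a sufficiently large finite restriction $\cM|_C$.

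First I would make the cross comparisons rigid. The witnesses $\varphi_o$ and $\varphi'_{o'}$ are only inward-rigid, so I would apply Corollary \ref{cor:rigidification} to rewrite each as a finite disjunction of fully rigid formulas; working inside one disjunct we may assume $\varphi_o(x,\bar z,z)$ and $\varphi'_{o'}(z+1,\bar z',y)$ are rigid. Their conjunction $\chi(x,\bar z,z,\bar z',y)\eqdef\varphi_o(x,\bar z,z)\et\varphi'_{o'}(z+1,\bar z',y)$ is then rigid, since $x$ determines $\bar z,z$, then $z+1$ determines $\bar z',y$, and symmetrically $y$ determines all the variables. Filling every variable except $z_i$ and $z'_j$ with $\star$ yields, by the $\star$-notation for rigid formulas, a guard $\gamma_{i,j}(z_i,z'_j)$ in which $z_i$ determines $z'_j$ and conversely, so $\gamma_{i,j}$ is rigid. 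This legitimises the guarded data tests $\gamma_{i,j}(z_i,z'_j)\et z_i\sim z'_j$ and $\gamma_{i,j}(z_i,z'_j)\et z_i\nsim z'_j$, with which I can pin down exactly the pattern $P$ (the values inside $\bar z$, and inside $\bar z'$, are automatically pairwise distinct, since each group indexes a single term, so only cross equalities need to be asserted).

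Assembling the family is then routine. For each pair of rigid disjuncts for $o$ and $o'$ and each pattern $P$, I build a formula that existentially quantifies $z$ together with the components of $\bar z,\bar z'$ not retained as output positions, asserts $\chi(x,\bar z,z,\bar z',y)$ and the pattern $P$ through the guarded tests above, and designates as $\bar w$ the output positions computed from $(o,o',P)$. Let $\psi_{o''}$ be the disjunction of all such formulas whose product equals $o''$ (and $\false$ when none arises). Each $\psi_{o''}$ is inward-rigid: within a disjunct $x$ (resp.\ $y$) determines $z,\bar z,\bar z',y$ and hence $\bar w$, while across disjuncts the semantic data $o,o',P$, and therefore $\bar w$, are already determined by $w$ and $x$ (resp.\ $w$ and $y$), since the split point and the two factor types are unique. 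By construction $w\models\psi_{o''}(x,\bar w,y)$ forces $h(w[x,y])=h(w[x,z])\cdot h(w[z+1,y])=o''(w[w_1],\dots,w[w_{k''}])$, so $\psi_{o''}$ witnesses $o''$; and $\bigvee_{o''}\exists\bar w\,\psi_{o''}$ is equivalent to $\exists z\;\alpha(x,z)\et\alpha'(z+1,y)$, i.e.\ to $(\alpha\cdot\alpha')(x,y)$, using that $F$ and $F'$ compute the types under $\alpha$ and $\alpha'$. Since we add only first-order existential quantifiers, rigid guards and data tests, the construction stays within rigidly guarded \FO when the inputs are \FO (and within rigidly guarded \MSO otherwise), so aperiodicity plays no role here.

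The hard part will be the rigidification step: because the witnesses handed to us are merely inward-rigid, one cannot directly guard a comparison $z_i\sim z'_j$, and the whole construction hinges on first converting them into disjunctions of rigid formulas via Corollary \ref{cor:rigidification}, so that the projected guards $\gamma_{i,j}$ become rigid. The remaining bookkeeping — verifying inward-rigidity of the disjunction $\psi_{o''}$, and checking that the output orbit and its memorable positions are correctly read off from $(o,o',P)$ by equivariance together with $\mem(s\cdot t)\subseteq\mem(s)\cup\mem(t)$ — is a direct but careful calculation.
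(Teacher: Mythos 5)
Your proposal is correct and follows essentially the same route as the paper's proof: both hinge on applying Corollary \ref{cor:rigidification} to turn the inward-rigid witnesses into disjunctions of rigid formulas, then projecting their conjunction down to two variables to obtain rigid guards (your $\gamma_{i,j}$, the paper's $\beta^{i,j}_{p,q}$) under which the cross comparisons $z_i\sim z'_j$ may legally be performed, and both enumerate the finitely many pairs of terms up to renaming (equivalently, orbits plus cross-equality pattern) to read off the product orbit and select the output witness positions from among $\bar z,\bar z'$. The construction and the final disjunction over pairs whose product lies in a given orbit match the paper's $\psi_{t\cdot t'}$ and $\psi_o$ exactly.
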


\begin{proof}
Let~$F=(\varphi_o)_{o\in O}$ and~$F'=(\varphi'_o)_{o\in O}$.
We aim at constructing~$F\cdot F'=(\psi_o)_{o\in O}$ that 
computes the types under~$\alpha\cdot\alpha'$. We begin by
recalling that the orbit that results from the product of an 
element in orbit~$o$ with an element in orbit~$o'$ depends 
on the respecive memorable values. The equality relationships 
between memorable values will be represented by pairs of terms 
with data values (up to renaming, there are only finitely many 
pairs) and, for each such pair, we will produce a corresponding formula. 

Consider two terms~$t \:=\: o(d_1,\dots,d_k)$ and~$t' \:=\: o'(e_1,\dots,e_h)$ 
and let $t\cdot t' \:=\: o''(f_1,\dots,f_\ell)$ be their product according to $\cM$.
Let~$\bigvee_{1\le p\le n} \: \phi_{o,p}(x,z_1,\ldots,z_k,y)$ 
and~$\bigvee_{1\le q\le m} \: \phi'_{o',q}(x,z_1,\ldots,z_h,y)$ 
be the finite disjunctions of rigid formulas, equivalent to $\varphi_o$ and $\varphi'_{o'}$,
respectively, that are obtained from Corollary~\ref{cor:rigidification}.
Define 
$$
  \beta_{p,q}^{i,j}(z_i,z'_j) 
  ~~\eqdef~~ \exists y.~ \phi_{o,p}(\star,\bar\star,z_i,\bar\star,y) ~\et~ 
                         \phi_{o',q}(y+1,\star,\bar\star,z'_j,\bar\star,\star)
$$
and
\begin{align*}
  \psi_{t\cdot t'}(x,z''_1\dots,z''_\ell,y) 
  ~~\eqdef&~ ~\exists \: \xi, 
             ~\exists \: \bar{z}=z_1\dots z_k, 
             ~\exists \: \bar{z}'=z'_1\dots z'_h. 
             \tag{a} \\[1ex]
  ~~\et~~ &~ \bigvee_{p,q} ~ 
             \Big(~ 
             \phi_{o,p}(x,\bar{z},\xi,) ~~\et~~ \phi'_{o',q}(\xi+1,\bar{z}',y) 
             \tag{b} \\
          &~ \qquad~\et~~ 
             \bigwedge_{\!\!\!\!d_i=e_j\!\!\!\!} ~ 
             \beta_{p,q}^{i,j}(z_i,z'_j) ~\et~ z_i\sim z'_j 
             \tag{c} \\
          &~ \qquad~\et~~     
             \bigwedge_{\!\!\!\!d_i\neq e_j\!\!\!\!} ~
             \beta_{p,q}^{i,j}(z_i,z'_j) ~\et~ z_i\nsim z'_j ~\Big)
             \tag{d} \\[1ex]
  ~~\et~~ &~ \bigwedge_{\!\!\!\!f_i=d_j\!\!\!\!} ~ 
             z''_i=z_j 
  \quad\et   \bigwedge_{\!\!\!\!\!\begin{smallmatrix} 
                                    f_i=e_j \\ 
                                    f_i \,\nin\, \{d_1,\ldots,d_k\} 
                                  \end{smallmatrix}\!\!\!\!\!\!\!}
             z''_i=z'_j \ .
             \tag{e}
\end{align*}
Given $x$ and $y$, the formula $\psi_{t\cdot t'}(x,z''_1\dots,z''_\ell,y)$ 
first guesses the intermediate position~$\xi$ and the variables $\bar{z}$
and $\bar{z}'$ that contain the memorable values of $h(w[x,\xi])$ and of 
$h(w[\xi+1,y])$ (a). It then guesses the indices $p,q$ for the rigid 
formulas $\phi_{o,p}$ and $\phi'_{o',q}$ that hold over the factors 
$w[x,\xi]$ and $w[\xi+1,y]$ (b). Line (c) checks that, whenever a memorable
value of~$t$ and a memorable value of~$t'$ are equal, then the corresponding
positions in the factors share the same data value. Note that this comparison 
is done under the rigid guard~$\beta_{p,q}^{i,j}(z_i,z'_j)$, which of course
holds between $z_i$ and $z'_j$ whenever (b) holds. Similar conditions for disequalities
are verified in line (d). Finally, line (e) uniquely determines the positions 
of the memorable values of $t\cdot t'$ (in case a memorable value is shared 
between the left and the right term, priority is given to the leftmost position).

Overall, the formula~$\psi_{t\cdot t'}(x,z''_1\dots,z''_\ell,y)$ is inward-rigid 
and witnesses the orbit~$o''$. Furthermore, if~$x,\xi,y$ are positions such that 
$w \models \alpha(x,\xi)$ and $w \models \alpha'(\xi+1,y)$ and $\tau$ is a 
renaming such that $\tau(t) = h(w[x,\xi])$ and $\tau(t') = h(w[\xi+1,y])$, 
then $w \models \psi_{t\cdot t'}(x,\bar{z}'',y)$ for some tuples of positions
$\bar{z}''$. Therefore, the family $F\cdot F'=(\psi_o)_{o\in O}$ of formulas 
that computes the types under the guard $\alpha\cdot\alpha'$ can be obtained 
by associating with each orbit $o''$ the formula
$$
  \psi_o(x,\bar{z}'',y) ~~\eqdef~~ \bigvee_{t\cdot t' \,\in\, o} \psi_{t\cdot t'}(x,\bar{z}''y,).
$$
(this tries every possible pair of terms $t,t'$, among the finitely many 
different possibilities up to renamings, whose product yields the orbit $o$).
\end{proof}

Using Lemmas \ref{lemma:base-case} and \ref{lemma:products}, 
one can compute the orbits of infixes of fixed length:

\begin{corollary}\label{cor:fixed-length-infixes}
Let~$\alpha_k(x,y) \:\eqdef\: (x+k-1=y)$. One can compute the types under the guard~$\alpha_k$.
\end{corollary}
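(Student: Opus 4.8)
The plan is a straightforward induction on $k\ge 1$, using the two preceding lemmas as the only real ingredients. The base case $k=1$ is exactly Lemma~\ref{lemma:base-case}, since $\alpha_1(x,y)$ is literally the guard $(x=y)$ treated there.

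For the inductive step I would first record the arithmetic identity that makes everything fit: the guard $\alpha_{k+1}$ decomposes as the product $\alpha_k\cdot\alpha_1$ in the sense of Lemma~\ref{lemma:products}. Indeed, unfolding the definition of the product guard gives $(\alpha_k\cdot\alpha_1)(x,y) ~=~ \exists z ~ \alpha_k(x,z) \et \alpha_1(z+1,y) ~=~ \exists z ~ (x+k-1=z) \et (z+1=y)$, and the unique witness $z=x+k-1$ forces $y=x+k$, so this formula is logically equivalent to $(x+k=y)$, i.e.\ to $\alpha_{k+1}(x,y)$. Before invoking the lemma I would also check its hypothesis that both factors are rigid guards: each of $\alpha_k(x,y)=(x+k-1=y)$ and $\alpha_1(x,y)=(x=y)$ determines $y$ from $x$ and $x$ from $y$, hence both are rigid in the sense of Definition~\ref{def:inward-rigid}.

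With these two observations in place the induction closes immediately. By the inductive hypothesis there is a family $F_k$ computing the types under $\alpha_k$, and by the base case there is a family $F_1$ computing the types under $\alpha_1$. Lemma~\ref{lemma:products} then produces a family $F_k\cdot F_1$ computing the types under $\alpha_k\cdot\alpha_1$, which by the equivalence above is precisely $\alpha_{k+1}$. Since ``computing the types under a guard'' is preserved under replacing the guard by a logically equivalent one, this yields a family computing the types under $\alpha_{k+1}$, completing the inductive step.

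I do not expect any serious obstacle: the entire substance of the statement is already carried by Lemmas~\ref{lemma:base-case} and~\ref{lemma:products}, and what remains is the identity $\alpha_{k+1}\equiv\alpha_k\cdot\alpha_1$ together with the trivial rigidity check for the arithmetic guards. The only point deserving a moment's care is the off-by-one bookkeeping hidden in the product construction (the second factor starts at $z+1$, not $z$): one must confirm that this matches the intended splitting of an infix of length $k+1$ into a prefix of length $k$ followed by the single final position, which is exactly what the computation above verifies.
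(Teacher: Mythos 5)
Your proof is correct and is precisely the argument the paper intends: the corollary is stated without an explicit proof, prefaced only by the remark that it follows from Lemmas~\ref{lemma:base-case} and~\ref{lemma:products}, and the induction on $k$ with the decomposition $\alpha_{k+1}\equiv\alpha_k\cdot\alpha_1$ (together with the rigidity and off-by-one checks you carry out) is exactly how that remark is meant to be unfolded.
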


%
%
%
We now prove a technical lemma that is similar to Theorem V.1.9 from 
\cite{mathematical_foundations_of_automata}. The difference here is that
the hypothesis of the lemma uses the coarser equivalence $=_{\orbit\cJ}$ 
in place of $=_\cJ$. We will exploit this result several times in the paper, 
for instance, in the proof of Theorem \ref{lemma:cond2-fo}.

\begin{lemma}\label{lemma:decomposition-stairs}
For every pair of elements $s,t$ of $\cM$, 
if $s =_{\orbit\cJ} s\cdot t$, then $s =_\cR s\cdot t$.
Similarly, if $t =_{\orbit\cJ} s\cdot t$, then $t =_\cL s\cdot t$.
\end{lemma}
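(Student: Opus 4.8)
The plan is to prove the two statements by a left--right symmetric argument, so I would concentrate on the first one, that $s =_\cR s\cdot t$, and obtain the second by the dual reasoning. Since $s\cdot t\in s\cdot M$, we always have $s\cdot t\le_\cR s$, so it suffices to establish the reverse containment $s\le_\cR s\cdot t$, i.e.\ that $s\in s\cdot t\cdot M$. The overall shape of the proof follows the classical stability argument for finite monoids (Theorem V.1.9 of \cite{mathematical_foundations_of_automata}), preceded by a step whose sole purpose is to eliminate the renaming concealed in the orbit hypothesis.

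First I would unfold the hypothesis. From $s =_{\orbit\cJ} s\cdot t$ we extract $s\le_{\orbit\cJ} s\cdot t$, so there is a renaming $\tau$ with $s\le_\cJ \hat\tau(s)\cdot\hat\tau(t)$. Two features of renamings are now decisive: $\hat\tau$ is a monoid automorphism, hence preserves every Green preorder, and $\tau$, being of finite support, has finite order $\tau^N=\iota$. Applying $\hat\tau$ to the trivial inequality $s\cdot t\le_\cJ s$ gives $\hat\tau(s\cdot t)\le_\cJ\hat\tau(s)$, which together with $s\le_\cJ\hat\tau(s\cdot t)$ yields $s\le_\cJ\hat\tau(s)$. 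Applying $\hat\tau$ repeatedly produces a chain $s\le_\cJ\hat\tau(s)\le_\cJ\cdots\le_\cJ\hat\tau^{\,N}(s)=s$, and a cycle in the preorder $\le_\cJ$ forces all its members to be $=_\cJ$-equivalent; in particular $s=_\cJ\hat\tau(s)$. Plugging this into $s\le_\cJ\hat\tau(s\cdot t)\le_\cJ\hat\tau(s)=_\cJ s$ gives $s=_\cJ\hat\tau(s\cdot t)$, and applying $\hat\tau^{-1}$ finally delivers the renaming-free equivalence $s=_\cJ s\cdot t$. I expect this collapse of $=_{\orbit\cJ}$ to $=_\cJ$ to be the main obstacle, since it is the one point where the orbit structure genuinely differs from the finite-monoid situation; the finite order of renamings is exactly what makes it go through.

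With the plain equivalence $s=_\cJ s\cdot t$ available, I would run the standard computation, valid here because orbit-finite data monoids are locally finite. From $s\le_\cJ s\cdot t$ choose $u,w\in M$ with $s=u\cdot(s\cdot t)\cdot w=u\cdot s\cdot v$, where $v=t\cdot w$, so that $s=u^n\cdot s\cdot v^n$ for all $n$. The cyclic submonoid generated by $v$ is finite, so some power $f=v^n$ is idempotent, whence $s\cdot f=u^n\cdot s\cdot f\cdot f=u^n\cdot s\cdot f=s$. Since $v=t\cdot w$ gives $v^n\in t\cdot M$, we conclude $s=s\cdot v^n\in s\cdot t\cdot M$, that is $s\le_\cR s\cdot t$, which finishes the first statement.

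The second statement, namely $t=_{\orbit\cJ}s\cdot t\Rightarrow t=_\cL s\cdot t$, follows by the mirror-image argument: one reduces to $t\le_\cL s\cdot t$, upgrades the orbit hypothesis to the plain $t=_\cJ s\cdot t$ by exactly the cycle argument above, and then, writing $t=p\cdot(s\cdot t)\cdot q=v\cdot t\cdot q$ with $v=p\cdot s$, uses an idempotent power $v^n$ (which lies in $M\cdot s$) to derive $v^n\cdot t=t$ and hence $t=v^n\cdot t\in M\cdot s\cdot t$.
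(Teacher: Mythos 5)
Your proof is correct, and it takes a genuinely different route from the one in the paper, even though both rest on the same two finiteness facts (renamings have finite order, and orbit-finite data monoids are locally finite). The paper never isolates a collapse of $=_{\orbit\cJ}$ to $=_\cJ$: it writes $s = \tau(u)\cdot\tau(s)\cdot\tau(t\cdot v)$ and iterates this substitution \emph{together with} the renaming, obtaining $s = u_n\cdot\tau^n(s)\cdot z_n$, then uses the finite order $n_0$ of $\tau$ to make the middle factor return to $s$ at multiples of $n_0$ and local finiteness to make $z_{n_0}^m$ idempotent, concluding $s = s\cdot z_{n_0}^m \in s\cdot t\cdot M$ in a single interleaved computation. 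You instead first eliminate the renaming entirely, by noting that $s\le_\cJ\hat\tau(s)$ together with $\hat\tau^N=\hat\iota$ closes a cycle in the preorder $\le_\cJ$ and hence forces $s=_\cJ\hat\tau(s)$, which upgrades the hypothesis to the plain $s=_\cJ s\cdot t$; after that you run the textbook stability argument verbatim. Your decomposition is more modular and isolates a reusable fact ($s\le_\cJ\hat\tau(s)$ implies $s=_\cJ\hat\tau(s)$ for any renaming $\tau$), at the cost of an extra layer. One micro-step is glossed over: from $s=_\cJ\hat\tau(s\cdot t)$ you cannot literally ``apply $\hat\tau^{-1}$'' to land on $s=_\cJ s\cdot t$ without also invoking $\hat\tau^{-1}(s)=_\cJ s$; but that is immediate from the same cycle (apply $\hat\tau^{-1}$ to $s=_\cJ\hat\tau(s)$), so nothing essential is missing.
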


\begin{proof}
Suppose that $s =_{\orbit\cJ} s\cdot t$ (symmetric arguments can be used when
$t =_{\orbit\cJ} s\cdot t$ and with $\cL$ in place of $\cR$). 
By definition of $=_{\orbit\cJ}$, we know that there is a renaming $\tau$
such that $s \in M\cdot \tau(s\cdot t)\cdot M$, and hence there exist 
some elements $u,v$ of $\cM$ a such that 
$$
  s ~=~ \tau(u\cdot s\cdot t\cdot v) 
    ~=~ \tau(u)\cdot\tau(s)\cdot\tau(t\cdot v) \ .
$$
By repeatedly applying the mapping $\tau$ and substituting $s$ with
$\tau(u)\cdot\tau(s)\cdot\tau(t\cdot v)$, we obtain
$$
  s ~=~ \underbrace{\tau(u) \cdot \ldots \cdot \tau^n(u)}_{u_n} \:\cdot\:
        \underbrace{\tau^n(s)}_{s_n} \:\cdot\:
        \underbrace{\tau^n(t\cdot v) \cdot \ldots \cdot \tau(t\cdot v)}_{z_n} \ .
$$
Since $\tau$ is a permutation on $D$ that is the identity 
on all but finite many data values, we have that $\tau^{n_0}$ 
is the identity for some $n_0\ge1$. In particular, for all 
multiples $m\cdot n_0$ of $n_0$, we have 
$$
  u_{m\cdot n_0} ~=~ u_{n_0}^m
  \qquad\qquad\quad
  s_{m\cdot n_0} ~=~ s
  \qquad\qquad\quad
  z_{m\cdot n_0} ~=~ z_{n_0}^m \ .
$$
Moreover, since $\cM$ is locally finite, we can fix $m\ge 1$ 
large enough in such a way that $z_{n_0}^m$ is an idempotent. 
We thus obtain 
$$
  s ~=~ u_{n_0}^m \cdot s \cdot z_{n_0}^m
    ~=~ u_{n_0}^m \cdot s \cdot z_{n_0}^m \cdot z_{n_0}^m
    ~=~ s \cdot z_{n_0}^m
$$
whence 
$$
  s ~=~ s \cdot \tau^{n_0}(t\cdot v) \cdot \big(\tau^{n_0}(t\cdot v)\big)^{m-1}
    ~=~ s \cdot t \cdot v \cdot \big(\tau^{n_0}(t\cdot v)\big)^{m-1} \ .
$$
We have just shown that $s\cdot t \ge_\cR s$. As the converse relation $s \ge_\cR s\cdot t$
holds trivially, we conclude that $s =_\cR s\cdot t$.
\end{proof}

From the above lemma we easily obtain the following result:

\begin{lemma}\label{lemma:witness-not-J-above}
Let $O$ be a $=_{\orbit\cJ}$-class of $\cM$ and let $[x,y]$ be a {\sl minimal} 
interval such that~$h(w[x,y]) \not\geq_{\orbit\cJ} O$. We have that
\begin{enumerate}
  \item either $x=y$,
  \item or $x+1=y$,
  \item or $[x+1,y-1]$ is an interval such that $h(w[x+1,y-1]) >_{\orbit\cJ} O$.
\end{enumerate}
In particular, in the third case, there exists a $=_{\orbit\cJ}$-class $P$ that 
is strictly above $O$ (i.e., $P >_{\orbit\cJ} O$) and such that $[x+1,y-1]$ is a 
{\sl maximal} interval satisfying $h(w[x+1,y-1]) \in P$.
\end{lemma}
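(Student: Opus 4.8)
The plan is to first dispose of the trivial alternatives $x=y$ and $x+1=y$, and then, assuming $y-x\ge 2$, to analyse the three-fold factorisation of $w[x,y]$ into its first letter, its interior $w[x+1,y-1]$, and its last letter. Writing $s = h(w[x,x])$, $t = h(w[x+1,y-1])$ and $u = h(w[y,y])$, I have $h(w[x,y-1]) = s\cdot t$, $h(w[x+1,y]) = t\cdot u$ and $h(w[x,y]) = s\cdot t\cdot u$. I will use repeatedly the basic monotonicity of Green's $\cJ$-preorder (a product lies $\le_\cJ$, hence $\le_{\orbit\cJ}$, below each of its factors): extending a factor can only move its image down in the $\le_{\orbit\cJ}$ order. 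Since $[x,y]$ is minimal with $h(w[x,y])\not\ge_{\orbit\cJ}O$, every proper subinterval — in particular $[x,y-1]$, $[x+1,y]$ and $[x+1,y-1]$ — has image $\ge_{\orbit\cJ}O$.

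The crux is the claim that $s\cdot t <_{\orbit\cJ} t$ and $t\cdot u <_{\orbit\cJ} t$, i.e. that each one-sided extension of the interior interval strictly lowers the $=_{\orbit\cJ}$-class. To prove $s\cdot t <_{\orbit\cJ} t$ I argue by contradiction: since $s\cdot t \le_{\orbit\cJ} t$ always holds, failing strictness means $t =_{\orbit\cJ} s\cdot t$, and then Lemma \ref{lemma:decomposition-stairs} yields $t =_\cL s\cdot t$. Because $=_\cL$ is a congruence for products on the right, multiplying by $u$ gives $s\cdot t\cdot u =_\cL t\cdot u$, whence $h(w[x,y]) =_{\orbit\cJ} t\cdot u$. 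But $t\cdot u = h(w[x+1,y])\ge_{\orbit\cJ}O$ by minimality, so $h(w[x,y])\ge_{\orbit\cJ}O$, contradicting the hypothesis on $[x,y]$. The inequality $t\cdot u <_{\orbit\cJ} t$ follows by the symmetric argument, using the other half of Lemma \ref{lemma:decomposition-stairs} (namely $t =_{\orbit\cJ} t\cdot u$ implies $t =_\cR t\cdot u$) together with the fact that $=_\cR$ is a congruence for products on the left.

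From the claim the conclusion follows quickly. First, $h(w[x+1,y-1]) = t >_{\orbit\cJ} O$: indeed $t\ge_{\orbit\cJ}O$ by minimality, and if we had $t=_{\orbit\cJ}O$ then $s\cdot t<_{\orbit\cJ}t=_{\orbit\cJ}O$ would give $h(w[x,y-1])=s\cdot t\not\ge_{\orbit\cJ}O$, contradicting minimality. This settles the third alternative of the trichotomy. For the ``in particular'' part I take $P$ to be the $=_{\orbit\cJ}$-class of $t$, which is $>_{\orbit\cJ}O$. Maximality of $[x+1,y-1]$ among intervals with image in $P$ reduces, via monotonicity, to the two one-sided extensions: any interval strictly containing $[x+1,y-1]$ contains $[x,y-1]$ or $[x+1,y]$, and by the claim its image is $\le_{\orbit\cJ} s\cdot t$ or $\le_{\orbit\cJ} t\cdot u$, in either case strictly below $t$ and so not in $P$.

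I expect the main obstacle to be getting the claim exactly right — in particular, recognising that the sandwiching $O\le_{\orbit\cJ}s\cdot t\le_{\orbit\cJ}t$ together with the refinement of $=_{\orbit\cJ}$ to $=_\cL$ (resp. $=_\cR$) provided by Lemma \ref{lemma:decomposition-stairs} is precisely what is needed to propagate the equality through a right (resp. left) multiplication and thereby contradict $h(w[x,y])\not\ge_{\orbit\cJ}O$. Once the claim is in place, both the trichotomy and the maximality statement are routine consequences of the monotonicity of $\le_{\orbit\cJ}$.
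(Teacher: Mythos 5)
Your proof is correct and follows essentially the same route as the paper's: the same decomposition $s=h(w(x))$, $t=h(w[x+1,y-1])$, $u=h(w(y))$, the same key claim $t>_{\orbit\cJ}s\cdot t$ and $t>_{\orbit\cJ}t\cdot u$ proved by contradiction via Lemma~\ref{lemma:decomposition-stairs} and the one-sided congruence of $=_\cL$ (resp.\ $=_\cR$). Your explicit derivations of $t>_{\orbit\cJ}O$ and of maximality are just slightly more spelled-out versions of what the paper leaves implicit.
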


\begin{proof}
Let $[x,y]$ be a minimal interval such that~$h(w[x,y]) \not\geq_{\orbit\cJ} O$ 
and suppose that neither the first case nor the second case holds, namely, 
$x<x+1\le y-1<y$.
For the sake of brevity, let $s \,=\, h(w(x))$, $t \,=\, h(w[x+1,y-1])$, 
and $u \,=\, h(w(y))$.
We begin by noting that the minimality of $[x,y]$ implies 
$s\cdot t \ge_{\orbit\cJ} O$ and $t\cdot u \ge_{\orbit\cJ} O$.

Below, we aim at proving that $t >_{\orbit\cJ} s\cdot t$ and 
$t >_{\orbit\cJ} t\cdot u$, as this would imply that $t >_{\orbit\cJ} O$ 
and that $[x+1,y-1]$ is a maximal interval such that $h(w[x+1,y-1])$ 
belongs to the $=_{\orbit\cJ}$-class of $t$.
Suppose, by contradiction, that $t \not>_{\orbit\cJ} s\cdot t$. 
Since $t \ge_\cJ s\cdot t$,
we derive $t =_\cJ s\cdot t$. By applying Lemma \ref{lemma:decomposition-stairs}
we obtain $t =_\cL s\cdot t$. Moreover, since $\cL$ is a congruence with respect 
to products on the right, we derive $t\cdot u =_\cL s\cdot t\cdot u$. Finally, 
since $=_\cL$ refines $=_\cJ$, we obtain $t\cdot u =_\cJ s\cdot t\cdot u$, which 
contradicts the minimality of the interval $[x,y]$. We must conclude that 
$t >_{\orbit\cJ} s\cdot t$ and, by symmetric arguments, $t >_{\orbit\cJ} t\cdot u$.
\end{proof}

From now on, we assume that $O$ is a $=_{\orbit\cJ}$-class of $\cM$ and that both
claims \refcond{claim1} and \refcond{claim2} of Lemma \ref{lemma:inductive-statement} 
hold for all $=_{\orbit\cJ}$-classes $P$ that are strictly above~$O$ (we refer to this 
assumption as our inductive hypothesis). 

\begin{lemma}\label{lemma:not-J-above}
There exists a formula $\alpha^\min_{\not\geq O}(x,y)$ such that 
$w\models\alpha^\min_{\not\geq O}(x,y)$ iff $[x,y]$ is a minimal 
interval such that~$h(w[x,y]) \not\ge_{\orbit\cJ} O$.
Furthermore, the formula $\alpha^\min_{\not\geq O}(x,y)$ 
is rigid and one can compute the types under it.
\end{lemma}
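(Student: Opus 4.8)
The plan is to build $\alpha^\min_{\not\ge O}(x,y)$ as a finite disjunction covering the three shapes of a minimal interval identified in Lemma~\ref{lemma:witness-not-J-above}, and to extract the required family of witnessing formulas as a byproduct of the very same construction. For the two short shapes I would proceed directly. A singleton interval $[x,x]$ is automatically minimal, so the case $x=y$ is captured by keeping, among the witnessing formulas for single letters provided by Lemma~\ref{lemma:base-case} (the types under $\alpha_1$), exactly those whose orbit $o$ satisfies $o\not\ge_{\orbit\cJ}O$ --- a choice that is decidable in the finite presentation of $\cM$. For $x+1=y$ I would use the types under $\alpha_2$ from Corollary~\ref{cor:fixed-length-infixes} to read off the orbit of $h(w[x,y])$ and keep those that are $\not\ge_{\orbit\cJ}O$, while additionally requiring, through the types under $\alpha_1$, that both letters $h(w(x))$ and $h(w(y))$ have orbits $\ge_{\orbit\cJ}O$; these last two conjuncts are precisely the minimality condition for length-two intervals.

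The core of the argument is the case $y\ge x+2$, where Lemma~\ref{lemma:witness-not-J-above} guarantees that $[x+1,y-1]$ is a maximal interval whose orbit lies in some $=_{\orbit\cJ}$-class $P>_{\orbit\cJ}O$. For each such $P$ I would first form the guard $\alpha^\max_P(x',y') \eqdef \varphi_P(x',y') \et (\neg\exists z.~ z<x' \et \varphi_P(z,y')) \et (\neg\exists z.~ z>y' \et \varphi_P(x',z))$, using the formula $\varphi_P$ supplied by the inductive hypothesis (claim \refcond{claim1} for $P$). A short sandwiching argument --- if an infix of orbit in $P$ is enlarged while staying a sub-infix of another infix of orbit in $P$, its class is squeezed between two copies of $P$ and hence stays in $P$ --- shows that $\alpha^\max_P$ is rigid and that it isolates exactly the maximal $P$-intervals. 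Since $\alpha^\max_P$ entails $h(w[x',y'])\ge_{\orbit\cJ}P$, the inductive hypothesis (claim \refcond{claim2} for $P$) lets me compute the types under $\alpha^\max_P$.

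Composing these, by Lemma~\ref{lemma:products}, with the types under $\alpha_1$ on the left and on the right, I obtain families computing the types under the guards $\alpha_1\cdot\alpha^\max_P$, $\alpha^\max_P\cdot\alpha_1$ and $\alpha_1\cdot\alpha^\max_P\cdot\alpha_1$, describing respectively the infixes $[x,y-1]$, $[x+1,y]$ and $[x,y]$ in terms of the maximal middle block. The case-three disjunct for $P$ then retains exactly those composite orbits for which $h(w[x,y])\not\ge_{\orbit\cJ}O$ while $h(w[x,y-1])\ge_{\orbit\cJ}O$ and $h(w[x+1,y])\ge_{\orbit\cJ}O$ --- once more a finite, decidable selection of orbits in $\cM$. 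I would then assemble $\alpha^\min_{\not\ge O}$ as the disjunction over the three cases and over $P$, and read off the family of inward-rigid witnessing formulas produced along the way, which is exactly what Definition~\ref{def:computing-types} requires in order to conclude that one can compute the types under the guard.

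Correctness splits into two directions: exhaustiveness of the three shapes is Lemma~\ref{lemma:witness-not-J-above}, while the converse --- that every interval matched by a disjunct is genuinely minimal --- follows because, since enlarging an infix can only decrease its $=_{\orbit\cJ}$-class in the $\le_{\orbit\cJ}$ order, every proper subinterval of $[x,y]$ is contained in $[x,y-1]$, in $[x+1,y]$, or in $[x+1,y-1]$, and is therefore $\ge_{\orbit\cJ}O$ once the three selected conditions hold. Rigidity of $\alpha^\min_{\not\ge O}$ itself is then immediate: two minimal $\not\ge_{\orbit\cJ}O$ intervals sharing an endpoint would be nested, contradicting minimality of the larger one. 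The main obstacle I anticipate is the bookkeeping of the third case: verifying that the neighbouring infixes $[x,y-1]$ and $[x+1,y]$ remain $\ge_{\orbit\cJ}O$ is indispensable for minimality, and it is exactly here that the inductive hypothesis for the strictly higher class $P$ must be invoked, via Lemma~\ref{lemma:products}, to compute the relevant products of the central block with the flanking letters.
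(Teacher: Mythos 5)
Your proof is correct and follows essentially the same route as the paper's: the three-case split from Lemma~\ref{lemma:witness-not-J-above}, the rigid guard $\alpha^\max_P$ obtained from the inductive hypothesis for each class $P >_{\orbit\cJ} O$, and composition with the length-one guard via Lemma~\ref{lemma:products}. You are in fact slightly more explicit than the paper, which writes the disjunction without the conjuncts forcing $[x,y-1]$ and $[x+1,y]$ (and the individual letters in the length-two case) to remain $\ge_{\orbit\cJ} O$; these minimality checks are indeed needed for soundness and for rigidity of the resulting formula, exactly as you note.
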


\begin{proof}
Lemma~\ref{lemma:witness-not-J-above} describes three types of intervals
$[x,y]$ such that $h(w[x,y]) \not\ge_{\orbit\cJ} O$. 
For the first type of intervals, one simply lets $\alpha_1(x,y) \:\eqdef\: (x=y)$ 
and accordingly constructs the family $F_1$ that computes the types under $\alpha_1$
using Lemma \ref{lemma:base-case}. Similarly, for the second type of intervals, 
one lets $\alpha_2(x,y) \:\eqdef\: (x+1=y)$ and uses Corollary \ref{cor:fixed-length-infixes} 
to construct a family $F_2$ computing the types under $\alpha_2$.

We now focus on the most interesting type of intervals, which are of the form 
$[x,y]$, where $[x+1,y-1]$ is maximal such that $h(w[x+1,y-1]) \in P$ and $P$
is a specific $=_{\orbit\cJ}$-class strictly above $O$.
Let~$\alpha^\max_P(x,y)$ be a formula stating that $x\le y$ and $[x,y]$ is a 
maximal interval such that $h(w[x,y]) \in P$ (this formula exists thanks to 
the inductive hypothesis \refcond{claim1}). Note that the formula $\alpha^\max_P(x,y)$ 
is rigid by construction. Hence, by using this time the inductive hypothesis 
\refcond{claim2}, one can construct a family $F^\max_P$ that computes the types 
under the guard $\alpha^\max_P(x,y)$.

The desired formula~$\alpha^\min_{\not\geq O}(x,y)$ can be defined as follows:
\begin{align*} 
  \alpha^\min_{\not\geq O}(x,y) 
  ~~\eqdef&~~ \alpha_1(x,y) ~\et~ F_1(x,y) \,\not\geq_{\orbit\cJ} O \\[1.5ex]
    \vel~~&~~ \alpha_2(x,y) ~\et~ F_2(x,y) \,\not\geq_{\orbit\cJ} O \\[1ex]
    \vel~~&~~ \bigvee_{\!\!\!P \,>_{\orbit\cJ} O} 
              \Big(\, (\alpha_1\cdot\alpha^\max_P\cdot\alpha_1)(x,y) ~\et~ 
                      (F_1\cdot F^\max_P\cdot F_1)(x,y) \,\not\geq_{\orbit\cJ} O \,\Big)
\end{align*}
where the families $F_1,F_2,F^\max_P$ are used as if they were functions computing 
types and the operations of compositions are those outlined in Lemma \ref{lemma:products} 
(this shorthand of notation should be clear to understand and can be transformed 
into standard formulas by unfolding the finitely many cases). We also observe that,
since $\cM$ is orbit-finite, the disjunction over all $=_{\orbit\cJ}$-classed $P$
strictly above $O$ is finite. 

It is easy to see that the above formula $\alpha^\min_{\not\geq O}(x,y)$ correctly 
defines the minimal intervals $[x,y]$ such that $h(w[x,y]) \not\geq_{\orbit\cJ} O$.
Finally, the formula is rigid by construction and a family computing the types 
under this guard can be easily obtained using the same kind of constructions.
\end{proof}

We are now ready to prove the induction steps for claims \refcond{claim1} and 
\refcond{claim2} of Lemma \ref{lemma:inductive-statement} with respect to the 
${\orbit\cJ}$-class $O$. We remark that only the proof of Claim \refcond{claim2} 
relies on the assumption that the monoid $\cM$ is aperiodic, as well as on 
the properties of memorable values that we outlined in Section \ref{subsec:green-theory}.

\begin{lemma}[Induction step for \refcond{claim1}]\label{lemma:cond1-fo}
There exists a formula~$\varphi_O(x,y)$ such that
$w \models \varphi_O(x,y)$ iff~$h(w[x,y]) \in O$.
\end{lemma}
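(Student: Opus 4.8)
The plan is to express membership of $h(w[x,y])$ in the class $O$ as the conjunction of two order conditions: that $h(w[x,y]) \ge_{\orbit\cJ} O$ and that $h(w[x,y]) \not>_{\orbit\cJ} O$. Since $\le_{\orbit\cJ}$ induces a partial order on the $=_{\orbit\cJ}$-classes, these two conditions together pin down exactly the class $O$ (and, importantly, they also correctly exclude infixes whose image is $\le_{\orbit\cJ}$-incomparable to $O$, as such infixes already fail the first condition). The second condition is immediate from the inductive hypothesis: being strictly above $O$ means lying in some class $P >_{\orbit\cJ} O$, and by claim \refcond{claim1} applied to each such $P$ (there are finitely many, since $\cM$ is orbit-finite) we already have a formula $\varphi_P(x,y)$ detecting $h(w[x,y]) \in P$. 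Hence $h(w[x,y]) \not>_{\orbit\cJ} O$ is captured by $\bigwedge_{P \,>_{\orbit\cJ}\, O} \neg\varphi_P(x,y)$.

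For the first condition I would first record the monotonicity fact that every subinterval $[x',y'] \subseteq [x,y]$ satisfies $h(w[x,y]) \le_{\orbit\cJ} h(w[x',y'])$; this is immediate since $h(w[x,y])$ factors as $h(w[x,x'-1]) \cdot h(w[x',y']) \cdot h(w[y'+1,y])$, so longer infixes sit $\cJ$-below their factors. From this, $h(w[x,y]) \ge_{\orbit\cJ} O$ holds iff no subinterval $[x',y']$ of $[x,y]$ satisfies $h(w[x',y']) \not\ge_{\orbit\cJ} O$: if the whole interval is $\ge_{\orbit\cJ} O$, then so is every subinterval by monotonicity and transitivity, while conversely $[x,y]$ is a subinterval of itself. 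Moreover, by shrinking any offending subinterval to a containment-minimal one, such a subinterval exists iff a minimal one exists, and the latter is exactly what the rigid formula $\alpha^\min_{\not\geq O}$ from Lemma~\ref{lemma:not-J-above} detects. This lets me set
$$
  \varphi_O(x,y) ~\eqdef~
  \Big(\neg\,\exists x',y'~ x\le x' \et x'\le y' \et y'\le y \et \alpha^\min_{\not\geq O}(x',y')\Big)
  ~\et~ \bigwedge_{P \,>_{\orbit\cJ}\, O} \neg\,\varphi_P(x,y) \ .
$$

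The formula so defined stays inside the logic: the guarded data tests occur only within the already-constructed formulas $\alpha^\min_{\not\geq O}$ and $\varphi_P$, while around them I use only Boolean connectives, the order, and first-order quantification, introducing no new second-order quantifiers, so the construction remains in rigidly guarded \FO as required in the aperiodic case. I expect the point requiring most care to be the monotonicity argument together with the passage from ``some offending subinterval'' to ``some \emph{minimal} offending subinterval'', and in particular checking that I have the direction of $\le_{\orbit\cJ}$ right. I also note that, unlike the induction step for claim \refcond{claim2}, this step does not use aperiodicity at all: it only manipulates the $\le_{\orbit\cJ}$-order and the formulas supplied by the inductive hypothesis and by Lemma~\ref{lemma:not-J-above}.
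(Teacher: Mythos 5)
Your proof is correct and follows essentially the same route as the paper: rule out any (minimal) subinterval satisfying $\alpha^\min_{\not\geq O}$ to certify $h(w[x,y]) \ge_{\orbit\cJ} O$, then conjoin $\neg\varphi_P$ over the finitely many classes $P >_{\orbit\cJ} O$ using the inductive hypothesis. The monotonicity and minimal-subinterval details you spell out are exactly the justification the paper leaves implicit.
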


\begin{proof}
One first disproves the existence of an interval $[x',y']$ included 
in~$[x,y]$ and satisfying $\alpha^\min_{\not\geq O}(x',y')$. 
This property implies $h(w[x,y]) \ge_{\orbit\cJ} O$ and can be easily 
defined by a formula obtained from Lemma~\ref{lemma:not-J-above}.
One then excludes the case $h(w[x,y]) >_{\orbit\cJ} O$ by verifying
the conjunction of the properties $h(w[x,y]) \nin P$ over all $=_{\orbit\cJ}$-classes 
$P$ strictly above $O$. The latter properties can be defined thanks to the
inductive hypothesis~\refcond{claim1}.
\end{proof}

\begin{lemma}[Induction step for \refcond{claim2}]\label{lemma:cond2-fo}
For all rigid guards $\alpha(x,y)$ such that~$w\models\alpha(x,y)$ 
implies $h(w[x,y]) \ge_{\orbit\cJ} O$, one can compute the types under~$\alpha$.
\end{lemma}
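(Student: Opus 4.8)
The plan is to compute the types under $\alpha$ by splitting, with the help of the formula $\varphi_O$ from Lemma~\ref{lemma:cond1-fo} (the induction step for \refcond{claim1}), according to whether $h(w[x,y])$ lies strictly above $O$ or inside $O$ itself. For the first case I would, for each $=_{\orbit\cJ}$-class $P >_{\orbit\cJ} O$, consider the rigid guard $\alpha(x,y) \et \varphi_P(x,y)$, where $\varphi_P$ comes from the inductive hypothesis~\refcond{claim1}. Under this guard $h(w[x,y]) \in P$, hence $h(w[x,y]) \ge_{\orbit\cJ} P$, so the inductive hypothesis~\refcond{claim2} for $P$ yields a family computing the types. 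As $\cM$ is orbit-finite there are finitely many such $P$, and the union of these families (together with the one produced below) gives the family for $\alpha$. So the real work is confined to the guard $\alpha^O(x,y) \eqdef \alpha(x,y) \et \varphi_O(x,y)$, under which $h(w[x,y]) \in O$; this is the only place where aperiodicity is used.

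For the main case the idea is to read $w[x,y]$ from the left and locate the first position $z_L$ at which the prefix enters $O$, i.e.\ $h(w[x,z_L]) \in O$ while $h(w[x,z']) >_{\orbit\cJ} O$ for $x\le z'<z_L$. Since prefixes are $\le_\cJ$-decreasing and stay $\ge_{\orbit\cJ} O$, this position exists, is determined by $x$, and is definable using $\varphi_O$. The key algebraic ingredient is the stair Lemma~\ref{lemma:decomposition-stairs}: for every $z_L\le z'<y$ both $h(w[x,z'])$ and $h(w[x,z'+1])$ lie in $O$, so $h(w[x,z']) =_{\orbit\cJ} h(w[x,z'+1])$ and hence $h(w[x,z']) =_\cR h(w[x,z'+1])$; chaining these equalities gives $h(w[x,y]) =_\cR h(w[x,z_L])$ and therefore $\memr(h(w[x,y])) = \memr(h(w[x,z_L]))$. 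A symmetric right-to-left analysis produces a position $z_R$ with $h(w[x,y]) =_\cL h(w[z_R,y])$ and $\meml(h(w[x,y])) = \meml(h(w[z_R,y]))$. The type of the boundary prefix $w[x,z_L]$ is itself computable: either $z_L=x$, so $w[x,z_L]$ is a single letter handled by Lemma~\ref{lemma:base-case}, or $w[x,z_L]=w[x,z_L-1]\cdot w(z_L)$ with $h(w[x,z_L-1]) >_{\orbit\cJ} O$, in which case its type is obtained by combining, through the product of Lemma~\ref{lemma:products}, a family computing the type of the strictly-above prefix (from the inductive hypothesis, guarded as in the first paragraph) with the single-letter family. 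The boundary suffix $w[z_R,y]$ is treated the same way.

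It then remains to combine the two boundary analyses. By Proposition~\ref{prop:memorable-values} we have $\mem(h(w[x,y])) = \memr(h(w[x,y])) \cup \meml(h(w[x,y]))$; by the previous paragraph the $\cR$-memorable values form a subset, determined algebraically from the finite presentation, of the memorable values of $h(w[x,z_L])$, whose positions have already been located inside $[x,z_L]\subseteq[x,y]$, and symmetrically the $\cL$-memorable values are located inside $[z_R,y]$. This pins down all memorable values of $h(w[x,y])$. To determine the orbit itself I would use that, $\cM$ being aperiodic, every $=_\cH$-class is a singleton, so inside its $=_\cJ$-class the element $h(w[x,y])$ is uniquely determined by its $=_{\orbit\cR}$-class (read off from the type of $h(w[x,z_L])$), its $=_{\orbit\cL}$-class (read off from $h(w[z_R,y])$), and the pattern of equalities and disequalities among the located memorable values. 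Each such comparison is carried out by a rigidly guarded data test $z\sim z'$ or $z\nsim z'$, whose guard is the rigid formula that pins both positions to the anchors $x,y$; to make these guards legal I would first turn the inward-rigid witnessing formulas for the boundary factors into finite disjunctions of rigid formulas via Corollary~\ref{cor:rigidification}. Assembling the pieces yields, for each orbit $o$ of $O$, an inward-rigid formula $\varphi_o(x,\bar z,y)$ witnessing $o$, and by construction $\alpha^O$ is equivalent to $\bigvee_{o} \exists\bar z~\varphi_o$, as required by Definition~\ref{def:computing-types}.

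The hard part will be the combination step: justifying that the $=_{\orbit\cR}$-class, the $=_{\orbit\cL}$-class, and the equality pattern among memorable values genuinely determine the orbit of $h(w[x,y])$. This is precisely where aperiodicity is indispensable, since without triviality of the $\cH$-classes a residual group part would remain that these data cannot resolve (that part will be recovered separately in the \MSO case by guessing the missing information with set quantifiers). A secondary but pervasive difficulty is the rigidity bookkeeping: every guard used to locate a memorable position, and every guard of a data test comparing two such positions, must be kept rigid, which is what forces the systematic use of Corollary~\ref{cor:rigidification} and of the anchors $x,y$ fixed by $\alpha$.
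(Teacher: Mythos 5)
Your proposal is correct and follows essentially the same route as the paper's proof: reduce to the case $h(w[x,y])\in O$ by splitting on the classes $P >_{\orbit\cJ} O$ via $\varphi_P$, compute the type of a short boundary prefix (strictly-above part composed with a small part, using the inductive hypothesis together with Lemmas~\ref{lemma:base-case} and~\ref{lemma:products}), invoke Lemma~\ref{lemma:decomposition-stairs} to get $=_\cR$- and $=_\cL$-equivalence of these boundary pieces with $h(w[x,y])$, and then combine via triviality of the $=_\cH$-classes and Proposition~\ref{prop:memorable-values}, with Corollary~\ref{cor:rigidification} doing the rigidity bookkeeping. The only (immaterial) difference is your choice of boundary prefix $w[x,z_L]$, with $z_L$ the first position at which the prefix enters $O$, where the paper instead uses $w[x,y_1]$ with $[x_1,y_1]$ the leftmost minimal interval whose image lies in $O$.
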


\begin{proof}
Thanks to the inductive hypothesis, for each of the finitely many $=_{\orbit\cJ}$-classes
$P$ that are strictly above $O$, one can construct a formula $\varphi_P(x,y)$
that checks whether $h(w[x,y]) \in P$ and in this case compute the types 
under the rigid guard $\alpha(x,y) \et \varphi_P(x,y)$. Thus, to prove the 
lemma, it is sufficient to consider the case where 
$w\models\alpha(x,y)$ implies~$h(w[x,y])\in O$.

We begin by introducing the formula $\alpha^\min_O(x',y')$ that expresses the property 
that $[x',y']$ is a minimal interval satisfying $h(w[x',y']) \in O$ -- such a formula 
exists thanks to Lemma~\ref{lemma:cond1-fo} and, moreover, it is rigid. 
Next, we assume that $\alpha(x,y)$ holds and we consider the intervals $[x',y']$ 
that are included in $[x,y]$ and satisfy $\alpha^\min_O(x',y')$; we call these 
intervals \emph{blocks}. We focus in particular on the block $[x_1,y_1]$ whose 
left endpoint $x_1$ is as close as possible to $x$, as well as on the block 
$[x_2,y_2]$ whose right endpoint $y_2$ is as close as possible to $y$. These 
two special blocks can be defined from $x$ and $y$ by the following formula
$$
\begin{array}{rclccccl}
  \beta(x,x_1,y_1,x_2,y_2,y) 
  &\eqdef& \alpha(x,y) ~\et~ x\le x_1 ~\et~ y_2\le y ~\et~ \\[1ex]
  &      & \alpha^\min_O(x_1,y_1) ~\et~ 
           \forall x' ~ 
           \big(\, \alpha^\min_O(x',\star) &\then& x'<x &\,\vel&\, x_1\le x' &\!\!\!\!\big) 
           \\[1ex]
  &      & \alpha^\min_O(x_2,y_2) ~\et~ 
           \forall y' ~ 
           \big(\, \alpha^\min_O(\star,y') &\then& y<y' &\,\vel&\, y'\le y_2 &\!\!\!\!\big)
\end{array}
$$
(note that the formula implies $h(w[x,y])\in O$ and hence $[x_1,y_1],[x_2,y_2]\subseteq [x,y]$).
It is easy to see that $\beta$ is an inward-rigid formula: indeed, $x$ determines $x_1$,
which determines $y_1$, and $y$ determines $y_2$, which determines $x_2$.
Thus, by Corollary~\ref{cor:rigidification}, the formula is equivalent to 
a finite disjunction of rigid formulas, say~$\beta_1,\ldots,\beta_n$.

We can now describe the steps for computing the types under~$\alpha(x,y)$:
\begin{enumerate}
  \item Guess an index $i\in\{1,\ldots,n\}$ and some positions $x_1$, $y_1$, $x_2$, and $y_2$
        such that $w \models \beta_i(x,x_1,y_1,x_2,y_2,y)$.
  \item Compute the orbits under the rigid guard 
        $\beta_i(x,\star,y_1,\star,\star,\star)$.
        This is doable since 
        (i) $\beta_i(x,\star,y_1,\star,\star,\star)$ entails $\beta(x,\star,y_1,\star,\star,\star)$,
            which in its turn entails
            $\beta(x,x_1-1,\star,\star,\star,\star) \,\cdot\, \alpha^\min_O(x_1,y_1)$,
        (ii) thanks to the fact that $w \models \beta(x,x_1-1,\star,\star,\star,\star)$ 
             implies $h(w[x,x_1-1]) >_{\orbit\cJ} O$, one can exploit the inductive 
             hypothesis \refcond{claim2} to compute the types under the guard 
             $\beta(x,x_1-1,\star,\star,\star,\star)$,
        (iii) by Lemma~\ref{lemma:not-J-above}, 
              one can compute the types under the guard $\alpha^\min_O(x_1,y_1)$, and
        (iv) by Lemma \ref{lemma:products},
             one can compute the types under the guard
             $\beta(x,x_1-1,\star,\star,\star,\star) \,\cdot\, \alpha^\min_O(x_1,y_1)$.
        \par\noindent
        We also claim that the element $h(w[x,y_1])$, which is determined by the guard 
        $\beta_i(x,\star,y_1,\star,\star,\star)$, belongs to the same $=_\cR$-class as
        the element $h(w[x,y])$. 
        Indeed, we have $h(w[x,y]) \,=\, h(w[x,y_1]) \,\cdot\, h(w[y_1+1,y])$.
        Moreover, by construction, both elements $h(w[x,y])$ and $h(w[x,y_1])$ 
        belong to the same $=_{\orbit\cJ}$-class $O$. By Lemma \ref{lemma:decomposition-stairs}
        it follows that $h(w[x,y]) =_\cR h(w[x,y_1])$.
  \item In a similar way, compute the types under the rigid guard
        $\beta_i(\star,\star,\star,x_2,\star,y)$. By symmetric arguments, we know that
        the element $h(w[x_2,y])$ is in the same $=_\cL$-class as the element $h(w[x,y])$.
  \item Compute the orbits under the rigid guard $\alpha(x,y)$, as follows.
        First, recall that $h(w[x,y]) =_\cR h(w[x,y_1])$ and $h(w[x,y]) =_\cL h(w[x_2,y])$.
        Moreover, since $\cM$ is aperiodic, all its $=_\cH$-classes are singletons. 
        In particular, the intersection of the $=_\cR$-class of $h(w[x,y_1])$ and 
        the $=_\cL$-class of $h(w[x_2,y])$ is the singleton that contains precisely 
        the element $h(w[x,y])$.
        It follows that the orbit of $h(w[x,y])$ can be determined from the orbits and
        from the memorable values of the elements $h(w[x,y_1])$ and $h(w[x_2,y])$. This 
        information is available from to the previous constructions. In particular, one 
        can compare the memorable values of $h(w[x,y_1])$ and $h(w[x_2,y])$ using suitable 
        rigid guards, in a way that is similar to the proof of Lemma \ref{lemma:products}
        (note that this requires applying Corollary \ref{cor:rigidification} to the 
        inward-rigid formulas that witness the orbits of $h(w[x,y_1])$ and $h(w[x_2,y])$). 
        It remains to determine from the endpoints $x$ and $y$ some positions 
        that contain the memorable values of $h(w[x,y])$. 
        For this, one recalls that the $\cR$-memorable values of $h(w[x,y])$ 
        are the same as the $\cR$-memorable values of $h(w[x,y_1])$, for which 
        some witnessing positions can be determined thanks to the previous constructions.
        Similarly, one determines some positions for the $\cL$-memorable values of 
        $h(w[x_2,y])$, which are known to be the same as the $\cL$-memorable values 
        of $h(w[x,y])$. Finally, by Proposition~\ref{prop:memorable-values}, one 
        knows that there are no other memorable values for $h(w[x,y])$.
\end{enumerate}
It is routine to translate the above steps into a family of rigidly guarded \FO formulas
that compute the types of $h(w[x,y])$ under the guard $\alpha(x,y)$.
\end{proof}

The above arguments prove Lemma \ref{lemma:inductive-statement} under the assumption that
the orbit-finite data monoid $\cM$ is aperiodic. We conclude this part by proving the claim
of Theorem \ref{thm:monoids-to-logic} that deals with the aperiodic case.

\begin{corollary}\label{cor:monoids-to-logic-fo}
Every data language recognized by a morphism into an orbit-finite aperiodic data monoid
is effectively definable by a rigidly guarded \FO sentence.
\end{corollary}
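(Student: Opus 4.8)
The plan is to reduce the statement to the single-orbit case already settled by Lemma~\ref{lemma:inductive-statement} and then to assemble a finite disjunction. Let $L\subseteq (D\times A)^*$ be recognized by a morphism $h\colon (D\times A)^*\then\cM$ into an aperiodic orbit-finite data monoid $\cM$. Since $L$ is invariant under renamings and $L=h^{-1}(h(L))$, the set $h(L)$ is a union of orbits of $\cM$; as $\cM$ is orbit-finite, there are finitely many orbits $o_1,\ldots,o_r$ with $L=\bigcup_{i=1}^r h^{-1}(o_i)$. It thus suffices to construct, for each orbit $o$, a rigidly guarded \FO sentence $\varphi^{(o)}$ defining the set of \emph{nonempty} words $w$ with $h(w)\in o$, and then take $\bigvee_{i=1}^r\varphi^{(o_i)}$, adding the trivial disjunct $\neg\exists x~\true$ exactly when $\varepsilon\in L$ (equivalently $1_\cM\in h(L)$), so as to account for the empty word as a standard corner case.

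First I would exhibit a class lying $\orbit{\cJ}$-below all images of whole words. For any $s,t\in\cM$ we have $s\cdot t\le_\cJ s$ and $s\cdot t\le_\cJ t$, hence $s\cdot t\le_{\orbit{\cJ}}s$ and $s\cdot t\le_{\orbit{\cJ}}t$, so the $=_{\orbit{\cJ}}$-classes are downward directed; combined with the well-foundedness of $\le_{\orbit{\cJ}}$ recalled in Section~\ref{subsec:green-theory}, this yields a unique minimum class $O_{\min}$, which sits $\orbit{\cJ}$-below every $h(w)$. Now consider the whole-word guard $\alpha_{\mathsf{all}}(x,y)\eqdef(\neg\exists z~z<x)\et(\neg\exists z~z>y)$, which is rigid and trivially satisfies $h(w[x,y])\ge_{\orbit{\cJ}}O_{\min}$. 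Applying claim~\refcond{claim2} of Lemma~\ref{lemma:inductive-statement} with $O=O_{\min}$ — now established for aperiodic $\cM$ in the rigidly guarded \FO fragment through Lemmas~\ref{lemma:base-case}--\ref{lemma:cond2-fo} — produces a family $(\varphi_o)_o$ computing the types under $\alpha_{\mathsf{all}}$. Thus each $\varphi_o(x,\bar z,y)$ is inward-rigid and witnesses $o$, and $\alpha_{\mathsf{all}}(x,y)$ is logically equivalent to $\bigvee_o\exists\bar z~\varphi_o(x,\bar z,y)$.

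Finally I would set $\varphi^{(o)}\eqdef\exists x,\bar z,y~\varphi_o(x,\bar z,y)$ and verify that, on nonempty words, $w\models\varphi^{(o)}$ iff $h(w)\in o$. For the forward direction, any witnessing tuple satisfies $\varphi_o(x,\bar z,y)$, hence $\exists\bar z~\varphi_o$, hence $\bigvee_{o'}\exists\bar z~\varphi_{o'}\equiv\alpha_{\mathsf{all}}(x,y)$, which forces $x$ to be the first and $y$ the last position; since $\varphi_o$ witnesses $o$, we get $h(w)=h(w[x,y])\in o$. Conversely, if $h(w)\in o$ with $w$ nonempty, evaluating $\alpha_{\mathsf{all}}$ at the endpoints shows some $\varphi_{o'}(x,\bar z,y)$ holds there, and witnessing gives $h(w)\in o'$, whence $o'=o$ because distinct orbits are disjoint, so $w\models\varphi^{(o)}$. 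The disjunction then defines $L$, and effectiveness is inherited from the effective constructions in the cited lemmas. The corollary is essentially a bookkeeping step, so the only points that genuinely demand care are (i) producing a $=_{\orbit{\cJ}}$-class below every $h(w)$, so that claim~\refcond{claim2} applies to $\alpha_{\mathsf{all}}$, and (ii) checking that the existential closure of the inward-rigid witness pins down $h(w)$ itself rather than the orbit of some proper infix; the isolated treatment of the empty word is the remaining routine caveat.
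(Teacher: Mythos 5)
Your proposal is correct and follows essentially the same route as the paper: decompose $h(L)$ into finitely many orbits, apply claim~(C2) of Lemma~\ref{lemma:inductive-statement} to the whole-word guard $(\neg\exists z~z<x)\et(\neg\exists z~z>y)$, and take the disjunction of the existentially closed witnessing formulas. Your two extra observations — exhibiting a bottom $=_{\orbit{\cJ}}$-class so that claim~(C2) is actually applicable to that guard, and the empty-word disjunct — are correct refinements of details the paper leaves implicit.
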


\begin{proof}
Since the image of the data language via the recognizing morphism $h$ is a finite union
of orbits, it is sufficient to construct, for each orbit~$o$, a corresponding sentence 
$\varphi_o$ that holds over a data word~$w$ iff~$h(w)\in o$. 
For this, we consider the guard 
$\alpha(x,y) \:\eqdef\: (\neg\exists z ~ z<x) \,\et\, (\neg\exists z ~ z>y)$, which 
holds over $w$ iff~$x$ is the first position and $y$ the last position of $w$.
By claim \refcond{claim2} of Lemma \ref{lemma:inductive-statement}, we can 
construct a family of formulas $\varphi_o(x,\bar{z},y)$ that compute the types 
under~$\alpha$. The language~$h^{-1}(o)$ is thus defined by the sentence 
$\exists x,\bar{z},y ~ \alpha(x,y) \,\et\, \varphi_o(x,\bar{z},y)$.
\end{proof}

\subsection{The translation in the non-aperiodic case}\label{subsec:monoids2logic-mso}

In the previous section we have seen how to establish Theorem \ref{thm:monoids-to-logic}
in the aperiodic case. The remaining claim, stating that every data language recognized by an 
orbit-finite data monoid is definable in rigidly guarded \MSO logic, is proved by following 
the same structure, namely, by relying on the same induction on $=_{\orbit\cJ}$-classes and 
on similar constructions. 

We fix for the rest of this section an orbit-finite data monoid $\cM$
over a set $D$ of data values, and a morphism $h$ from the free data 
monoid $(D\times A)^*$ to $\cM$. We assume that all formulas defined 
hereafter are of rigidly-guarded \MSO.

The goal is to reprove Lemma \ref{lemma:inductive-statement}, but this time without assuming 
that the monoid $\cM$ is aperiodic. We recall that the proof of claim \refcond{claim1} 
(Lemma \ref{lemma:cond1-fo}) does not exploit the assumption of aperiodicity (as far 
as the induction hypothesis is admitted). Hence we can reuse this part of the proof 
for the monoid $\cM$. The only proof that needs to be changed is that of 
Lemma~\ref{lemma:cond2-fo}, and more precisely the constructions described in step 4. 
Below, we focus only on this part of the proof, assuming that $O$ is a $=_{\orbit\cJ}$-class
of $\cM$ and that the inductive hypothesis holds, namely, the claim of 
Lemma \ref{lemma:inductive-statement} for all $=_{\orbit\cJ}$-classes $P$ strictly above $O$.

\medskip
To compute the types under a rigid guard $\alpha(x,y)$, 
we will divide the infix $w[x,y]$
into several blocks. That is, given a data word $w$ and two positions $x\le y$ such that
$w\models\alpha(x,y)$, a formula will first guess a factorization of $w[x,y]$ into 
some infixes $w_1,\ldots,w_n$ which are small enough that they can be handled by the 
inductive hypothesis. Then, the formula will perform sub-computations that determine 
the orbit of each factor, as well as some positions carrying the memorable values in 
it. Finally, it will recursively compute the types of the partial products 
$h(w_1)\cdot\ldots\cdot h(w_i)$, for $i=1,\ldots,n$, eventually determining 
the orbit the entire product $h(w[x,y]) = h(w_1)\cdot\ldots\cdot h(w_n)$.

We begin by describing the factorizations we are mainly interested in (for the 
sake of simplicity, the definitions are given with respect to the whole word $w$, 
as if the rigid guard $\alpha(x,y)$ held over $w$ with $x=1$ and $y=|w|$).

\begin{definition}\label{def:J-factorization}
A \emph{factorization} of a data word $w$ is a sequence $w_1,\ldots,w_n$ 
of non-empty infixes such that $w \:=\: w_1\cdot\ldots\cdot w_n$.
This factorization is called an \emph{$O$-factorization}, for 
some $=_{\orbit\cJ}$-class $O$, if we have:
\begin{itemize}
  \item $h(w_1\cdot\ldots\cdot w_n)\in O$, 
  \item $h(w_i)\in O$, for all $1\le i\le n$.
\end{itemize}
Similarly, the factorization is called an \emph{$O$-prefactorization} if we have:
\begin{itemize}
  \item $h(w_1\cdot\ldots\cdot w_n)\in O$, 
  \item $h(w_i)\in O$ or $h(w_{i+1})\in O$, for all $1\le i\le n-1$,
  \item $h(w_i)\in O$ implies $h(u)\nin O$, for all proper infixes $u$ of $w_i$ 
        and for all $1\le i\le n$.
\end{itemize}
Finally, we call \emph{left endpoint} (resp., \emph{right endpoint}) of $w_i$
the position in $w$ where the factor $w_i$ begins (resp., ends).
\end{definition}

We remark that any factorization $w_1=w[x_1,y_1]$, $\ldots$, $w_n=w[x_n,y_n]$ of $w$ 
can be \emph{represented} in MSO by the pair $(X,Y)$ of monadic predicates that 
contain the left endpoints and the right endpoints of the factors, respectively, 
i.e.~$X=\{x_1,\ldots,x_n\}$ and $Y=\{y_1,\ldots,y_n\}$.

The second definition concerns special forms of factorizations where the endpoints 
of every factor can be determined, one from the other, by means of a rigid formula. 
Assuming that such a factorization exists, one can move from an endpoint to another 
adjacent endpoint, either to the left or to the right, in a deterministic manner.

\begin{definition}\label{def:traversable-factorization}
Let $G=\{\gamma_j(x',y')\}_{j=1,\ldots,k}$ be a finite family of rigid formulas.
We say that a factorization $w_1 \ldots w_n$ of $w$ is \emph{rigidly traversable by $G$} 
if for every $1\le i\le n$, there exists $1\le j\le k$ such that $w \models \gamma_j(x_i,y_i)$, 
where $x_i$ and $y_i$ are, respectively, the left and the right endpoints of the factor $w_i$.
\end{definition}

\begin{lemma}\label{lemma:J-prefactorization}
Let $O$ be a $=_{\orbit\cJ}$-class and let $\alpha(x,y)$ be a rigid 
formula such that $w\models\alpha(x,y)$ implies $h(w[x,y]) \in O$. 
One can construct a formula $\varphi^\fact_\prej(x,y,X,Y)$ and a 
finite family $G=\{\gamma_j(x',y')\}_{j=1,\ldots,k}$ of rigid 
formulas such that:
\begin{enumerate}
  \item if $w \models \alpha(x,y)$, 
        then $w \models \varphi^\fact_\prej(x,y,X,Y)$ for some sets $X,Y\subseteq[x,y]$;
  \item if $w \models \varphi^\fact_\prej(x,y,X,Y)$, then the pair $(X,Y)$ represents an 
        $O$-prefactorization of $w[x,y]$ that is rigidly traversable by $G$;
  \item one can compute the types under each guard $\gamma_j\in G$.
\end{enumerate}
\end{lemma}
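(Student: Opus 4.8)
The plan is to construct the prefactorization greedily from left to right, exactly as in the (commented-out) existence lemma for $O$-prefactorizations, but this time to carry out the construction inside the logic, so that the result is a pair $(X,Y)$ of monadic predicates and a bounded family of rigid guards that can traverse it. First I would reuse Lemma \ref{lemma:cond1-fo}, which holds without the aperiodicity assumption, to obtain the formula $\varphi_O(x',y')$ testing $h(w[x',y'])\in O$. From $\varphi_O$ I can define the rigid formula $\alpha^\min_O(x',y')$ expressing that $[x',y']$ is a minimal interval with $h(w[x',y'])\in O$ (minimality makes it rigid: $x'$ determines $y'$ and vice versa). The greedy construction then proceeds by: locating the least $y_1$ with $h(w[x,y_1])\in O$, then the greatest $x_1\le y_1$ with $h(w[x_1,y_1])\in O$, emitting the gap factor $w[x,x_1-1]$ (if nonempty) and the block $w[x_1,y_1]$, and recursing on the suffix. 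The predicates $X$ and $Y$ collect, respectively, the left and right endpoints of all the emitted factors; the formula $\varphi^\fact_\prej(x,y,X,Y)$ asserts that $(X,Y)$ is exactly the sequence of endpoints produced by this deterministic procedure. Since this procedure is a deterministic left-to-right sweep whose every local step is expressible with $\alpha^\min_O$ and $\varphi_O$ (both available by induction and by Lemma \ref{lemma:cond1-fo}), it is definable in \MSO by the usual technique of quantifying over the cut predicates and writing local consistency conditions; this yields property~(1) and property~(2) together, because the procedure manifestly produces an $O$-prefactorization.

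The more delicate point is the \emph{rigid traversability}. The family $G$ must contain finitely many rigid formulas so that each factor's endpoints $(x_i,y_i)$ are related by some $\gamma_j\in G$, and crucially one must be able to compute the types under each $\gamma_j$. I would split the factors into their two kinds. A block factor $w_i=w[x_i,y_i]$ satisfies $\alpha^\min_O(x_i,y_i)$, which is rigid and under which the types are computable by Lemma \ref{lemma:not-J-above} (since minimal-in-$O$ intervals are among the minimal $\not\geq_{\orbit\cJ}O$ intervals treated there, or directly by the inductive machinery). A gap factor $w_i=w[x_i,y_i]$ lies strictly between two consecutive blocks and satisfies $h(w_i)\not\in O$ together with $h(w_i)\ge_{\orbit\cJ}O$; in fact its endpoints are pinned down rigidly because $y_i+1$ is the left endpoint of the next block (determined by $\alpha^\min_O$) and $x_i-1$ is the right endpoint of the previous block. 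Thus a gap is bracketed by two minimal-in-$O$ intervals, and I would write a rigid guard $\gamma_{\text{gap}}(x',y')$ saying ``$y'+1$ begins a minimal $O$-block and $x'-1$ ends a minimal $O$-block and $h(w[x',y'])\not\in O$''; this is rigid because each endpoint determines the other through the rigidity of $\alpha^\min_O$. Under this guard $h(w[x',y'])\not\in O$ while $h(w[x',y'])\ge_{\orbit\cJ}O$, so $h(w[x',y'])>_{\orbit\cJ}O$, placing it in a $=_{\orbit\cJ}$-class strictly above $O$; hence by the inductive hypothesis \refcond{claim2} one can compute the types under $\gamma_{\text{gap}}$.

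The main obstacle I anticipate is ensuring that \emph{every} factor's endpoint pair really is captured by one of finitely many rigid guards, and in particular handling the boundary factors (the first and the last) and the degenerate cases where a gap is empty or where two blocks are adjacent. I would treat these by adding a few extra rigid guards to $G$ for the boundary situations (e.g.\ a first block whose left endpoint is $x$, determined rigidly since $x$ is fixed by $\alpha$, and similarly a trailing gap or block ending at $y$), checking in each case that rigidity still holds and that the type is computable either via Lemma \ref{lemma:not-J-above} or via the inductive hypothesis on a strictly higher $=_{\orbit\cJ}$-class. Because $\cM$ is orbit-finite there are only finitely many classes $P>_{\orbit\cJ}O$, so refining $\gamma_{\text{gap}}$ into $\gamma_{\text{gap},P}$ (one per class $P$) keeps $G$ finite while guaranteeing computability of types class by class. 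Assembling these guards into $G$ and verifying that $\varphi^\fact_\prej$ forces each factor to satisfy exactly one of them completes property~(3) and the lemma.
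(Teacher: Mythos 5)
Your overall strategy matches the paper's: commit to a canonical $O$-prefactorization determined by $(x,y)$ (you build it greedily; the paper instead takes the least pair $(X,Y)$ satisfying the prefactorization conditions with respect to a fixed MSO-definable order --- either canonical choice works), handle the factors lying in $O$ via the rigid formula $\alpha^{\min}_O$, and handle the remaining factors, which are necessarily in some class $P>_{\orbit\cJ}O$, via the inductive hypothesis (C2). The genuine gap is in your rigidity argument for the guards of the non-block factors. Your guard $\gamma_{\text{gap}}(x',y')$ --- ``$x'-1$ ends a minimal $O$-block, $y'+1$ begins a minimal $O$-block, and $h(w[x',y'])\nin O$'' --- is not rigid: a word may contain many minimal $O$-intervals, and which two of them are \emph{consecutive blocks of the canonical prefactorization} (hence which pairs $(x',y')$ are actual gaps) depends on the global interval $[x,y]$ being factorized. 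Concretely, given $y'$ there may be several $x'$ with $x'-1$ ending some minimal $O$-block and $h(w[x',y'])\nin O$; and in your greedy construction the left endpoint of a gap is the starting point of the current recursion step, which can only be recovered by unwinding the recursion all the way from $x$. So the claim that ``each endpoint determines the other through the rigidity of $\alpha^{\min}_O$'' fails in the right-to-left direction.

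The paper closes exactly this hole with the sub-definability Lemma \ref{lemma:subdefinability}: since $\varphi^\fact_\prej$ is made to entail the rigid guard $\alpha(x,y)$, one obtains finitely many formulas $\beta_j(z,y)$ in which any position $z$ of $[x,y]$ determines the right endpoint $y$; from $y$ one recovers $x$ (rigidity of $\alpha$) and hence the unique canonical pair $(X,Y)$, and only then does one read off the adjacent endpoint in $X\cup Y$ to get the other end of the factor. This detour through the global endpoints is the essential content of ``rigidly traversable,'' and it is the ingredient your proposal is missing; without it property (2) of the lemma is not established. (A secondary inaccuracy: a minimal interval with $h(w[x',y'])\in O$ is \emph{not} among the minimal intervals with $h(w[x',y'])\not\geq_{\orbit\cJ}O$ --- the former is $\geq_{\orbit\cJ}O$ by definition --- although the types under $\alpha^{\min}_O$ are indeed computable by an argument analogous to Lemma \ref{lemma:not-J-above}, as the paper itself asserts.)
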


\begin{proof}
We begin by describing the formula $\varphi^\fact_\prej(x,y,X,Y)$. The main idea is to 
verify that the pair $(X,Y)$ of monadic predicates represents an $O$-prefactorization
$w_1,\ldots,w_n$ of $w[x,y]$, namely, $X$ contains the left endpoints and $Y$ contains the 
right endpoints of the factors $w_1,\ldots,w_n$.
This is easy to do since we can use the formula $\varphi_O(x',y')$ from claim \refcond{claim1} 
of Lemma \ref{lemma:inductive-statement} to verify all the properties that define an 
$O$-prefactorization (recall that the proof of claim \refcond{claim1} that we gave 
in Section \ref{subsec:monoids2logic-fo} holds even when the data monoid $\cM$ is not aperiodic). 
However, this is not yet the definition of $\varphi^\fact_\prej(x,y,X,Y)$.
Because in general there exist many $O$-prefactorizations for the same word $w[x,y]$ 
and because we need to be able to move across the endpoints of the factors in a deterministic 
manner, it is convenient to commit to a single $O$-prefactorization of $w[x,y]$, which 
can be uniquely determined from $x$ and $y$. More precisely, if we denote by 
$\tilde\varphi^\fact_\prej(x,y,X,Y)$ the formula that checks that $(X,Y)$ 
represents an $O$-prefactorization of $w[x,y]$, then we can uniquely determine one such 
pair $(X,Y)$ from $x$ and $y$ by selecting, for instance, the least one according to some 
fixed MSO-definable total order $\le$. This is what the formula $\varphi^\fact_\prej(x,y,X,Y)$ 
does. In addition, the formula verifies that the guard $\alpha(x,y)$ holds between $x$ and $y$
(this will be used later and is not a restriction). Summing up, we let
$$
\begin{array}{rrl}
  \varphi^\fact_\prej(x,y,X,Y) 
  &~\eqdef&~ \alpha(x,y) ~\et~ \tilde\varphi^\fact_\prej(x,y,X,Y) \\[1.5ex]
  &\et    &~  \forall X',Y' ~ \tilde\varphi^\fact_\prej(x,y,X',Y') ~\then~ (X,Y) \le (X',Y') \ .
\end{array}
$$
For short, we call \emph{canonical prefactorization of $w[x,y]$} the factorization that is
represented by the unique pair $(X,Y)$ that satisfies $\varphi^\fact_\prej(x,y,X,Y)$.

We now construct the family $G$ of rigid formulas that can be used to move across the endpoints
of the canonical prefactorization of $w[x,y]$. The problem can be reduced to determining the positions 
$x$ and $y$ from a given endpoint of a factor: once we know $x$ and $y$, we can reconstruct the 
canonical prefactorization of $w[x,y]$ and then determine the other endpoint of the factor.
We use the sub-definability Lemma \ref{lemma:subdefinability} to find, for each endpoint $z$, 
a finite number of possible choices for $x$ and $y$. More precisely, we recall that 
$\varphi^\fact_\prej(x,y,X,Y)$ entails the rigid formula $\alpha(x,y)$ and we apply to
this formula the sub-definability Lemma \ref{lemma:subdefinability}. In this way we obtain 
a finite set of formulas $\beta_1(z,y)$, $\ldots$, $\beta_k(z,y)$ such that
\begin{itemize}
  \item $z$ determines $y$ in $\beta_j(z,y)$, for all $1\le j\le k$, 
  \item if $z$ is an endpoint of the canonical prefactorization of $w[x,y]$, 
        then there exists $1\le j\le k$ such that $w \models \beta_j(z,y)$.
\end{itemize}
We can thus define the family $G$ as the set of formulas 
$$
\begin{array}{rcrl}
  \gamma_j(x',y') &~\eqdef&~ \exists\, x,y,X,Y ~&
                             \beta_j(x',y) ~\et~ \varphi^\fact_\prej(x,y,X,Y) \\[1ex]
                  &       &  \et~& x'\le y' ~\et~ x'\in X ~\et~ y'\in Y ~\et~ \\[1ex]
                  &       &  \et~& \forall z ~ \big( x'<z<y' ~\then~ z\nin X\cup Y \big) \ .
\end{array}
$$
It is easy to see that, in every formula $\gamma_j(x',y')$ of $G$, the variable $x'$
determines $y'$. Indeed, $\beta_j(x',y)$ determines $y$ from $x'$, 
$\varphi^\fact_\prej(x,y,X,Y)$ determines $x,X,Y$ from $y$, and the remaining 
of conjuncts of $\gamma_j(x',y')$ determine $y'$ from $X,Y$. By symmetric arguments,
$y'$ determines $x'$ in $\gamma_j(x',y')$, and hence $G$ contains only rigid formulas.
Moreover, it is clear from the previous constructions that if $w_i=w[x_i,y_i]$ is a
factor of the canonical prefactorization of $w[x,y]$, then $w \models \gamma_j(x',y')$
for some formula $\gamma_j\in G$.

It remains to show how to compute the types under the guards in $G$.
For this, we recall that every formula $\gamma_j(x',y')$ determines 
a factor $w_i$ of a possible $O$-prefactorization. By Definition 
\ref{def:J-factorization} we know that either $w_i$ is a minimal infix 
such that $h(w_i)\in O$ (hence $w \models \alpha^\min_{\not\geq O}(x',y')$), 
or $h(w_i)\in P$ for some $=_{\orbit\cJ}$-class $P$ strictly above $O$. 
Furthermore, we can check with the formula $\varphi_O(x',y')$ which of the 
two cases holds. We can finally exploit Lemma \ref{lemma:not-J-above} and 
our inductive hypothesis (claim \refcond{claim2}) to construct a family 
$F_{\gamma_j}$ of formulas that compute the types under the guard $\gamma_j(x',y')$.
\end{proof}

The analogous lemma for $J$-factorizations is shown below.

\begin{lemma}\label{lemma:J-factorization}
Let $O$ be a $=_{\orbit\cJ}$-class and let $\alpha(x,y)$ be a rigid 
formula such that $w\models\alpha(x,y)$ implies $h(w[x,y]) \in O$. 
One can construct a formula $\varphi^\fact_O(x,y,X',Y')$ and a 
finite family $G'$ of rigid formulas such that:
\begin{enumerate}
  \item if $w \models \alpha(x,y)$, 
        then $w \models \varphi^\fact_O(x,y,X',Y')$ for some sets $X',Y'\subseteq[x,y]$;
  \item if $w \models \varphi^\fact_O(x,y,X',Y')$, then the pair $(X',Y')$ represents a
        $O$-factorization of $w[x,y]$ that is rigidly traversable by $G'$;
  \item one can compute the types under each guard $\gamma'\in G'$.
\end{enumerate}
\end{lemma}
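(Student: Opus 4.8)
The plan is to derive the desired $O$-factorization from the $O$-prefactorization already produced by Lemma~\ref{lemma:J-prefactorization}, by merging every factor that is \emph{not} in $O$ into an adjacent factor that \emph{is} in $O$. The key algebraic observation is that such merges stay inside $O$: if a block consists of consecutive prefactorization factors $w_i,\ldots,w_j$ and at least one of them, say $w_\ell$, satisfies $h(w_\ell)\in O$, then on one hand $h(w_i\cdots w_j)\le_\cJ h(w_\ell)$, hence $h(w_i\cdots w_j)\le_{\orbit\cJ} O$; on the other hand $h(w[x,y])$ factors through $h(w_i\cdots w_j)$, so $h(w[x,y])\le_{\orbit\cJ} h(w_i\cdots w_j)$, and since $h(w[x,y])\in O$ this gives $h(w_i\cdots w_j)\ge_{\orbit\cJ} O$. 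Combining the two inclusions yields $h(w_i\cdots w_j)\in O$. Because an $O$-prefactorization never has two consecutive factors outside $O$ (Definition~\ref{def:J-factorization}), every factor outside $O$ has an adjacent factor inside $O$, so a merge with the required anchor always exists.

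First I would fix a canonical, MSO-definable merging rule, for instance \emph{absorb each factor not in $O$ into the factor in $O$ immediately to its left, and absorb a leading factor not in $O$ into the factor to its right}. Since consecutive non-$O$ factors are forbidden, this rule groups the prefactorization into blocks of at most three consecutive factors (the worst case being a pattern in-$O$/out-of-$O$/in-$O$ that shares a single middle anchor), each block lying in $O$ by the observation above. I would then define $\varphi^\fact_O(x,y,X',Y')$ in complete analogy with $\varphi^\fact_\prej$: run $\varphi^\fact_\prej(x,y,X,Y)$ to obtain the canonical prefactorization, apply the MSO-definable merging to turn the endpoint pair $(X,Y)$ into the coarser pair $(X',Y')$, and conjoin $\alpha(x,y)$. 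Points~1 and~2 of the statement then follow from the corresponding points of Lemma~\ref{lemma:J-prefactorization} together with the merging being total and deterministic.

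For the family $G'$ and the type computation I would reuse the traversal guards $G=\{\gamma_j\}$ of the prefactorization. Each block of the $O$-factorization is a concatenation of at most three prefactorization factors, so its endpoints are joined by a composition $\gamma_j\cdot\gamma_{j'}$ (or $\gamma_j\cdot\gamma_{j'}\cdot\gamma_{j''}$) in the sense of Lemma~\ref{lemma:products}; such a composition of rigid guards is again rigid, and restricting it to genuine blocks of the canonical factorization (via $\varphi^\fact_O$, exactly as $G$ was carved out of $\varphi^\fact_\prej$) produces the finitely many rigid formulas forming $G'$. Point~3 then comes essentially for free: Lemma~\ref{lemma:J-prefactorization}\,(3) lets us compute the types under each $\gamma_j$, and applying Lemma~\ref{lemma:products} at most twice composes these into families computing the types under each guard of $G'$.

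The hard part, I expect, will be the combinatorial and definability bookkeeping of the merging: verifying that the chosen rule is genuinely MSO-definable on the endpoint predicates, that it yields blocks of bounded size that each contain an $O$-anchor (so the algebraic observation applies and every block indeed lands in $O$), and that the edge cases of leading and trailing non-$O$ factors are handled without orphaning any factor outside $O$. Everything downstream---the rigidity of the composed guards and the actual computation of types---is then a routine repetition of the arguments already carried out in Lemmas~\ref{lemma:products} and~\ref{lemma:J-prefactorization}.
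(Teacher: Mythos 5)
Your proposal is correct and follows essentially the same route as the paper: both derive the $O$-factorization by merging the canonical $O$-prefactorization of Lemma~\ref{lemma:J-prefactorization} into bounded-size blocks that each contain an $O$-anchor, observe that such blocks land in $O$ by the two-sided $\le_{\orbit\cJ}$ sandwich, and obtain $G'$ as compositions $\gamma_j\cdot\gamma_{j'}$ of the prefactorization guards, with types computed via Lemma~\ref{lemma:products}. The only (cosmetic) difference is the grouping rule --- the paper simply pairs the factors two by two (treating an odd last factor separately), whereas you absorb each non-$O$ factor into an adjacent $O$-factor --- and you make explicit the $\le_\cJ$ argument that the paper dismisses with ``clearly''.
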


\begin{proof}
Let $\varphi^\fact_\prej$ and $G$ be as in Lemma \ref{lemma:J-prefactorization}
and let $X=\{x_1,\ldots,x_n\}$ and $Y=\{y_1,\ldots,y_n\}$ be some monadic predicates 
satisfying $\varphi^\fact_\prej(x,y,X',Y')$. The pair $(X',Y')$ represents an 
$O$-prefactorization of $w[x,y]$ with factors $w_1=w[x_1,y_1]$, $\ldots$, $w_n=w[x_n,y_n]$.
For the sake of simplicity, we assume that~$n$ is even (otherwise, the last 
factor has to be treated differently). 

We define a new factorization by grouping the factors $w_1,\ldots,w_n$ two by two, 
namely, we let $v_i \eqdef w_{2i-1} \cdot w_{2i}$ for all $1\le i\le \frac{n}{2}$ 
and, accordingly, $X' \eqdef \{x_1,x_3,\ldots,x_{n-1}\}$ and $Y' \eqdef \{y_2,y_4,\ldots,y_n\}$. 
Clearly, the sequence $v_1,\ldots,v_{\frac{n}{2}}$ is a $O$-factorization of $w[x,y]$ 
and the two monadic predicates $X',Y'$ can be defined from $X,Y$ by means of an MSO formula. 
This is exactly how the formula $\varphi^\fact_O(x,y,X',Y')$ defines a $O$-factorization 
of $w[x,y]$.

It remains to show that the defined $O$-factorization is rigidly traversable by a 
family $G'$ and that one can compute the types under each guard $\gamma'\in G'$. 
For this, we recall that the original factorization $w_1,\ldots,w_n$ is rigidly 
traversable by $G$, and we let $G'$ be the set of all formulas 
$\gamma_1\cdot\gamma_2 \,\eqdef\, \exists z' ~ \gamma_1(x',z') ~\et~ \gamma_2(z'+1,y')$, 
with $\gamma_1,\gamma_2\in G$. Clearly, for every factor 
$v_i = w_{2i-1}\cdot w_{2i} = w[x_{2i-1},y_{2i}]$, there 
there are formulas $\gamma_1,\gamma_2 \in G$ such that
$w \sat \gamma_1(x_{2i-1},y_{2i-1})$ and $w \sat \gamma_2(x_{2i},y_{2i})$,
whence $w \sat (\gamma_1\cdot\gamma_2)(x_{2i-1},y_{2i})$.
Finally, by Lemma \ref{lemma:products} we conclude that one can 
compute the types under each guard $\gamma_1\cdot\gamma_2\in G'$.
\end{proof}

\medskip
What remains to be done is to find a suitable mechanism for determining 
the orbits and the memorable values of $h(w[x,y])$ on the basis of the orbits 
and the memorable values of $h(w_1)$, $\ldots$, $h(w_n)$, where $w_1,\ldots,w_n$ 
are some factors of $w[x,y]$ as defined in Lemma \ref{lemma:J-factorization}.
Of course, this cannot be done by comparing all the memorable values of 
$h(w_1),\ldots,h(w_n)$, as there are unboundedly many of them and, furthermore, 
it is impossible to do so using rigidly guarded tests. Below, we show how to 
perform a sort of an approximation of this computation. The rough idea is to 
apply suitable renamings to the factors, so as to decrease the number of 
distinct memorable values. In doing so, however, one has to take into account 
the fact that some memorable values are shared between adjacent factors, and 
hence the renamings must be propagated to them.

\begin{definition}\label{def:local-consistency}
We say that two sequences $s_1,\ldots,s_n$ and 
$t_1,\ldots,t_n$ of elements of $\cM$ are \emph{locally consistent} if 
\begin{itemize}
  \item all elements $s_1,\ldots,s_n$, $t_1,\ldots,t_n$, and the two products 
        $s_1\cdot\ldots\cdot s_n$ and $t_1\cdot\ldots\cdot t_n$ belong to
        the same $=_{\orbit\cJ}$-class $O$,
  \item for every $1\le i\le n-1$, there is a renaming $\pi_i$ such that 
        $\pi_i(s_i) = t_i$ and $\pi_i(s_{i+1}) = t_{i+1}$,
  \item $s_1=t_1$ and $s_n=t_n$. 
\end{itemize}
Similarly, the sequences are \emph{almost locally consistent} if the first two 
conditions hold and, instead of $s_1=t_1$ and $s_n=t_n$, one has $\pi(s_1)=t_1$ 
and $\pi(s_n)=t_n$ for some renaming $\pi$.
\end{definition}

The following lemma shows why we are interested in locally consistent sequences.

\begin{lemma}\label{lemma:local-consistency}
Let $s_1,\ldots,s_n$ and $t_1,\ldots,t_n$ be two sequences of elements of $\cM$.
\begin{enumerate}
  \item If $s_1, \ldots, s_n$ and $t_1, \ldots, t_n$ are locally consistent, 
        then $s_1 \cdot\ldots\cdot s_n = t_1 \cdot\ldots\cdot t_n$.
  \item If $s_1, \ldots, s_n$ and $t_1, \ldots, t_n$ are almost locally consistent,
        then $\tau(s_1 \cdot\ldots\cdot s_n) = t_1 \cdot\ldots\cdot t_n$
        for some renaming~$\tau$.
\end{enumerate}
\end{lemma}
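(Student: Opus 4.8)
The plan is to deduce assertion~(2) from assertion~(1) and to prove~(1) by an induction along the factorisation, after first pinning down the Green structure of the two products. Throughout write $P=s_1\cdot\ldots\cdot s_n$ and $Q=t_1\cdot\ldots\cdot t_n$. To get~(2) from~(1), suppose $s_1,\ldots,s_n$ and $t_1,\ldots,t_n$ are almost locally consistent, with a single renaming $\pi$ such that $\pi(s_1)=t_1$ and $\pi(s_n)=t_n$. Then I would check that the sequences $\pi(s_1),\ldots,\pi(s_n)$ and $t_1,\ldots,t_n$ are \emph{locally} consistent: equivariance of the $=_{\orbit\cJ}$-class $O$ keeps every $\pi(s_i)$ and the product $\pi(P)$ inside $O$; the witnessing renamings for adjacent pairs become $\pi_i\circ\pi^{-1}$; and the endpoint equalities $\pi(s_1)=t_1$, $\pi(s_n)=t_n$ are exactly the third condition. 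Applying~(1) gives $\pi(P)=\pi(s_1)\cdot\ldots\cdot\pi(s_n)=Q$, so $\tau=\pi$ works.

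For~(1), I would first locate all relevant products inside $O$. Since each $s_i\in O$ and $P\in O$, every prefix $P_k=s_1\cdot\ldots\cdot s_k$ satisfies $P\le_\cJ P_k\le_\cJ s_1$, and as $P=_{\orbit\cJ}s_1$ this sandwiches $P_k$ into $O$; symmetrically every suffix $s_k\cdot\ldots\cdot s_n$ lies in $O$. Now Lemma~\ref{lemma:decomposition-stairs} applies to each one-step extension: from $P_{k-1}=_{\orbit\cJ}P_{k-1}\cdot s_k=P_k$ I obtain $P_{k-1}=_\cR P_k$, and from $s_k\cdot\ldots\cdot s_n=_{\orbit\cJ}s_{k-1}\cdot(s_k\cdot\ldots\cdot s_n)$ I obtain $s_k\cdot\ldots\cdot s_n=_\cL s_{k-1}\cdot\ldots\cdot s_n$. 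Telescoping yields $P=_\cR s_1$ and $P=_\cL s_n$, and the same for $Q$ with $t_1=s_1$, $t_n=s_n$; hence $P=_\cH Q$. Since $\memr$ (resp.\ $\meml$) depends only on the $=_\cR$-class (resp.\ $=_\cL$-class), Proposition~\ref{prop:memorable-values} then gives $\mem(P)=\memr(s_1)\cup\meml(s_n)=\mem(Q)$.

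The remaining and decisive task is to produce a renaming $\rho$ with $\rho(P)=Q$ that is the identity on $\mem(P)$: as $\mem(P)$ supports $P$, this forces $P=\rho(P)=Q$. I would construct $\rho$ by induction on the prefixes, maintaining a renaming $\theta_k$ with $\theta_k(P_k)=Q_k$ and $\theta_k(s_{k+1})=t_{k+1}$ (base case $\theta_1=\pi_1$, noting $\pi_1(s_1)=t_1=s_1$). In the inductive step $\theta_k(P_{k+1})=\theta_k(P_k)\cdot\theta_k(s_{k+1})=Q_k\cdot t_{k+1}=Q_{k+1}$ is automatic, after which $\theta_{k+1}$ must be chosen to continue agreeing with $\theta_k$ on $\mem(P_{k+1})=\memr(s_1)\cup\meml(s_{k+1})$ (here using $P_{k+1}=_\cR s_1$ and $P_{k+1}=_\cL s_{k+1}$ from Step~2) while matching $\pi_{k+1}$ on $\mem(s_{k+2})$.

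The main obstacle is precisely this reconciliation. The consecutive renamings $\pi_k$ and $\pi_{k+1}$ both send $s_{k+1}$ to $t_{k+1}$ but may differ, by a symmetry of $s_{k+1}$, on the internal values $\meml(s_{k+1})$, and a value of $\memr(s_1)$ may reappear in a later factor, so $\theta_k$ and $\pi_{k+1}$ cannot in general be glued into one renaming. The resolution, and the place where local consistency is essential, is that by Step~2 the only values of $P_{k+1}$ surviving into $P$ are those of $\memr(s_1)$ and $\meml(s_n)$: the internal discrepancies on $\meml(s_{k+1})$ are erased by the subsequent multiplications, whereas the equality and inequality patterns between adjacent factors --- on which the product alone depends --- are exactly what the $\pi_i$ preserve. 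Concretely, I expect to replace the naive $\theta_{k+1}$ by one agreeing with $\theta_k$ on $P_{k+1}$ only \emph{up to a stabiliser} of $P_{k+1}$, choosing that stabiliser (which is available because $P_{k+1}$ sits in a finite $=_\cH$-class, Corollary~\ref{cor:finite-H-classes}) so as to realign the image of $s_{k+2}$ with $t_{k+2}$. The exact endpoint identities $s_1=t_1$ and $s_n=t_n$ anchor the two extremities and guarantee that the renaming finally obtained fixes $\mem(P)=\memr(s_1)\cup\meml(s_n)$ pointwise; verifying that these realignments can be carried out coherently along the whole factorisation is the delicate computation at the core of the proof.
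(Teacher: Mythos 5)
Your reduction of (2) to (1) and your opening Green-relation analysis are correct and use the same ingredients as the paper: all prefixes and suffixes stay in $O$, Lemma \ref{lemma:decomposition-stairs} gives $P =_\cR s_1$ and $P =_\cL s_n$, and Proposition \ref{prop:memorable-values} yields $\mem(P) = \memr(s_1) \cup \meml(s_n)$. The gap is in the core step, which you yourself flag as unverified, and it is not merely a missing computation: the invariant ``$\theta_k(P_k)=Q_k$ and $\theta_k(s_{k+1})=t_{k+1}$'' cannot in general be maintained, even up to stabilisers of $P_{k+1}$. Take $\cM$ to be the syntactic data monoid of $L_2$ from Example \ref{ex:orbit-finite-language} (non-identity elements are pairs $(d,e)$ with $(d,e)\cdot(f,g)=(d,g)$; they form a single $=_{\orbit\cJ}$-class), and consider the locally consistent sequences $\bar s = (1,2),(2,3),(1,4),(4,5)$ and $\bar t = (1,2),(2,3),(9,4),(4,5)$, witnessed by $\pi_1=\iota$ and $\pi_2=\pi_3$ equal to the transposition of $1$ and $9$. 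Any $\theta_1$ with $\theta_1(s_1)=t_1$ and $\theta_1(s_2)=t_2$ must fix $1,2,3$, so $\theta_1(P_2)=(1,3)=Q_2$ but $\theta_1(s_3)=(1,\theta_1(4))\neq(9,4)=t_3$; and any $\sigma$ stabilising $P_2=(1,3)$ must fix its memorable value $1$, whereas realigning $s_3$ with $t_3$ requires sending $1$ to $9$. Hence no $\theta_2$ with your invariant exists, and the appeal to Corollary \ref{cor:finite-H-classes} does not produce one. What your strategy misses is that local consistency does \emph{not} preserve long-range equalities of values between non-adjacent factors (here the two occurrences of $1$); the lemma is true precisely because such equalities are irrelevant to the product, and a single global renaming carried along the prefixes cannot express that irrelevance.

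The paper's proof attacks exactly this point. It forms the set of value-placeholders of $\bar s$ together with the finest equivalence $E$ linking equal values in \emph{adjacent} factors, shows that locally consistent sequences induce the same structure with border-equal colourings, and then proves that recolouring a single non-border $E$-class leaves the product unchanged: if the class spans factors $i_0,\dots,i_1$, then by Lemma \ref{lemma:decomposition-stairs} and Proposition \ref{prop:memorable-values} neither the old nor the new colour is memorable in $t_{i_0-1}\cdot\ldots\cdot t_{i_1+1}$, so the transposition exchanging them fixes that partial product. One then passes from $\bar s$ to $\bar t$ by finitely many such one-class recolourings, going through fresh intermediate colours. In my example this is what rescues the statement: the value $1$ in $s_3$ lies in a non-border class and is not memorable in $s_2\cdot s_3\cdot s_4=(2,5)$, so recolouring it to $9$ is harmless. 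To repair your argument you would have to replace the global renaming by some such local, class-by-class modification.
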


\begin{proof}
To prove the lemma, we need to introduce further definitions. 
As usual we identify elements of $\cM$ with terms of some suitable term-based presentation system
(the equality relation on the elements of $\cM$ is thought of as the congruence of the term-based 
presentation system). We denote the arity of a term $s=o(d_1,\ldots,d_k)$ by $\arity(s) = k$. 
Let $\bar{s}=s_1,\ldots,s_n$ be a sequence of terms.
The \emph{domain} $V_{\bar{s}}$ is the set of pairs $(i,j)$, where $i\in\{1,\ldots,n\}$ 
identifies a term $s_i$ and $j\in\{1,\ldots,\arity(s_i)\}$ identifies a placeholder of 
a data value of $s_i$. 
We equip the domain $V_{\bar{s}}$ with a relation $E_{\bar{s}}$ that is the finest 
equivalence that groups every two elements $(i,j)$ and $(i+1,j')$ of the domain 
whenever the $j$-th value of $s_i$ and the $j'$-th value of $s_{i+1}$ coincide. 
Of course, all elements of an equivalence class of $E_{\bar{s}}$ 
have the same associated data value. 
A \emph{colouring} of $(V_{\bar{s}},E_{\bar{s}})$ is a labelling of the equivalence 
classes of $E_{\bar{s}}$ by data values. The sequence $\bar{s}$ naturally induces
a colouring of $(V_{\bar{s}},E_{\bar{s}})$ that maps each equivalence class $C$ of 
$E_{\bar{s}}$ to the data value that is associated with the elements of $C$.
An element $(i,j)$ in $V_{\bar{s}}$ is a \emph{border position} if $i=1$ or $i=n$. 
Two colourings are \emph{border-equal} if they agree on all border positions. 
A \emph{border-class} is an equivalence class that contains a border position. 
We prove two claims now.
        
\begin{claim}
If two sequences $\bar{s}$ and $\bar{t}$ are locally consistent, then they define the 
same domain $V_{\bar{s}}=V_{\bar{t}}$ and the same equivalence $E_{\bar{s}}=E_{\bar{t}}$;
furthermore, the corresponding colourings are border-equal.
\end{claim}     

\begin{proof}[Proof of claim]
Let $\bar{s}=s_1,\ldots,s_n$ and $\bar{t}=t_1,\ldots,t_n$ be locally consistent sequences of 
terms. It is obvious that $\bar{s}$ and $\bar{t}$ have the same domain $V_{\bar{s}}=V_{\bar{t}}$,
and it is equally easy to see that their colourings are border-equal. 
Moreover, the fact that two positions $(i,j)$ and $(i+1,j')$ are $E_{\bar{s}}$-equivalent 
only depends on the equalities between the data values in $s_i$ and $s_{i+1}$. 
Those equalities are the same as those for $t_i$ and $t_{i+1}$, because there is 
a renaming $\pi_i$ such that $\pi_i(s_i)=t_i$ and $\pi_i(s_{i+1})=t_{i+1}$.
\end{proof}             

We give the following additional definitions. Two colourings have a \emph{small difference} 
if their domains and equivalences are the same and they disagree on the colour of {\sl at most one}
equivalence class. Two locally consistent sequences have a small difference if their colourings 
have small difference.

\begin{claim}
If two sequences $\bar{s}=s_1,\ldots,s_n$ and $\bar{t}=t_1,\ldots,t_n$ are 
locally consistent and have small difference, then they give the same product, 
i.e.~$s_1\cdot\ldots\cdot s_n = t_1\cdot\ldots\cdot t_n$.
\end{claim}

\begin{proof}[Proof of claim]
Let $\bar{s}=s_1,\ldots,s_n$ and $\bar{t}=t_1,\ldots,t_n$ be locally consistent sequences 
of terms with small difference. By the previous claim, the two sequences define the 
same domain, hereafter denoted by $D$, and the same equivalence, hereafter denoted by $E$. 
Let $C$ be the equivalence class of $E$ for which the two colourings induced by $\bar{s}$ 
and $\bar{t}$ differ. Let $i_0$ be the smallest number such that there is $(i_0,j) \in C$ 
and let $i_1$ be the largest number such that $(u_1,j') \in C$. Since $\bar{s}$ and $\bar{t}$
are locally consistent, we have that $i_0>1$ and $i_1<n$. Let $d$ and $e$ be the colours
associated with the class $C$ by $\bar{s}$ and $\bar{t}$, respectively. We define $\pi$ 
to be the renaming that swaps $d$ and $e$ and is the identity elsewhere. 
We observe that:
\begin{enumerate}
  \item $\pi \circ \pi$ is the identity,
  \item $\pi(s_i) = t_i$, for all $i\in\{i_0,\ldots,i_1\}$,
  \item $s_i = t_i$, for all $i \in \{1,\ldots,i_0-1\} \cup \{i_1+1,\ldots,n\}$,
  \item $\pi(u)=u$ where $u=t_{i_0-1} \cdot\ldots\cdot t_{i_1+1}$. 
        This holds because neither $d$ nor $e$ are memorable values of $u$.
        Indeed, we observe that the three elements $t_{i_0-1}$, $t_{i_1+1}$, 
        and $u$ are in the same $=_{\orbit\cJ}$-class and, furthermore, 
        $t_{i_0-1} \ge_\cR u$ and $t_{i_1+1} \ge_\cL u$.
        From Lemma \ref{lemma:decomposition-stairs} we know that $t_{i_0-1} =_\cR u$
        and $t_{i_1+1} =_\cL u$, and hence all $\cR$-memorable values of $u$ must 
        occur in $t_{i_0-1}$ and all $\cL$-memorable values in $u$ must occur in 
        $t_{i_1+1}$. 
        However, neither $d$ nor $e$ is $\cR$-memorable in $t_{i_0-1}$, 
        and neither $d$ nor $e$ is $\cL$-memorable in $t_{i_1+1}$. 
        By Proposition \ref{prop:memorable-values} we conclude that 
        neither $d$ nor $e$ are memorable in $u$.
\end{enumerate}   
The above properties can be used to prove our second claim:
\begin{align*}
  s_1 \cdot\ldots\cdot s_n 
  &~=~ s_1 \cdot\ldots\cdot s_{i_0-2} ~\cdot~ 
       \pi \big( \pi(s_{i_0-1} \cdot\ldots\cdot s_{i_1+1}) \big) ~\cdot~ 
       s_{i_1+2} \cdot\ldots\cdot s_n 
       \tag{by (1)} \\[1ex]
  &~=~ s_1 \cdot\ldots\cdot s_{i_0-2} ~\cdot~ 
       \pi \big( t_{i_0-1} \cdot\ldots\cdot t_{i_1+1} \big) ~\cdot~ 
       s_{i_1+2} \cdot\ldots\cdot s_n 
       \tag{by (2)} \\[1ex]
  &~=~ t_1 \cdot\ldots\cdot t_{i_0-2} ~\cdot~ 
       \pi \big( t_{i_0-1} \cdot\ldots\cdot t_{i_1+1} \big) ~\cdot~ 
       t_{i_1+2} \cdot\ldots\cdot t_n 
       \tag{by (3)} \\[1ex]
  &~=~ t_1 \cdot\ldots\cdot t_{i_0-2} ~\cdot~ 
       t_{i_0-1} \cdot\ldots\cdot t_{i_1+1} ~\cdot~ 
       t_{i_1+2} \cdot\ldots\cdot t_n \ .
       \tag{by (4)}
\end{align*} 
\end{proof}             

We can finally turn to the main proof. Consider two locally consistent sequences 
$\bar{s}$ and $\bar{t}$. By the first claim, $\bar{s}$ and $\bar{t}$ define the 
same domain $D$ and the same equivalence $E$, and the corresponding colourings 
are border-equal. We show that one can transform $\bar{s}$ into $\bar{t}$ by 
repeatedly changing the colouring of one equivalence class (note that it is 
sufficient to describe how the colouring evolves during the transformation
steps). The second claim will then imply that the resulting sequences give 
all the same product.

Let $C_1,\ldots,C_m$ be the non-border equivalence classes of $E$ and let $f_1,\ldots,f_m$ 
be fresh data values, not appearing in $\bar{s}$ or in $\bar{t}$. Transforming $\bar{s}$ 
to $\bar{t}$ is done in two phases. One first performs the following $m$ steps: 
at step $i$, for $i=1,\ldots,m$, one recolours the $i$-th equivalence class $C_i$ by $f_i$. 
One then performs other $m$ steps during which one recolours the $i$-th equivalence class
$C_i$, for $i=1,\ldots,m$, by its colour in $\bar{t}$. 

This concludes the proof of the first part of the lemma.
The second part of the lemma that deals with two almost 
locally consistent sequences $s_1,\ldots,s_n$ and $t_1,\ldots,t_N$ 
follows easily from the existence of a renaming $\tau$ such that 
$s_1,\ldots,s_n$ is locally consistent with $\tau(t_1),\ldots,\tau(t_n)$.
\end{proof}

Lemma \ref{lemma:local-consistency} shows that the product of a sequence of elements,
such as the one induced by an $O$-factorization, does not change if one replaces the 
sequence with another locally consistent one. Our last preparatory lemma shows that 
there always exists a locally consistent sequence over a set of data values of bounded
cardinality. This enables the possibility of guessing a locally consistent sequence 
in MSO, that is, by means of a tuple of monadic predicates.

\begin{lemma}\label{lemma:small-data-locally-consistent}
For every sequence~$\bar{s}$ of elements of $\cM$, there exists a sequence $\bar{t}$ 
that is locally consistent with~$\bar{s}$ and uses at most $4\dlen{\cM}$ distinct data values.
\end{lemma}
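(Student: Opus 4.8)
The plan is to view the construction of $\bar{t}$ as a colouring problem on the combinatorial structure $(V_{\bar{s}},E_{\bar{s}})$ introduced in the proof of Lemma~\ref{lemma:local-consistency}, and to solve it with a bounded palette of data values. First I would record two structural facts about the $E_{\bar{s}}$-classes. Since all data values inside a single term are distinct, no two placeholders of the same factor lie in the same class; and since $E_{\bar{s}}$ is generated only by links between \emph{adjacent} factors, the set of factor indices met by any class $C$ is a contiguous interval $[\ell_C,r_C]$, with exactly one placeholder of $C$ per index in it. A valid renamed sequence $\bar{t}$ then corresponds exactly to a colouring $\chi$ of these classes by data values subject to three constraints: (a)~classes sharing a factor get distinct colours, so that each $t_i$ is a genuine term in the orbit of $s_i$; (b)~any two classes both meeting a common window $\{i,i+1\}$ of two consecutive factors get distinct colours, so that the equality pattern between $t_i$ and $t_{i+1}$ matches exactly that of $s_i$ and $s_{i+1}$, which yields the single renaming $\pi_i$ required by condition~(2); and (c)~every \emph{border class}, i.e.\ one meeting factor $1$ or factor $n$, keeps the colour it has in $\bar{s}$, giving $t_1=s_1$ and $t_n=s_n$ as in condition~(3).

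The key observation is that (a) and (b) together amount to requiring $\chi$ to be injective on the set of classes meeting each window $\{i,i+1\}$. Because each class contributes at most one placeholder per factor, the number of classes meeting such a window is at most $\arity(s_i)+\arity(s_{i+1})\le 2\dlen{\cM}$. Moreover, assigning to each class $C$ the interval $J_C=[\ell_C-1,r_C]$ of window indices it meets, two classes conflict under~(b) exactly when $J_C\cap J_{C'}\neq\emptyset$; hence the conflict graph is an interval graph, whose chromatic number equals its clique number, namely the maximal number of classes sharing a window, at most $2\dlen{\cM}$. I would then split the palette. Border classes are pinned to their original values, of which there are at most $\arity(s_1)+\arity(s_n)\le 2\dlen{\cM}$. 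For the non-border classes I would run the left-endpoint greedy colouring of their sub-interval-graph using $2\dlen{\cM}$ \emph{fresh} values, disjoint from all border values; disjointness removes any conflict between a border and a non-border class automatically, while greedy colouring of an interval graph needs at most the clique number $\le 2\dlen{\cM}$ of colours. The total number of values used by $\chi$ is thus at most $2\dlen{\cM}+2\dlen{\cM}=4\dlen{\cM}$.

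Reading off $\bar{t}$ from $\chi$, by setting the $j$-th value of $t_i$ to the $\chi$-colour of the class of the placeholder $(i,j)$, conditions~(2) and~(3) and the fact that each $t_i$ lies in the orbit of $s_i$ (hence in $O$) follow directly from (a)--(c) as just sketched. It remains to verify condition~(1), namely that $t_1\cdot\ldots\cdot t_n\in O$. Here I would appeal to Lemma~\ref{lemma:local-consistency}(1), which gives $t_1\cdot\ldots\cdot t_n=s_1\cdot\ldots\cdot s_n\in O$; since that lemma formally takes local consistency (including the product membership) as a hypothesis, I would instead re-run its recolouring argument directly, maintaining as an invariant that along the one-class-at-a-time transformation from $\bar{s}$ to $\bar{t}$ every contiguous sub-product stays in $O$. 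This invariant is restored at each step from ``full product in $O$ plus all factors in $O$'', exactly as for an $O$-factorization, and it supplies the $=_{\orbit\cJ}$-membership needed to apply Lemma~\ref{lemma:decomposition-stairs} and Proposition~\ref{prop:memorable-values} inside the single-swap product-preservation step.

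The main obstacle I anticipate is not the colouring bound itself, which the interval structure of the $E_{\bar{s}}$-classes makes transparent, but the clean handling of the two pinned ends together with the requirement that the equality pattern be matched \emph{exactly} (the ``iff'' implicit in condition~(b)); this is precisely what forces two disjoint palettes and hence the factor $4$ rather than $2$. A secondary care point is avoiding the apparent circularity in condition~(1), which the sub-product invariant above is designed to resolve.
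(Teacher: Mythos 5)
Your proof is correct and follows essentially the same route as the paper's: both recolour the $E_{\bar{s}}$-classes with a palette split into $2\dlen{\cM}$ pinned border values and $2\dlen{\cM}$ fresh values for the interior, exploiting the fact that each class spans a contiguous range of factors so that at most $2\dlen{\cM}$ classes are active at any window of two adjacent factors; your left-endpoint greedy colouring of the interval graph is precisely the paper's left-to-right recolouring sweep. Your explicit verification that the product $t_1\cdot\ldots\cdot t_n$ remains in $O$ via the sub-product invariant addresses a point the paper leaves implicit, and is a welcome refinement rather than a divergence.
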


\begin{proof}
We can reuse the objects in the proof of Lemma~\ref{lemma:local-consistency}, 
namely, by identifying elements of $\cM$ with terms, we construct from the 
sequence $\bar{s}=s_1,\ldots,s_n$ a domain $V$, an equivalence $E$ and a
corresponding colouring of the equivalence classes of $E$. Moreover, without 
loss of generality, we can assume that:
\begin{itemize}
  \item the set $D$ of all data values is the set of positive integers,
  \item the data values of $s_1$ and~$s_n$ belong to the set $\{1,\ldots,2\dlen\cM\}$ 
        (call this range of values the set of \emph{border colours}),
  \item all other data values, which do not appear in~$s_1$ or~$s_n$,
        are strictly above~$4\dlen\cM$.
\end{itemize}
By using an induction on $i=2,\ldots,n-1$, we transform the colouring induced by 
$\bar{s}$ into a new colouring that associates with all positions $(i',j)$ in the
domain, where $i'\le i$, a data value smaller than or equal to~$4\dlen{\cM}$. 
Such a transformation is performed at each step $i=2,\ldots,n-1$ as follows. 
If some colours greater than~$4\dlen\cM$ are associated with the classes of 
the positions $(i+1,1)$, $\ldots$, $(i+1,\arity(s_{i+1}))$, then we recolour 
these classes with data values ranging over $\{2\dlen{\cM}+1,\ldots,4\dlen{\cM}\}$ 
-- intuitively, this amounts at renaming the data values of $s_{i+1}$. Note that 
we can perform such a recolouring because there are at most $\dlen{\cM}$ data values 
involved in the term~$s_i$ and at most $\dlen{\cM}$ data values involved in the term
$s_{i+1}$, and hence there are always at least $\dlen{\cM}$ fresh data values from 
$\{2\dlen{\cM}+1,\ldots,4\dlen{\cM}\}$ that we can choose.

At the end of the transformation, the resulting colouring uses only data values from 
$\{1,\ldots,4\dlen{\cM}\}$. This means that the sequence of terms $\bar{t}=t_1,\ldots,t_n$ 
that corresponds to this colouring fulfils the claim of the lemma.
\end{proof}

\medskip
We have now all the ingredients to prove the inductive step for claim \refcond{claim2} 
of Lemma \ref{lemma:inductive-statement}. 

\begin{lemma}[Induction step for \refcond{claim2}]\label{lemma:cond2-mso}
For all rigid guards $\alpha(x,y)$ such that~$w\models\alpha(x,y)$ implies 
$h(w[x,y]) \ge_{\orbit\cJ} O$, one can compute the types under~$\alpha$.
\end{lemma}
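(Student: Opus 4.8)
The plan is to reprove the inductive step for claim \refcond{claim2} without assuming aperiodicity, replacing the singleton-$\cH$-class argument used in step~4 of Lemma~\ref{lemma:cond2-fo} by the factorization machinery developed above. As in that proof, I would first reduce to the case where $w\models\alpha(x,y)$ implies $h(w[x,y])\in O$: the finitely many $=_{\orbit\cJ}$-classes $P>_{\orbit\cJ}O$ are dealt with directly by the inductive hypothesis \refcond{claim2} applied to the rigid guard $\alpha(x,y)\et\varphi_P(x,y)$. Assuming then $h(w[x,y])\in O$, I would invoke Lemma~\ref{lemma:J-factorization} to obtain a formula $\varphi^\fact_O(x,y,X',Y')$ and a finite family $G'$ of rigid formulas: whenever $\alpha(x,y)$ holds, $(X',Y')$ can be chosen to represent a canonical $O$-factorization $v_1,\ldots,v_N$ of $w[x,y]$ that is rigidly traversable by $G'$, and one can compute the types under each guard $\gamma'\in G'$. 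By Corollary~\ref{cor:rigidification}, the inward-rigid formulas witnessing the orbits of the factors may be taken to be finite disjunctions of fully rigid formulas, so that memorable positions of adjacent factors can be compared under rigid guards.

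The heart of the argument is to compute $h(w[x,y])=h(v_1)\cdots h(v_N)$ from the factors using only a bounded amount of information. By Lemma~\ref{lemma:small-data-locally-consistent}, the sequence $h(v_1),\ldots,h(v_N)$ admits a companion $t_1,\ldots,t_N$ using at most $4\dlen{\cM}$ distinct data values, and by Lemma~\ref{lemma:local-consistency} the two sequences have the same product (up to a global renaming). I would therefore guess, by a bounded tuple of monadic predicates, a \emph{colouring} of the located memorable positions of the factors by at most $4\dlen{\cM}$ colours, encoding such a companion sequence. The formula verifies that this colouring is (almost) locally consistent with $h(v_1),\ldots,h(v_N)$: within each factor the colouring must agree with the actual equalities among its memorable values, and across the border of two adjacent factors $v_i,v_{i+1}$ the colours must match exactly the equalities of the shared memorable values, the latter being testable under rigid guards thanks to the traversability of the factorization by $G'$. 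The existence of a valid colouring is guaranteed by Lemma~\ref{lemma:small-data-locally-consistent}, so the formula is satisfiable whenever $\alpha(x,y)$ holds.

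Since the colours range over a fixed finite set and each factor lies in one of finitely many orbits, the product $t_1\cdots t_N$ can be evaluated by a running-product computation expressible in MSO: I would guess a labelling of the factor boundaries by pairs consisting of an orbit together with a colouring of its memory from the bounded palette, require the label at the first boundary to be that of $t_1$, impose the recurrence that each successive label is the product of the previous one with the next coloured factor, and read off the label at the final boundary. As there are only finitely many such labels, this labelling is MSO-definable, and by Lemma~\ref{lemma:local-consistency} its value coincides with $h(w[x,y])$ up to a renaming, hence determines the orbit of $h(w[x,y])$ correctly. It then remains to exhibit positions $\bar z$ carrying the memorable values of $h(w[x,y])$. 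By Proposition~\ref{prop:memorable-values} every such value is $\cR$- or $\cL$-memorable, and by Lemma~\ref{lemma:decomposition-stairs} the $\cR$-memorable values already occur in the leftmost factors and the $\cL$-memorable values in the rightmost factors; using $G'$ one navigates rigidly from $x$ and from $y$ to the factors supplying these values and then, via their already-rigid witnessing formulas, to concrete positions, which are output as $\bar z$. Collecting these ingredients yields, for each orbit $o$, an inward-rigid formula $\varphi_o(x,\bar z,y)$ witnessing $o$, with $\bigvee_o\exists\bar z\,\varphi_o(x,\bar z,y)$ equivalent to $\alpha$, which is precisely what it means to compute the types under $\alpha$.

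The main obstacle is the faithful encoding of the two middle paragraphs in the logic: one must simulate an unbounded product purely with rigidly guarded data tests and a bounded palette of colours, while guaranteeing that the resulting formula is genuinely inward-rigid. The decisive points are that the factorization is rigidly traversable by $G'$ (Lemma~\ref{lemma:J-factorization}), which turns every cross-factor comparison of memorable values into a rigidly guarded test; that Corollary~\ref{cor:rigidification} makes the factor-witnessing formulas fully rigid; and that Proposition~\ref{prop:memorable-values} together with Lemma~\ref{lemma:decomposition-stairs} pins down from which factors each memorable value of the product originates, so that the witnessing positions $\bar z$ can be located deterministically from the endpoints $x$ and $y$.
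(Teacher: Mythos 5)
Your proposal is correct and follows essentially the same route as the paper: reduce to the case $h(w[x,y])\in O$, take a rigidly traversable $O$-factorization from Lemma~\ref{lemma:J-factorization}, guess via monadic predicates a companion sequence of terms over at most $4\dlen{\cM}$ data values together with its running products, verify (almost) local consistency using rigid cross-border tests and Corollary~\ref{cor:rigidification}, conclude the orbit via Lemma~\ref{lemma:local-consistency}, and locate the memorable positions in the first and last factors via Lemma~\ref{lemma:decomposition-stairs} and Proposition~\ref{prop:memorable-values}. The only detail left implicit is the explicit rigid test comparing the memorable values of the first and last factors (needed for the border condition of almost local consistency), which the paper carries out using the guard $\alpha(x,y)$ itself; this fits naturally into your scheme.
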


\begin{proof}
The construction starts as in the proof of Lemma~\ref{lemma:cond2-fo}, but things change at 
point 4., since the construction there exploits aperiodicity. We continue the construction, 
without the assumption of aperiodicity, as follows.
\begin{enumerate}
  \setcounter{enumi}{3}
  \item Using the formula $\varphi^\fact_O(x,y,X',Y')$ from Lemma \ref{lemma:J-factorization}, 
        one guesses a pair $(X',Y')$ of monadic predicates that represent a
        rigidly-traversable $O$-factorization $w_1,\ldots,w_n$ of $w[x,y]$. 
  \item Let $G'$ be the family of formulas with respect to which the defined $O$-factorization 
        is rigidly traversable. For every factor $w_i$, one guesses a corresponding formula
        $\gamma_i$ in $G'$ that holds over the endpoints of the factor $w_i$, namely,
        such that $w \models \gamma_i(x_i,y_i)$, where $x_i$ (resp., $y_i$) is the left 
        (resp., right) endpoint of $w_i$ in $w$. This information can be encoded by
        a tuple of monadic predicates and easily verified to be correct.
  \item One then guesses some terms $t_1,\ldots,t_n$ over a finite set~$C$ of data values
        of cardinality $4\dlen{\cM}$. There are finitely many such terms, so this can be done using 
        a tuple of monadic predicates. Moreover, this is done in such a way that the information 
        concerning the term $t_i$ is located on the factor~$w_i$, say on the first letter.
  \item One also guesses some terms $u_1,\ldots,u_n$ over the same set $C$ of data 
        values as above, where each term $u_i$ is meant to be the product of the first $i$ 
        terms $t_1,\ldots,t_i$. One can represent the terms $u_1,\ldots,u_n$ with new 
        monadic predicates and then check that the guess is correct by verifying that 
        $u_1=t_1$ and $u_{i+1}=u_i\cdot t_i$ for all $1\le i\le n-1$.
  \item One checks that the sequence of terms $t_1,\ldots,t_n$ is almost locally consistent 
        with the sequence of elements that is induced by the $O$-factorization, that is,
        the sequence $h(w_1),\ldots,h(w_n)$. Below we explain how to do so.

        Consider any index~$1\le i\le n-1$ (this amounts at using a universal 
        first-order quantification). Recall from the previous constructions 
        that $\gamma_i$ and $\gamma_{i+1}$ are rigid formulas in $G'$ that 
        hold between the endpoints of $w_i$ and $w_{i+1}$, respectively.
        Also recall, from the statement of Lemma \ref{lemma:J-factorization}, 
        that one can compute the types under the guards $\gamma_i$ and $\gamma_{i+1}$.
        In particular, there are finite families of inward-rigid formulas 
        $F=(\varphi_o)_{o\in O}$ and $F'=(\varphi'_o)_{o\in O}$ that compute 
        the types under the guards $\gamma_i$ and $\gamma_{i+1}$.
        By the sub-definability Lemma \ref{lemma:subdefinability}, 
        we can rewrite each inward-rigid formula $\varphi_o(x',\bar{z},y')$ 
        (resp., $\varphi'_o(x',\bar{z},y')$) as a finite disjunction of rigid 
        formulas $\varphi_{o,1}(x',\bar{z},y') ~\vel~ \ldots ~\vel~ \varphi_{o,\ell}(x',\bar{z},y')$
        (resp., $\varphi'_{o,1}(x',\bar{z},y') ~\vel~ \ldots ~\vel~ \varphi_{o,m}(x',\bar{z},y')$).
                
        We first verify, using disjunctions, that the orbits of $h(w_i)$ and $t_i$ coincide, 
        namely, that some formula $\varphi_{o,j}(x_i,\bar{z},y_i)$ holds over the endpoints 
        $x_i,y_i$ of $w_i$ and over some positions $\bar{z}$, where $o$ is the orbit of the 
        term $t_i$. We do the same for $h(w_{i+1})$ and $t_{i+1}$, namely, we verify that 
        some formula $\varphi'_{o',j'}(x_{i+1},\bar{z}',y_{i+1})$ holds over the endpoints 
        $x_{i+1},y_{i+1}$ of $w_{i+1}$ and over some positions $\bar{z}'$, where $o'$ is the 
        orbit of the term $t_{i+1}$.
        
        Next, we verify that the equalities between the memorable values of $h(w_i)$ 
        and $h(w_{i+1})$ are the same as the equalities between the data values of 
        the terms $t_i$ and $t_{i+1}$. For this, we suppose that $t_i=o(d_1,\ldots,d_k)$ 
        and $t_{i+1}=o'(d'_1,\ldots,d'_{k'})$ and we check that for all pairs of data 
        values $d_h,d'_{h'}$:
        \begin{enumerate}
          \item if~$d_h=d'_{h'}$, 
                then $(\beta\cdot\beta')(z_h,z'_{h'}) \,\et\, z_h \sim z'_{h'}$ holds, where 
                $\beta(z_h,y') \:\eqdef\: \varphi_{o,j}(\star,\bar{\star},z_h,\bar{\star},y')$,
                $\beta'(x',z'_{h'}) \:\eqdef\: 
                                    \varphi'_{o',j'}(x',\bar{\star},z'_{h'},\bar{\star},\star)$,
                and $\beta\cdot\beta'$ is defined as in Lemma \ref{lemma:products}
                (note that $\beta\cdot\beta'$ is rigid since 
                 $\varphi_{o,j}$ and $\varphi'_{o',j'}$ are rigid);
          \item if~$d_h\neq d'_{h'}$, 
                then $(\beta\cdot\beta')(z_h,z'_{h'}) \,\et\, z_h \nsim z'_{h'}$ holds, where 
                $\beta$, $\beta'$, and $\beta\cdot\beta'$ are defined as in (a).
        \end{enumerate}

        Finally, it remains to check that the equalities between the memorable values of 
        $h(w_1)$ and $h(w_{i+1})$ are the same as the equalities between the data values 
        of $t_1$ and $t_n$. For this, we assume that $t_1=o(d_1,\ldots,d_k)$ and 
        $t_n=o(d'_1,\ldots,d'_{k'})$ and we check that for all pairs of data values 
        $d_h,d'_{h'}$:
        \begin{enumerate}
          \setcounter{enumii}{2}
          \item if~$d_h=d'_{h'}$, 
                then $\beta(z_h,z'_{h'}) \,\et\, z_h\sim z'_{h'}$ holds,
                where $\beta(z_h,z'_{h'})$ is defined as
                      $\exists x',y' ~ \varphi_{o,j}(x',\bar\star,z_h,\bar\star,\star) \,\et\,
                       \alpha(x',y') \,\et\,
                       \varphi_{o',j'}(\star,\bar\star,z'_{h'},\bar\star,y')$ 
                -- note that $\beta$ is rigid since $\varphi_{o,j}$, $\varphi'_{o',j'}$,
                and $\alpha$ are all rigid;
          \item if~$d_h\neq d'_{h'}$, 
                then $\beta(z_h,z'_{h'}) \,\et\, z_h\nsim z'_{h'}$ holds,
                where $\beta(z_h,z'_{h'})$ is defined as in (c).
        \end{enumerate}
  \item At this point, we know that $h(w[x,y])=h(w_1)\cdot\ldots\cdot h(w_n)$ 
        is in the same orbit as $u_n=t_1\cdot\ldots\cdot t_n$.
        What remains to be done is to determine some positions in $w[x,y]$ 
        that carry the memorable values of $h(w[x,y])$.

        For this, we use once more Lemma \ref{lemma:decomposition-stairs} and 
        Proposition~\ref{prop:memorable-values}.
        Since the elements $h(w_1)$ and $h(w[x,y])$ belong to the same 
        $=_{\orbit\cJ}$-class $O$ and since $h(w_1) \,\geq_\cR\, h(w[x,y])$,
        Lemma \ref{lemma:decomposition-stairs} implies $h(w_1) \,=_\cR\, h(w[x,y])$.
        In a similar way, we derive $h(w_n) \,=_\cL\, h(w[x,y])$.
        Now, Proposition \ref{prop:memorable-values} implies that every memorable 
        value of~$h(w[x,y])$ is also memorable in~$h(w_1)$ or in~$h(w_n)$. 
        Thus, a formula can easily locate in a rigid way some positions with the memorable 
        values of~$h(u_1)$ and~$h(u_n)$. 

        Towards a conclusion, we recall that the two sequences $h(w_1),\ldots,h(w_n)$
        and $t_1,\ldots,t_n$ are almost locally consistent. This means that there 
        is a correspondence between the memorable values of $h(w[x,y])$ (resp., 
        $h(w_1)$, $h(w_n)$) and the data values of $u_n$ (resp., $t_1$, $t_n$). 
        In particular, for every memorable value $e$ of $h(w[x,y])$, one can identify 
        a corresponding data value $d$ that occurs in $u_n$, and hence in $t_1$ or $t_n$. 
        From this, one determines in a rigid way a corresponding position in $w_1$ or 
        $w_n$ that carries the memorable value $e$.\qedhere
\end{enumerate}
\end{proof}

\section{Logics for finite memory automata}\label{sec:logics-for-automata}

In this section, we consider again a variant of \MSO with guards on data tests.
More precisely, we consider the logic \emph{semi-rigidly guarded \MSO}, as it 
was introduced in Section \ref{sec:logics}, but now interpreted it over the 
class of data words\footnote{Results similar to those presented in this section
can be obtained for semi-rigidly guarded \MSO interpreted over ranked data trees. 
However, we prefer to avoid dealing with the technicalities concerning trees and 
we present instead the results for data words as a proof of concept.}. 
The goal is to relate definability in semi-rigidly guarded \MSO to recognizability 
of data languages by means of finite memory automata 
\cite{finite_memory_automata,nondeterministic_reassignment}. 

Recall that, in semi-rigidly guarded \MSO, every data test $y\sim z$ 
is conjoint with a formula $\alpha(x,y) ~\et~ \beta(x,z)$, where $\alpha(x,y)$ 
(resp., $\beta(x,z)$) is a formula of semi-ridigly guarded \MSO that determines
at most one position $y$ (resp., $z$) from each position $x$.
Below, we recall the definition of finite memory automaton, precisely the one 
from \cite{automata_with_group_actions} which is closer in spirit to orbit-finite 
data monoids. For this we reuse some of the notions introduced in Section 
\ref{sec:background}, in particular, that of $\cG_D$-set and that of equivariant subset.
All elements $s$ in a $\cG_D$-set are tacitly assumed to have finite memory,
denoted for short by $\mem(s)$. For simplicity, we also avoid denoting explicitly 
the group actions underlying $\cG_D$-sets.

\begin{definition}\label{def:finite-memory-automaton}
A \emph{finite memory automaton} (\emph{FMA}) is a tuple $\cA=(\Sigma,Q,I,F,T)$, 
where:
\begin{itemize}
  \item $\Sigma$ is an orbit-finite $\cG_D$-set that represents the input alphabet 
        (e.g., $D\times A$); 
  \item $Q$ is an orbit-finite $\cG_D$-set that represents the configuration space;
  \item $I$ and $F$ are equivariant subsets of $Q$ (i.e., unions of orbits of $Q$) 
        that contain the initial and final configurations, respectively;
  \item $T$ is an equivariant subset of $Q\times\Sigma\times Q$ that describes 
        the possible transitions.
\end{itemize}
The notion of \emph{successful run} of $\cA$ and that of \emph{recognized language} 
are the usual ones for automata running on finite words (they are just generalized 
to possibly infinite alphabets and configuration spaces).
An FMA $\cA=(\Sigma,Q,I,F,T)$ is said to be
\begin{enumerate}
  \item \emph{unambiguous}, 
        if it admits at most one successful run on each input data word;
  \item \emph{deterministic}, 
        if $I$ is a singleton and $(p,c,q),(p,c,q') \in T$ implies $q=q'$;
  \item \emph{co-deterministic}, 
        if $F$ is a singleton and $(p,c,q),(p',c,q) \in T$ implies $p=p'$.
\end{enumerate}
\end{definition}

FMA can be easily shown to be closed under union and intersection (the classical 
constructions for finite state automata can be used). In addition, deterministic
FMA and co-deterministic FMA are closed under complementation. However, FMA 
cannot be determinized and are not closed under complementation. We also know
that FMA (even deterministic or co-deterministic ones) are strictly more expressive 
than orbit-finite data monoids. For example, a separating language is
$$
  L_\curvearrowright
  ~~\eqdef~ 
  \{d_1\ldots d_n ~:~ \exists 1<i\le n ~ d_1=d_i\}
$$
which is recognized by a deterministic FMA, but is not recognizable by an orbit-finite 
data monoid. Moreover, unlike classical finite automata, deterministic and 
co-deterministic FMA are incomparable, and both are strictly less expressive 
than unambiguous FMA, as easily witnessed by the language $L_\curvearrowright$ 
and its reverse.

With the lack of classical tools like the subset construction, the problem of finding 
logical characterizations of classes of data languages recognized by FMA becomes challenging. 
In the following, we give partial results towards this goal.

\medskip
The first result shows that FMA, even unambiguous ones, are at least as expressive
as semi-rigidly guarded \MSO interpreted on data words.

\begin{theorem}\label{thm:semi-rigid-inside-fma}
Every language of data words defined by a semi-rigidly guarded \MSO sentence 
is effectively recognized by an unambiguous FMA.
\end{theorem}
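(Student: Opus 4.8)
The plan is to reduce, via the normalisation already used in the proof of Theorem~\ref{thm:decidability}, the evaluation of a semi-rigidly guarded \MSO sentence to the evaluation of a classical MSO sentence over a finite alphabet, isolating all dependence on data values into finitely many unary predicates. Concretely, let $c_1,\ldots,c_n$ be the unary predicates $c^\sim_{\alpha,\beta}$ associated with the semi-rigidly guarded data tests $\alpha(x,y)\wedge\beta(x,z)\wedge y\sim z$ occurring in $\varphi$, and let $\varphi^-$ be the classical MSO formula over the finite alphabet $A\times\{0,1\}^n$ obtained by replacing every such test with the atom $x\in c^\sim_{\alpha,\beta}$. For a data word $w$ write $w^-$ for its unique annotation over $A\times\{0,1\}^n$ whose bit for $c_i$ at position $x$ records whether $w\models\exists y,z~\alpha(x,y)\wedge\beta(x,z)\wedge y\sim z$; then $w\models\varphi$ iff $w^-\models\varphi^-$. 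By B\"uchi's theorem $\varphi^-$ is recognised by a deterministic finite automaton, which I regard as a memoryless FMA $\cA_0$ (Definition~\ref{def:finite-memory-automaton}) reading only the $A\times\{0,1\}^n$-component. Thus the whole problem becomes: recognise, among all data words annotated over $A\times\{0,1\}^n$, exactly those whose annotation is correct, i.e.\ equals $w^-$.

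First I would organise the construction as an induction on the nesting depth of data tests. Each guard $\alpha,\beta$ is itself a semi-rigidly guarded formula, but once the annotations for the strictly lower-depth predicates are available as finite-alphabet labels, $\alpha$ and $\beta$ become \emph{classical} MSO formulas over the annotated word, and by semi-rigidity (Definition~\ref{def:semi-rigid-MSO}) they define relations that are functional in their source variable $x$: for every $x$ there is at most one $y$ with $\alpha(x,y)$, written $y(x)$, and similarly $z(x)$. Hence the correct value of $c_i(x)$ is just the truth of the single comparison $w[y(x)]=w[z(x)]$ (and false when $y(x)$ or $z(x)$ is undefined). The core step is a lemma stating that, for a pair of functional classical-MSO guards over the annotated word, one can build an \emph{unambiguous} FMA $\cA_i$ over $D\times A\times\{0,1\}^n$ accepting exactly those inputs whose $c_i$-bits are everywhere correct (assuming the lower bits correct). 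Granting this, I would take the product $\cA_0\times\cA_1\times\cdots\times\cA_n$, using that FMA are closed under intersection and that the product of unambiguous FMA is unambiguous, obtaining an unambiguous FMA over $D\times A\times\{0,1\}^n$ accepting precisely the correctly-annotated models of $\varphi^-$. Finally I would project away the $\{0,1\}^n$-component: since the correct annotation $w^-$ is a deterministic function of $w$, each data word has at most one accepted annotation, so the projection adds no ambiguity and yields an unambiguous FMA over $D\times A$ recognising $\intr{\varphi}$.

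The hard part will be the core lemma, namely transporting and comparing the data at the MSO-determined positions $y(x)$ and $z(x)$ with only finitely many registers in a single left-to-right pass. The naive ``store $w[y(x)]$ and compare at $z(x)$'' strategy fails, since the windows $[y(x),z(x)]$ for different $x$ may interleave arbitrarily, so unboundedly many distinct values could be pending at once; and mere functionality is not enough, since the analogous comparison against a mirror position would encode a non-regular palindrome test. What saves the construction is that $\alpha$ and $\beta$ are \emph{MSO}-definable, hence regular. I would exploit this by splitting according to the finitely many order types of the triple $(x,y(x),z(x))$ and, for each type, letting the FMA \emph{guess} the value carried by the pointed-to positions, keep it in a register, set each bit by a register comparison, and verify every guess against the data actually read. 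The regularity of the functional guards, obtained by a Feferman--Vaught / Shelah composition argument of the kind already invoked in Claim~\ref{claim:composition}, is what I expect to show that boundedly many guesses are live at any moment, so that finitely many registers suffice; the correctness of $\alpha,\beta$ as guards is checked by the finite-control component, and unambiguity follows because functionality forces every guess, leaving at most one accepting run per annotated input.
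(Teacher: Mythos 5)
Your skeleton matches the paper's proof: the same normalisation into a classical MSO sentence over an annotated alphabet with fresh unary predicates $c^\sim_{\alpha,\beta}$, the same level-by-level treatment of nested tests, and the same final projection argument (the paper packages the latter as Lemma~\ref{lemma:annotations}). You have also correctly located the crux --- bounding how many data values must be simultaneously remembered while the MSO-determined windows $[\min(x,y(x)),\max(x,y(x))]$ interleave --- and correctly named the relevant tool (the composition argument of Claim~\ref{claim:composition}). But the proposal stops exactly there: ``I expect to show that boundedly many guesses are live at any moment'' is the entire content of the hard step, and the device you actually propose for obtaining it does not yield it. Splitting by the order type of $(x,y(x),z(x))$ gives nothing: already for $\alpha(x,y)\equiv(y=x+1)$, all triples have the same order type yet adjacent windows overlap with distinct targets, so within one order type you still face interleaving windows with pairwise distinct pending values, and ``regularity'' is invoked but never used.

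What closes the gap is functionality combined with the composition theorem, and this is precisely the paper's Lemma~\ref{lemma:cross-free}. Writing $\alpha(x,y)$ (say with $x\le y$) as $\bigvee_i \exists z~\alpha_i(x,z)\wedge\beta_i(z,y)$, one indexes the pieces by the composition index $i$ rather than by order type: if two windows of the same piece overlap at a position $p$, the left-half condition of the first pair and the right-half condition of the second pair recombine across the split at $p$, so $\varphi_i$ also relates the first source to the second target, and functionality forces the two targets to coincide. Each piece is therefore \emph{cross-free}: at every position at most one target value is pending per piece, and that value can be propagated deterministically along its window with a single register --- this is the paper's trace construction (Definition~\ref{def:trace} and Lemma~\ref{lemma:trace}) --- after which the bit $c^\sim_{\alpha,\beta}(x)$ reduces to a local comparison of the $\alpha$-trace and the $\beta$-trace at $x$. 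With this decomposition lemma supplied, your guess-and-verify automaton and its unambiguity argument go through; without it, the claimed register bound, and hence the whole construction, is unsupported.
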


In proving the above result, it is quite natural to try an approach similar to 
that of Theorem \ref{thm:formula-to-monoid}. Unfortunately, it is difficult to adapt 
the approach to the present setting, because here there are semi-rigid guards that, 
from a certain position $x$, determine at most one position $y$ either to the left or to 
the right of $x$, and hence one cannot use deterministic or co-deterministic automata.
Nonetheless, we can exploit (unambiguous) non-determinism of FMA to annotate
the input data words with additional symbols and data values that ease the
translation. 

We begin by giving a simple lemma that shows that, as for classical languages, 
it is possible to reason about recognizability by unambiguous FMA modulo annotations 
computed by length-preserving functions on data words.
We fix the following notation. Let $\Sigma$ and $\Delta$ be two orbit-finite alphabets
(e.g., $\Sigma = A\times D$ and $\Delta = B\times D$). Given two data words $u\in\Sigma^*$ 
and $v\in\Delta^*$ of the same length, we denote by $u\otimes v$ the \emph{convolution} of 
$u$ and $v$, defined by $(u\otimes v)(i)=\big(u(i),v(i)\big)$ for all $1\le i\le|u|=|v|$.

\begin{lemma}\label{lemma:annotations}
Let $f:\Sigma^*\then\Delta^*$ be a function that maps any data word $u$ to a data word $f(u)$ 
of the same length, let $L\subseteq\Sigma^*$ be a data language, and let $L_f$ be the language
of data words of the form $u\otimes v$, where $u\in L$ and $v=f(u)$.
If $L_f$ is recognized by an unambiguous FMA, the so is $L$.
\end{lemma}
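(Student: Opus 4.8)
The plan is to obtain an unambiguous FMA for $L$ from the given unambiguous FMA for $L_f$ by \emph{projecting away} the $\Delta$-component of the alphabet, letting the new automaton guess the annotation nondeterministically. Concretely, suppose $\cA_f=(\Sigma\times\Delta,Q,I,F,T)$ is an unambiguous FMA recognizing $L_f$, where $\Sigma\times\Delta$ is regarded as an orbit-finite $\cG_D$-set. I would define $\cA=(\Sigma,Q,I,F,T')$ over the same configuration space $Q$ and the same initial and final sets $I,F$, with transition relation
$$
  T' ~=~ \big\{(p,c,q) ~:~ \exists\, d\in\Delta ~~ (p,(c,d),q)\in T\big\}.
$$
Since renamings act componentwise on $\Sigma\times\Delta$ and the existential projection of an equivariant relation is again equivariant, $T'$ is an equivariant subset of $Q\times\Sigma\times Q$; as $\Sigma$ is orbit-finite and $Q,I,F$ are unchanged, $\cA$ is a legitimate FMA.

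Next I would check that $\cA$ recognizes $L$. By construction, a sequence $q_0,\ldots,q_n$ is a run of $\cA$ on a data word $u=c_1\ldots c_n$ precisely when there is an annotation $v=d_1\ldots d_n\in\Delta^*$ such that $q_0,\ldots,q_n$ is a run of $\cA_f$ on the convolution $u\otimes v$, and the run is successful for $\cA$ iff it is successful for $\cA_f$. Hence $\cA$ accepts $u$ iff $\cA_f$ accepts $u\otimes v$ for some $v$, i.e.\ iff $u\otimes v\in L_f$ for some $v$. Since $L_f=\{u\otimes f(u):u\in L\}$, this happens exactly when $u\in L$, the only compatible annotation being $v=f(u)$. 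Thus $\cA$ recognizes $L$.

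The crucial and only delicate step is establishing that $\cA$ is \emph{unambiguous}, since projection does not preserve unambiguity in general. Here it is rescued by the fact that $f$ is a \emph{function}: for a fixed $u$ the language $L_f$ contains \emph{at most one} word with $\Sigma$-projection $u$, namely $u\otimes f(u)$. Consequently, whenever $q_0,\ldots,q_n$ is a successful run of $\cA$ on $u$, any annotation $v$ witnessing the membership of its transitions in $T'$ makes $u\otimes v$ an accepted word of $\cA_f$, forcing $u\otimes v\in L_f$ and hence $v=f(u)$. Therefore every successful run of $\cA$ on $u$ is in fact a successful run of $\cA_f$ on the single word $u\otimes f(u)$; unambiguity of $\cA_f$ then yields a unique such run, so $\cA$ has at most one successful run on $u$, as required.

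I expect the main obstacle to be exactly this unambiguity argument, and more precisely the care needed to make rigorous the claim that the only annotation compatible with an accepting run is $v=f(u)$: one must observe that although a single state sequence of $\cA$ could a priori be witnessed by several annotations, each such witness nonetheless forces $v=f(u)$ through membership in $L_f$, after which the uniqueness furnished by $\cA_f$ closes the argument. Everything else, namely the preservation of orbit-finiteness and equivariance under the alphabet projection and the run correspondence, is routine.
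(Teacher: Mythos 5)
Your proposal is correct and follows essentially the same route as the paper: project away the $\Delta$-component of the transitions, observe that acceptance of $u$ by the projected automaton forces any witnessing annotation $v$ to satisfy $u\otimes v\in L_f$ and hence $v=f(u)$, and then invoke unambiguity of $\cA_f$ on the single word $u\otimes f(u)$. Your version merely spells out the equivariance of the projected transition relation and the run correspondence in more detail than the paper does.
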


\begin{proof}
Let $\cA_f=(\Sigma\times\Delta,Q,I,F,T)$ be an unambiguous FMA that recognizes $L_f$.
Define $\cA=(\Sigma,Q,I,F,T')$, where $T'$ contains all and only the transitions of 
the form $(p,a,q)$ such that $\big(p,(a,b),q\big)\in T$ for some $b\in\Delta$.
Clearly $\cA$ recognizes $L$. Moreover, $\cA$ is unambiguous: if $u\in L$, 
then there is a unique $v\in\Delta^*$ such that $u\otimes v\in L_f$, and
there is a unique successful run of $\cA$ on $u\otimes v$.
\end{proof}

Below, we fix a sentence $\psi$ of semi-rigidly guarded \MSO and we prove the data 
language defined by $\psi$ is recognized by an unambiguous FMA. 
To ease the translation, we will use Lemma \ref{lemma:annotations} to freely annotate
data words with the outputs computed functional finite memory transducers. 
Formally, we define a {\em functional finite memory transducer} ({\em FMT} for short) 
just as an unambiguous FMA over the alphabet $\Sigma\times\Delta$ that never accepts 
two data words of the form $u\otimes v$ and $u\otimes v'$, with $v\neq v'$.

The first step consists of rewriting $\psi$ into a sentence of classical MSO 
(hence containing no data tests), but interpreted over data words with appropriate 
annotations. This will reduce the problem to constructing some FMTs computing 
the appropriate annotations. We use a technique similar to the proof of 
Theorem \ref{thm:decidability}, namely, we substitute in $\psi$, starting from 
the innermost subformulas, every occurrence of a semi-rigidly guarded data test 
$\varphi(x,y,z) ~=~ \alpha(x,y) ~\et~ \beta(x,z) ~\et~ y\sim z$, 
with the formula 
$\varphi^*(x,y,z) ~=~ \alpha^*(x,y) ~\et~ \beta^*(x,z) ~\et~ x\in c_{\alpha,\beta}^\sim$,
where $c_{\alpha,\beta}^\sim$ is a fresh unary predicate associated with that occurrence.
Let $\psi^*$ be the sentence of classical MSO that is obtained from $\psi$ by applying 
the above transformation and let $C$ be the alphabet with the new predicates
$c^\sim_{\alpha,\beta}$. We have that for every data word $u\in(A\times D)^*$
$$
  u \sat \psi
  \qquad\text{iff}\qquad
  u\otimes v \sat \psi^*
$$
where $v$ is the unique annotation for $u$ that satisfies
$$
  \theta_{\alpha,\beta}
  ~\eqdef~
  \forall x ~ 
  \big(\, x\in c_{\alpha,\beta}^\sim ~\leftrightarrow~ 
                  \exists y,z ~ 
                  \alpha^*(x,y) \:\et\: \beta^*(x,z) \:\et\: y\sim z \,\big)
$$
for each $c_{\alpha,\beta}^\sim\in C$.
The MSO sentence $\psi^*$ can be translated to a deterministic finite state automaton 
$\cA^*$, and the latter can be intersected with suitable FMTs recognizing the languages 
defined by $\theta_{\alpha,\beta}$. Thus, we have reduced the problem of constructing 
an unambiguous FMA for $\psi$ to the problem of constructing FMTs for the sentences $\theta_{\alpha,\beta}$. 

Now, we focus our attention on the sentences $\theta_{\alpha,\beta}$.
First of all, note that a sentence $\theta_{\alpha,\beta}$ contains no nested data tests;
in particular, semi-rigid guards are classical MSO formulas.
In order to further ease the translation, we apply a second normalization step. 
This time the goal is to rewrite $\theta_{\alpha,\beta}$ so as to avoid the presence 
of semi-rigid guards with {\em crossing patterns}, namely, guards that on the same 
annotated word $w=u\otimes v$, determine distinct positions $y$ and $y'$ from $x$ and $x'$, 
respectively, such that 
$$
  \big[\min(x,y),\max(x,y)\big] ~\cap~ \big[\min(x',y'),\max(x',y')\big] ~\neq~ \emptyset.
$$
We formalize below the key argument that enables such a transformation:

\begin{lemma}\label{lemma:cross-free}
Every semi-rigid MSO formula $\varphi(x,y)$ which determines $y$ from $x$ can be rewritten 
as a finite disjunction of semi-rigid MSO formulas $\varphi_1(x,y),\ldots,\varphi_n(x,y)$ 
that are {\em cross-free}, namely, such that for all words $w$, all positions $x,x',y,y'$ 
in it, and all indices $i=1,\ldots,n$:
$$
  \begin{cases}
    ~w \sat \varphi_i(x,y) \\
    ~w \sat \varphi_i(x',y') \\
    ~\big[\min(x,y),\max(x,y)\big] \cap \big[\min(x',y'),\max(x',y')\big] \neq \emptyset
  \end{cases}
  \quad\text{implies}\qquad
  y = y'.
$$
\end{lemma}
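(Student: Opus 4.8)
The plan is to exploit the fact that a semi-rigid $\varphi(x,y)$ defines a relation that is functional in $x$ and definable in classical MSO over the finite skeleton of the data word (the guarded data tests occurring in $\varphi$ having already been replaced by fresh unary predicates, exactly as in the first normalization step and as in the proof of Theorem~\ref{thm:decidability}). By the classical equivalence between MSO and finite automata, the marked language $\{w' : w' \text{ encodes } (w,x,y) \text{ with } w\models\varphi(x,y)\}$ over the product alphabet is recognized by some DFA $\cA$ with state set $Q$. I will think of each satisfying pair $(x,y)$ as an \emph{arc} spanning the interval $[\min(x,y),\max(x,y)]$, and I will treat the three cases $x<y$, $x=y$ and $x>y$ separately; the case $x=y$ is immediate (two point-arcs overlap only if $x=x'$, whence $y=y'$), and the cases $x<y$ and $x>y$ are symmetric (via word reversal), so I concentrate on right-going arcs, where $y=\max(x,y)$ is the right endpoint.

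The first key step is a \emph{bounded-crossing} estimate: for every ``cut'' between two consecutive positions, the number of \emph{distinct} right endpoints $y$ carried by arcs spanning the cut is at most $|Q|$. Indeed, suppose two right-going arcs $(x_1,y_1)$ and $(x_2,y_2)$ both span the cut after position $p$ (so $x_1,x_2\le p<y_1,y_2$) and $\cA$ reaches the same state after reading the two prefixes up to $p$ that mark $x_1$ and $x_2$, respectively. Then the two marked prefixes admit exactly the same accepting continuations on the common suffix; since $\varphi$ determines $y$ from $x$, each continuation carries a unique marked $y$, and hence $y_1=y_2$. A pigeonhole argument over $Q$ yields the bound $K:=|Q|$ (a symmetric argument, grouping left-going arcs by their left endpoint, gives the analogous bound).

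The second step turns this into a finite cross-free decomposition. I will group arcs into \emph{bundles}, one bundle $B_y$ per right endpoint $y$, consisting of all arcs $(x,y)$ satisfying $\varphi$; a bundle occupies exactly the interval $[\ell_y,y]$ covered by its longest arc, where $\ell_y=\min\{x:\varphi(x,y)\}$. Two bundles \emph{conflict} precisely when these intervals overlap, and the cross-free requirement amounts to placing conflicting bundles in different classes. By Helly's property for intervals, any family of pairwise-conflicting bundles has a common point, hence all of its members span a common cut; by the bounded-crossing estimate such a family has size at most $K$. The conflict graph is therefore an interval graph of clique number at most $K$, and since interval graphs are perfect it admits a proper colouring with at most $K$ colours. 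Finally, I will realise this colouring by the MSO-definable greedy (left-to-right) assignment, which keeps track only of the $\le K$ currently active bundles and is thus a finite-state computation; writing $\mathrm{Col}_i(y)$ for the MSO formula asserting that the bundle of $y$ receives colour $i$, the formulas $\varphi_i(x,y):=\varphi(x,y)\wedge\mathrm{Col}_i(y)$ are semi-rigid (being sub-relations of $\varphi$), cross-free, and their disjunction is $\varphi$; together with the symmetric list for left-going arcs and the single point-arc formula, this completes the decomposition.

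The step I expect to be the main obstacle is the bounded-crossing estimate, and specifically the argument that two arcs reaching the same automaton state across a cut must share the determined endpoint: this is where functionality of $\varphi$ (``$x$ determines $y$'') is converted into a genuinely finite bound, and it must be phrased so that it applies uniformly across all cuts and is robust under the replacement of guarded data tests by unary predicates. Everything downstream—the Helly argument, perfectness of interval graphs, and the MSO-definability of a bounded greedy colouring—is then routine.
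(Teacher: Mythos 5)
Your proof is correct, but it takes a genuinely different route from the paper's. The paper derives the cross-free disjuncts directly from the Feferman--Vaught/Shelah composition theorem (the same Claim~\ref{claim:composition} used at the start of the sub-definability Lemma~\ref{lemma:subdefinability}): after reducing to the case $x\le y$, it writes $\varphi(x,y)$ as a finite disjunction of split formulas $\exists z\,(\alpha_i(x,z)\wedge\beta_i(z,y)\wedge x\le z\le y)$ and argues that each disjunct is already cross-free by a mix-and-match step at a common point $z$ of the two overlapping intervals: from $\alpha_i(x,z)$ and $\beta_i(z,y')$ one gets $\varphi_i(x,y')$, and semi-rigidity forces $y=y'$. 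You instead convert functionality into a quantitative bound --- at most $|Q|$ distinct determined endpoints among arcs spanning any fixed cut, via the state-swapping argument on a DFA for the marked language --- and then obtain the classes by properly colouring the interval conflict graph of the bundles. Both arguments are sound; the paper's is shorter and reuses machinery it has already set up, with the classes coming for free as composition types, whereas yours needs two extra (routine but real) pieces of care: first, a common point of several pairwise-overlapping bundle intervals need not be a common \emph{cut} (two arcs may merely touch at a shared endpoint), so the clique bound is $|Q|+1$ rather than $|Q|$; second, the colouring must be made canonical so that $\mathrm{Col}_i(y)$ is a genuine MSO formula in the free variable $y$ --- either via the deterministic left-to-right greedy computation you sketch (a bounded-state process, hence MSO-definable) or by selecting the lexicographically least proper colouring. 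What your approach buys is an explicit, automaton-level explanation of \emph{why} finitely many classes suffice (the bound $|Q|$), which the composition-theorem route leaves implicit in the number of composition types.
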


\begin{proof}
Without loss of generality, suppose that $\varphi(x,y)$ always entails $x\le y$
(if this is not the case, rewrite $\varphi$ as a disjunction of two formulas
that entail respectively $x\le y$, $y\le x$, and then prove the lemma 
separately for each disjunct).
The lemma follows essentially from compositionality of MSO \cite{composition_method_shelah}.
Indeed, we can reuse the same argument as in the beginning of the proof of the
sub-definability Lemma \ref{lemma:subdefinability} (cf. Claim \ref{claim:composition})
to show that the formula $\varphi(x,y)$ is equivalent to a finite disjunction of 
the form
$$
  \bigvee_{i=1\ldots n}
  \exists z ~~ \alpha_i(x,z) ~\et~ \beta_i(z,y) ~\et~ x\le z\le y.
$$
for some MSO formulas $\alpha_i(x,z)$, $\beta_i(z,y)$.

Thus, the desired formulas $\varphi_i(x,y)$ are nothing but the disjuncts above, 
namely, the semi-rigid formulas 
$\exists z \: \alpha_i(x,z) \,\et\, \beta_i(z,y) \,\et\, x\le z\le y$,
for $i=1\ldots n$.
Indeed, suppose that there are positions $x,x',y,y'$ in $w$ such that
$w \sat \varphi_i(x,y)$, $w \sat \varphi_i(x',y')$, and 
$\big[\min(x,y),\max(x,y)\big] \cap \big[\min(x',y'),\max(x',y')\big] \neq \emptyset$.
The latter condition implies the existence of a position $z$ satisfying
$x\le z\le y$, $x'\le z\le y'$, $\alpha_i(x,z) \,\et\, \beta_i(z,y)$, and
$\alpha_i(x',z) \,\et\, \beta_i(z,y')$. It follows that $z$ satisfies 
also $\alpha_i(x,z) \,\et\, \beta_i(z,y')$, and hence 
$w \sat \varphi_i(x,y')$. Finally, since $\varphi_i$ is semi-rigid, we
deduce that $y=y'$.
\end{proof}

We can apply Lemma \ref{lemma:cross-free} to all semi-rigid guards
$\alpha^*(x,y)$ and $\beta^*(x,z)$ in a sentence $\theta_{\alpha,\beta}$ 
and then commute disjunctions and existential quantifications. In this way 
we obtain a sentence equivalent to $\theta_{\alpha,\beta}$:
$$
  \theta'_{\alpha,\beta}
  ~\eqdef~
  \forall x ~ 
  \big(\, x\in c_{\alpha,\beta}^\sim ~\leftrightarrow~ 
                  \bigvee\nolimits_{\begin{smallmatrix} i=1\ldots n \\ 
                                                        j=1\ldots m \end{smallmatrix}}
                  \exists y,z ~~ 
                  \alpha^*_i(x,y) \:\et\: \beta^*_j(x,z) \:\et\: y\sim z \,\big)
$$
where $\alpha^*_i(x,y)$ and $\beta^*_j(x,z)$ are semi-rigid cross-free formulas. 
The following definition and lemma give the additional ingredients to complete 
the translation of $\theta'_{\alpha,\beta}$ into an equivalent FMT.

\begin{definition}\label{def:trace}
Let $\varphi(x,y)$ be a semi-rigid cross-free formula and let $w$ be a data word. 
The {\em trace of $\varphi$ on $w$} is the annotated data word $w\otimes t_\varphi$, 
where $t$ is the word over the orbit-finite alphabet $\{\bot\}\uplus D$ 
such that, for all $1\le z\le |w|=|t|$,
$$
  t_\varphi(z) ~\eqdef~ \begin{cases}
                          d & \text{if there are $x,y$ such that $z\in[\min(x,y),\max(x,y)\big]$,} \\
                            & \text{\phantom{if} $w\sat\varphi(x,y)$, and $\mem(w(y))=\{d\}$} \\[1ex]
                          \bot & \text{otherwise}.
                        \end{cases}
$$
(note that $d$ above is well defined because $\varphi$ is semi-rigid and cross-free).
\end{definition}

\begin{lemma}\label{lemma:trace}
For every semi-rigid cross-free formula $\varphi(x,y)$, one can construct 
an FMT $\cT$ that defines the traces of $\varphi$ on the input data words.
\end{lemma}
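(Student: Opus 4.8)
The plan is to reduce the construction to a simple one-register machine that guesses one data value per \emph{covered region} and verifies it, exploiting the fact that $\varphi$ carries no data tests (by the normalisation preceding Lemma~\ref{lemma:cross-free} it is a classical MSO formula over the finite reduct of the alphabet), so that the combinatorial shape of the trace is MSO-definable over the finite-alphabet part of $w$. By Lemma~\ref{lemma:annotations} it suffices to build an unambiguous FMA recognising the language $\{\, w\otimes t_\varphi(w) \,\}$ of convolutions whose second track is exactly the trace. First I would reduce to the case where $\varphi(x,y)$ entails $x\le y$, writing $\varphi \equiv (\varphi\et x\le y)\vel(\varphi\et y\le x)$: the two disjuncts are again semi-rigid and cross-free, and the second case is symmetric (the relevant value then lies to the \emph{left} of the interval and is simply stored when $y$ is read, rather than guessed).

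Assuming $x\le y$, I would first analyse the geometry of the intervals $[x,y]$ with $w\models\varphi(x,y)$. Cross-freeness together with semi-rigidity forces any two such intervals that intersect to share the same right endpoint $y$, and hence, since $\mem(w(y))$ is determined by $y$, the same colour; consequently the covered positions partition into pairwise \emph{disjoint} maximal intervals $[a_\ell,y_\ell]$, each of which is entirely covered and on which the trace is constantly $w(y_\ell)$, where $y_\ell$ is the unique position $z$ of $[a_\ell,y_\ell]$ satisfying $\exists x~\varphi(x,z)$. The predicates $\mathsf{cov}(z)\equiv\exists x,y~\varphi(x,y)\et x\le z\le y$ and $\mathsf{end}(z)\equiv \mathsf{cov}(z)\et\exists x~\varphi(x,z)$ are plain MSO formulas with no data tests, so the set of words correctly tagged at each position by a symbol in $\{\text{outside},\text{inside},\text{end}\}$ according to the true covered-structure is a regular language over the finite reduct, recognisable by the finite control of an FMA.

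The FMA itself then uses a single register. Reading $w\otimes t$ left to right, its finite control (i) verifies that the guessed $\{\text{outside},\text{inside},\text{end}\}$-tagging agrees with the true covered-structure of $w$, and (ii) drives the register: on entering a covered region it guesses a value $d$, stores it, and requires $t(z)=d$; on interior positions it requires $t(z)$ to equal the stored $d$; on an $\mathsf{end}$-position it requires both $t(z)=d$ and $w(z)=d$ before clearing the register; and outside covered regions it requires $t(z)=\bot$. The verification $w(z)=d$ at the endpoint is precisely what pins the guessed value to $w(y_\ell)$, so the automaton accepts $w\otimes t$ only when $t=t_\varphi(w)$; since the tagging is a deterministic function of $w$ and the sole nondeterministic choice, the guess of $d$, is immediately forced by the input track $t$, there is a unique successful run. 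Hence the automaton is an unambiguous FMA and, being functional, the desired FMT $\cT$ defining the traces of $\varphi$.

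The main obstacle I anticipate is the \emph{forward} data-dependence: for a position $z<y_\ell$ the trace must already display the value $w(y_\ell)$ lying to its right, which a one-way register model cannot read directly. The point is that this is dischargeable precisely because (a) guess-and-verify lets an \emph{unambiguous} FMA commit to $w(y_\ell)$ early and check it at the endpoint, and (b) the cross-free/semi-rigid structure guarantees that at most one value is ever ``open'', so a single register suffices and the boundaries separating the covered regions are MSO-definable. Making these two facts precise — the disjointness of the maximal covered intervals and the MSO-definability of $\mathsf{end}$ — is the only genuinely delicate part; the register bookkeeping and the appeal to Lemma~\ref{lemma:annotations} are routine.
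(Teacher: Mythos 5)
Your proposal is correct and follows essentially the same route as the paper: both reduce the problem to MSO-definable (data-free) combinatorial information about the covered positions, discharge that information via Lemma~\ref{lemma:annotations}, and then use a one-register automaton that propagates the value of the $t$-track along each covered block and pins it to $w(y)$ at the unique self-memorable (end) position. The only difference is presentational --- you make the disjointness of the maximal covered intervals explicit where the paper instead works with the predicate ``$z$ and $z+1$ have the same memorable position,'' which encodes the same block structure.
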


\begin{proof}
For the sake of brevity, let us call {\em memorable position of $z$ in $w$}
the unique position $y$ (if there is any) for which there exists $x$ such that 
$z\in\big[\min(x,y),\max(x,y)\big]$ and $w\sat\varphi(x,y)$.
The memorable position of $z$ is clearly MSO-definable from $z$, and so are 
the following properties parametrized by $z$: ``$z$ has some memorable position'',
``$z$ is the memorable position of itself'', and
``$z$ and $z+1$ have the same memorable position''.
As usual, we can assume without loss of generality that the latter properties 
are explicitly encoded on the positions $z$ of an input data word $w$ by means 
of additional bits of information (we recall that unambiguity is preserved when 
we project out these bits).

It easy to construct a deterministic FMA that parses a word 
$w\otimes t$ and performs the following actions on the basis 
of the bits of information at position $z$:
\begin{itemize} 
  \item it checks that $t(z)\neq\bot$ iff $z$ has some memorable position,
  \item if $z$ is the memorable position of itself, then it also verifies that
        the data value in $t(z)$ is equal to the data value in $w(z)$,
  \item finally, if $z$ has the same memorable position as $z+1$, then it stores
        the data value of $t(z)$ and on the next position verifies
        that the value is the same as the one in $t(z+1)$.
\end{itemize}
We omit the formal specification of the deterministic FMA and we only remark 
that it accepts precisely the words of the form $w\otimes t_\varphi$, when the 
bits of information on $z$ are accessible. Moreover, the latter bits can be 
produced by a functional transducer, so using Lemma \ref{lemma:annotations} 
we can easily obtain an unambiguous FMA recognizing the desired language.
\end{proof}

We can now conclude the proof of Theorem \ref{thm:semi-rigid-inside-fma}.
We recall that the sentences $\theta'_{\alpha,\beta}$ are of the form 
$$
  \forall x ~ 
  \big(\, x\in c_{\alpha,\beta}^\sim ~\leftrightarrow~ 
                  \bigvee\nolimits_{\begin{smallmatrix} i=1\ldots n \\ 
                                                        j=1\ldots m \end{smallmatrix}}
                  \exists y,z ~~ 
                  \alpha^*_i(x,y) \:\et\: \beta^*_j(x,z) \:\et\: y\sim z \,\big)
$$
Assuming that the above sentences are evaluated on data words annotated
with the traces $t_{\alpha^*}$ and $t_{\beta^*}$, we can rewrite them as
$$
  \forall x ~ 
  \big(\, x\in c_{\alpha,\beta}^\sim ~\leftrightarrow~ 
                  \bigvee\nolimits_{\begin{smallmatrix} i=1\ldots n \\ 
                                                        j=1\ldots m \end{smallmatrix}}
          t_{\alpha^*}(x) = t_{\beta^*}(x) \neq \bot \,\big)
$$
which can be easily verified by a deterministic (hence unambiguous) FMA. Putting everything 
together, we exploit closure of unambiguous FMA under unions, intersections, and projections
of FMT-definable annotations to obtain an unambiguous FMA that recognizes the language 
defined by $\psi$.

\medskip
The second result shows that semi-rigidly guarded \MSO does not capture 
the entire class of data languages recognizable by unambiguous FMA. 
In fact, the separating language is even recognized by a deterministic FMA:

\begin{proposition}\label{prop:separating-example}
There is a data language recognized by a deterministic FMA that cannot be 
defined in semi-rigidly guarded \MSO.
\end{proposition}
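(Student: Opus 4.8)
The plan is to exhibit a \emph{pointer-chasing} language that a deterministic FMA recognizes by iterating an unbounded, data-dependent sequence of register updates, and then to argue that no semi-rigidly guarded \MSO sentence can follow such a chain. Concretely, let $L$ be the language recognized by the deterministic FMA $\cA$ (in the sense of Definition~\ref{def:finite-memory-automaton}) that stores the first value $d_1$ in a register $r$, scans the word left to right, and each time the current content of $r$ reappears overwrites $r$ with the value carried by the \emph{immediately following} position; $\cA$ accepts iff the final content of $r$ equals $d_1$ (a second, never-overwritten register keeps $d_1$ for this test). Membership is governed by the \emph{chain} $c_0=1$, $c_{k+1}=q+1$ where $q$ is the first position $>c_k$ carrying the value $d_{c_k}$, together with the closure test $d_{c_N}=d_1$; the chain length is unbounded and each of its links is discovered by a data comparison whose source $c_k$ is itself only pinned down by traversing the previous $k$ links.

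To prove $L\notin$ semi-rigidly guarded \MSO I would assume a sentence $\psi$ defines it and extract the structure already exposed by the upper-bound proof. Using the normalization of Theorem~\ref{thm:decidability}, every data test of $\psi$ has the form $\alpha(x,y)\et\beta(x,z)\et y\sim z$ with $\alpha,\beta$ semi-rigid, and records, at the pivot $x$, the equality of the two \emph{functionally determined} positions $y(x)$ and $z(x)$. By Lemma~\ref{lemma:cross-free} each guard decomposes into finitely many cross-free guards, and by Lemma~\ref{lemma:trace} the induced annotation is produced by a finite-memory transducer; hence, exactly as in the proof of Theorem~\ref{thm:semi-rigid-inside-fma}, the truth values of all data tests of $\psi$ on a word are computed by a \emph{bounded-memory} device — finitely many transducers, each carrying at most one data value per cross-free guard across a laminar family of intervals, feeding a data-blind MSO acceptor. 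The conceptual point to isolate is that such a device can compare the value at a position $p$ only with the value at a position determined from $p$ by an MSO-regular (hence bounded) rule: when a compared position is fixed by the order and labels alone the comparison collapses to a unary predicate on which MSO may then compute freely, but a comparison to the \emph{first recurrence of the value sitting at an arbitrary moving position} is unavailable, since isolating that recurrence would require a data test against the non-functionally-determined intermediate positions, which no semi-rigid guard permits.

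The final and hardest step is to turn this structural bound into a contradiction with the unbounded chain. I would feed $\cA$ two families of inputs of growing chain length $N$ that agree on all order/label information and on every equality witnessable by the bounded, laminar family of cross-free guards of $\psi$, yet differ in whether the chain closes. Since each guard threads at most one value across each of its intervals and the intervals are laminar, a pumping/Ramsey argument on the $N$ links should yield two links whose surrounding data can be swapped without altering any guard's verdict but which flips closure, making $\psi$ agree on a word in $L$ and a word outside $L$. The main obstacle is precisely this combinatorial core: one must show rigorously that the one-value-per-guard, laminar bookkeeping guaranteed by Lemmas~\ref{lemma:cross-free} and~\ref{lemma:trace} genuinely cannot thread a data-dependent chain longer than the memory and guard-nesting any fixed $\psi$ provides. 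Everything else — the deterministic FMA, the normalization, and the reduction to a bounded-memory annotation device — is routine given the results already established in the excerpt.
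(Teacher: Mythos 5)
Your choice of separating language is essentially the paper's: the paper also uses a pointer-chasing language, $L_{\curvearrowright^*}$, in which the word decomposes into consecutive blocks whose first and last values match, recognized by a deterministic FMA that repeatedly stores a value, finds its next occurrence, and restarts one position later. So the upper-bound half and the guiding intuition (semi-rigid guards cannot thread an unboundedly long data-dependent chain) agree with the paper. The problem is the lower bound, where you explicitly leave the ``combinatorial core'' unproved --- and that core is the entire content of the proposition. Worse, the framing you adopt for it is close to circular: you reduce the evaluation of all data tests of $\psi$ to ``a bounded-memory device --- finitely many transducers \dots feeding a data-blind MSO acceptor,'' but a finite-memory automaton \emph{is} such a bounded-memory device and it \emph{does} recognize the language; bounded memory alone separates nothing. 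What has to be exploited is not boundedness of memory but the \emph{functional determinacy} of the compared positions, i.e.\ that $y$ and $z$ are MSO-definable functions of the pivot $x$, with the data tests inside those defining formulas sitting at strictly lower nesting depth.

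The paper's missing ingredient is a quantitative locality claim proved by induction on the nesting depth of data tests: on specially built witnesses $w_n$ ($n$ blocks of length about $n$, all matched values spanning distance about $n$, everything else fresh), every semi-rigid guard $\alpha(x,y)$ admits a constant $k_\alpha$ independent of $n$ such that any $y$ it determines from $x$ lies within $k_\alpha$ of $x$ or within $(n+2)k_\alpha$ of an endpoint of $w_n$. The base case (no data tests) is a pumping argument on the finite automaton for the classical MSO formula $\alpha$, using functionality of $y$ in $x$ to derive a contradiction from pumpable factors between $x$ and $y$; the inductive case observes that, by the induction hypothesis, all inner data tests fail throughout the middle of $w_n$ (matched positions are far apart and far from the ends), so the annotated word is constant there and the base-case argument applies again. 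Locality then implies that for large $n$ some arc $(y,z)$ of $w_n$ is never tested by $\psi$, so replacing the value at $z$ by a fresh one preserves $\models\psi$ while leaving the language. Your Ramsey/pumping sketch on ``laminar families of guards'' does not engage with the nesting of data tests inside guards, which is exactly where the difficulty lives; without the induction on nesting depth (or an equivalent), the argument does not go through.
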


\begin{proof}
For the sake of simplicity, we will assume that data values range over
the natural numbers.
Consider the language $L_{\curvearrowright^*}$ that contains all and 
only the data words over the alphabet $\bbN$ of the form 
\footnote{This language is a variant of an example given in 
\cite{graph_reachability_with_pebbles} to study data languages 
recognized by pebble automata.} 
\vspace{4mm}
$$
  ~\td{node1}{}\!\!\!\!d_{i_0+1}%
  ~\ldots%
  ~\td{node2}{}\!\!\!\!d_{i_1}%
  ~\td{node3}{}\!\!\!\!d_{i_1+1}%
  ~\ldots%
  ~\td{node4}{}\!\!\!\!d_{i_2}%
  ~\ldots\ldots
  ~\td{node5}{}\!\!\!\!d_{i_{\ell-1}+1}%
  ~\ldots
  ~\td{node6}{}\!\!\!\!d_{i_\ell}
  \tl{node1.90}{out=60,in=120,shorten <=1mm,shorten >=1mm}{node2.90}%
  \tl{node3.90}{out=60,in=120,shorten <=1mm,shorten >=1mm}{node4.90}%
  \tl{node5.90}{out=60,in=120,shorten <=1mm,shorten >=1mm}{node6.90}%
$$
where $\ell\in\bbN$, $0=i_0<i_1<i_2<\ldots<i_{\ell-1}<i_\ell=n$, 
$d_{i_j + 1}=d_{i_{j+1}}$ for all $0\le j<\ell$, and $d_k\neq d_{i_j}$ 
for all $i_j+1<k<i_{j+1}$
(for the sake of readability, we added arcs linking those data values
that are required to be equal).
It is easy to see that $L_{\curvearrowright^*}$ is recognized by a deterministic 
FMA: at each phase, the automaton stores the value under the current position and 
then moves to the right looking for another occurrence of the stored value; if it 
does not find such an occurrence, then it rejects, otherwise, as soon as the occurrence 
is found, the automaton moves to the next position (if there is any, otherwise it accepts) 
and starts a new phase. 

Below, we fix a generic sentence $\psi$ of semi-rigidly guarded \MSO and 
we show that it cannot define the language $L_{\curvearrowright^*}$. 
For this, we consider data words in $L_{\curvearrowright^*}$ of the form
\vspace{4mm}
$$
  w_n ~~=~~
  \td{bridge1}{1} ~u_n^{(1)}~ \td{bridge2}{1}~%
  \td{bridge3}{2} ~u_n^{(2)}~ \td{bridge4}{2}~%
  \ldots\ldots~%
  \td{bridge5}{n} ~u_n^{(n)}~ \td{bridge6}{n}%
  \tl{bridge1.90}{out=60,in=120,shorten <=1mm,shorten >=1mm}{bridge2.90}%
  \tl{bridge3.90}{out=60,in=120,shorten <=1mm,shorten >=1mm}{bridge4.90}%
  \tl{bridge5.90}{out=60,in=120,shorten <=1mm,shorten >=1mm}{bridge6.90}%
$$
where each $u_n^{(i)}$ has length exactly $n$ symbols and the only equalities 
between data values are those represented by the arcs, namely, the juxtaposition 
$u_n^{(1)}\, u_n^{(2)}\, \ldots\, u_n^{(n)}$ contains pairwise distinct 
values from the set $\bbN\setminus\{1,\ldots,n\}$. 
We aim at proving that, for $n$ sufficiently large, almost all semi-rigidly 
guarded data tests performed by $\psi$ fail, thus making it impossible to 
distinguish $w_n$ from other data words outside $L_{\curvearrowright^*}$.

Our argument will make extensive use of the encodings of data tests 
described in Theorem \ref{thm:semi-rigid-inside-fma}. We begin by introducing
the alphabet $C$ consisting of one symbol $c^\sim_{\alpha,\beta}$ for each 
data test in $\psi$ of the form $\alpha(x,y) \,\et\, \beta(x,z) \,\et\, y\sim z$.
We then transform the data word $w_n$ into a classical word $w_n^*$ over $C$ 
by labelling every position $x$ of $w_n^*$ with the set of symbols 
$c^\sim_{\alpha,\beta}$ such that 
$w_n\sat\exists y,z ~~ \alpha(x,y) \,\et\, \beta(x,z) \,\et\, y\sim z$.
Accordingly, we transform every sub-formula $\varphi$ of $\psi$ to a
classical MSO formula $\varphi^*$ that is equivalent in the following
sense:
$$
  w_n\sat\varphi
  \qquad\text{iff}\qquad
  w^*_n\sat\varphi^*
$$
(in particular, $\alpha(x,y) \,\et\, \beta(x,z) \,\et\, y\sim z$ is transformed 
into $\alpha^*(x,y) \,\et\, \beta^*(x,z) \,\et\, x\in c^\sim_{\alpha,\beta}$).

Now, we prove that every semi-rigid guard $\alpha(x,y)$ in $\psi$ defines a 
position $y$ from $x$ that is either close to $x$ or close to one of the 
endpoints of $w_n$. This is formally stated in the following claim:

\begin{claim}
For every semi-rigid guard $\alpha(x,y)$, there is a number $k_\alpha$ that 
depends only on $\alpha$ (and not on $n$) and satisfies
$$
  \forall n>1,~ 1\le x\le|w_n| \quad
  w_n\sat\alpha(x,y) 
  \quad\text{implies}\quad
  \begin{array}{ll}
    1) & y \le (n+2)\cdot k_\alpha, \quad \text{or} \\
    2) & y \ge (n+2)\cdot (n-k_\alpha), \quad \text{or} \\
    3) & |y-x| \le k_\alpha.
  \end{array}
$$
\end{claim}

\noindent
The proof of the above claim is by induction on the number of data tests
in $\alpha(x,y)$. 

For the base case, we suppose that $\alpha(x,y)$ is a semi-rigid guard without 
data tests, namely, a formula of classical MSO defining a partial function. 
The claim follows essentially from the fact that there are no distinguished 
symbols in $w_n$ that can be used to determine from $x$ a position $y$ that is 
far both from $x$ and from the endpoints of $w_n$.
Let $\cA$ be a finite state automaton recognizing 
the language over $\{0,1\}\times\{0,1\}$ defined by $\alpha(x,y)$.
Suppose that the claim above does not hold, namely, that for all $k\in\bbN$, 
there are $n\in\bbN$ and $1\le x,y\le |w_n|$ such that 
$k < y < |w_n| - k$, $|y-x| > k$, and $w_n\sat\alpha(x,y)$.
Note that the latter condition $w_n\sat\alpha(x,y)$ can be equally stated as
$\ang{w_n^-,\{x\},\{y\}}\in \sL(\cA)$, where $w_n^-$ is the word of length 
$|w_n|$ over a singleton alphabet. 
A simple pumping argument shows that there exist $n\in\bbN$ and some non-empty 
factors $u_n,v_n$ of $\ang{w_n^-,\{x\},\{y\}}$ such that (i) $u_n$ occurs strictly
between the positions $x$ and $y$, $v_n$ occurs to the right of both $x$
and $y$, and erasing or repeating $u$ and $v$ results in new words that are also
accepted by $\cA$. We distinguish two cases depending on whether $|u|=|v|$ or not. 
In the former case, by removing the factor $u$ and by repeating twice the factor $v$
we obtain a new word of the same length as $\ang{w_n^-,\{x\},\{y\}}$ that 
is also accepted by $\cA$, but that identifies a new pair of positions 
$(x,y')$, with $y'\neq y$. This contradicts the fact that $\alpha(x,y)$ functionally
defines $y$ from $x$.
In the latter case, we can reach the same contradiction by swapping the order
of the two factors $u$ and $v$.
The above arguments prove that the claim holds for the considered semi-rigid 
guard $\alpha(x,y)$ without data tests. 

As for the inductive case, suppose that $\alpha(x,y)$ uses some data tests 
based on semi-rigid guards $\beta_1,\ldots,\beta_h$. Assume that the inductive hypothesis
holds for the guards $\beta_1,\ldots,\beta_h$, let $k_{\beta_1},\ldots,k_{\beta_h}$
be the corresponding constants, and define $k_\beta$ to be the maximum of these
constants. Thanks to the inductive hypothesis, we know that for $n$ sufficiently large, 
say $n\ge 2k_\beta$, every data test performed by $\alpha(x,y)$ between pairs of positions 
determined from $x'\in \big\{ (n+2)\cdot k_\beta+1, \ldots, (n+2)\cdot(n-k_\beta) \big\}$
are bound to fail -- indeed, the only way a data test could succeed is by comparing 
positions of $w_n$ that are connected by an arc, but the arcs span large distances
by construction.
Now, recall that $w_n\sat\alpha(x,y)$ iff $w^-_n\sat\alpha^-(x,y)$, where $w^-_n$
is the word annotated with the outcomes of the data tests performed by $\alpha(x,y)$.
In particular, we have that $w^-_n$ is almost constant, that is,
for all positions $x'\in \big\{ (n+2)\cdot k_\beta+1, \ldots, (n+2)\cdot(n-k_\beta) \big\}$,
$c^\sim_{\beta_i,\beta_j} \nin w^-_n(x')$. Finally, by arguing like in the proof of the 
base case, we can conclude that there is a sufficiently large number $k_\alpha$ that 
satisfies the statement for $\alpha(x,y)$ and for all $n\in\bbN$.

We have just proved that the semi-rigid guards $\alpha(x,y)$ in $\psi$ define positions that 
are either close to $x$ or close to one of the endpoints of $w_n$. This means that,
for $n$ large enough, almost all data tests performed by $\psi$ fail. In particular,
there exist $n\in\bbN$ and two positions $y = (n+2)\cdot i + 1$ and $z = (n+2)\cdot(i+1)$
that carry the same data value $i$, but are never tested in $\psi$. This means that 
the word $w'_n$ obtained from $w_n$ by replacing the data value in $z$ with a fresh
value, also satisfies $\psi$. However, $w'_n$ is not in the language $L_{\curvearrowright^*}$,
and hence $\psi$ cannot define $L_{\curvearrowright^*}$.
\end{proof}

\section{Conclusion and future work}\label{sec:conclusion}

We have shown that the algebraic notion of orbit-finite data monoid
corresponds to a variant of monadic second-order logic which is -- 
and this is of course subjective -- natural. It is natural in the 
sense that it only relies on a single and understandable principle: 
guarding data equality tests by rigidly definable relations. 

Of course, it is not the first time the principle of guarding non-monadic predicates 
with suitable formulas is used as a mean of taming the expressiveness of a logic and 
recover decidability. What is more original and interesting in the present context is 
the equivalence with the algebraic object, which shows that this approach is in some 
sense maximal: it is not just a particular technique among others for having decidability, 
but it is sufficient for completely capturing the expressive power of the very natural 
algebraic model.

Another contribution of the present work is the development of the structural
understanding of orbit-finite data monoids. By structural understanding, we 
refer to Green's relations, which form a major tool in most involved proofs 
concerning finite monoids.
The corresponding study of Green's relations for orbit-finite data monoids
was already a major argument in the proof of \cite{data_monoids}, and it
had to be developed even further in the present work.

\medskip
We are only at the beginning of understanding the various notions of recognizability 
for data languages. However, several interesting questions were raised during our study.
Some of them concern the fine structure of the logic:

\begin{quote}
\em
The nesting level of guards seems to be a robust and relevant parameter in our logic.
Can we understand it algebraically? Can we decide it?

\medskip\noindent
Also recall that, in the classical setting of languages over finite alphabets,
there exist effective characterizations of fragments of first-order logic 
within the class of regular languages.
To what extent can we generalise these results in the presence of data over infinite alphabets?
\end{quote}

\noindent
Other questions are of purely algebraic nature:

\begin{quote}
\em
We used in our proofs the notion of term-based presentation of an orbit-finite data monoid. 
(cf. Definition \ref{def:term-rep-system}). 
Such a presentation can be in a simple form, where the congruence $\dsim$ is trivial,
namely, where any two terms $o(d_1,\ldots,d_n)$ and $o'(d'_1,\ldots,d'_n)$ represent
the same monoid element only if they are syntactically equal.
Is it the case that every orbit-finite data monoid is the quotient of some orbit-finite 
data monoid having a simple term-based presentation?
\end{quote}

\noindent
We can answer positively the above question. Indeed, by Theorem \ref{thm:monoids-to-logic}
the elements $s$ of an orbit-finite data monoid $\cM$ can be translated to rigidly guarded 
\MSO sentences $\psi_s$ defining the set of data words whose products evaluate to $s$.
We also know from Corollary \ref{cor:sentence-to-monoid} that the languages defined by 
$\psi_s$ are recognized by an orbit-finite data monoid $\cM'$. 
Moreover, a close inspection to the proof of Corollary \ref{cor:sentence-to-monoid}
reveals that the inductive construction of $\cM'$ can be performed at the level of 
term-based presentations of simple form.

\medskip
We finally considered more powerful notions of recognizability, 
such as those obtained by extending finite state automata with registers
\cite{finite_memory_automata,nondeterministic_reassignment,%
finite_memory_automata_on_trees,automata_with_group_actions}:

\begin{quote}
\em
Does there exist a larger class of data languages that is as robust as that of 
orbit-finite data monoid, and gives, in particular, closures under all Boolean 
operations and restricted forms of projections? Can we match new notions of 
recognizability with suitable logical formalisms? 
\end{quote}

\noindent
In particular, we left open the problem of finding a logic that captures precisely 
the class of data languages recognized by unambiguous FMA, which is a candidate
model for a robust class of data languages. As a matter of fact, in \cite{rigid-mso} 
we described a logic similar to semi-rigidly guarded \MSO that captures data languages 
recognized by non-deterministic FMA. 
However, that logic was not natural, in the sense that it was not closed 
under negation, and, moreover, did not ease characterizations of sub-classes 
of data languages such as those definable in first-order logic.
Regarding the latter problem, we also recall a result from \cite{automata_vs_logics} 
that shows that the problem of determining whether a language recognized by a 
non-deterministic FMA is definable in \FO is undecidable. Thus, the following 
question is also worth to be investigated:

\begin{quote}
\em
Can we characterize, among the languages recognized by unambiguous (or even deterministic)
FMA, those recognizable by orbit-finite data monoids, or those definable in \FO ?
\end{quote}

\noindent
Finally, the problem of finding a natural logic with the same expressiveness of 
unambiguous FMA is clearly related to the possibility of proving effective closure 
of unambiguous FMA under complementation. The latter problem is also open and
challenging -- we remark that a similar closure property holds trivially for 
{\em strongly unambiguous FMA} \cite{forms-of-determinism}, namely, FMA that 
admit exactly one (accepting or rejecting) run on each data word.

\paragraph{Acknowledgements} 
We would like to thank Michael Benedikt, Anca Muscholl, and Zhilin Wu 
for the many helpful remarks on the paper.

\bibliographystyle{plain}
\bibliography{main}

\end{document}